\title{On the Runtime of Chemical Reaction Networks Beyond Idealized
Conditions%
\ShortVersion 
\ (Extended Abstract)
\ShortVersionEnd 
} 
\titlerunning{On the Runtime of CRNs Beyond Idealized Conditions%
\ShortVersion 
\ (Extended Abstract)
\ShortVersionEnd 
} 
\author{Anne Condon}{University of British Columbia, Vancouver, British Columbia, Canada}{condon@cs.ubc.ca}{}{}
\author{Yuval Emek}{Technion --- Israel Institute of Technology, Haifa, Israel}{yemek@technion.ac.il}{}{}
\author{Noga Harlev}{Technion --- Israel Institute of Technology, Haifa, Israel}{snogazur@campus.technion.ac.il}{}{}
\authorrunning{A. Condon, Y Emek, and N. Harlev} 
\keywords{
chemical reaction networks,
adversarial runtime,
weak fairness,
predicate decidability
} 
\def\LongVersion{}
\def\LongVersionEnd{}
\long\def\ShortVersion#1\ShortVersionEnd{}
\def\ShortVersion{}
\def\ShortVersionEnd{}
\long\def\LongVersion#1\LongVersionEnd{}
\theoremstyle{definition}
\newcommand{\Integers}{\mathbb{Z}}
\newcommand{\Naturals}{\mathbb{N}}
\newcommand{\Species}{\mathcal{S}}
\newcommand{\Reactions}{\mathcal{R}}
\newcommand{\NonVoid}{\operatorname{NV}}
\renewcommand{\c}{\mathbf{c}}
\newcommand{\Applicable}{\operatorname{app}}
\newcommand{\Stop}{\tau}
\newcommand{\TemporalCost}{\operatorname{TC}}
\newcommand{\Ex}{\mathbb{E}}
\renewcommand{\Pr}{\mathbb{P}}
\newcommand{\RunTime}{\operatorname{RT}}
\newcommand{\Stab}{\mathrm{stab}}
\newcommand{\Halt}{\mathrm{halt}}
\newcommand{\Indicator}{\mathbbm{1}}
\newcommand{\Sign}{\operatorname{sign}}
\newcommand{\PermVoters}{\mathcal{P}}
\newcommand{\FluidVoters}{\mathcal{F}}
\newcommand{\OR}{\operatorname{OR}}
\newcommand{\RW}{\mathrm{ra}}
\newcommand{\Amp}{\mathrm{amp}}
\newcommand{\Prp}{\pi}
\newcommand{\Vol}{\varphi}
\newcommand{\Ignt}{\mathrm{ignt}}
\newcommand{\LeadsTo}{\rightharpoonup}
\newcommand{\Reaching}{\stackrel{\ast}{\rightharpoonup}}
\newcommand{\SkipPol}{\sigma}
\newcommand{\RunPol}{\varrho}
\newcommand{\InitStep}{\mathit{t}}
\newcommand{\EffStep}{\InitStep_{\mathrm{e}}}
\newcommand{\EffConf}{\mathbf{e}}
\newcommand{\Identity}{\mathrm{id}}
\newcommand{\Sect}{Sec.}
\newcommand{\Thm}{Thm.}
\newcommand{\Lem}{Lem.}
\newcommand{\Prop}{Prop.}
\newcommand{\Obs}{Obs.}
\newcommand{\Cor}{Cor.}
\newcommand{\Fig}{Fig.}
\begin{document}

\maketitle

\begin{abstract}
This paper studies the (discrete) \emph{chemical reaction network (CRN)}
computational model that emerged in the last two decades as an abstraction for
molecular programming.
The correctness of CRN protocols is typically established under one of two
possible schedulers that determine how the execution advances:
(1)
a \emph{stochastic scheduler} that obeys the (continuous time) Markov process
dictated by the standard model of stochastic chemical kinetics;
or
(2)
an \emph{adversarial scheduler} whose only commitment is to maintain a certain
fairness condition.
The latter scheduler is justified by the fact that the former one crucially
assumes ``idealized conditions'' that more often than not, do not hold in real
wet-lab experiments.
However, when it comes to analyzing the \emph{runtime} of CRN protocols, the
existing literature focuses strictly on the stochastic scheduler, thus raising
the research question that drives this work:
Is there a meaningful way to quantify the runtime of CRNs without the
idealized conditions assumption?

The main conceptual contribution of the current paper is to answer this
question in the affirmative, formulating a new runtime measure for CRN
protocols that does not rely on idealized conditions.
This runtime measure is based on an adapted (weaker) fairness condition as
well as a novel scheme that enables partitioning the execution into short
\emph{rounds} and charging the runtime for each round individually (inspired
by definitions for the runtime of asynchronous distributed algorithms).
Following that, we turn to investigate various fundamental computational tasks
and establish (often tight) bounds on the runtime of the corresponding CRN
protocols operating under the adversarial scheduler.
This includes an almost complete chart of the runtime complexity landscape of
predicate decidability tasks.
\end{abstract}

\section{Introduction}
\label{section:introduction}
Chemical reaction networks (CRNs) are used to describe the evolution of
interacting molecules in a solution \cite{Gillespie1977stochastic} and more
specifically, the behavior of regulatory networks in the cell
\cite{BolouriB2001book}.
In the last two decades, CRNs have also emerged as a computational model for
molecular programming
\cite{SoloveichikCWB2008computation,
CookSWB2009programmability}.
A protocol in this model is specified by a set of species and a set of
reactions, which consume molecules of some species and produce molecules of
others.
In a (discrete) CRN computation, inputs are represented as (integral)
molecular counts of designated species in the initial system configuration;
a sequence of reactions ensues, repeatedly transforming the configuration,
until molecular counts of other designated species represent the output.
The importance of CRNs as a model of computation is underscored by the wide
number of closely related models, including
population protocols
\cite{AngluinADFP2006computation,
AngluinAER2007computational},
Petri nets
\cite{Rozenberg1998},
and
vector addition systems
\cite{Karp-etal-69}.\footnote{%
To simplify the discussions, we subsequently stick to the CRN terminology even
when citing literature that was originally written in terms of these related
models.
}

The standard model of stochastic chemical kinetics
\cite{Gillespie1977stochastic}, referred to hereafter as the \emph{standard
stochastic model}, dictates that the execution of a CRN protocol (operating
under fixed environmental conditions) advances as a continuous time Markov
process, where the rate of each reaction is determined by the molecular counts
of its reactants as well as a reaction specific rate coefficient.
This model crucially assumes that the system is ``well-mixed'', and so any
pair of distinct molecules is equally likely to interact, and that the rate
coefficients remain fixed.\footnote{%
We follow the common assumption that each reaction involves at most two
reactants.}
Under the standard stochastic model, CRNs can simulate Turing machines if a
small error probability is tolerated \cite{CookSWB2009programmability}.
The correctness of some protocols, including Turing machine simulations,
depends sensitively on the ``idealized conditions'' of fixed rate coefficients
and a well-mixed system.

However, correctness of many other CRN protocols, such as those which stably
compute predicates and functions
\cite{AngluinADFP2006computation,
AngluinAER2007computational,
ChenDS2014deterministic,
DotyH2015leaderless,
Brijder2016minimal,
ChenCDS2017speed-faults},
is premised on quite different assumptions:
correct output should be produced on \emph{all} ``fair executions'' of the
protocol, which means that the correctness of these protocols does not depend
on idealized conditions.
These protocols operate under a notion of fairness, adopted originally in
\cite{AngluinADFP2006computation}, requiring that reachable configurations are
not starved;
in the current paper, we refer to this fairness notion as \emph{strong
fairness}.
A celebrated result of Angluin et al.\
\cite{AngluinADFP2006computation,
AngluinAER2007computational}
states that with respect to strong fairness, a predicate can be decided by a
CRN if and only if it is semilinear.

As the
``what can be computed by CRNs?''\
question reaches a conclusion, the focus naturally shifts to its
``how fast?''\
counterpart.
The latter question is important as the analysis of CRN runtime complexity
enables the comparison between different CRN protocols and ultimately guides
the search for better ones.
Even for CRNs designed to operate on all (strongly) fair executions, the
existing runtime analyses assume that reactions are \emph{scheduled
stochastically}, namely, according to the Markov process of the standard
stochastic model, consistent with having the aforementioned idealized
conditions.
However, such conditions may well not hold in real wet-lab experiments, where
additional factors can significantly affect the order at which reactions
proceed \cite{VasicCLKS2022programming}.
For example, temperature can fluctuate, or molecules may be temporarily
unavailable, perhaps sticking to the side of a test tube or reversibly binding
to another reactant.
Consequently, our work is driven by the following research question:
\emph{Is there a meaningful way to quantify the runtime of CRNs when idealized
conditions do not necessarily hold?}

\subparagraph{The Quest for an Adversarial Runtime Measure.}
We search for a runtime measure suitable for \emph{adversarially scheduled}
executions, namely, executions that are not subject to the constraints of the
aforementioned idealized conditions.
This is tricky since the adversarial scheduler may generate (arbitrarily) long
execution intervals during which no progress can be made, even if those are
not likely to be scheduled stochastically.
Therefore, the ``adversarial runtime measure'' should neutralize the devious
behavior of the scheduler by ensuring that the protocol is not unduly
penalized from such bad execution intervals.
To guide our search, we look for inspiration from another domain of
decentralized computation that faced a similar challenge:
distributed network algorithms.

While it is straightforward to measure the runtime of (idealized) synchronous
distributed protocols, early on, researchers identified the need to define
runtime measures also for (adversarially scheduled) asynchronous distributed
protocols
\cite{Awerbuch85,
Dolev-etal-97}.
The adversarial runtime measures that were formulated in this regard share the
following two principles:
(P1)
partition the execution into \emph{rounds}, so that in each round, the
protocol has an opportunity to make progress;
and
(P2)
calibrate the runtime charged to the individual rounds so that if the
adversarial scheduler opts to generate the execution according to the
idealized benchmark, then the adversarial runtime measure coincides with the
idealized one.

Specifically, in the context of asynchronous message passing protocols,
Awerbuch \cite{Awerbuch85} translates principle (P1) to the requirement that
no message is delayed for more than a single round, whereas in the context of
\LongVersion 
self-stabilizing protocols controlled by
\LongVersionEnd 
the distributed daemon, Dolev et al.~\cite{Dolev-etal-97} translate this
principle to the requirement that each node is activated at least once in
every round.
For principle (P2), both
\LongVersion 
Awerbuch and Dolev et al.\
\LongVersionEnd 
\ShortVersion 
\cite{Awerbuch85} and \cite{Dolev-etal-97}
\ShortVersionEnd 
take the ``idealized benchmark'' to be a synchronous execution in which every
round costs one time unit.

When it comes to formulating an adversarial runtime measure for CRN protocols,
principle (P2) is rather straightforward:
we should make sure that on stochastically generated executions (playing the
``idealized benchmark'' role), the adversarial runtime measure agrees (in
expectation) with that of the corresponding continuous time Markov process.
Interpreting principle (P1), however, seems more difficult as it is not clear
how to partition the execution into rounds so that in each round, the protocol
``has an opportunity to make progress''.

The first step towards resolving this difficulty is to introduce an
alternative notion of fairness, referred to hereafter as \emph{weak fairness}:
An execution is weakly fair if a continuously applicable reaction (i.e., one
for which the needed reactants are available) is not starved;
such a reaction is either eventually scheduled or the system reaches a
configuration where the reaction is inapplicable.
Using a graph theoretic characterization, we show that any CRN protocol whose
correctness is guaranteed on weakly fair executions is correct also
on strongly fair executions (see
\Cor{}~\ref{corollary:weak-correctness-implies-strong-correctness}), thus
justifying the weak vs.\ strong terminology choice.
It turns out that for predicate decidability, strong fairness is actually not
strictly stronger:
protocols operating under the weak fairness assumption can decide all (and
only) semilinear predicates (see \Thm{}~\ref{theorem:crd:semilinear}).

It remains to come up with a scheme that partitions an execution of CRN
protocols into rounds in which the weakly fair adversarial scheduler can steer
the execution in a nefarious direction, but also the protocol has an
opportunity to make progress.
A naive attempt at ensuring progress would be to end the current round once
every applicable reaction is either scheduled or becomes inapplicable;
the resulting partition is too coarse though since in general, a CRN
\LongVersion 
protocol
\LongVersionEnd 
does not have to ``wait'' for \emph{all} its applicable reactions
in order to make progress.
Another naive attempt is to end the current round once \emph{any} reaction is
scheduled;
this yields a partition which is too fine, allowing the
\LongVersion 
adversarial
\LongVersionEnd 
scheduler to charge the protocol's run-time for (arbitrarily many)
``progress-less rounds''.

So, which reaction is necessary for the CRN protocol to make progress?
We do not have a good answer for this question, but we know who does\dots

\subparagraph{Runtime and Skipping Policies.}
Our adversarial runtime measure is formulated so that it is the protocol
designer who decides which reaction is necessary for the CRN protocol to make
progress. This is done by means of a \emph{runtime policy} $\RunPol$, used
solely for the runtime analysis, that maps each configuration $\c$ to a
\emph{target} reaction $\RunPol(\c)$.
(Our actual definition of runtime policies is more general, mapping each
configuration to a set of target reactions;
see \Sect{}~\ref{section:runtime}.)
Symmetrically to the protocol designer's runtime policy, we also introduce a
\emph{skipping policy} $\SkipPol$, chosen by the adversarial scheduler,
that maps each step
$t \geq 0$
to a step
$\SkipPol(t) \geq t$.

These two policies partition a given execution $\eta$ into successive rounds
based on the following inductive scheme:
Round $0$ starts at step
$\InitStep(0) = 0$.
Assuming that round
$i \geq 0$
starts at step $\InitStep(i)$, the prefix of round $i$ is determined by the
adversarial skipping policy $\SkipPol$ so that it lasts until step
$\SkipPol(\InitStep(i))$;
let $\EffConf^{i}$ denote the configuration in step $\SkipPol(\InitStep(i))$,
referred to as the round's \emph{effective configuration}.
Following that, the suffix of round $i$ is determined by the protocol
designer's runtime policy $\RunPol$ so that it lasts until the earliest step
in which the target reaction $\RunPol(\EffConf^{i})$ of the round's effective
configuration $\EffConf^{i}$ is either scheduled or becomes inapplicable.
That is, in each round, the adversarial scheduler determines (by means
of the skipping policy) the round's effective configuration, striving to
ensure that progress from this configuration is slow, whereas the runtime
policy determines when progress has been made from the effective
configuration.
This scheme is well defined by the choice of weak fairness;
we emphasize that this would not be the case with strong fairness.

The partition of execution $\eta$ into rounds allows us to ascribe a runtime
to $\eta$ by charging each round with a \emph{temporal cost} and then
accumulating the temporal costs of all rounds until $\eta$
terminates.\footnote{%
The exact meaning of termination in this regard is made clear in
\Sect{}~\ref{section:model}.}
The temporal cost of round $i$ is defined to be the expected (continuous) time
until the target reaction $\RunPol(\EffConf^{i})$ of its effective
configuration $\EffConf^{i}$ is either scheduled or becomes inapplicable in an
imaginary execution that starts at $\EffConf^{i}$ and proceeds according to
the stochastic scheduler.\footnote{%
Here, it is assumed that the stochastic scheduler operates with no rate
coefficients and with a linear volume (a.k.a.\ ``parallel time''), see
\Sect{}~\ref{section:model}.}
In other words, the protocol's runtime is \emph{not} charged for the prefix of
round $i$ that lasts until the (adversarially chosen) effective configuration
is reached;
the temporal cost charged for the round's suffix, emerging from the effective
configuration, is the expected time that this suffix would have lasted in a
stochastically scheduled execution (i.e., the idealized benchmark).

The asymptotic runtime of the CRN protocol is defined by minimizing over all
runtime policies $\RunPol$ and then maximizing over all weakly fair executions
$\eta$ and skipping policies $\SkipPol$.
Put differently, the protocol designer first commits to $\RunPol$ and only
then, the (weakly fair) adversarial scheduler determines $\eta$ and
$\SkipPol$.

Intuitively, the challenge in constructing a good runtime policy $\RunPol$ (the
challenge one faces when attempting to up-bound a protocol's runtime) is
composed of two, often competing, objectives (see, e.g.,
\Fig{}~\ref{figure:crn-kill-b}):
On the one hand, $\RunPol(\c)$ should be selected so that every execution
$\eta$ is guaranteed to gain ``significant progress'' by the time a round
whose effective configuration is $\c$ ends, thus minimizing the number of
rounds until $\eta$ terminates.
On the other hand, $\RunPol(\c)$ should be selected so that the temporal cost
of such a round is small, thus minimizing the contribution of the individual
rounds to $\eta$'s runtime.
In the typical scenarios, a good runtime policy $\RunPol$ results in
partitioning $\eta$ into
$n^{\Theta (1)}$
rounds, each contributing a temporal cost between
$\Theta (1 / n)$
and
$\Theta (n)$,
where $n$ is
\LongVersion 
the molecular count of $\eta$'s initial configuration (these scenarios include
``textbook examples'' such as the classic leader election and rumor spreading
protocols as well as all protocols presented in
\Sect{}\ \ref{section:runtime:tools}--\ref{section:crd:detection})%
\LongVersionEnd 
\ShortVersion 
$\eta$'s initial molecular count%
\ShortVersionEnd 
.

To verify that our adversarial runtime measure is indeed compatible with the
aforementioned principle (P2), we show that if the (adversarial) scheduler
opts to generate the execution $\eta$ stochastically, then our runtime measure
coincides (in expectation) with that of the corresponding continuous time
Markov process (see \Lem{}~\ref{lemma:stochastic-benchmark}).
The adversarial scheduler however can be more malicious than that:
simple examples show that in general, the runtime of a CRN protocol on
adversarially scheduled executions may be significantly larger than on
stochastically scheduled executions (see \Fig{}
\ref{figure:crn-multiple-pitfalls} and
\ref{figure:crn-kill-alternating-leaders}).

While runtime analyses of CRNs in the presence of common defect modes can be
insightful, a strength of our adversarial model is that it is not tied to
specific defects in actual CRNs or their biomolecular implementations.
In particular, if the adversarial runtime of a CRN matches its stochastic
runtime, then we would expect the CRN to perform according to its stochastic
runtime even in the presence of defect modes that we may not anticipate.
Moreover, in cases where stochastic runtime analysis is complex (involving
reasoning about many different executions of a protocol and their
likelihoods), it may in fact be easier to determine the adversarial runtime
since it only requires stochastic analysis from rounds' effective
configurations.
For similar reasons, notions of adversarial runtime have proven to be valuable
in design of algorithms in both centralized and decentralized domains more
broadly, even when they do not capture realistic physical scenarios.
Finally, while the analysis task of finding a good runtime policy for a given
CRN may seem formidable at first, our experience in analyzing the protocols in
this paper is that such a runtime policy is quite easy to deduce, mirroring
intuition about the protocol's strengths and weaknesses.

\subparagraph{The Runtime of Predicate Decidability.}
After formulating the new adversarial runtime measure, we turn our attention
to CRN protocols whose goal is to decide whether the initial configuration
satisfies a given predicate, indicated by the presence of designated Boolean (`yes' and `no')
\emph{voter} species in the output configuration.
As mentioned earlier, the predicates that can be decided in that way are
exactly the semilinear predicates, which raises the following two questions:
What is the optimal adversarial runtime of protocols that decide semilinear
predicates in general?
Are there semilinear predicates that can be decided faster?

A notion that plays an important role in answering these questions is that of
CRN \emph{speed faults}, introduced in the impressive work of Chen et
al.~\cite{ChenCDS2017speed-faults}.
This notion captures a (reachable) configuration from which any path to an
output configuration includes a (bimolecular) reaction both of whose reactants
appear in
$O (1)$
molecular counts.
The significance of speed faults stems from the fact that any execution that
reaches such a ``pitfall configuration'' requires
$\Omega (n)$
time (in expectation) to terminate under the standard stochastic
model.\footnote{%
The definition of runtime in \cite{ChenCDS2017speed-faults} is based on a
slightly different convention which results in scaling the runtime expressions
by a
$1 / n$
factor.}
The main result of \cite{ChenCDS2017speed-faults} states that a predicate can
be decided by a speed fault free CRN protocol (operating under the strongly
fair adversarial scheduler) if and only if it belongs to the class of
detection predicates (a subclass of semilinear predicates).

The runtime measure introduced in the current paper can be viewed as a
quantitative generalization of the fundamentally qualitative notion of speed
faults (the quest for such a generalization was, in fact, the main motivation
for this work).
As discussed in \Sect{}~\ref{section:runtime:speed-faults}, in our adversarial
setting, a speed fault translates to an
$\Omega (n)$
runtime lower bound, leading to an
$\Omega (n)$
runtime lower bound for the task of deciding any non-detection semilinear
predicate.
On the positive side, we prove that this bound is tight:
any semilinear predicate (in particular, the non-detection ones) can be
decided by a CRN protocol operating under the weakly fair adversarial
scheduler whose runtime is
$O (n)$
(see \Thm{}~\ref{theorem:crd:semilinear}).
For detection predicates, we establish a better upper bound (which is also
tight):
any detection predicate can be decided by a CRN protocol operating under the
weakly fair adversarial scheduler whose runtime is
$O (\log n)$
(see \Thm{}~\ref{theorem:crd:detection}).
Refer to
\LongVersion 
\Sect{}~\ref{section:related-work} for an additional discussion and to
\LongVersionEnd 
Table~\ref{table:predicate-decidability-landscape-adversarial} for a summary
of the adversarial runtime complexity bounds established for predicate
decidability tasks;
for comparison, Table~\ref{table:predicate-decidability-landscape-stochastic}
presents a similar summary of the known stochastic runtime complexity bounds.

\LongVersion 
\subparagraph{Amplifying the Voter Signal.}
By definition, a predicate deciding CRN protocol accepts (resp., rejects) a
given initial configuration by including $1$-voter (resp., $0$-voter) species
in the output configuration.
This definition merely requires that the ``right'' voter species are present
in the output configuration in a positive molecular count, even if this
molecular count is small.
In practice, the signal obtained from species with a small molecular count may
be too weak, hence we aim towards \emph{vote amplified} protocols, namely,
protocols with the additional guarantee that the fraction of non-voter
molecules in the output configuration is arbitrarily small.

To this end, we introduce a generic compiler that takes any predicate
decidability protocol and turns it into a vote amplified protocol.
The core of this compiler is a (standalone) computational task, referred to as
\emph{vote amplification}, which is defined over four species classes:
\emph{permanent} $0$- and $1$-voters and \emph{fluid} $0$- and $1$-voters.
A vote amplification protocol is correct if for
$v \in \{ 0, 1 \}$,
starting from any initial
configuration $\c^{0}$ with a positive molecular count of permanent $v$-voters
and no permanent
$(1 - v)$-voters,
the execution is guaranteed to terminate in a configuration that includes only
(permanent and fluid) $v$-voters;
this guarantee holds regardless of the molecular counts of the fluid voters in
$\c^{0}$.
As it turns out, the runtime of the vote amplification protocol is the
dominant component in the runtime overhead of the aforementioned compiler.

A vote amplification protocol whose runtime is
$O (n)$
is presented in \cite{AngluinAE2008fast} (using the term ``random-walk
broadcast''), however this protocol is designed to operate under the
stochastic scheduler and, as shown in Appendix~\ref{appendix:random-walk}, its
correctness breaks once we switch to the weakly fair adversarial scheduler.
One of the main technical contributions of the current paper is a vote
amplification protocol whose (adversarial) runtime is also
$O (n)$,
albeit under the weakly fair adversarial scheduler (see
\Thm{}~\ref{theorem:amplification:protocol}).
\LongVersionEnd 

\subparagraph{Paper's Outline.}
The rest of the paper is organized as follows.
The CRN model used in this paper is presented in \Sect{}~\ref{section:model}.
In \Sect{}~\ref{section:configuration-digraph}, we link the correctness of a
CRN protocol to certain topological properties of its configuration digraph.
Our new runtime notion for adversarially scheduled executions is introduced
in \Sect{}~\ref{section:runtime}, where we also establish the soundness of
this notion%
\LongVersion 
,
\LongVersionEnd 
\ShortVersion 
\ and
\ShortVersionEnd 
formalize its connection to speed faults%
\LongVersion 
, and provide a toolbox of useful techniques for protocol runtime analysis%
\LongVersionEnd 
.
\Sect{}~\ref{section:crd} presents our results for predicate deciding CRNs%
\LongVersion 
\ including
the protocols that decide semilinear and detection predicates%
\LongVersionEnd 
.
\LongVersion 
The generic vote amplification compiler is
\LongVersionEnd 
\ShortVersion 
These are accompanied by a generic technique for amplifying the molecular
count of the voter species in the outcome,
\ShortVersionEnd 
introduced in \Sect{}~\ref{section:amplification}.
\LongVersion 
In \Sect{}~\ref{section:justify-runtime-policy}, we consider four ``natural
restrictions'' for the definition of the runtime policy and show that they
actually lead to (asymptotic) inefficiency in terms of the resulting runtime
bounds.
\Sect{}~\ref{section:large-adversarial-small-stochastic} demonstrates that the
adversarial runtime of a CRN protocol may be significantly larger than its
expected runtime under the standard stochastic model.
We conclude in \Sect{}~\ref{section:related-work} with additional related work
and some open questions.
\LongVersionEnd 

\section{Chemical Reaction Networks}
\label{section:model}
In this section, we present the \emph{chemical reaction network (CRN)}
computational model.
For the most part, we adhere to the conventions of the existing CRN literature
(e.g.,
\cite{CookSWB2009programmability,
Condon2018design,
Brijder2019tutorial}),
but we occasionally deviate from them for the sake of simplifying the
subsequent discussions.
(Refer to \Fig{}\
\ref{figure:crn-kill-b:protocol}--\ref{figure:crn-skipping-policy:protocol}
for illustrations of the notions presented in this section.)

A \emph{CRN} is a protocol $\Pi$ specified by the pair
$\Pi = (\Species, \Reactions)$,
where
$\Species$ is a fixed set of \emph{species}
and
$\Reactions \subset \Naturals^{\Species} \times \Naturals^{\Species}$
is a fixed set of \emph{reactions}.\footnote{%
Throughout this paper, we denote
$\Naturals = \{ z \in \Integers \mid z \geq 0 \}$.}
For a reaction
$\alpha = (\mathbf{r}, \mathbf{p}) \in \Reactions$,
the vectors
$\mathbf{r} \in \Naturals^{\Species}$
and
$\mathbf{p} \in \Naturals^{\Species}$
specify the stoichiometry of $\alpha$'s \emph{reactants} and \emph{products},
respectively.\footnote{%
We stick to the convention of identifying vectors in $\Naturals^{\Species}$
with multisets over $\Species$ expressed as a ``molecule summation''.}
Specifically, the entry $\mathbf{r}(A)$ (resp., $\mathbf{p}(A)$) indexed by a
species
$A \in \Species$
in the vector
$\mathbf{r}$ (resp., $\mathbf{p}$) encodes the number of molecules of $A$ that
are consumed (resp., produced) when
$\alpha$ is applied.
Species $A$ is a \emph{catalyst} for the reaction
$\alpha = (\mathbf{r}, \mathbf{p})$
if
$\mathbf{r}(A) = \mathbf{p}(A) > 0$.

We adhere to the convention (see, e.g.,
\cite{ChenDS2014deterministic,
Doty2014timing,
CummingsDS2016probability-1,
ChenCDS2017speed-faults})
that each reaction
$(\mathbf{r}, \mathbf{p}) \in \Reactions$
is either \emph{unimolecular} with
$\| \mathbf{r} \| = 1$
or \emph{bimolecular} with
$\| \mathbf{r} \| = 2$%
\LongVersion 
;%
\LongVersionEnd 
\ShortVersion 
.%
\ShortVersionEnd 
\footnote{%
The notation
$\| \cdot \|$
denotes the $1$-norm $\ell_{1}$.}
\LongVersion 
forbidding higher order reactions is justified as more than two molecules are
not likely to directly interact.
\LongVersionEnd 
Note that if all reactions
$(\mathbf{r}, \mathbf{p}) \in \Reactions$
are bimolecular and \emph{density preserving}, namely,
$\| \mathbf{r} \| = \| \mathbf{p} \|$,
then the CRN model is equivalent to the extensively studied \emph{population
protocols} model
\cite{AngluinADFP2006computation,
AspnesR2009introduction,
MichailCS2011book}
assuming that the population protocol agents have a constant state space.

For a vector (or multiset)
$\mathbf{r} \in \Naturals^{\Species}$
with
$1 \leq \| \mathbf{r} \| \leq 2$,
let
$\Reactions(\mathbf{r})
=
( \{ \mathbf{r} \} \times \Naturals^{\Species} ) \cap \Reactions$
denote the subset of reactions whose reactants correspond to $\mathbf{r}$.
In the current paper, it is required that none of these reaction subsets is
empty, i.e.,
$|\Reactions(\mathbf{r})| \geq 1$
for every
$\mathbf{r} \in \Naturals^{\Species}$
with
$1 \leq \| \mathbf{r} \| \leq 2$.
Some of the reactions in $\Reactions$ may be \emph{void}, namely,
reactions
$(\mathbf{r}, \mathbf{p})$
satisfying
$\mathbf{r} = \mathbf{p}$;
let
$\NonVoid(\Reactions)
=
\{
(\mathbf{r}, \mathbf{p}) \in \Reactions
\mid
\mathbf{r} \neq \mathbf{p}
\}$
denote the set of non-void reactions in
$\Reactions$.
To simplify the exposition, we assume that if
$\alpha = (\mathbf{r}, \mathbf{r}) \in \Reactions$
is a void reaction, then
$\Reactions(\mathbf{r}) = \{ \alpha \}$;
this allows us to describe protocol $\Pi$ by listing only its non-void
reactions.
\LongVersion 
We
\LongVersionEnd 
\ShortVersion 
For the sake of simplicity, we
\ShortVersionEnd 
further assume that
$\| \mathbf{r} \| \leq \| \mathbf{p} \|$
for all reactions
$(\mathbf{r}, \mathbf{p}) \in \Reactions$.%
\LongVersion 
\footnote{%
The last two assumptions, are not fundamental to our CRN setup and are made
only for the sake of simplicity.}
\LongVersionEnd 

\subparagraph{Configurations.}
A \emph{configuration} of a CRN
$\Pi = (\Species, \Reactions)$
is a vector
$\c \in \Naturals^{\Species}$
that encodes the \emph{molecular count} $\c(A)$ of species $A$ in the solution
for each
$A \in \Species$.\footnote{%
\LongVersion 
Note that we
\LongVersionEnd 
\ShortVersion 
We
\ShortVersionEnd 
consider the discrete version of the CRN model, where the
configuration encodes integral molecular counts%
\LongVersion 
.
This is
\LongVersionEnd 
\ShortVersion 
,
\ShortVersionEnd 
in contrast to the continuous
\LongVersion 
CRN
\LongVersionEnd 
model, where a configuration is given by real species densities.}
The molecular count notation is extended to species (sub)sets
$\Lambda \subseteq \Species$,
denoting
$\c(\Lambda) = \sum_{A \in \Lambda} \c(A)$.
We refer to
$\c(\Species) = \| \c \|$
as the \emph{molecular count}
of the configuration $\c$.
Let
$\c|_{\Lambda} \in \Naturals^{\Lambda}$
denote the restriction of a configuration
$\c \in \Naturals^{\Species}$
to a species subset
$\Lambda \subseteq \Species$.

A reaction
$\alpha = (\mathbf{r}, \mathbf{p}) \in \Reactions$
is said to be \emph{applicable} to a configuration
$\c \in \Naturals^{\Species}$
if
$\mathbf{r}(A) \leq \c(A)$
for every
$A \in \Species$.
Let
$\Applicable(\c) \subseteq \Reactions$
denote the set of reactions which are applicable to $\c$ and let
$\overline{\Applicable}(\c) = \Reactions - \Applicable(\c)$,
referring to the reactions in
$\overline{\Applicable}(\c)$
as being \emph{inapplicable} to $\c$.
We restrict our attention to configurations $\c$ with molecular count
$\| \c \| \geq 1$,
which ensures that $\Applicable(\c)$ is never empty.
For a reaction
$\alpha \in \Applicable(\c)$,
let
$\alpha(\c)
=
\c - \mathbf{r} + \mathbf{p}$
be the configuration obtained by applying $\alpha$ to $\c$.\footnote{%
Unless stated otherwise, all vector arithmetic is done component-wise.}

Given two configurations
$\c, \c' \in \Naturals^{\Species}$,
the binary relation
$\c \LeadsTo \c'$
holds if there exists a reaction
$\alpha \in \Applicable(\c)$
such that
$\alpha(\c) = \c'$.
We denote the reflexive transitive closure of $\LeadsTo$ by
$\Reaching$ and say that $\c'$ is \emph{reachable} from $\c$ if
$\c \Reaching \c'$.
Given a configuration set
$Z \subseteq \Naturals^{\Species}$,
let
\[\textstyle
\Stab(Z)
\, \triangleq \,
\left\{ \c \in Z \mid \c \Reaching \c' \Longrightarrow \c' \in Z \right\}
\quad\text{and}\quad
\Halt(Z)
\, \triangleq \,
\left\{ \c \in Z \mid \c \Reaching \c' \Longrightarrow \c' = \c \right\}
\,
\LongVersion 
;
\LongVersionEnd 
\ShortVersion 
,
\ShortVersionEnd 
\]
\LongVersion 
that is,
$\Stab(Z)$ consists of every configuration
$\c \in Z$
all of whose reachable configurations are also in $Z$
whereas
$\Halt(Z)$ consists of every configuration
$\c \in Z$
which is halting in the sense that the only configuration reachable from $\c$
is $\c$ itself,
\LongVersionEnd 
observing that the latter set is a (not necessarily strict) subset of the
former.

For the sake of simplicity, we restrict this paper's focus to protocols that
\emph{respect finite density} \cite{Doty2014timing}, namely,
$\c \Reaching \c'$
implies that
$\| \c' \| \leq O (\| \c \|)$.%
\LongVersion 
\footnote{%
This restriction is not fundamental to our CRN setup and can be swapped for a
weaker one.}
\LongVersionEnd 
\ShortVersion 
\
\ShortVersionEnd 
We note that density preserving CRNs inherently respect finite density,
however we also allow for reactions that have more products than reactants as
long as the CRN protocol is designed so that the molecular count cannot
increase arbitrarily.
This means, in particular, that although the configuration space
$\Naturals^{\Species}$
is inherently infinite, the set
$\{ \c' \in \Naturals^{\Species} \mid \c \Reaching \c' \}$
is finite (and bounded as a function of
$\| \c \|$)
for any configuration
$\c \in \Naturals^{\Species}$.

\subparagraph{Executions.}
An \emph{execution} $\eta$ of the CRN $\Pi$ is an infinite sequence
$\eta = \langle \c^{t}, \alpha^{t} \rangle_{t \geq 0}$
of
$\langle \text{configuration}, \text{reaction} \rangle$
pairs such that
$\alpha^{t} \in \Applicable(\c^{t})$
and
$\c^{t + 1} = \alpha^{t}(\c^{t})$
for every
$t \geq 0$.
It is convenient to think of $\eta$ as progressing in discrete \emph{steps} so
that configuration $\c^{t}$ and reaction $\alpha^{t}$ are associated with step
$t \geq 0$.
We refer to $\c^{0}$ as the \emph{initial configuration} of $\eta$ and, unless
stated otherwise, denote the molecular count of $\c^{0}$ by
$n = \| \c^{0} \|$.
Given a configuration set
$Z \subseteq \Naturals^{\Species}$,
we say that $\eta$ \emph{stabilizes} (resp., \emph{halts}) into $Z$ if there
exists a step
$t \geq 0$
such that
$\c^{t} \in \Stab(Z)$
(resp.,
$\c^{t} \in \Halt(Z)$)
and refer to the earliest such step $t$ as the execution's \emph{stabilization
step} (resp., \emph{halting step}) with respect to $Z$.

In this paper, we consider an \emph{adversarial scheduler} that knows the
CRN protocol $\Pi$ and the initial configuration $\c^{0}$ and determines the
execution
$\eta = \langle \c^{t}, \alpha^{t} \rangle_{t \geq 0}$
in an arbitrary (malicious) way.
The execution $\eta$ is nonetheless subject to the following \emph{fairness}
condition:
for every
$t \geq 0$
and for every
$\alpha \in \Applicable(\c^{t})$,
there exists
$t' \geq t$
such that either
(I)
$\alpha^{t'} = \alpha$;
or
(II)
$\alpha \notin \Applicable(\c^{t'})$.
In other words, the scheduler is not allowed to (indefinitely) ``starve'' a
continuously applicable reaction.
We emphasize that the mere condition that a reaction
$\alpha \in \Reactions$
is applicable infinitely often does not imply that $\alpha$ is scheduled
infinitely often.

Note that the fairness condition adopted in the current paper differs from
the one used in the existing CRN
\LongVersion 
(and population protocols)
\LongVersionEnd 
literature
\cite{AngluinADFP2006computation,
AngluinAER2007computational,
ChenDS2014deterministic,
Brijder2019tutorial}.
The latter, referred to hereafter as \emph{strong fairness}, requires that if
a configuration $\c$ appears infinitely often in the execution $\eta$ and a
configuration $\c'$ is reachable from $\c$, then $\c'$ also appears infinitely
often in $\eta$.
Strictly speaking, a strongly fair execution $\eta$ is not necessarily fair
according to the current paper's notion of fairness (in particular, $\eta$ may
starve void reactions).
However, as we show in \Sect{}~\ref{section:configuration-digraph}, protocol
correctness under the current paper's notion of fairness implies protocol
correctness under strong fairness (see
\Cor{}~\ref{corollary:weak-correctness-implies-strong-correctness}), where the
exact meaning of correctness is defined soon.
Consequently, we refer hereafter to the notion of fairness adopted in the
current paper as \emph{weak fairness}.

\subparagraph{Interface and Correctness.}
The CRN notions introduced so far are independent of any particular
computational task.
To correlate between a CRN protocol
$\Pi = (\Species, \Reactions)$
and concrete computational tasks, we associate $\Pi$ with a (task specific)
\emph{interface}
$\mathcal{I} = (\mathcal{U}, \mu, \mathcal{C})$
whose semantics is as follows:
$\mathcal{U}$ is a fixed set of \emph{interface values} that typically encode
the input and/or output associated with the species;
$\mu : \Species \rightarrow \mathcal{U}$
is an \emph{interface mapping} that maps each species
$A \in \Species$
to an interface value
$\mu(A) \in \mathcal{U}$;
and
$\mathcal{C} \subseteq \Naturals^{\mathcal{U}} \times \Naturals^{\mathcal{U}}$
is a \emph{correctness relation} that determines the correctness of an
execution as explained soon.\footnote{%
The abstract interface formulation generalizes various families of
computational tasks addressed in the CRN literature, including predicate
decision
\cite{Brijder2016minimal,
ChenCDS2017speed-faults,
Brijder2019tutorial}
(see also \Sect{}~\ref{section:crd}) and function computation
\cite{ChenDS2014deterministic,
DotyH2015leaderless,
Brijder2019tutorial},
as well as the vote amplification task discussed in
\Sect{}~\ref{section:amplification}, without committing to the
specifics of one particular family.
For example, for the CRDs presented in \Sect{}~\ref{section:crd}, we define
$\mathcal{U} = ( \Sigma \cup \{ \bot \} ) \times \{ 0, 1, \bot \}$.
The interface mapping $\mu$ then maps each species
$A \in \Species$
to the interface value
$\mu(A) = (x, y) \in \mathcal{U}$
defined so that
(I)
$x = A$
if
$A \in \Sigma$;
and
$x = \bot$
otherwise;
and
(II)
$y = v$
if
$A \in \Upsilon_{v}$;
and
$y = \bot$
otherwise.}

Hereafter, we refer to the vectors in $\Naturals^{\mathcal{U}}$ as
\emph{interface vectors}.
The interface of a configuration
$\c \in \Naturals^{\Species}$
in terms of the input/output that $\c$ encodes
\LongVersion 
(if any)
\LongVersionEnd 
is captured by the
interface vector
\[\textstyle
\mu(\c)
\, \triangleq \,
\left\langle
\c \left( \left\{ A \in \Species \mid \mu(A) = u \right\} \right)
\right\rangle_{u \in \mathcal{U}}
\, .
\]

The abstract interface
$\mathcal{I} = (\mathcal{U}, \mu, \mathcal{C})$
allows us to define what it means for a protocol
to be correct.
To this end, for each configuration
$\c \in \Naturals^{\Species}$,
let
$Z_{\mathcal{I}}(\c) = \{
\c' \in \Naturals^{\Species} \mid (\mu(\c), \mu(\c')) \in \mathcal{C}
\}$
be the set of configurations which are mapped by $\mu$ to interface vectors
that satisfy the correctness relation with $\mu(\c)$.
A configuration
$\c^{0} \in \Naturals^{\Species}$
is a \emph{valid initial configuration} with respect to
$\mathcal{I}$ if
$Z_{\mathcal{I}}(\c^{0}) \neq \emptyset$;
an execution is \emph{valid} (with respect to $\mathcal{I}$) if it emerges
from a valid initial configuration.
A valid execution $\eta$ is said to be \emph{stably correct} (resp.,
\emph{haltingly correct})
\LongVersion 
with respect to $\mathcal{I}$
\LongVersionEnd 
if $\eta$ stabilizes (resp., halts) into
$Z_{\mathcal{I}}(\c^{0})$.
\LongVersion 
\par
\LongVersionEnd 
The protocol $\Pi$ is said to be \emph{stably correct} (resp.,
\emph{haltingly correct})
\LongVersion 
with respect to $\mathcal{I}$
\LongVersionEnd 
if every weakly fair valid execution is guaranteed to be stably (resp.,
haltingly) correct.\footnote{%
Both notions of correctness have been studied in the CRN literature, see,
e.g., \cite{Brijder2019tutorial}.}
\LongVersion 
When the interface $\mathcal{I}$ is not important or clear from the context,
we may address the stable/halting correctness of executions and protocols
without explicitly mentioning $\mathcal{I}$.
\LongVersionEnd 

\subparagraph{The Stochastic Scheduler.}
While the current paper focuses on the (weakly fair) adversarial scheduler,
another type of scheduler that receives a lot of attention in the literature
is the \emph{stochastic scheduler}.
Here, we present the stochastic scheduler so that it can serve as a
``benchmark'' for the runtime definition introduced in
\Sect{}~\ref{section:runtime}.
To this end, we define the \emph{propensity} of a reaction
$\alpha = (\mathbf{r}, \mathbf{p}) \in \Reactions$
in a configuration
$\c \in \Naturals^{\Species}$,
denoted by $\Prp_{\c}(\alpha)$, as
\[\textstyle
\Prp_{\c}(\alpha)
\, = \,
\begin{cases}
\c(A) \cdot \frac{1}{|\Reactions(\mathbf{r})|} \, ,
&
\mathbf{r} = A
\\
\frac{1}{\Vol} \cdot \binom{\c(A)}{2} \cdot \frac{1}{|\Reactions(\mathbf{r})|}
\, ,
&
\mathbf{r} = 2 A
\\
\frac{1}{\Vol} \cdot \c(A) \cdot \c(B)
\cdot
\frac{1}{|\Reactions(\mathbf{r})|}
\, ,
&
\mathbf{r} = A + B, A \neq B
\end{cases}
\, ,
\]
where
$\Vol > 0$
is a (global) \emph{volume} parameter.\footnote{%
In the standard stochastic model \cite{Gillespie1977stochastic}, the
propensity expression is multiplied by a reaction specific rate coefficient.
In the current paper, that merely uses the stochastic scheduler as a
benchmark, we make the simplifying assumption that all rate coefficients are
set to $1$
(c.f.~\cite{ChenDS2014deterministic,
ChenCDS2017speed-faults}).}
Notice that reaction $\alpha$ is applicable to $\c$ if and only if
$\Prp_{\c}(\alpha) > 0$.
The propensity notation is extended to reaction (sub)sets
$Q \subseteq \Reactions$
by defining
$\Prp_{\c}(Q) = \sum_{\alpha \in Q} \Prp_{\c}(\alpha)$.
Recalling that
$\Reactions(\mathbf{r}) \neq \emptyset$
for each
$\mathbf{r} \in \Naturals^{\Species}$
with
$1 \leq \| \mathbf{r} \| \leq 2$,
we
observe that
\[\textstyle
\Prp_{\c}
\, \triangleq \,
\Prp_{\c}(\Reactions)
\, = \,
\| \c \| + \frac{1}{\Vol} \cdot \binom{\| \c \|}{2}
\, .
\]

The stochastic scheduler determines the execution
$\eta = \langle \c^{t}, \alpha^{t} \rangle_{t \geq 0}$
by scheduling a reaction
$\alpha \in \Applicable(\c^{t})$
in step $t$, setting
$\alpha^{t} = \alpha$,
with probability proportional to $\alpha$'s propensity $\Prp_{\c^{t}}(\alpha)$
in $\c^{t}$.
The assumption that the CRN protocol respects finite density implies that
$\eta$ is (weakly and strongly) fair with probability $1$.
We define the \emph{time span} of step
$t \geq 0$
to be
$1 \big/ \Prp_{\c^{t}}$,
i.e., the normalizing factor of the reaction selection probability.\footnote{%
The time span definition is consistent with the expected time until a reaction
occurs under the continuous time Markov process formulation of the standard
stochastic model \cite{Gillespie1977stochastic} with no rate coefficients.
}
Given a step
$t^{*} \geq 0$,
the \emph{stochastic runtime} of the execution prefix
$\eta^{*} = \langle \c^{t}, \alpha^{t} \rangle_{0 \leq t < t^{*}}$
is defined to be the accumulated time span
$\sum_{t = 0}^{t^{*} - 1} 1 \big/ \Prp_{\c^{t}}$.

We adopt the convention that the volume is proportional to the initial
molecular count
$n = \| \c^{0} \|$
\cite{Doty2014timing}.
The assumption that the CRN protocol respects finite density ensures that
$\Vol = \Theta (\| \c^{t} \|)$
for every
$t \geq 0$
which means that the volume is sufficiently large to contain all molecules
throughout the (stochastic) execution $\eta$.
This also means that the time span of each step
$t \geq 0$
is
\begin{equation} \label{equation:time-span}
\textstyle
1 / \Prp_{\c^{t}}
\, = \,
\frac{\Vol}{\Vol \cdot \| \c^{t} \| + \binom{\| \c^{t} \|}{2}}
\, = \,
\Theta (1 / \| \c^{t} \|)
\, = \,
\Theta (1 / n)
\, ,
\end{equation}
hence the stochastic runtime of an execution prefix that lasts for $t^{*}$
steps is
$\Theta (t^{*} / n)$.

\section{Correctness Characterization via the
Configuration Digraph}
\label{section:configuration-digraph}
It is often convenient to look at CRN protocols through the lens of the
following abstract directed graph (a.k.a.\ digraph):
The \emph{configuration digraph} of a protocol
$\Pi = (\Species, \Reactions)$
is a digraph, denoted by $D^{\Pi}$, whose edges are labeled by reactions in
$\Reactions$.
The nodes of $D^{\Pi}$ are identified with the configurations in
$\Naturals^{\Species}$;
the edge set of $D^{\Pi}$ includes an $\alpha$-labeled edge from $\c$ to
$\alpha(\c)$ for each configuration
$\c \in \Naturals^{\Species}$
and reaction
$\alpha \in \Applicable(\c)$
(thus the outdegree of $\c$ in $D^{\Pi}$ is $|\Applicable(\c)|$).
Observe that the self-loops of $D^{\Pi}$ are exactly the edges labeled by
(applicable) void reactions.
Moreover, a configuration $\c'$ is reachable, in the graph theoretic sense,
from a configuration $\c$ if and only if
$\c \Reaching \c'$.
For a configuration
$\c \in \Naturals^{\Species}$,
let $D^{\Pi}_{\c}$ be the digraph induced by $D^{\Pi}$ on the set of
configurations reachable from $\c$ and observe that $D^{\Pi}_{\c}$ is finite as
$\Pi$ respects finite density.
(Refer to \Fig{}\
\ref{figure:crn-kill-b:digraph}--\ref{figure:crn-skipping-policy:digraph} for
illustrations of the notions presented in this section.)

\LongVersion 
By definition, there is a one-to-one correspondence between the executions
$\eta = \langle \c^{t}, \alpha^{t} \rangle_{t \geq 0}$
of $\Pi$ and the infinite paths
$P = (\c^{0}, \c^{1}, \dots)$
in $D^{\Pi}$, where the edges of $P$ are labeled by the reaction sequence
$(\alpha^{0}, \alpha^{1}, \dots)$.
We say that an infinite path in $D^{\Pi}$ is \emph{weakly fair} (resp.,
\emph{strongly fair}) if its corresponding execution is weakly (resp.,
strongly) fair.
\LongVersionEnd 

The \emph{(strongly connected) components} of the configuration digraph
$D^{\Pi}$ are the equivalence classes of the ``reachable from each other''
relation over the configurations in $\Naturals^{\Species}$.
We say that a reaction
$\alpha \in \Reactions$
\emph{escapes} from a component $S$ of $D^{\Pi}$ if every configuration in $S$
admits an outgoing $\alpha$-labeled edge to a configuration not in $S$;
i.e.,
$\alpha \in \Applicable(\c)$
and
$\alpha(\c) \notin S$
for every
$\c \in S$
(see, e.g., \Fig{}~\ref{figure:crn-kill-b:digraph}).
\ShortVersion 
The notion of escaping reactions allows us to express the stable/halting
correctness of CRNs in terms of their
configuration digraphs.
\ShortVersionEnd 
\LongVersion 
The notion of escaping reactions allows us to state the following key lemma.

\begin{lemma} \label{lemma:non-escaping-scc-and-fair-path}
Consider a component $S$ of $D^{\Pi}$.
The digraph $D^{\Pi}$ admits a weakly fair infinite path all of whose nodes
are in $S$ if and only if none of the reactions in $\Reactions$ escapes from
$S$.
\end{lemma}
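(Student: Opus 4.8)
The plan is to prove the two directions separately. For the ($\Leftarrow$) direction, suppose no reaction escapes from $S$; I would construct a weakly fair infinite path confined to $S$ by a round-robin style argument. Fix an enumeration $\alpha_1, \dots, \alpha_m$ of $\Reactions$. Starting from any node $\c \in S$, I proceed in phases: in the phase for $\alpha_j$, if $\alpha_j$ is currently inapplicable we do nothing (fairness clause (II) is satisfied at this point), and if $\alpha_j$ is applicable we want to schedule it. The key point is that scheduling $\alpha_j$ from a configuration $\c' \in S$ keeps us in $S$: since $\alpha_j$ does \emph{not} escape from $S$, it is not the case that $\alpha_j(\c'') \notin S$ for \emph{every} $\c'' \in S$; but more is needed — I actually need that $\alpha_j(\c') \in S$ for the \emph{particular} $\c'$ I am at. Here I use that $S$ is a strongly connected component: from $\c'$ I can first walk inside $S$ (such a walk exists by strong connectivity) to \emph{some} configuration from which applying $\alpha_j$ stays in $S$, or — cleaner — observe that if $\alpha_j(\c') \notin S$ then, combined with $\c'$ being able to reach every node of $S$, we could still... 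Let me instead argue: within $S$, consider the set $S_j = \{\c'' \in S : \alpha_j \in \Applicable(\c''),\ \alpha_j(\c'') \in S\}$. If $S_j = \emptyset$ while $\alpha_j$ is applicable somewhere in $S$, I derive a contradiction with non-escaping by reaching, from any applicable-somewhere configuration, a configuration where $\alpha_j$ is applicable (using finite density so $S$ is finite and $\Reaching$-closed within components) — more directly, non-escaping says that \emph{if} $\alpha_j$ is applicable to all of $S$ then it stays in $S$ from at least one node, so... The honest statement: I will show that whenever $\alpha_j \in \Applicable(\c')$ for $\c' \in S$, either $\alpha_j(\c') \in S$, or I can walk inside $S$ to a $\c''$ with $\alpha_j(\c'') \in S$; then in the phase I either fire $\alpha_j$ once (staying in $S$) or, if $\alpha_j$ has meanwhile become inapplicable, I skip it. Cycling through all reactions forever yields an infinite path in $S$ in which each reaction, at each step, is eventually either fired or made inapplicable — i.e.\ weak fairness.

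For the ($\Rightarrow$) direction, suppose some reaction $\alpha = (\mathbf{r},\mathbf{p})$ escapes from $S$, so $\alpha \in \Applicable(\c)$ and $\alpha(\c) \notin S$ for every $\c \in S$. Let $P = (\c^0, \c^1, \dots)$ be any infinite path with all nodes in $S$, with reaction labels $\alpha^0, \alpha^1, \dots$. Since $\alpha$ is applicable to every configuration in $S$, in particular $\alpha \in \Applicable(\c^t)$ for all $t \geq 0$. Weak fairness applied at step $t=0$ then forces the existence of $t' \geq 0$ with $\alpha^{t'} = \alpha$ (clause (II) is impossible since $\alpha$ is always applicable along $P$). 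But then $\c^{t'+1} = \alpha(\c^{t'}) \notin S$ because $\alpha$ escapes from $S$, contradicting the assumption that all nodes of $P$ lie in $S$. Hence no weakly fair infinite path can be confined to $S$.

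The main obstacle is the construction in the ($\Leftarrow$) direction: "$\alpha_j$ does not escape from $S$" only says that \emph{some} configuration in $S$ admits an $\alpha_j$-edge staying in $S$ (or $\alpha_j$ is inapplicable somewhere in $S$), not that the node I am currently sitting at does. The fix is to exploit strong connectivity of $S$ together with $S$ being finite (valid since $\Pi$ respects finite density, so $D^\Pi_{\c}$ is finite): before firing $\alpha_j$ I first take a finite detour inside $S$ to reach a configuration where firing $\alpha_j$ is "safe" (stays in $S$) — and I must verify that such a safe configuration is reachable \emph{within $S$} from every node of $S$, which is where I will need to combine non-escaping with strong connectivity carefully, possibly arguing by contradiction that otherwise $\alpha_j$ would in fact escape. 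I would also need to handle the bookkeeping that interleaving these detours with the round-robin over reactions still yields a genuinely weakly fair path (each reaction revisited infinitely often in the schedule, so clauses (I)/(II) fire from every step $t$), which is routine once the detour lemma is in place.
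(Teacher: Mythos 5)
Your $\Rightarrow$ direction is correct and is exactly the paper's argument: an escaping reaction is applicable at every node of $S$, so weak fairness forces it to be scheduled eventually, which exits $S$. Your $\Leftarrow$ direction is also, in outline, the paper's construction: the paper picks, for each reaction $\alpha$, one witness edge $e_{\alpha}$ in the digraph induced on $S$ (either labeled by $\alpha$, or emanating from a node at which $\alpha$ is inapplicable), uses strong connectivity of $S$ to thread a single (non-simple) cycle $C$ through all the $e_{\alpha}$, and repeats $C$ forever; your round-robin-with-detours is the same cycle described phase by phase, and your observation that the detours stay inside $S$ because any path between two nodes of the same strongly connected component remains in that component is the right justification.

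There is, however, one step in your plan that would fail as literally stated. You claim that if $S_{j} = \{\c'' \in S : \alpha_{j} \in \Applicable(\c''),\ \alpha_{j}(\c'') \in S\}$ is empty while $\alpha_{j}$ is applicable somewhere in $S$, you can derive a contradiction with non-escaping, and accordingly your ``honest statement'' promises a detour to a node $\c''$ with $\alpha_{j}(\c'') \in S$. Neither holds: since escaping requires $\alpha_{j}$ to be applicable at \emph{every} node of $S$ and to exit $S$ from every node, ``$\alpha_{j}$ does not escape'' only yields some $\c'' \in S$ with $\alpha_{j} \notin \Applicable(\c'')$ \emph{or} $\alpha_{j}(\c'') \in S$, and the first disjunct alone may be the case ($\alpha_{j}$ may exit $S$ from every node at which it is applicable, yet be inapplicable at some node of $S$, so $S_{j} = \emptyset$ with no contradiction). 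The correct detour lemma is: from every node of $S$ there is a walk inside $S$ to a node at which $\alpha_{j}$ is inapplicable or from which the $\alpha_{j}$-labeled edge stays in $S$; in the former case the phase ends by fairness clause (II) without ever firing $\alpha_{j}$. Your closing remark about skipping $\alpha_{j}$ when it ``has meanwhile become inapplicable'' shows you have both cases in mind, so this is a repair of the stated claim rather than of the underlying idea; with that correction, and since each reaction's phase recurs infinitely often, the bookkeeping you describe does yield a weakly fair infinite path confined to $S$.
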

\begin{proof}
By definition, if $S$ admits an escaping reaction
$\alpha \in \Reactions$,
then every weakly fair infinite path $P$ in $D^{\Pi}$ that visits $S$ cannot
stay in $S$ indefinitely without starving $\alpha$, hence $P$ must eventually
leave $S$.
In the converse direction, assume that none of the reactions in $\Reactions$
escapes from $S$ and let $D^{\Pi}(S)$ be the digraph induced by $D^{\Pi}$ on $S$.
For each reaction
$\alpha \in \Reactions$,
let
$e_{\alpha} = (\c, \c')$
be an edge in $D^{\Pi}(S)$ that satisfies either
(1)
$e_{\alpha}$ is labeled by $\alpha$;
or
(2)
$\alpha$ is inapplicable to $\c$.
(Such an edge $e_{\alpha}$ is guaranteed to exist as $\alpha$ does not escape
from $S$.)
Since $D^{\Pi}(S)$ is a strongly connected digraph, it follows that there
exists a (not necessarily simple) cycle $C$ in $D^{\Pi}(S)$ that includes the
edges $e_{\alpha}$ for all
$\alpha \in \Reactions$.
By repeatedly traversing $C$, we obtain a weakly fair infinite path in
$D^{\Pi}$.
\end{proof}
\LongVersionEnd 

\LongVersion 
We can now express the stable/halting correctness of CRNs in terms of their
configuration digraphs:
\Lem{}~\ref{lemma:correctness-via-configuration-digraph} follows from
\Lem{}~\ref{lemma:non-escaping-scc-and-fair-path} by the definitions of stable
correctness and halting correctness.
\LongVersionEnd 

\begin{lemma} \label{lemma:correctness-via-configuration-digraph}
A CRN protocol
$\Pi = (\Species, \Reactions)$
is stably (resp., haltingly) correct with respect to an interface
$\mathcal{I} = (\mathcal{U}, \mu, \mathcal{C})$
under a weakly fair scheduler
if and only if
for every valid initial configuration
$\c^{0} \in \Naturals^{\Species}$,
every component $S$ of $D^{\Pi}_{\c^{0}}$ satisfies (at least) one of the
following two conditions:
(1)
$S$ admits some (at least one) escaping reaction;
or
(2)
$S \subseteq \Stab(Z_{\mathcal{I}}(\c^{0}))$
(resp.,
$S \subseteq \Halt(Z_{\mathcal{I}}(\c^{0}))$),
where
$Z_{\mathcal{I}}(\c^{0}) = \{
\c \in \Naturals^{\Species} \mid (\mu(\c^{0}), \mu(\c)) \in \mathcal{C}
\}$.
\end{lemma}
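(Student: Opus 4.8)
The plan is to derive \Lem{}~\ref{lemma:correctness-via-configuration-digraph} as a consequence of \Lem{}~\ref{lemma:non-escaping-scc-and-fair-path} together with the definitions of stable and halting correctness, handling the stable case in detail and noting that the halting case is verbatim identical after replacing $\Stab$ by $\Halt$. Throughout, fix a valid initial configuration $\c^{0}$ and work inside the finite digraph $D^{\Pi}_{\c^{0}}$; recall that every weakly fair execution starting from $\c^{0}$ corresponds to a weakly fair infinite path in $D^{\Pi}_{\c^{0}}$, and conversely.

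First I would prove the ``if'' direction (the stated condition implies correctness). Let $\eta$ be a weakly fair valid execution with corresponding infinite path $P = (\c^{0}, \c^{1}, \dots)$ in $D^{\Pi}_{\c^{0}}$. Since $D^{\Pi}_{\c^{0}}$ is finite, $P$ eventually enters some component $S$ of $D^{\Pi}_{\c^{0}}$ and, after some step $t_{0}$, never leaves the ``terminal'' component it settles in — more precisely, consider the condensation (the DAG of components); $P$ eventually stabilizes in a single component $S$, which must be a sink of the condensation, i.e.\ no reaction escapes from $S$ (here I invoke \Lem{}~\ref{lemma:non-escaping-scc-and-fair-path}: a weakly fair path that stays in $S$ forever forces $S$ to be escape-free; and a sink of the condensation with an escaping reaction would contradict sink-hood since the escaping edge leaves $S$). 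By hypothesis, such an $S$ satisfies condition (2), so $S \subseteq \Stab(Z_{\mathcal{I}}(\c^{0}))$. Taking any step $t \ge t_{0}$ with $\c^{t} \in S$, we get $\c^{t} \in \Stab(Z_{\mathcal{I}}(\c^{0}))$, so $\eta$ stabilizes into $Z_{\mathcal{I}}(\c^{0})$; since $\eta$ was an arbitrary weakly fair valid execution, $\Pi$ is stably correct.

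Next I would prove the ``only if'' direction by contrapositive: suppose some component $S$ of $D^{\Pi}_{\c^{0}}$ violates both conditions, i.e.\ no reaction escapes from $S$ and $S \not\subseteq \Stab(Z_{\mathcal{I}}(\c^{0}))$. Pick $\c^{\ast} \in S$ with $\c^{\ast} \notin \Stab(Z_{\mathcal{I}}(\c^{0}))$; by definition of $\Stab$, there is $\c'$ with $\c^{\ast} \Reaching \c'$ and $\c' \notin Z_{\mathcal{I}}(\c^{0})$. If $\c' \in S$, apply \Lem{}~\ref{lemma:non-escaping-scc-and-fair-path} to obtain a weakly fair infinite path inside $S$, which we can preface with a finite path $\c^{0} \Reaching \c^{\ast}$ (reachability holds because $S \subseteq D^{\Pi}_{\c^{0}}$) — wait, one must be slightly careful here: the weakly fair path inside $S$ produced by the lemma need not start at $\c^{\ast}$, but since $S$ is strongly connected we may first route from $\c^{\ast}$ to the start of that path. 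This yields a weakly fair valid execution that never reaches a configuration in $\Stab(Z_{\mathcal{I}}(\c^{0}))$ (indeed it stays in $S$, which — being escape-free and containing a node from which $Z_{\mathcal{I}}(\c^{0})$ can be exited — contains no configuration of $\Stab(Z_{\mathcal{I}}(\c^{0}))$, by the same $\Stab$-closure argument applied within $S$). Hence $\Pi$ is not stably correct. If instead $\c' \notin S$, then the edge realizing the last step of $\c^{\ast} \Reaching \c'$ leaves $S$, contradicting that no reaction escapes $S$ — so this case cannot occur, and we are done.

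The main obstacle I anticipate is the bookkeeping around where exactly the weakly fair path from \Lem{}~\ref{lemma:non-escaping-scc-and-fair-path} begins and the precise argument that an escape-free component containing a node with a reachable configuration outside $Z_{\mathcal{I}}(\c^{0})$ is disjoint from $\Stab(Z_{\mathcal{I}}(\c^{0}))$ (so that ``the execution stays in $S$'' genuinely certifies non-stabilization); this is a short closure argument but needs to be stated cleanly. Everything else is routine unwinding of the definitions of $\Stab$, $\Halt$, weak fairness, and the condensation of $D^{\Pi}_{\c^{0}}$, and the halting version requires no new ideas.
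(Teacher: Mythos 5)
Your overall strategy --- deducing the lemma from \Lem{}~\ref{lemma:non-escaping-scc-and-fair-path} together with the definitions of $\Stab$/$\Halt$ and of stable/halting correctness --- is exactly the route the paper takes (the paper records nothing more than that), and your ``if'' direction and the $\c' \in S$ branch of the ``only if'' direction are sound. However, one step is genuinely wrong: in the ``only if'' direction you dismiss the case $\c' \notin S$ by claiming that an edge of the path realizing $\c^{\ast} \Reaching \c'$ leaving $S$ ``contradicts that no reaction escapes $S$.'' This conflates two different notions. A reaction $\alpha$ escapes from $S$ only if \emph{every} configuration of $S$ has an outgoing $\alpha$-labeled edge leaving $S$; a single edge from some configuration of $S$ to a configuration outside $S$ does not yield an escaping reaction. (This distinction is precisely what separates the weakly fair characterization in this lemma from the strongly fair one in \Lem{}~\ref{lemma:correctness-via-configuration-digraph-strong-fairness}, whose condition (1) speaks of outgoing edges rather than escaping reactions; see also the component $S$ of \Fig{}~\ref{figure:crn-kill-b:digraph}, which would be escape-free yet still have outgoing edges if $\gamma$ were removed.) So the case $\c' \notin S$ can genuinely occur and cannot be argued away. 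The same confusion appears, harmlessly, in your aside that the component in which a weakly fair path settles ``must be a sink of the condensation'': it need not be; it must only be escape-free, which is the property you actually use.

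Fortunately the flaw is easy to repair because your case split is unnecessary: nowhere in the $\c' \in S$ branch do you use that $\c'$ lies in $S$. Whenever $S$ is escape-free, \Lem{}~\ref{lemma:non-escaping-scc-and-fair-path} supplies a weakly fair infinite path confined to $S$; route from $\c^{0}$ to $\c^{\ast}$ and then, within $S$ by strong connectivity, to the start of that path. Every configuration $\c$ visited by the resulting execution satisfies $\c \Reaching \c^{\ast} \Reaching \c'$ with $\c' \notin Z_{\mathcal{I}}(\c^{0})$ --- this covers the prefix configurations outside $S$ as well as those in $S$, and holds irrespective of where $\c'$ lives --- so no visited configuration belongs to $\Stab(Z_{\mathcal{I}}(\c^{0}))$ and the execution never stabilizes. (One should also record, as you gesture at, that prefixing a weakly fair execution by a finite path preserves weak fairness; this is immediate from the definition, since any reaction applicable at a prefix step is either inapplicable at the first step of the suffix or is covered by the suffix's own fairness.) With that repair the argument is complete and matches the paper's intent.
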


To complement \Lem{}~\ref{lemma:correctness-via-configuration-digraph}, we
also express the stable/halting correctness of CRNs in terms of their
configuration digraphs under a strongly fair scheduler%
\LongVersion 
:
\Lem{}~\ref{lemma:correctness-via-configuration-digraph-strong-fairness}
follows from the same line of arguments as Lemma 1 in
\cite{AngluinADFP2006computation} by the definitions of stable correctness and
halting correctness.
\LongVersionEnd 
\ShortVersion 
.
\ShortVersionEnd 

\begin{lemma}
\label{lemma:correctness-via-configuration-digraph-strong-fairness}
A CRN protocol
$\Pi = (\Species, \Reactions)$
is stably (resp., haltingly) correct with respect to an interface
$\mathcal{I} = (\mathcal{U}, \mu, \mathcal{C})$
under a strongly fair scheduler
if and only if
for every valid initial configuration
$\c^{0} \in \Naturals^{\Species}$,
every component $S$ of $D^{\Pi}_{\c^{0}}$ satisfies (at least) one of the
following two conditions:
(1)
$S$ admits some (at least one) edge outgoing to another component;
or
(2)
$S \subseteq \Stab(Z_{\mathcal{I}}(\c^{0}))$
(resp.,
$S \subseteq \Halt(Z_{\mathcal{I}}(\c^{0}))$),
where
$Z_{\mathcal{I}}(\c^{0}) = \{
\c \in \Naturals^{\Species} \mid (\mu(\c^{0}), \mu(\c)) \in \mathcal{C}
\}$.
\end{lemma}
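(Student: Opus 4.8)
The plan is to transcribe the proof of \Lem{}~\ref{lemma:correctness-via-configuration-digraph} into the strongly fair setting, replacing the role played there by escaping \emph{reactions} with that of edges \emph{outgoing to another component}; this is precisely the adaptation of Lemma~1 of \cite{AngluinADFP2006computation} to our configuration-digraph formalism. Fix a valid initial configuration $\c^{0}$, and recall that $D^{\Pi}_{\c^{0}}$ is finite because $\Pi$ respects finite density. The engine of the argument is the following structural dichotomy. First, for every component $S$ of $D^{\Pi}_{\c^{0}}$ that has \emph{no} edge outgoing to another component, there is a strongly fair execution of $\Pi$ from $\c^{0}$ that reaches $S$ after finitely many steps and visits every configuration of $S$ infinitely often: follow a finite path $\c^{0} \Reaching \c_{S}$ with $\c_{S} \in S$ and then repeat forever a closed walk that covers all of $S$ (such a walk exists since the subgraph of $D^{\Pi}$ induced by $S$ is finite and strongly connected). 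This execution is strongly fair because the configurations it visits infinitely often are exactly those of $S$, and from each of them every reachable configuration stays in $S$ (no outgoing edges) and is covered by the walk, hence also appears infinitely often. Conversely, every strongly fair execution from $\c^{0}$ is eventually confined to such a component: the set $S_{\infty}$ of configurations appearing in it infinitely often is nonempty (finiteness of $D^{\Pi}_{\c^{0}}$); any two members of $S_{\infty}$ are mutually reachable (pick an occurrence of one followed by a later occurrence of the other); and by strong fairness every configuration reachable from a member of $S_{\infty}$ appears infinitely often and hence lies in $S_{\infty}$. Thus $S_{\infty}$ is a single component with no edge outgoing to another component, and, since $D^{\Pi}_{\c^{0}}$ is finite, the execution enters $S_{\infty}$ at a finite step.

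With the dichotomy in hand, the ``if'' direction is immediate: assuming every component of $D^{\Pi}_{\c^{0}}$ satisfies condition (1) or (2), any strongly fair valid execution $\eta$ is eventually confined to a component $S$ with no outgoing edge, so $S$ fails (1) and therefore satisfies (2); picking a step $t$ with $\c^{t} \in S$, in the stable case $\c^{t} \in \Stab(Z_{\mathcal{I}}(\c^{0}))$ already witnesses stabilization, while in the halting case $\c^{t} \in \Halt(Z_{\mathcal{I}}(\c^{0}))$ forces $S = \{ \c^{t} \}$ (every configuration of $S$ is reachable from $\c^{t}$), so $\eta$ halts at $\c^{t}$. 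For the ``only if'' direction I argue the contrapositive: if some component $S$ of $D^{\Pi}_{\c^{0}}$ satisfies neither (1) nor (2), then by the failure of (1) it has no edge outgoing to another component, so the first half of the dichotomy supplies a strongly fair valid execution $\eta$ from $\c^{0}$ that revisits every configuration of $S$ infinitely often. From $S \not\subseteq \Stab(Z_{\mathcal{I}}(\c^{0}))$, and using that $S$ has no outgoing edge, one extracts a configuration $\hat{\c} \in S$ with $\hat{\c} \notin Z_{\mathcal{I}}(\c^{0})$; since $\hat{\c}$ recurs in $\eta$, no suffix of $\eta$ lies entirely in $Z_{\mathcal{I}}(\c^{0})$, so $\eta$ is not stably correct. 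In the halting case, $S \not\subseteq \Halt(Z_{\mathcal{I}}(\c^{0}))$ yields either such a recurring $\hat{\c} \notin Z_{\mathcal{I}}(\c^{0})$, or a configuration in $S$ from which a distinct configuration is reachable; the latter configuration is again in $S$, so $|S| \geq 2$ and $\eta$ visits at least two distinct configurations infinitely often; either way $\eta$ never halts into $Z_{\mathcal{I}}(\c^{0})$. Hence $\Pi$ is not stably (resp., haltingly) correct under the strongly fair scheduler.

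The only genuinely non-routine ingredient is the ``convergence'' half of the dichotomy: that a strongly fair execution settles into exactly one component, and that this component has no escape \emph{to another component} --- in contrast to the weakly fair analogue, where weak fairness only rules out components admitting an escaping \emph{reaction}. This is where the finiteness of $D^{\Pi}_{\c^{0}}$ and the exact wording of strong fairness do the work. The remainder is bookkeeping; the one point that needs a little care is the halting case, where a strongly fair execution can fail to halt into $Z_{\mathcal{I}}(\c^{0})$ for two distinct reasons --- recurring through a ``wrong-output'' configuration, or oscillating among two or more configurations of $S$ --- and both must be handled.
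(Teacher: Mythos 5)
Your proof is correct and follows exactly the route the paper intends: the paper gives no explicit argument but defers to ``the same line of arguments as Lemma~1 of Angluin et al.,'' and your dichotomy --- every terminal component (no edge to another component) supports a strongly fair execution visiting all of it infinitely often, and every strongly fair execution settles into exactly one such terminal component --- is precisely that lemma, with the bookkeeping for the stable and halting cases handled correctly. Nothing to add.
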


Combining \Lem{}\
\ref{lemma:correctness-via-configuration-digraph} and
\ref{lemma:correctness-via-configuration-digraph-strong-fairness}, we obtain
the following corollary.

\begin{corollary} \label{corollary:weak-correctness-implies-strong-correctness}
If a CRN protocol
$\Pi = (\Species, \Reactions)$
is stably (resp., haltingly) correct with respect to an interface
$\mathcal{I}$ under a weakly fair scheduler, then $\Pi$ is also stably (resp.,
haltingly) correct with respect to $\mathcal{I}$ under a strongly fair
scheduler.
\end{corollary}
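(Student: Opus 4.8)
The plan is to read the corollary off the two digraph characterizations, \Lem{}~\ref{lemma:correctness-via-configuration-digraph} and \Lem{}~\ref{lemma:correctness-via-configuration-digraph-strong-fairness}. Each of these lemmas reduces stable (resp.\ halting) correctness to the same kind of condition imposed on every component $S$ of $D^{\Pi}_{\c^{0}}$, ranging over all valid initial configurations $\c^{0}$; moreover, the second disjunct --- the containment $S \subseteq \Stab(Z_{\mathcal{I}}(\c^{0}))$ (resp.\ $S \subseteq \Halt(Z_{\mathcal{I}}(\c^{0}))$) --- is word-for-word identical in the two statements. Hence the whole argument reduces to comparing the two \emph{first} disjuncts, and specifically to showing that the first disjunct of \Lem{}~\ref{lemma:correctness-via-configuration-digraph} (``$S$ admits an escaping reaction'') is at least as strong as the first disjunct of \Lem{}~\ref{lemma:correctness-via-configuration-digraph-strong-fairness} (``$S$ admits an edge outgoing to another component'').

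For this, unfold the definition of an escaping reaction: if $\alpha \in \Reactions$ escapes from a component $S$, then $\alpha \in \Applicable(\c)$ and $\alpha(\c) \notin S$ for every $\c \in S$. Fixing any $\c \in S$ (which exists since components are nonempty), the $\alpha$-labeled edge $(\c, \alpha(\c))$ of $D^{\Pi}$ leaves $S$. Since $\alpha(\c)$ is reachable from $\c$ and hence from $\c^{0}$, it is a node of $D^{\Pi}_{\c^{0}}$ and therefore lies in some component of $D^{\Pi}_{\c^{0}}$ other than $S$; thus $(\c, \alpha(\c))$ witnesses the first disjunct of \Lem{}~\ref{lemma:correctness-via-configuration-digraph-strong-fairness}.

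Assembling the pieces: suppose $\Pi$ is stably (resp.\ haltingly) correct under a weakly fair scheduler. By \Lem{}~\ref{lemma:correctness-via-configuration-digraph}, for every valid initial configuration $\c^{0}$, every component $S$ of $D^{\Pi}_{\c^{0}}$ either admits an escaping reaction or satisfies $S \subseteq \Stab(Z_{\mathcal{I}}(\c^{0}))$ (resp.\ $S \subseteq \Halt(Z_{\mathcal{I}}(\c^{0}))$). In the first case the previous paragraph supplies an edge of $S$ leaving $S$, so in either case $S$ meets one of the two conditions of \Lem{}~\ref{lemma:correctness-via-configuration-digraph-strong-fairness}. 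Invoking that lemma in the reverse direction gives that $\Pi$ is stably (resp.\ haltingly) correct under a strongly fair scheduler.

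I do not expect a real obstacle here; the only points needing a moment's care are bookkeeping --- checking that $\alpha(\c)$ is genuinely a node of the restricted digraph $D^{\Pi}_{\c^{0}}$ so that ``outgoing to another component'' makes sense there, and noting that $S$ being a component of $D^{\Pi}_{\c^{0}}$ coincides with its being a component of $D^{\Pi}$ contained in the set of configurations reachable from $\c^{0}$. It is also worth recording, for intuition, that the converse implication fails: an edge leaving $S$ need not be labeled by a reaction applicable throughout $S$, which is precisely why strong fairness is a strictly weaker demand on the protocol and matches the ``weak vs.\ strong'' terminology fixed before the corollary.
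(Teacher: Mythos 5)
Your proposal is correct and is exactly the argument the paper intends: it derives the corollary by combining the two digraph characterizations, observing that the second disjuncts coincide and that an escaping reaction immediately yields an edge leaving the component, so the weak-fairness condition implies the strong-fairness one. The paper states this derivation in one line ("Combining Lemmas \ref{lemma:correctness-via-configuration-digraph} and \ref{lemma:correctness-via-configuration-digraph-strong-fairness}..."), and your write-up simply makes the same reasoning explicit.
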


\LongVersion 
%
\subparagraph{Two Protocols in One Test Tube.}
A common technique in the design of CRN (and population) protocols is to
simulate two protocols
$\Pi_{1} = (\Species_{1}, \Reactions_{1})$
and
$\Pi_{2} = (\Species_{2}, \Reactions_{2})$
running ``in the same test tube''.
This is often done by constructing a ``combined'' protocol
$\Pi_{\times} = (\Species_{\times}, \Reactions_{\times})$
whose species set $\Species_{\times}$ is the Cartesian product 
$\Species_{1} \times \Species_{2}$
so that each reaction
$\alpha \in \Reactions_{\times}$
operates independently on the two ``sub-species''.
While this design pattern is very effective with strong fairness (it is
used, e.g., in
\cite{AngluinADFP2006computation,
AngluinAE2008fast}),
it turns out that the weakly fair adversarial scheduler may exploit the
Cartesian product construction to introduce ``livelocks'', preventing
$\Pi_{\times}$ from stabilizing/halting;
an example that demonstrates this phenomenon is presented in
Appendix~\ref{appendix:curse-cartesian-product}.

Consequently, the current paper uses a different type of construction when we
wish to simulate $\Pi_{1}$ and $\Pi_{2}$ in the same test tube:
We simply produce two separated sets of molecules, one for the $\Pi_{1}$
species and the other for the $\Pi_{2}$ species, and allow the two protocols
to run side-by-side.
Care must be taken though with regard to reactions that involve species from
both $\Species_{1}$ and $\Species_{2}$ as the weakly fair adversarial
scheduler may exploit those to interfere with the executions of the
individual protocols;
see our treatment of this issue in
\Sect{}
\ref{section:runtime:tools:ignition},
\ref{section:amplification:application}, and
\ref{section:crd:semilinear:closure}.
\LongVersionEnd 

\section{The Runtime of Adversarially Scheduled
Executions}
\label{section:runtime}
So far, the literature on CRN
\LongVersion 
(and population)
\LongVersionEnd 
protocols operating under an adversarial scheduler focused mainly on
computability, leaving aside, for the most part, complexity considerations.%
\footnote{%
The one exception in this regard is the work of Chen et
al.~\cite{ChenCDS2017speed-faults} on speed faults --- see
\Sect{} \ref{section:runtime:speed-faults} and \ref{section:crd}.
}
This is arguably unavoidable when working with the strong fairness condition
which is inherently oblivious to the chain of reactions that realizes the
reachability of one configuration from another.
In the current paper, however, we adopt the weak fairness condition which
facilitates the definition of a quantitative measure for the runtime of
adversarially scheduled executions, to which this section is dedicated.
(Refer to \Fig{}\
\ref{figure:crn-kill-b:runtime-policy}--%
\ref{figure:crn-skipping-policy:runtime-policy}
for illustrations of the notions presented in this section.)

Consider a stably (resp., haltingly) correct CRN protocol
$\Pi = (\Species, \Reactions)$%
\LongVersion 
\ and recall that every weakly fair valid execution of $\Pi$ is guaranteed to
stabilize (resp., halt)%
\LongVersionEnd 
.
We make extensive use of the following operator:
Given a weakly fair execution
$\eta = \langle \c^{t}, \alpha^{t} \rangle_{t \geq 0}$,
a step
$t \geq 0$,
and
a reaction (sub)set
$Q \subseteq \Reactions$,
let
$\Stop(\eta, t, Q)$
be the earliest step
$s > t$
such that at least one of the following two conditions is satisfied:
\\
(I)
$\alpha^{s - 1} \in Q$;
or
\\
(II)
$Q \subseteq \bigcup_{t \leq t' \leq s} \overline{\Applicable}(\c^{t'})$.
\\
(This operator is well defined by the weak fairness of $\eta$.)

Intuitively, we think of the operator
$\Stop(\eta, t, Q)$
as a process that tracks $\eta$ from step $t$ onward and stops once any $Q$
reaction is scheduled (condition (I)).
This by itself is not well defined as the scheduler may avoid scheduling the
$Q$ reactions from step $t$ onward.
However, the scheduler must prevent the starvation of any continuously
applicable reaction in $Q$, so we also stop the $\Stop$-process once the
adversary ``fulfills this commitment'' (condition (II)).

\subparagraph{The Policies.}
Our runtime measure is based on partitioning a given weakly fair execution
$\eta = \langle \c^{t}, \alpha^{t} \rangle_{t \geq 0}$
into \emph{rounds}.
This is done by means of two policies:
a \emph{runtime policy} $\RunPol$, determined by the protocol designer,
that maps each configuration
$\c \in \Naturals^{\Species}$
to a non-void reaction (sub)set
$\RunPol(\c) \subseteq \NonVoid(\Reactions)$,
referred to as the \emph{target reaction set} of $\c$ under
$\RunPol$;
and a \emph{skipping policy} $\SkipPol$, determined by the adversarial
scheduler (in conjunction with the execution $\eta$), that maps each step
$t \geq 0$
to a step
$\SkipPol(t) \geq t$.

Round
$i = 0, 1, \dots$
spans the step interval
$[\InitStep(i), \InitStep(i + 1))$
and includes a designated \emph{effective step}
$\InitStep(i) \leq \EffStep(i) < \InitStep(i + 1)$.
The partition of execution $\eta$ into rounds is defined inductively by
setting
\[
\InitStep(i)
\, = \,
\begin{cases}
0
\, , &
i = 0
\\
\Stop \left(
\eta,
\EffStep(i - 1),
\RunPol \left( \c^{\EffStep(i - 1)} \right)
\right)
\, , &
i > 0
\end{cases}
\qquad\text{and}\qquad
\EffStep(i)
\, = \,
\SkipPol(\InitStep(i))
\, .
\]
Put differently, for every round
$i \geq 0$
with initial step $t(i)$, the adversarial scheduler first determines the
round's effective step
$\EffStep(i) = \SkipPol(\InitStep(i)) \geq \InitStep(i)$
by means of the skipping policy $\SkipPol$.
Following that, we apply the runtime policy $\RunPol$ (chosen by the protocol
designer) to the configuration
$\EffConf^{i} = \c^{\EffStep(i)}$,
referred to as the round's \emph{effective configuration}, and obtain the
target reaction set
$Q = \RunPol(\EffConf^{i})$.
The latter is then plugged into the operator $\Stop$ to determine
$\InitStep(i + 1) = \Stop(\eta, \EffStep(i), Q)$.
Round $i$ is said to be \emph{target-accomplished} if
$\alpha^{\InitStep(i + 1) - 1} \in Q$;
otherwise, it is said to be \emph{target-deprived}.

\begin{remark*}
Our definition of the runtime policy $\RunPol$ does not require that the
reactions included in the target reaction set $\RunPol(\c)$ are applicable to
the configuration
$\c \in \Naturals^{\Species}$.
Notice though that if
\LongVersion 
$\RunPol(\c) \subseteq \overline{\Applicable}(\c)$
i.e.,
\LongVersionEnd 
all target reactions are inapplicable to $\c$ (which is bound to be the
case if $\c$ is halting), then a round whose effective configuration is $\c$
is destined to be target deprived and end immediately after the effective
step, regardless of the reaction scheduled in that step.
In
\LongVersion 
\Sect{}~\ref{section:justify-runtime-policy}%
\LongVersionEnd 
\ShortVersion 
the attached full version%
\ShortVersionEnd 
, we investigate several other ``natural restrictions'' of the runtime policy
definition, including fixed policies and singleton target reaction sets,
showing that they all lead to significant efficiency loss.
\end{remark*}

\subparagraph{Temporal Cost.}
We define the \emph{temporal cost} of a configuration
$\c \in \Naturals^{\Species}$
under a runtime policy $\RunPol$, denoted by
$\TemporalCost^{\RunPol}(\c)$,
as follows:
Let
$\eta_{r} = \langle \c_{r}^{t}, \alpha_{r}^{t} \rangle_{t \geq 0}$
be a stochastic execution emerging from the initial configuration
$\c_{r}^{0} = \c$
and define
\[\textstyle
\TemporalCost^{\RunPol}(\c)
\, \triangleq \,
\Ex \left(
\sum_{t = 0}^{\Stop(\eta_{r}, 0, \RunPol(\c)) - 1} 1 / \Prp_{\c_{r}^{t}}
\right)
\, = \,
\Theta (1 / \| \c \|)
\cdot
\Ex \left( \Stop(\eta_{r}, 0, \RunPol(\c)) \right)
\, ,
\]
where the expectation is over the random choice of $\eta_{r}$ and the second
transition is due to \eqref{equation:time-span}.
That is, the temporal cost of $\c$ under $\RunPol$ is defined to be the
expected stochastic runtime of round $0$ of $\eta_{r}$ with respect to the
runtime policy $\RunPol$ and the identity skipping policy
$\SkipPol_{\Identity}$ that maps each step
$t \geq 0$
to
$\SkipPol_{\Identity}(t) = t$
(which means that the effective step of each round is its initial step).
\LongVersion 
The following observation stems from the Markovian nature of the stochastic
scheduler.

\begin{observation} \label{observation:time-span-stochastic-execution}
Fix an (arbitrary) runtime policy $\RunPol$.
Let
$\eta_{r} = \langle \c_{r}^{t}, \alpha_{r}^{t} \rangle_{t \geq 0}$
be a stochastic execution and let $\InitStep(i)$ be the initial step of round
$i \geq 0$
under $\RunPol$ and the identity skipping policy $\SkipPol_{\Identity}$.
For each
$i \geq 0$,
conditioned on $\c_{r}^{\InitStep(i)}$, the expected stochastic runtime of
$\langle \c_{r}^{t}, \alpha_{r}^{t} \rangle_{\InitStep(i) \leq t < \InitStep(i + 1)}$
is equal to
$\TemporalCost^{\RunPol}(\c_{r}^{\InitStep(i)})$.
\end{observation}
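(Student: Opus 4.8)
The plan is to obtain the observation directly from the strong Markov property of the stochastic scheduler. The key simplification is that under the identity skipping policy $\SkipPol_{\Identity}$ we have $\EffStep(i) = \SkipPol_{\Identity}(\InitStep(i)) = \InitStep(i)$, so the effective configuration of round $i$ is simply $\c_{r}^{\InitStep(i)}$ and round $i$ occupies exactly the steps $\InitStep(i), \dots, \InitStep(i+1) - 1$ with $\InitStep(i+1) = \Stop\bigl(\eta_{r}, \InitStep(i), \RunPol(\c_{r}^{\InitStep(i)})\bigr)$. Thus the quantity to be analyzed is the expected value, given $\c_{r}^{\InitStep(i)}$, of $\sum_{t = \InitStep(i)}^{\InitStep(i+1)-1} 1 / \Prp_{\c_{r}^{t}}$.

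First I would fix the natural filtration $(\mathcal{F}_{t})_{t \geq 0}$ of the stochastic execution, where $\mathcal{F}_{t}$ records the first $t$ steps (equivalently, $\c_{r}^{0}, \alpha_{r}^{0}, \dots, \alpha_{r}^{t - 1}$, and hence also $\c_{r}^{t}$). Because the stochastic scheduler draws $\alpha_{r}^{t}$ from $\Applicable(\c_{r}^{t})$ with probability proportional to its propensity in $\c_{r}^{t}$ and independently of the past, the sequence $(\c_{r}^{t}, \alpha_{r}^{t})_{t \geq 0}$ is a time-homogeneous Markov chain on the countable state space of $\langle\text{configuration}, \text{reaction}\rangle$ pairs; moreover, since $\Pi$ respects finite density, the stochastic execution is weakly fair with probability $1$, so every $\Stop(\cdot)$ below is almost surely finite. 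Next I would prove by induction on $i$ that $\InitStep(i)$ is an $(\mathcal{F}_{t})$-stopping time: the base case $\InitStep(0) = 0$ is trivial, and for the inductive step I would write $\{\InitStep(i+1) = s\} = \bigcup_{t \leq s} \bigl(\{\InitStep(i) = t\} \cap \{\Stop(\eta_{r}, t, \RunPol(\c_{r}^{t})) = s\}\bigr)$; on $\{\InitStep(i) = t\} \in \mathcal{F}_{t} \subseteq \mathcal{F}_{s}$ the target set $\RunPol(\c_{r}^{t})$ equals a fixed reaction subset $Q$, and unwinding the definition of $\Stop$ shows that $\{\Stop(\eta_{r}, t, Q) = s\}$ is a Boolean combination of the events $\{\alpha_{r}^{t'} \in Q\}$ and $\{Q \subseteq \bigcup_{t \leq t'' \leq t'} \overline{\Applicable}(\c_{r}^{t''})\}$ for $t \leq t' \leq s$, each of which is $\mathcal{F}_{s}$-measurable; hence $\{\InitStep(i+1) = s\} \in \mathcal{F}_{s}$.

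With these preliminaries, I would invoke the strong Markov property: conditioned on $\c_{r}^{\InitStep(i)} = \c$ (an attainable value, hence an event of positive probability), the shifted process $\langle \c_{r}^{\InitStep(i) + s}, \alpha_{r}^{\InitStep(i) + s} \rangle_{s \geq 0}$ is distributed exactly as a fresh stochastic execution $\tilde{\eta} = \langle \tilde{\c}^{s}, \tilde{\alpha}^{s} \rangle_{s \geq 0}$ with $\tilde{\c}^{0} = \c$ and the same global volume $\Vol$. Under this coupling, the obvious shift-invariance of the $\Stop$ operator gives $\InitStep(i+1) - \InitStep(i) = \Stop(\tilde{\eta}, 0, \RunPol(\c))$, and the per-step time spans satisfy $1/\Prp_{\c_{r}^{\InitStep(i) + s}} = 1/\Prp_{\tilde{\c}^{s}}$, so
\[
\Ex\!\left( \sum_{t = \InitStep(i)}^{\InitStep(i+1) - 1} \frac{1}{\Prp_{\c_{r}^{t}}} \;\middle|\; \c_{r}^{\InitStep(i)} = \c \right)
\, = \,
\Ex\!\left( \sum_{s = 0}^{\Stop(\tilde{\eta}, 0, \RunPol(\c)) - 1} \frac{1}{\Prp_{\tilde{\c}^{s}}} \right)
\, = \,
\TemporalCost^{\RunPol}(\c) \, ,
\]
where the last equality is precisely the definition of the temporal cost of $\c$ under $\RunPol$. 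Since this holds for every attainable value $\c$ of $\c_{r}^{\InitStep(i)}$, we conclude $\Ex\bigl( \sum_{t = \InitStep(i)}^{\InitStep(i+1)-1} 1/\Prp_{\c_{r}^{t}} \mid \c_{r}^{\InitStep(i)} \bigr) = \TemporalCost^{\RunPol}(\c_{r}^{\InitStep(i)})$ almost surely, which is the assertion.

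I expect the main obstacle to be the measure-theoretic bookkeeping rather than any deep idea: one must verify that the inductively defined round boundaries $\InitStep(i)$ are genuine stopping times of the pair-valued chain, and state the strong Markov property precisely enough that the $\Stop$ operator --- which reads both the reactions scheduled and the applicability of reactions along the way --- transports correctly across the time shift. A minor but worthwhile point is that the imaginary stochastic execution used in the definition of $\TemporalCost^{\RunPol}(\c)$ must be run with the same volume parameter as $\eta_{r}$; since $\Pi$ respects finite density this is harmless for the $\Theta(\cdot)$ estimates (both volumes are $\Theta(n)$), but it should be made explicit for the exact equality asserted here.
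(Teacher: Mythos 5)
Your proof is correct and takes essentially the approach the paper intends: the paper gives no written proof for this observation, stating only that it ``stems from the Markovian nature of the stochastic scheduler,'' and your argument simply fills in the standard details (round boundaries are stopping times, strong Markov property, shift-invariance of the $\Stop$ operator) behind that one-line justification.
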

\LongVersionEnd 

\subparagraph{Execution Runtime.}
Consider a runtime policy $\RunPol$ and a skipping policy $\SkipPol$.
Let
$\eta = \langle \c^{t}, \alpha^{t} \rangle_{t \geq 0}$
be a weakly fair valid execution and let $\InitStep(i)$, $\EffStep(i)$, and
$\EffConf^{i} = \c^{\EffStep(i)}$
be the initial step, effective step, and effective configuration,
respectively, of round
$i \geq 0$
under $\RunPol$ and $\SkipPol$.
Fix some step
$t^{*} \geq 0$
and consider the execution prefix
$\eta^{*} = \langle \c^{t}, \alpha^{t} \rangle_{0 \leq t < t^{*}}$.
We define the \emph{(adversarial) runtime} of $\eta^{*}$ under $\RunPol$ and
$\SkipPol$, denoted by
$\RunTime^{\RunPol, \SkipPol}(\eta^{*})$,
by taking
$i^{*} = \min \{ i \geq 0 \mid \InitStep(i) \geq t^{*} \}$
and setting
\[\textstyle
\RunTime^{\RunPol, \SkipPol}(\eta^{*})
\, \triangleq \,
\sum_{i = 0}^{i^{*} - 1} \TemporalCost^{\RunPol} \left( \EffConf^{i} \right)
\, .
\]
The \emph{stabilization runtime} (resp., \emph{halting runtime}) of the
(entire) execution $\eta$ under $\RunPol$ and $\SkipPol$, denoted by
$\RunTime_{\Stab}^{\RunPol, \SkipPol}(\eta)$
(resp.,
$\RunTime_{\Halt}^{\RunPol, \SkipPol}(\eta)$),
is defined to be
$\RunTime^{\RunPol, \SkipPol} \left(
\langle \c^{t}, \alpha^{t} \rangle_{0 \leq t < t^{*}}
\right)$,
where
$t^{*} \geq 0$
is the stabilization (resp., halting) step of $\eta$.
In other words, we use $\RunPol$ and $\SkipPol$ to partition $\eta$ into
rounds and mark the effective steps.
Following that, we charge each round $i$ that starts before step $t^{*}$
according to the temporal cost (under $\RunPol$) of its effective
configuration $\EffConf^{i}$.

Looking at it from another angle,
\LongVersion 
by employing its
\LongVersionEnd 
\ShortVersion 
using the
\ShortVersionEnd 
skipping policy $\SkipPol$,
the adversarial scheduler determines the sequence
$\EffConf^{0}, \EffConf^{1}, \dots$
of effective configurations according to which the temporal cost
$\TemporalCost^{\RunPol}(\EffConf^{i})$
of each round
$i \geq 0$
is calculated.
By choosing an appropriate runtime policy $\RunPol$, the protocol designer may
(1)
ensure that progress is made from one effective configuration to the next,
thus advancing $\eta$ towards round
$i^{*} = \min \{ i \geq 0 \mid \InitStep(i) \geq t^{*} \}$;
and
(2)
bound the contribution
$\TemporalCost^{\RunPol}(\EffConf^{i})$
of each round
$0 \leq i < i^{*}$
to the stabilization runtime
$\RunTime_{\Stab}^{\RunPol, \SkipPol}(\eta)$
(resp., halting runtime
$\RunTime_{\Halt}^{\RunPol, \SkipPol}(\eta)$).
The crux of our runtime definition is that this contribution depends only on
the effective configuration $\EffConf^{i}$, irrespectively of how round $i$
actually develops (see, e.g., \Fig{}~\ref{figure:crn-kill-b:runtime-policy}).

\begin{remark*}
Using this viewpoint, it is interesting to revisit the definitions of
\LongVersion 
Awerbuch \cite{Awerbuch85} and Dolev et al.~\cite{Dolev-etal-97}
\LongVersionEnd 
\ShortVersion 
\cite{Awerbuch85} and \cite{Dolev-etal-97}
\ShortVersionEnd 
for the runtime of an asynchronous distributed protocol $\mathcal{P}$.
Following the discussion in \Sect{}~\ref{section:introduction}, this runtime
is defined as the length of the longest sequence
$\EffConf^{0}, \EffConf^{1}, \dots, \EffConf^{i^{*} - 1}$
of ``non-terminal'' configurations (of $\mathcal{P}$) such that $\EffConf^{i}$
is reachable from
$\EffConf^{i - 1}$
by an execution interval that lasts for at least one round (according to the
respective definitions of \cite{Awerbuch85} and \cite{Dolev-etal-97}).
Our adversarial runtime
\LongVersion 
notion
\LongVersionEnd 
is defined in the same manner,
taking 
$\EffConf^{0}, \EffConf^{1}, \dots, \EffConf^{i^{*} - 1}$
to be the first $i^{*}$ effective (CRN) configurations,
only that we charge each configuration $\EffConf^{i}$ according to its
temporal cost (rather than one ``runtime unit'' as in \cite{Awerbuch85} and
\cite{Dolev-etal-97}).
This difference is consistent with the different ``idealized benchmarks'':
a synchronous schedule in \cite{Awerbuch85} and \cite{Dolev-etal-97} vs.\ a
stochastic execution in the current paper.
The skipping policy $\SkipPol$ plays a key role in adversarially generating
the sequence
$\EffConf^{0}, \EffConf^{1}, \dots, \EffConf^{i^{*} - 1}$
as it ``decouples'' between the last step of round $i$, determined by the
runtime policy $\RunPol$, and the effective configuration
$\EffConf^{i + 1}$
of round
$i + 1$
(see, e.g., \Fig{}~\ref{figure:crn-skipping-policy:runtime-policy}).
\end{remark*}

\subparagraph{The Runtime Function.}
For
$n \geq 1$,
let $\mathcal{F}(n)$ denote the set of weakly fair valid executions
$\eta = \langle \c^{t}, \alpha^{t} \rangle_{t \geq 0}$
of initial molecular count
$\| \c^{0} \| = n$.
The \emph{stabilization runtime} (resp., \emph{halting runtime}) of the
CRN protocol $\Pi$ for executions in $\mathcal{F}(n)$, denoted by
$\RunTime_{\Stab}^{\Pi}(n)$
(resp.,
$\RunTime_{\Halt}^{\Pi}(n)$),
is defined to be
\[\textstyle
\RunTime_{\mathrm{x}}^{\Pi}(n)
\, \triangleq \,
\min_{\RunPol}
\max_{\eta \in \mathcal{F}(n), \, \SkipPol}
\,
\RunTime_{\mathrm{x}}^{\RunPol, \SkipPol}(\eta)
\, ,
\]
where $\mathrm{x}$ serves as a placeholder for $\Stab$ (resp., $\Halt$).
This formalizes the responsibility of the protocol designer to specify a
runtime policy $\RunPol$, in conjunction with the protocol $\Pi$, used for
up-bounding $\Pi$'s stabilization (resp., halting) runtime (see, e.g.,
\Fig{}~\ref{figure:crn-kill-b:runtime-policy}).

The following two lemmas establish the soundness of our adversarial runtime
definition:
\Lem{}~\ref{lemma:run-time-policy-for-all-executions} ensures that the
stabilization (resp., halting) runtime function is well defined%
\LongVersion 
;%
\LongVersionEnd 
\ShortVersion 
.%
\ShortVersionEnd 
\footnote{%
Note that in \Lem{}~\ref{lemma:run-time-policy-for-all-executions} we use a
universal runtime policy that applies to all choices of the initial molecular
count $n$.
This is stronger in principle than what the runtime definition actually
requires.}
\LongVersion 
its proof relies on some tools introduced in
\Sect{}~\ref{section:runtime:tools:reachability-via-avoiding-paths} and is
therefore deferred to that section.
\LongVersionEnd 
In \Lem{}~\ref{lemma:stochastic-benchmark}, we show that if the scheduler
generates the execution stochastically, then our (adversarial)
runtime measure agrees in expectation with the stochastic runtime measure.

\begin{lemma} \label{lemma:run-time-policy-for-all-executions}
Consider a stably (resp., haltingly) correct protocol
$\Pi = (\Species, \Reactions)$.
There exists a runtime policy $\RunPol$ such that for every
integer
$n \geq 1$,
execution
$\eta \in \mathcal{F}(n)$,
and skipping policy $\SkipPol$,
the stabilization runtime
$\RunTime_{\Stab}^{\RunPol, \SkipPol}(\eta)$
(resp., halting runtime
$\RunTime_{\Halt}^{\RunPol, \SkipPol}(\eta)$)
is up-bounded as a function of $n$.
\end{lemma}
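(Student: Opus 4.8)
The plan is to exhibit a single runtime policy $\RunPol$, depending only on $\Pi$, and to bound separately --- each by a function of $n$ --- (a) the temporal cost of every configuration that can serve as an effective configuration, and (b) the number of rounds a weakly fair valid execution of initial molecular count $n$ takes to stabilize (resp.\ halt) under $\RunPol$. Multiplying the two bounds will bound $\RunTime_{\Stab}^{\RunPol,\SkipPol}(\eta)$ (resp.\ $\RunTime_{\Halt}^{\RunPol,\SkipPol}(\eta)$) uniformly over $\eta \in \mathcal{F}(n)$ and $\SkipPol$, which is exactly the assertion; the stronger ``universal'' form requested in the footnote follows since the $\RunPol$ below is independent of $n$.

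For (a), I would first note that the bound holds for \emph{every} runtime policy. Fix a configuration $\c$ reachable from a valid initial configuration of molecular count $n$. By finite density $\| \c \| \le O(n)$, so with $\Vol = \Theta(n)$ we get $\Prp_{\c} = O(n)$, while every reaction applicable to $\c$ has propensity $\ge \Omega(1/n)$ (the worst case being a bimolecular reaction whose two distinct reactants each occur at count $\ge 1$). Hence in the imaginary stochastic run $\eta_{r}$ from $\c$, whenever $\Stop(\eta_{r},0,\RunPol(\c)) > t$ there is, conditioned on the history up to step $t$, probability $\ge p := \Omega(1/n^{2})$ that $\Stop$ equals $t+1$: if some target reaction is applicable to all of $\c_{r}^{0},\dots,\c_{r}^{t}$, then it is applicable to $\c_{r}^{t}$ and scheduling it at step $t$ triggers condition (I); otherwise every target reaction has already been inapplicable at some step $\le t$, so condition (II) holds at step $t+1$ regardless. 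A geometric tail estimate then gives $\Ex\bigl(\Stop(\eta_{r},0,\RunPol(\c))\bigr) \le 1/p = O(n^{2})$ and therefore $\TemporalCost^{\RunPol}(\c) = O(n^{2})$, a bound depending only on $n$ (and not on the policy).

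For (b), I would take $\RunPol(\c)$ to be the set of reactions that escape from the strongly connected component $S(\c)$ of $\c$ in $D^{\Pi}$ --- a well-defined finite object by finite density, contained in $\NonVoid(\Reactions)$ since a void reaction maps $\c$ to $\c$ and so never escapes --- setting $\RunPol(\c) = \emptyset$ when no reaction escapes $S(\c)$. The structural heart is: if a round $i$ has an effective configuration $\EffConf^{i}$ whose component $S_{i} = S(\EffConf^{i})$ admits an escaping reaction, then the component $S(\EffConf^{i+1})$ of the next effective configuration lies strictly below $S_{i}$ in the (finite) DAG of components of $D^{\Pi}_{\c^{0}}$. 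Indeed $\Stop$ halts at a finite step $\InitStep(i+1)$, and the configurations $\c^{\EffStep(i)},\dots,\c^{\InitStep(i+1)}$ cannot all lie in $S_{i}$: otherwise every escaping reaction of $S_{i}$ would be applicable throughout (so condition (II) fails at $\InitStep(i+1)$), while no escaping reaction could be scheduled inside $S_{i}$ without immediately leaving it (so condition (I) fails too) --- a contradiction. Hence the execution leaves $S_{i}$ at some step, and having left it cannot return (otherwise the two components would coincide), so $\c^{\InitStep(i+1)}$, and therefore $\EffConf^{i+1} = \SkipPol(\InitStep(i+1))$ which is reachable from it, lies in a component strictly below $S_{i}$. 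Since $D^{\Pi}_{\c^{0}}$ is finite with size bounded by a function of $n$ (uniformly over the finitely many valid initial configurations of count $n$), this descent repeats at most $k(n)$ times, so within $k(n)$ rounds we reach a round $i$ whose effective configuration lies in a component with no escaping reaction; by \Lem{}~\ref{lemma:correctness-via-configuration-digraph} such a component is contained in $\Stab(Z_{\mathcal{I}}(\c^{0}))$ (resp.\ $\Halt(Z_{\mathcal{I}}(\c^{0}))$), so $\EffConf^{i}$ is already stable (resp.\ halting) and the stabilization (resp.\ halting) step $t^{*}$ satisfies $t^{*} \le \EffStep(i) < \InitStep(i+1)$. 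By the definition of the runtime at most $k(n)+1$ rounds are then charged, so $\RunTime_{\Stab}^{\RunPol,\SkipPol}(\eta) \le (k(n)+1)\cdot O(n^{2})$ (and likewise for halting), a function of $n$.

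The step I expect to be the main obstacle is the strict-descent claim of (b): one must rule out every way the operator $\Stop$ might halt while the execution is still inside $S_{i}$, while simultaneously allowing the adversary both to leave $S_{i}$ through a non-escaping reaction before any escaping one fires, and to advance the next effective configuration arbitrarily far forward via $\SkipPol$. The resolution rests on the applicability of escaping reactions throughout $S_{i}$, the irreversibility of leaving a strongly connected component, and the well-definedness of $\Stop$ under weak fairness --- precisely the feature that would fail under strong fairness.
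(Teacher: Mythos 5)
Your proposal is correct and follows essentially the same route as the paper's proof: you choose the same runtime policy (the escaping reactions of the current strongly connected component, with arbitrary/empty targets on non-escaping components), prove the same strict-descent claim through the finite DAG of components of $D^{\Pi}_{\c^{0}}$, and invoke Lemma~\ref{lemma:correctness-via-configuration-digraph} in the same way to conclude that a non-escaping component certifies stabilization (resp.\ halting). The only difference is cosmetic: where the paper bounds each round's temporal cost by $O(n)$ via Lemma~\ref{lemma:runtime:tools:temporal-cost} and the $1/\Vol$ propensity floor, you give a self-contained geometric-tail argument (which in fact also yields $O(n)$ once the $\Theta(1/n)$ time span per step is factored in, rather than the $O(n^{2})$ you state --- harmless here, since any function of $n$ suffices).
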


\begin{lemma}
\label{lemma:stochastic-benchmark}
Consider a stably (resp., haltingly) correct protocol
$\Pi = (\Species, \Reactions)$.
Let
$\eta_{r}
=
\langle \c_{r}^{t}, \alpha_{r}^{t} \rangle_{t \geq 0}$
be a stochastic execution emerging from a valid initial configuration
$\c_{r}^{0}$
and let
$t^{*} \geq 0$
be the stabilization (resp., halting) step of $\eta_{r}$.
Then,
\[\textstyle
\min_{\RunPol}
\Ex_{\eta_{r}} \left(
\max_{\SkipPol}
\RunTime_{\mathrm{x}}^{\RunPol, \SkipPol}(\eta_{r})
\right)
\, = \,
\Ex_{\eta_{r}} \left(
\sum_{t = 0}^{t^{*} - 1} 1 \big/ \Prp_{\c_{r}^{t}}
\right)
\, ,
\]
where $\mathrm{x}$ serves as a placeholder for $\Stab$ (resp., $\Halt$).
\end{lemma}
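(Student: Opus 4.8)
The plan is to prove ``$\ge$'' for \emph{every} runtime policy and ``$\le$'' for one well-chosen runtime policy $\RunPol^{*}$, both directions resting on the strong Markov property of the stochastic scheduler: for a stochastic execution $\eta_{r}$, the expected stochastic runtime accrued during a round, conditioned on that round's effective configuration, is exactly the configuration's temporal cost (Observation~\ref{observation:time-span-stochastic-execution}). For ``$\ge$'' I would fix an arbitrary $\RunPol$ and let the adversary use the identity skipping policy $\SkipPol_{\Identity}$, so round $i$ is $[\InitStep(i), \InitStep(i + 1))$ with $\EffConf^{i} = \c_{r}^{\InitStep(i)}$; since $\{i < i^{*}\} = \{\InitStep(i) < t^{*}\}$ is determined by the configuration history through step $\InitStep(i)$, summing the conditional identity above over $i < i^{*}$ (by Tonelli, all summands nonnegative) yields
\[\textstyle
\Ex_{\eta_{r}}\!\left(\RunTime_{\mathrm{x}}^{\RunPol, \SkipPol_{\Identity}}(\eta_{r})\right)
\, = \,
\Ex_{\eta_{r}}\!\left(\sum_{t = 0}^{\InitStep(i^{*}) - 1} 1 \big/ \Prp_{\c_{r}^{t}}\right)
\, \ge \,
\Ex_{\eta_{r}}\!\left(\sum_{t = 0}^{t^{*} - 1} 1 \big/ \Prp_{\c_{r}^{t}}\right) ,
\]
the inequality since $\InitStep(i^{*}) \ge t^{*}$ by definition of $i^{*}$. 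Because $\max_{\SkipPol}\RunTime_{\mathrm{x}}^{\RunPol, \SkipPol}(\eta_{r}) \ge \RunTime_{\mathrm{x}}^{\RunPol, \SkipPol_{\Identity}}(\eta_{r})$ and $\RunPol$ was arbitrary, ``$\ge$'' follows.

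For ``$\le$'' I would set $Z^{*} = \Stab(Z_{\mathcal{I}}(\c_{r}^{0}))$ (resp.\ $\Halt(Z_{\mathcal{I}}(\c_{r}^{0}))$) and take the runtime policy $\RunPol^{*}$ with $\RunPol^{*}(\c) = \NonVoid(\Reactions)$ for $\c \notin Z^{*}$ and $\RunPol^{*}(\c) = \emptyset$ for $\c \in Z^{*}$. Since $Z^{*}$ is closed under reachability it is never left along $\eta_{r}$ once entered, and $\c_{r}^{t^{*}} \in Z^{*}$ together with $\c_{r}^{t^{*} - 1} \notin Z^{*}$ forces $\alpha_{r}^{t^{*} - 1}$ to be non-void. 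The two facts I would use are: (i) for a reachable $\c \notin Z^{*}$ (hence not halting), a $\Stop$-process with target $\NonVoid(\Reactions)$ in a stochastic execution from $\c$ leaves the configuration frozen at $\c$ until the first non-void reaction fires, so $\TemporalCost^{\RunPol^{*}}(\c) = 1 \big/ \Prp_{\c}(\NonVoid(\Reactions)) \ge 1/\Prp_{\c}$, whereas $\TemporalCost^{\RunPol^{*}}(\c) = 1/\Prp_{\c}$ for $\c \in Z^{*}$; and (ii) under $\SkipPol_{\Identity}$ every round reached before $t^{*}$ has effective configuration outside $Z^{*}$ and hence target $\NonVoid(\Reactions)$, so its boundary falls right after a non-void step, and since $\alpha_{r}^{t^{*} - 1}$ is non-void, $t^{*}$ is a round boundary; thus $\InitStep(i^{*}) = t^{*}$ and the ``$\ge$''-computation becomes an equality, i.e., $\Ex_{\eta_{r}}(\RunTime_{\mathrm{x}}^{\RunPol^{*}, \SkipPol_{\Identity}}(\eta_{r}))$ equals the right-hand side of the lemma. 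What then remains is to show that, pathwise, $\RunTime_{\mathrm{x}}^{\RunPol^{*}, \SkipPol}(\eta_{r}) \le \RunTime_{\mathrm{x}}^{\RunPol^{*}, \SkipPol_{\Identity}}(\eta_{r})$ for every $\SkipPol$; taking expectations and minimizing over runtime policies on the left then finishes the proof.

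For the pathwise claim I would partition $\eta_{r}$ into its maximal constant-configuration blocks (those delimited by non-void steps), which are precisely the configuration blocks of the $\SkipPol_{\Identity}$-rounds, the blocks entirely before $t^{*}$ being those of the counted ones. Each counted $\SkipPol$-round with effective configuration outside $Z^{*}$ ends right after a non-void step, so no block straddles such a round boundary, and mapping such a round to the block containing its effective step is well defined, strictly increasing (hence injective), and hits a counted $\SkipPol_{\Identity}$-round of the same temporal cost. If all counted $\SkipPol$-rounds have effective configurations outside $Z^{*}$, nonnegativity of temporal costs plus this injection yields the claim. Otherwise precisely the last counted $\SkipPol$-round has effective configuration $\c \in Z^{*}$ reached at a step $\ge t^{*}$; here a short computation shows the injection misses the last pre-$t^{*}$ block, whose configuration is $\c_{r}^{t^{*} - 1}$, so it suffices to verify $\TemporalCost^{\RunPol^{*}}(\c) \le \TemporalCost^{\RunPol^{*}}(\c_{r}^{t^{*} - 1})$ --- which holds because molecular counts never decrease along executions ($\| \mathbf{r} \| \le \| \mathbf{p} \|$ for all reactions), so $\| \c \| \ge \| \c_{r}^{t^{*}} \| \ge \| \c_{r}^{t^{*} - 1} \|$ and thus, by (i), $\TemporalCost^{\RunPol^{*}}(\c) = 1/\Prp_{\c} \le 1/\Prp_{\c_{r}^{t^{*} - 1}} \le \TemporalCost^{\RunPol^{*}}(\c_{r}^{t^{*} - 1})$.

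I expect the hard part to be exactly this last case: the adversary's one remaining trick is to ``skip past'' the stabilization/halting step in its final counted round and be charged the temporal cost of a $Z^{*}$-configuration rather than that of the blocks it would otherwise traverse, and the whole point of defining $\RunPol^{*}$ to equal $\emptyset$ on $Z^{*}$ (so that such a round costs only $1/\Prp$) together with the monotonicity of molecular counts is to preclude a net gain from this. The only other technicality is the optional-stopping / integrability justification of the first displayed equality, which I would discharge using that $\Pi$ respects finite density --- so $D^{\Pi}_{\c_{r}^{0}}$ is finite, temporal costs are bounded, and $\eta_{r}$ reaches $Z^{*}$ in finite expected time.
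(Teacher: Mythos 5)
Your proof is correct, and it follows the same three-part architecture as the paper's: (a) a canonical runtime policy paired with the identity skipping policy realizes the stochastic runtime exactly in expectation, (b) against that canonical policy no skipping policy charges more than $\SkipPol_{\Identity}$ does, and (c) the Markov property (\Obs{}~\ref{observation:time-span-stochastic-execution}) plus the measurability of $\{\InitStep(i) < t^{*}\}$ gives the matching lower bound for every runtime policy. The one genuine difference is your canonical policy: the paper uses the full policy $\RunPol_{\mathrm{f}}(\c) = \NonVoid(\Reactions)$ everywhere, whereas you set the target set to $\emptyset$ on $Z^{*} = \Stab(Z_{\mathcal{I}}(\c_{r}^{0}))$ (resp.\ $\Halt(Z_{\mathcal{I}}(\c_{r}^{0}))$). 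This refinement is not cosmetic --- it is exactly what makes step (b) go through in the one delicate case you correctly isolate, namely when the adversary places the effective step of the last counted round at or beyond $t^{*}$. Under your $\RunPol^{*}$ that round costs only $1/\Prp_{\c}$, which you dominate by the cost of the displaced last pre-$t^{*}$ block via the monotonicity of molecular counts; under $\RunPol_{\mathrm{f}}$, a post-stabilization but non-halting effective configuration would instead be charged $1/\Prp_{\c}(\NonVoid(\Reactions) \cap \Applicable(\c))$, which can be as large as $\Theta(\Vol)$ and need not be dominated by any pre-$t^{*}$ block cost --- a case the paper's one-line justification of its claim (C2) does not address (for halting correctness it is harmless, since post-$t^{*}$ configurations are halting, but for stable correctness it genuinely requires your fix or something equivalent). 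Everything else --- the block-injection argument, the fact that $\alpha_{r}^{t^{*}-1}$ is non-void because $Z^{*}$ is closed under reachability so $\InitStep(i^{*}) = t^{*}$ under $\SkipPol_{\Identity}$, and the integrability point --- is sound and matches the paper's intent.
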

\LongVersion 
\begin{proof}
Let $\RunPol_{\mathrm{f}}$ be the ``full'' runtime policy that maps each
configuration
$\c \in \Naturals^{\Species}$
to
$\RunPol_{\mathrm{f}}(\c) = \NonVoid(\Reactions)$
and let $\SkipPol_{\Identity}$ be the identity skipping policy.
We establish the assertion by proving the following three claims:
\\
(C1)
$\Ex_{\eta_{r}} \left(
\RunTime_{\mathrm{x}}^{\RunPol_{\mathrm{f}}, \SkipPol_{\Identity}}
(\eta_{r})
\right)
=
\Ex_{\eta_{r}} \left(
\sum_{t = 0}^{t^{*} - 1} 1 \big/ \Prp_{\c_{r}^{t}}
\right)$;
\\
(C2)
$\RunTime_{\mathrm{x}}^{\RunPol_{\mathrm{f}}, \SkipPol_{\Identity}}(\eta)
\geq
\RunTime_{\mathrm{x}}^{\RunPol_{\mathrm{f}}, \SkipPol}(\eta)$
for every execution
$\eta = \langle \c^{t}, \alpha^{t} \rangle_{t \geq 0} \in \mathcal{F}(n)$
and skipping policy $\SkipPol$;
and
\\
(C3)
$\Ex_{\eta_{r}} \left(
\RunTime_{\mathrm{x}}^{\RunPol, \SkipPol_{\Identity}}
(\eta_{r})
\right)
\geq
\Ex_{\eta_{r}} \left(
\sum_{t = 0}^{t^{*} - 1} 1 \big/ \Prp_{\c_{r}^{t}}
\right)$
for every runtime policy $\RunPol$.
\\
Indeed, claims (C1) and (C2) imply that
\begin{align*}
\min_{\RunPol}
\Ex_{\eta_{r}} \left(
\max_{\SkipPol}
\RunTime_{\mathrm{x}}^{\RunPol, \SkipPol}(\eta_{r})
\right)
\, \leq \, &
\Ex_{\eta_{r}} \left(
\max_{\SkipPol}
\RunTime_{\mathrm{x}}^{\RunPol_{\mathrm{f}}, \SkipPol}(\eta_{r})
\right)
\\
= \, &
\Ex_{\eta_{r}} \left(
\RunTime_{\mathrm{x}}^{\RunPol_{\mathrm{f}}, \SkipPol_{\Identity}}(\eta_{r})
\right)
\, = \,
\Ex_{\eta_{r}} \left(
\sum_{t = 0}^{t^{*} - 1} 1 \big/ \Prp_{\c_{r}^{t}}
\right)
\, ,
\end{align*}
whereas claim (C3) yields
\[
\min_{\RunPol}
\Ex_{\eta_{r}} \left(
\max_{\SkipPol}
\RunTime_{\mathrm{x}}^{\RunPol, \SkipPol}(\eta_{r})
\right)
\, \geq \,
\min_{\RunPol}
\Ex_{\eta_{r}} \left(
\RunTime_{\mathrm{x}}^{\RunPol, \SkipPol_{\Identity}}(\eta_{r})
\right)
\, \geq \,
\Ex_{\eta_{r}} \left(
\sum_{t = 0}^{t^{*} - 1} 1 \big/ \Prp_{\c_{r}^{t}}
\right)
\, .
\]

To prove the three claims, we start by deducing that claim (C3) follows from
\Obs{}~\ref{observation:time-span-stochastic-execution}, observing that the
inequality may become strict (only) due to the excessive contribution to
$\RunTime_{\mathrm{x}}^{\RunPol, \SkipPol_{\Identity}}(\eta_{r})$
of the temporal cost charged to the (unique) round
$i \geq 0$
that satisfies
$\InitStep(i) < t^{*} < \InitStep(i - 1)$
(if such a round exists).
As the target reaction sets exclude void reactions, we conclude by the
definition of operator $\Stop$ that under $\RunPol_{\mathrm{f}}$ and
$\SkipPol_{\Identity}$, there must exist a round
$i \geq 0$
such that
$t^{*} = \InitStep(i)$,
thus obtaining claim (C1).
For claim (C2), it suffices to observe that under $\RunPol_{\mathrm{f}}$, it
holds that
\[\textstyle
\InitStep(i + 1)
\, = \,
\min \{ t' > \EffStep(i) \mid \alpha^{t' - 1} \in \NonVoid(\Reactions) \}
\, = \,
\min \{ t' > \EffStep(i) \mid \c^{t'} \neq \c^{\EffStep(i)} \}
\]
for every round
$i \geq 0$.
\end{proof}
\LongVersionEnd 

\subsection{Speed Faults}
\label{section:runtime:speed-faults}
Consider a CRN protocol
$\Pi = (\Species, \Reactions)$
which is stably (resp., haltingly) correct with respect to an interface
$\mathcal{I} = (\mathcal{U}, \mu, \mathcal{C})$.
For a valid initial configuration
$\c^{0} \in \Naturals^{\Species}$,
let
$Z_{\mathcal{I}}(\c^{0}) = \{
\c \in \Naturals^{\Species}
\mid
(\mu(\c^{0}), \mu(\c)) \in \mathcal{C}
\}$
and recall that if a weakly fair execution $\eta$ of $\Pi$ emerges from
$\c^{0}$, then $\eta$ is guaranteed
to reach $\Stab(Z_{\mathcal{I}}(\c^{0}))$ (resp.,
$\Halt(Z_{\mathcal{I}}(\c^{0}))$).

Given a parameter
$s > 0$,
a configuration
$\c \in \Naturals^{\Species}$
is said to be a \emph{stabilization $s$-pitfall} (resp., a \emph{halting
$s$-pitfall}) of the valid initial configuration $\c^{0}$ if
$\c^{0} \Reaching \c$
and every path from $\c$ to $\Stab(Z_{\mathcal{I}}(\c^{0}))$
(resp., $\Halt(Z_{\mathcal{I}}(\c^{0}))$) in the digraph $D^{\Pi}$ includes
(an edge labeled by) a reaction whose propensity is at most
$s / \Vol$
(see, e.g., \Fig{} \ref{figure:crn-multiple-pitfalls:runtime-policy} and
\ref{figure:crn-skipping-policy:runtime-policy}).
When
$s = O (1)$,
we often omit the parameter and refer to $\c$ simply as a \emph{stabilization
pitfall} (resp., \emph{halting pitfall}).
Following the definition of Chen et al.~\cite{ChenCDS2017speed-faults}, we say
that an infinite family $\mathbf{C}^{0}$ of valid initial configurations
has a \emph{stabilization speed fault} (resp., \emph{halting speed
fault}) if
for every integer
$n_{0} > 0$,
there exists a configuration
$\c^{0} \in \mathbf{C}^{0}$
of molecular count
$\| \c^{0} \| = n \geq n_{0}$
that admits a stabilization (resp., halting) pitfall.

\begin{lemma} \label{lemma:speed-fault-lower-bound}
If an infinite family $\mathbf{C}^{0}$ of valid initial configurations has a
stabilization (resp., halting) speed fault, then for every integer
$n_{0} > 0$,
there exist
a configuration
$\c^{0} \in \mathcal{C}^{0}$
of molecular count
$\| \c^{0} \| = n \geq n_{0}$,
a weakly fair execution $\eta$ emerging from $\c^{0}$,
and a skipping policy $\SkipPol$,
such that
$\RunTime_{\mathrm{x}}^{\RunPol, \SkipPol}(\eta) \geq \Omega (n)$
for every runtime policy
$\RunPol$, where $\mathrm{x}$ serves as a placeholder for $\Stab$ (resp.,
$\Halt$).\footnote{%
As discussed in \cite{ChenCDS2017speed-faults}, a speed fault does not imply an
$\Omega (n)$
lower bound on the (stochastic) runtime of stochastically scheduled executions
since the probability of reaching a pitfall configuration may be small.
}
\end{lemma}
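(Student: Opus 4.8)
The plan is to have the adversary funnel the execution into a pitfall, then into a strongly connected ``trap'' region of the configuration digraph in which it can loiter for $\Theta(n^{2})$ steps while the \emph{only} way out toward $\Stab(Z_{\mathcal{I}}(\c^{0}))$ is through a reaction of propensity $O(1/n)$; the skipping policy will then hide the path leading into the trap so that the entire runtime charge comes from the loitering phase. Concretely, fix $n_{0}$, invoke the speed fault to obtain a valid initial configuration $\c^{0}\in\mathbf{C}^{0}$ with $\|\c^{0}\|=n\ge n_{0}$ admitting a pitfall $\c$, and abbreviate $Z=Z_{\mathcal{I}}(\c^{0})$. Call an edge of $D^{\Pi}_{\c^{0}}$ \emph{slow} if its label has propensity at most $s/\Vol$ at the edge's tail, and let $D_{\mathrm{ff}}$ denote $D^{\Pi}_{\c^{0}}$ with all slow edges removed.

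For the structural core, observe that ``$\c$ is a pitfall'' means precisely that $\c$ cannot reach $\Stab(Z)$ in $D_{\mathrm{ff}}$. Following $D_{\mathrm{ff}}$-edges out of $\c$ and using that $\{\c'\mid\c^{0}\Reaching\c'\}$ is finite (as $\Pi$ respects finite density), I would pick a strongly connected component $\hat S$ of $D_{\mathrm{ff}}$ that is reachable from $\c$ along $D_{\mathrm{ff}}$-edges and is a \emph{sink} in the $D_{\mathrm{ff}}$-component DAG. Then: (i) every $D^{\Pi}$-edge leaving $\hat S$ is slow; (ii) $\hat S\cap\Stab(Z)=\emptyset$ (else $\c$ would have a slow-free path to $\Stab(Z)$), whence the $D^{\Pi}_{\c^{0}}$-component containing $\hat S$ is disjoint from $\Stab(Z)$ and, by \Lem{}~\ref{lemma:correctness-via-configuration-digraph}, admits an escaping reaction, so $\hat S$ itself has a (necessarily slow, by (i)) $D^{\Pi}$-edge leaving it; and (iii) every configuration of $\hat S$ is reachable from the valid $\c^{0}$ and outside $\Stab(Z)$, hence non-halting. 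Since the unique reaction with a given present reactant $A$, if non-void, is unimolecular of propensity $\Omega(1)\gg s/\Vol$, hence fast and — as $\hat S$ is a $D_{\mathrm{ff}}$-sink — internal to $\hat S$, one gets in particular that a singleton trap $\hat S=\{\c^{\dagger}\}$ has all its applicable non-void reactions slow and carries an applicable void self-loop; in all cases $\hat S$ supports a closed walk (the fast void self-loop if $|\hat S|=1$, any cycle through $\hat S$ otherwise).

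Next I would build $(\eta,\SkipPol)$. Using the strong connectivity of $\hat S$, and mimicking the cycle construction in the proof of \Lem{}~\ref{lemma:non-escaping-scc-and-fair-path}, I obtain a closed walk $C$ inside $\hat S$ that visits every configuration of $\hat S$, \emph{schedules every fast reaction} (each such reaction, wherever applicable in $\hat S$, has its image in $\hat S$ by (i), so furnishes an internal labeled edge), and for every reaction inapplicable somewhere in $\hat S$ passes through a witness of that inapplicability; the only reactions $C$ leaves untouched are those applicable throughout $\hat S$ all of whose images escape $\hat S$, and these are slow by (i). Let $\eta$: (a) go from $\c^{0}$ to $\c$ and on to some $\c^{\star}\in\hat S$ in $L$ steps; (b) follow $C$ for $N\triangleq\lceil c_{0}n^{2}\rceil$ steps (constant $c_{0}$ fixed later), ending at a configuration of $\hat S$ with an outgoing $D^{\Pi}$-edge; (c) take that slow edge and continue along an arbitrary weakly fair execution. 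Phase (c) reaches $\Stab(Z)$ by stable correctness, so $\eta$'s stabilization step $t^{*}$ is finite with $t^{*}>L+N$, and $\eta$ is weakly fair (phase (b) discharges the reactions touched by $C$ within one period; phase (c) discharges the rest). Take $\SkipPol$ to be the prefix-hiding policy $\SkipPol(t)=L$ for $t\le L$ and $\SkipPol(t)=t$ otherwise, so that round $0$ has effective configuration $\c^{\star}$ and swallows the prefix, and every round with initial step in $[L,L+N)$ has effective configuration in $\hat S$.

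Finally comes the case analysis, which I expect to be the routine part. Fix any runtime policy $\RunPol$; for $\c'\in\hat S$ write $Q=\RunPol(\c')\subseteq\NonVoid(\Reactions)$, and recall $\TemporalCost^{\RunPol}(\c')=\Theta(1/n)\cdot\Ex(\Stop(\eta_{r},0,Q))$ for a stochastic $\eta_{r}$ out of $\c'$. The key estimate is that $\eta_{r}$ stays inside $\hat S$ for $\Omega(n^{2})$ expected steps, because every reaction with an image outside $\hat S$ is slow (total such propensity $O(1/n)$ against total propensity $\Theta(n)$); while $\eta_{r}$ is in $\hat S$ a reaction applicable throughout $\hat S$ stays applicable and is scheduled only with probability $O(1/n^{2})$ per step. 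Now consider a round whose effective configuration is some $\c'\in\hat S$ occurring in phase (a)/(b). If $Q$ contains a reaction scheduled by $C$, or is rendered inapplicable within one period of $C$, that round spans only $O(|C|)=O(1)$ steps of $\eta$; otherwise $Q$ contains a reaction that is applicable throughout $\hat S$, not scheduled by $C$, hence slow — and then, since the only $Q$-reactions that $\eta_{r}$ can ever schedule while in $\hat S$ are slow and that reaction keeps condition (II) from firing, $\Ex(\Stop(\eta_{r},0,Q))=\Omega(n^{2})$, so $\TemporalCost^{\RunPol}(\c')=\Omega(n)$. Thus either some phase-(b) round is charged $\Omega(n)$, or all phase-(b) rounds are ``short'', cutting the $N=\Theta(n^{2})$ steps into $\Omega(n^{2})$ rounds each charged $\ge\Theta(1/n)$ (as $\Stop\ge1$), again $\Omega(n)$; since $\SkipPol$ keeps the prefix from contributing extra rounds and $t^{*}>L+N$, all these rounds are counted, giving $\RunTime^{\RunPol,\SkipPol}_{\mathrm{x}}(\eta)=\Omega(n)$. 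The halting case is identical with $\Halt(Z)$ in place of $\Stab(Z)$. I expect the genuine difficulty to lie in the structural step — isolating the trap $\hat S$ and certifying a polynomially long, weakly fair loitering stretch whose every exit is slow — where the pitfall condition, finite density, \Lem{}~\ref{lemma:correctness-via-configuration-digraph}, and the weakly fair cycle construction behind \Lem{}~\ref{lemma:non-escaping-scc-and-fair-path} must be combined, and where the gap between the ``slow-free'' component structure ($D_{\mathrm{ff}}$) and the full component structure ($D^{\Pi}$) has to be navigated.
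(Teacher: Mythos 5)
Your construction takes a genuinely different route from the paper's proof, which is a two-step reduction: Chen et al.'s observation that a stochastic execution emerging from the pitfall $\c$ needs $\Omega(n)$ expected time to stabilize (resp., halt), combined with \Lem{}~\ref{lemma:stochastic-benchmark}, already yields the adversarial bound from $\c$, and the path from $\c^{0}$ to $\c$ is then absorbed by the skipping policy exactly as in your phase (a). Your explicit trap-and-loiter argument is more self-contained --- it makes the quantifier order (one pair $(\eta,\SkipPol)$ working against \emph{all} $\RunPol$) completely transparent, which the paper's appeal to \Lem{}~\ref{lemma:stochastic-benchmark} leaves implicit --- and the structural core (the fast-edge sink component $\hat{S}$, the fact that all of its exits are slow, its disjointness from $\Stab(Z_{\mathcal{I}}(\c^{0}))$ via \Lem{}~\ref{lemma:correctness-via-configuration-digraph}, and the weakly fair loitering walk) is sound.

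However, one step fails as written: the claim that a round in your first case ``spans only $O(|C|)=O(1)$ steps.'' Your closed walk $C$ must visit every configuration of $\hat{S}$ (you need this so that one period of $C$ witnesses the inapplicability of every somewhere-inapplicable reaction and certifies that any surviving target reaction is applicable throughout $\hat{S}$), but $\hat{S}$ is a strongly connected component of the fast-edge digraph over the configurations reachable from $\c^{0}$, and nothing bounds its size independently of $n$; in the protocol of \Fig{}~\ref{figure:crn-kill-b}, for instance, the analogous fast component already has $\Theta(n)$ configurations. With $|C|=\omega(1)$ and $N=\Theta(n^{2})$, the all-rounds-short branch of your final dichotomy yields only $\Omega(n^{2}/|C|)$ rounds, each charged $\Omega(1/n)$, for a total of $\Omega(n/|C|)$, which is not $\Omega(n)$. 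The repair is cheap --- let the loitering phase last $N=n^{2}\cdot(|C|+1)$ steps, which is still finite, so weak fairness and $t^{*}>L+N$ are unaffected, and the short-rounds branch again produces $\Omega(n^{2})$ counted rounds --- but as stated that branch of the argument does not deliver the claimed bound.
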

\LongVersion 
\begin{proof}
Let $\c^{0}$ be a configuration in $\mathcal{C}^{0}$
of molecular count
$\| \c^{0} \| = n \geq n_{0}$
that admits a stabilization (resp., halting) pitfall $\c$.
As observed by Chen et al.~\cite{ChenCDS2017speed-faults}, a stochastically
scheduled execution emerging from $\c$ needs, in expectation, at least
$\Omega (n)$
time to stabilize (resp., halt).
Therefore, \Lem{}~\ref{lemma:stochastic-benchmark} implies that there exists a
weakly fair execution $\eta_{\c}$ emerging from $\c$ and a skipping policy
$\SkipPol_{\c}$ such that
$\RunTime_{\mathrm{x}}^{\RunPol, \SkipPol_{\c}}(\eta_{\c}) \geq \Omega (n)$
for every runtime policy
$\RunPol$.
As
$\c^{0} \Reaching \c$,
we can devise $\eta$ and $\SkipPol$ so that
$\RunTime_{\mathrm{x}}^{\RunPol, \SkipPol}(\eta)
=
\RunTime_{\mathrm{x}}^{\RunPol, \SkipPol_{\c}}(\eta_{\c})$,
thus establishing the assertion.
\end{proof}
\LongVersionEnd 

\LongVersion 
\subsection{Useful Toolbox for Runtime Analyses}
\label{section:runtime:tools}

\subsubsection{Bounding the Temporal Cost by means of $\RunPol$-Avoiding Paths}
\label{section:runtime:tools:reachability-via-avoiding-paths}
The following definition plays a key role in the runtime analysis of the CRN
protocols presented in the sequel.
Given a runtime policy $\RunPol$ and two configurations
$\c, \c' \in \Naturals^{\Species}$,
we say that $\c'$ is reachable from $\c$ via a \emph{$\RunPol$-avoiding path},
denoted by
$\c \Reaching_{\langle \RunPol \rangle} \c'$,
if there exists a path $P$ from $\c$ to $\c'$ in the configuration
digraph $D^{\Pi}$ of $\Pi$ that satisfies
(1)
all edges in $P$ are labeled by the reactions in
$\Reactions - \RunPol(\c)$;
and
(2)
there exists some (at least one) reaction
$\alpha \in \RunPol(\c)$
such that
$\alpha \in \Applicable(\hat{\c})$
for every configuration $\hat{\c}$ in $P$.
Equivalently, the relation
$\c \Reaching_{\langle \RunPol \rangle} \c'$
holds if (and only if) there exists a weakly fair execution
$\eta = \langle \c^{t}, \alpha^{t} \rangle_{t \geq 0}$,
a skipping policy $\SkipPol$,
and a round
$i \geq 0$
(defined with respect to $\RunPol$ and $\SkipPol$)
such that
$\c = \c^{\EffStep(i)}$
and
$\c' = \c^{t'}$
for some
$\EffStep(i) \leq t' < \InitStep(i + 1)$.

The usefulness of the notion of reachability via avoiding paths is manifested
in the following important lemma.
Its proof is fairly straightforward under the continuous time Markov process
formulation of the standard stochastic model
\cite{Gillespie1977stochastic};
for completeness, we provide, in
Appendix~\ref{appendix:proof:lemma:runtime:tools:temporal-cost}, a proof for
the discrete scheduler interpretation adopted in the current paper.

\begin{lemma} \label{lemma:runtime:tools:temporal-cost}
Consider a runtime policy $\RunPol$ and a configuration
$\c \in \Naturals^{\Species}$
and assume that
$\Prp_{\c'}(\RunPol(\c)) \geq p$
for every configuration
$\c' \in \Naturals^{\Species}$
such that
$\c \Reaching_{\langle \RunPol \rangle} \c'$.
Then, the temporal cost of $\c$ under $\RunPol$ is up-bounded as
$\TemporalCost^{\RunPol}(\c) \leq 1 / p$.
\end{lemma}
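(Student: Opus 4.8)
The plan is to bound the expected value of $\Stop(\eta_r, 0, \RunPol(\c))$ by a geometric-type argument and then invoke the identity $\TemporalCost^{\RunPol}(\c) = \Theta(1/\|\c\|) \cdot \Ex(\Stop(\eta_r, 0, \RunPol(\c)))$ — but actually, rather than going through the step count, I would argue directly in continuous time, which is cleaner. First I would observe that $\TemporalCost^{\RunPol}(\c)$ is, by definition, the expected accumulated time span of the stochastic execution $\eta_r$ from $\c_r^0 = \c$ until the first step at which a reaction of $\RunPol(\c)$ is scheduled or all of $\RunPol(\c)$ has become inapplicable along the way. The key structural fact is that every configuration $\c'$ visited by $\eta_r$ strictly before the stopping step satisfies $\c \Reaching_{\langle \RunPol \rangle} \c'$: indeed, up to that step no reaction of $\RunPol(\c)$ has fired, so all edges traversed are labeled by $\Reactions - \RunPol(\c)$, and (since we have not yet stopped via condition (II)) some reaction of $\RunPol(\c)$ remains applicable at every configuration encountered so far. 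Hence the hypothesis gives $\Prp_{\c'}(\RunPol(\c)) \geq p$ at every such $\c'$.

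Next I would set up the continuous-time comparison. In the continuous time Markov process underlying the stochastic scheduler, being at a configuration $\c'$ with total propensity $\Prp_{\c'}$, the next reaction occurs after an exponential waiting time of rate $\Prp_{\c'}$, and it is a reaction in $\RunPol(\c)$ with probability $\Prp_{\c'}(\RunPol(\c)) / \Prp_{\c'} \geq p / \Prp_{\c'}$. Consequently, the rate at which a $\RunPol(\c)$ reaction fires, while we are still running, is at least $p$ at all times. A standard stochastic-domination argument then shows that the total (continuous) time elapsed until either a $\RunPol(\c)$ reaction fires or the process stops via condition (II) is stochastically dominated by an exponential random variable of rate $p$; stopping via (II) can only happen earlier, so it only helps. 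Taking expectations yields $\TemporalCost^{\RunPol}(\c) \leq 1/p$. Concretely, I would make this rigorous by examining the embedded discrete chain: writing $N = \Stop(\eta_r, 0, \RunPol(\c))$, I would show that conditioned on not having stopped by step $k$, the probability of stopping at step $k+1$ (via a $\RunPol(\c)$ reaction) is at least $p \big/ \Prp_{\c_r^k}$, which is exactly the time span of step $k$ times $p$; summing the time-span contributions weighted by survival probabilities and telescoping gives $\Ex\big(\sum_{t=0}^{N-1} 1/\Prp_{\c_r^t}\big) \leq 1/p$.

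The main obstacle is the bookkeeping around the two stopping conditions and the fact that the per-step stopping probability is expressed relative to the varying normalizer $\Prp_{\c_r^k}$ rather than being a fixed constant — so it is not literally a geometric random variable in step count, and the clean statement only emerges after weighting by time spans. I would handle this by defining, for the discrete chain, the ``hazard'' $h_k = \Pr[N = k+1 \mid N > k]$ and the survival probabilities $S_k = \Pr[N > k]$, establishing $h_k \geq p / \Prp_{\c_r^k}$ via the lower bound on $\Prp_{\c_r^k}(\RunPol(\c))$ (valid precisely because of the avoiding-path observation), and then computing
\[
\TemporalCost^{\RunPol}(\c)
= \sum_{k \geq 0} S_k \cdot \frac{1}{\Prp_{\c_r^k}} \cdot \mathbf{1}[\text{contributes}]
\le \sum_{k \geq 0} S_k \cdot \frac{h_k}{p}
= \frac{1}{p} \sum_{k \geq 0} (S_k - S_{k+1})
= \frac{1}{p} \, ,
\]
where I abuse notation slightly and would in the actual write-up justify the first equality by conditioning on the past and the last equality by $S_0 = 1$ and $S_k \to 0$ (which holds by weak fairness, since $\RunPol(\c) \subseteq \NonVoid(\Reactions)$ is nonempty and hence eventually scheduled-or-disabled with probability $1$). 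The one subtlety to flag is that the ``$\mathbf{1}[\text{contributes}]$'' accounting must correctly handle the last step; since the time span of step $k$ is charged when $N > k$, i.e. for $k = 0, \dots, N-1$, the sum over $k$ of $S_k / \Prp_{\c_r^k}$ is exactly $\Ex\big(\sum_{t=0}^{N-1} 1/\Prp_{\c_r^t}\big)$, so the displayed chain is valid as written.
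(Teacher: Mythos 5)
Your proposal is correct and follows essentially the same route as the paper's proof in Appendix~\ref{appendix:proof:lemma:runtime:tools:temporal-cost}: both write $\TemporalCost^{\RunPol}(\c)$ as the survival-weighted sum $\sum_{t}\Ex\bigl(\Indicator_{\Stop>t}/\Prp_{\c_{r}^{t}}\bigr)$, use the observation that $\Stop>t$ implies $\c \Reaching_{\langle\RunPol\rangle}\c_{r}^{t}$ to lower-bound the conditional probability of stopping at step $t+1$ by $p/\Prp_{\c_{r}^{t}}$, and telescope to $\Pr(\Stop<\infty)\leq 1$. Your hazard/survival packaging and the continuous-time Exp$(p)$ domination heuristic are only cosmetic differences from the paper's chain of (in)equalities.
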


Employing \Lem{}~\ref{lemma:runtime:tools:temporal-cost}, we can now establish
\Lem{}~\ref{lemma:run-time-policy-for-all-executions}.

\begin{proof}[Proof of \Lem{}~\ref{lemma:run-time-policy-for-all-executions}]
Let
$\mathcal{L}(n) \subset \Naturals^{\Species}$
denote the set of configurations
$\c^{0} \in \Naturals^{\Species}$
of molecular count
$\| \c^{0} \| = n$
that are valid as initial configurations (recall that $\mathcal{F}(n)$
is the set of weakly fair executions emerging from initial configurations
in $\mathcal{L}(n)$).
Fix an integer
$n \geq 1$
and a configuration
$\c \in \Naturals^{\Species}$
which is reachable from some (at least one) valid initial configuration in
$\mathcal{L}(n)$ and let $S$ be the component of $\c$ in the configuration
digraph $D^{\Pi}$.
If $S$ does not admit any escaping reaction, then
\Lem{}~\ref{lemma:correctness-via-configuration-digraph} implies that any
execution in $\mathcal{F}(n)$ that reaches $\c$ has already stabilized (resp.,
halted).
Therefore, we can take $\RunPol(\c)$ to be an arbitrary reaction set as this
choice does not affect
$\RunTime_{\Stab}^{\RunPol}(\eta)$
(resp.,
$\RunTime_{\Halt}^{\RunPol}(\eta)$).

So, assume that $S$ admits a non-empty set
$Q \subseteq \Reactions$
of escaping reactions.
By setting
$\RunPol(\c) = Q'$
for an arbitrary reaction set
$\emptyset \subset Q' \subseteq Q$,
we ensure that for any execution
$\eta \in \mathcal{F}(n)$,
if $\c$ is the effective configuration of a round of $\eta$, then by the time
the next round begins, $\eta$ no longer resides in $S$.
The assumption that $\Pi$ respects finite density implies that the number of
components of $D^{\Pi}$ that $\eta$ goes through before it stabilizes (resp.,
halts) is up-bounded as a function of $n$.
As the propensity of any non-empty set of applicable reactions is at least
$1 / \Vol = \Theta (1 / n)$,
we conclude by \Lem{}~\ref{lemma:runtime:tools:temporal-cost} that each
such component contributes
$O (n)$
to
$\RunTime_{\Stab}^{\RunPol}(\eta)$
(resp.,
$\RunTime_{\Halt}^{\RunPol}(\eta)$),
thus establishing the assertion.
\end{proof}

\subsubsection{The Ignition Gadget}
\label{section:runtime:tools:ignition}
It is often convenient to design CRN protocols so that the molecules present
in the initial configuration belong to designated species whose role is to set
the execution into motion by transforming into the actual species that
participate in the protocol.
As this design feature is widespread in the protocols presented in the sequel,
we introduce it here as a standalone \emph{ignition gadget} so that it can be
used subsequently in a ``black box'' manner.

Formally, the ignition gadget of a CRN protocol
$\Pi = (\Species, \Reactions)$
is defined over a set
$\Species_{\Ignt} \subset \Species$
of \emph{ignition species}, referring to the species in
$\Species - \Species_{\Ignt}$
as \emph{working species}.
Each ignition species
$A \in \Species_{\Ignt}$
is associated with a unimolecular \emph{ignition reaction}
$\iota_{A} : A \rightarrow W_{A}^{1} + \cdots + W_{A}^{k_{A}}$,
where
$W_{A}^{1} + \cdots + W_{A}^{k_{A}} \in \Naturals^{\Species - \Species_{\Ignt}}$
is a multiset (or vector) of working species.
The ignition gadget requires that besides $\iota_{A}$, any reaction in which
the ignition species $A$ participates, as a reactant or as a product, is a
void reaction;
that is, 
$\mathbf{r}(A) = \mathbf{p}(A) = 0$
for every
$(\mathbf{r}, \mathbf{p}) \in \NonVoid(\Reactions) - \{ \iota_{A} \}$.

Given a weakly fair execution
$\eta = \langle \c^{t}, \alpha^{t} \rangle_{t \geq 0}$
of protocol $\Pi$, we say that the ignition gadget is \emph{mature} in step
$t \geq 0$
if
$\c^{t}(\Species_{\Ignt}) = 0$,
observing that this means that the ignition gadget is mature in any step
$t' \geq t$.

\begin{lemma} \label{lemma:runtime:tools:ignition}
Let $\RunPol$ be a runtime policy for protocol $\Pi$, designed so that
$\RunPol(\c) = \{ \iota_{A} \mid A \in \Species_{\Ignt} \}$
for every configuration
$\c \in \Naturals^{\Species}$
with
$\c(\Species_{\Ignt}) > 0$.
Then, for every weakly fair execution
$\eta = \langle \c^{t}, \alpha^{t} \rangle_{t \geq 0}$
and skipping policy $\SkipPol$, there exists a step
$t_{\Ignt} \geq 0$
such that $\eta$ is mature in step $t_{\Ignt}$.
Moreover, it is guaranteed that
$\RunTime^{\RunPol, \SkipPol}(\langle \c^{t}, \alpha^{t} \rangle_{0 \leq t < t_{\Ignt}})
\leq
O (\log n)$,
where
$n = \| \c^{0} \|$
is $\eta$'s initial molecular count.
\end{lemma}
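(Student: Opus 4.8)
The plan rests on two structural facts about the ignition gadget that I would record at the outset. \emph{Monotonicity:} by the gadget requirement the only non-void reaction touching an ignition species $A$ is $\iota_A$, which consumes one copy of $A$ and produces none, so $\c^{t}(\Species_{\Ignt})$ is non-increasing in $t$ along every execution and strictly drops exactly at the steps where a reaction from $Q := \{\iota_A \mid A \in \Species_{\Ignt}\}$ fires; in particular all reactions outside $Q$ (void reactions and non-void non-ignition reactions) leave $\c(\Species_{\Ignt})$ unchanged. \emph{Propensity bound:} there is a protocol-dependent constant $c$ (at most $|\Reactions|$) with $\Prp_{\c'}(\iota_A) \geq \c'(A)/c$ for every ignition species $A$, hence $\Prp_{\c'}(Q) \geq \c'(\Species_{\Ignt})/c$ for every configuration $\c'$.

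For the first assertion I would argue by contradiction: if $\c^{t}(\Species_{\Ignt}) \geq 1$ for all $t$, then by monotonicity this count is eventually constant, equal to some $m \geq 1$ from a step $T$ onward; consequently no $Q$-reaction fires after $T$, every individual count $\c^{t}(A)$ is separately constant for $t \geq T$, and some ignition species $A^{\ast}$ has $\c^{t}(A^{\ast}) \geq 1$ for all $t \geq T$. Then $\iota_{A^{\ast}}$ is continuously applicable from step $T$ on yet never scheduled, violating weak fairness. Thus a mature step exists.

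The core of the runtime bound is a harmonic-sum argument over the rounds whose effective configuration is non-mature. Let $i_0$ be the index of the first round whose effective configuration $\EffConf^{i_0}$ is mature; this is finite, since otherwise every round would have to end via condition (I) of $\Stop$ and thereby destroy an ignition molecule, impossible as only $\|\c^{0}\| = n$ of them ever exist. For each $i < i_0$ the hypothesis gives $\RunPol(\EffConf^{i}) = Q$, and along any $\RunPol$-avoiding path out of $\EffConf^{i}$ every edge label lies in $\Reactions - Q$ and hence (by monotonicity) preserves the ignition count, so every configuration on such a path has ignition count $m_i := \EffConf^{i}(\Species_{\Ignt})$ and propensity of $Q$ at least $m_i/c$; \Lem{}~\ref{lemma:runtime:tools:temporal-cost} then yields $\TemporalCost^{\RunPol}(\EffConf^{i}) \leq c/m_i$. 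Next I would show $m_0 > m_1 > \dots > m_{i_0-1} \geq 1$: monotonicity makes condition (II) of $\Stop$ equivalent to "the current configuration is already mature", so for $i < i_0 - 1$ (whose successor round is still non-mature) round $i$ must terminate via condition (I); since no $Q$-reaction fires strictly between steps $\EffStep(i)$ and $\InitStep(i+1)-1$ and since $\EffStep(i+1) \geq \InitStep(i+1)$, this forces $m_{i+1} \leq m_i - 1$. As $m_0 \leq n$, the $m_i$ are distinct integers in $\{1,\dots,n\}$, whence $\sum_{i < i_0} \TemporalCost^{\RunPol}(\EffConf^{i}) \leq c \sum_{m=1}^{n} 1/m = O(\log n)$.

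To finish I would take $t_{\Ignt} := \EffStep(i_0)$, which is mature, and observe that $i^{\ast} = \min\{i \mid \InitStep(i) \geq t_{\Ignt}\} \in \{i_0, i_0+1\}$ by the round-interval inequalities $\InitStep(i_0) \leq \EffStep(i_0) < \InitStep(i_0+1)$; hence $\RunTime^{\RunPol,\SkipPol}(\langle \c^{t},\alpha^{t}\rangle_{0 \leq t < t_{\Ignt}})$ equals $\sum_{i < i_0} \TemporalCost^{\RunPol}(\EffConf^{i})$, plus — only when the skipping policy skips through maturity inside round $i_0$ — the single extra term $\TemporalCost^{\RunPol}(\EffConf^{i_0})$ for the mature configuration $\EffConf^{i_0}$; taking $\RunPol(\c)=Q$ also on mature $\c$ (consistent with the hypothesis, which constrains $\RunPol$ only on non-mature configurations), all $Q$-reactions are inapplicable there so that $\Stop$ halts after one step and this term equals $1/\Prp_{\EffConf^{i_0}} = \Theta(1/n)$. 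Either way the total is $O(\log n)$. The step I expect to be the crux is the temporal-cost estimate: it is exactly here that the "only $\iota_A$ touches $A$" clause of the gadget is indispensable, so that $\RunPol$-avoiding paths preserve ignition counts, and it is the \emph{strict} decrease of the $m_i$ (rather than a mere bound $n$ on their number) that turns the per-round costs $c/m_i$ into a harmonic $O(\log n)$ instead of $\Theta(n)$; the bookkeeping around $t_{\Ignt}$ and the single mature boundary round is a secondary nuisance.
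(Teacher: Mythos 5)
Your proof is correct and follows essentially the same route as the paper's: the ignition count is non-increasing and is preserved along $\RunPol$-avoiding paths, so \Lem{}~\ref{lemma:runtime:tools:temporal-cost} bounds each non-mature round's temporal cost by $O(1/m_i)$, and the strict decrease of the $m_i$ turns the total into a harmonic sum of $O(\log n)$. Your explicit treatment of the boundary round in which the skipping policy jumps past maturity is in fact more careful than the paper's own proof, which implicitly assumes every counted round has a non-mature effective configuration.
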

\begin{proof}
The fact that step $t_{\Ignt}$ exists follows since the ignition reactions
remain applicable as long as the molecular count of the ignition species is
positive and since the ignition species are not produced by any non-void
reaction.

To bound the runtime of the execution prefix
$\eta_{\Ignt}
=
\langle \c^{t}, \alpha^{t} \rangle_{0 \leq t < t_{\Ignt}}$
under $\RunPol$ and $\SkipPol$, let $\InitStep(i)$ and $\EffConf^{i}$ be the
initial step and effective configuration, respectively, of round
$i \geq 0$
and let
$i_{\Ignt} = \min \{ i \geq 0 \mid \InitStep(i) \geq t_{\Ignt} \}$.
Fix some round
$0 \leq i < i_{\Ignt}$
and let
$\ell_{i} = \EffConf^{i}(\Species_{\Ignt})$.
The definition of the ignition gadget ensures that round $i$ is
target-accomplished with
$\ell_{i + 1} < \ell_{i}$
and that
$\Prp_{\c}(\RunPol(\EffConf^{i}))
=
\Prp_{\EffConf^{i}}(\RunPol(\EffConf^{i}))$
for every configuration
$\c \in \Naturals^{\Species}$
reachable form $\EffConf^{i}$ via a $\RunPol$-avoiding path.
As
$\Prp_{\EffConf^{i}}(\RunPol(\EffConf^{i}))
=
\Prp_{\EffConf^{i}}(\{ \iota_{A} \mid A \in \Species_{\Ignt} \})
=
\ell_{i}$,
we can employ \Lem{}~\ref{lemma:runtime:tools:temporal-cost} to conclude that
$\TemporalCost^{\RunPol}(\EffConf^{i}) \leq 1 / \ell_{i}$.
Since
$\ell_{0} \leq n$,
it follows that the runtime of $\eta_{\Ignt}$ under $\RunPol$ and $\SkipPol$
is bounded as
\[\textstyle
\RunTime^{\RunPol, \SkipPol}(\eta_{\Ignt})
\, = \,
\sum_{i = 0}^{i_{\Ignt} - 1} \TemporalCost^{\RunPol}(\EffConf^{i})
\, \leq \,
\sum_{\ell = 1}^{n} 1 / \ell
\, = \,
O (\log n)
\, ,
\]
thus establishing the assertion.
\end{proof}
\LongVersionEnd 

\section{Predicate Decidability}
\label{section:crd}
An important class of CRN protocols is that of \emph{chemical reaction
deciders (CRDs)} whose goal is to determine whether the initial molecular
counts of certain species satisfy a given predicate.
In its most general form (see
\cite{ChenCDS2017speed-faults,
Brijder2019tutorial}),
a CRD is a CRN protocol
$\Pi = (\Species, \Reactions)$
augmented with
(1)
a set
$\Sigma \subset \Species$
of \emph{input species};
(2)
two disjoint sets
$\Upsilon_{0}, \Upsilon_{1} \subset \Species$
of \emph{voter species};
(3)
a designated \emph{fuel species}
$F \in \Species - \Sigma$;
and
(4)
a fixed \emph{initial context}
$\mathbf{k} \in \Naturals^{\Species - (\Sigma \cup \{ F \})}$.
\LongVersion 
To emphasize that the protocol $\Pi$ is a CRD, we often write
$\Pi
=
(\Species, \Reactions, \Sigma, \Upsilon_{0}, \Upsilon_{1}, F, \mathbf{k})$.
\LongVersionEnd 
The CRD is said to be \emph{leaderless} if its initial context is the zero
vector, i.e.,
$\mathbf{k} = \mathbf{0}$.

A configuration
$\c^{0} \in \Naturals^{\Species}$
is valid as an initial configuration of the CRD $\Pi$ if
$\c^{0}|_{\Species - (\Sigma \cup \{ F \})} = \mathbf{k}$;
to ensure that the initial molecular count
$\| \c^{0} \|$
is always at least $1$ (especially when the CRD is leaderless), we also
require that
$\c^{0}(F) \geq 1$.
In other words, a valid initial configuration $\c^{0}$ can be decomposed into
an \emph{input vector}
$\c^{0}|_{\Sigma} = \mathbf{x} \in \Naturals^{\Sigma}$,
the initial context
$\c^{0}|_{\Species - (\Sigma \cup \{ F \})} = \mathbf{k}$,
and any number
$\c^{0}(F) \geq 1$
of fuel molecules.
We emphasize that in contrast to the initial context, the protocol designer has
no control over the \emph{exact} molecular count of the fuel species in the
initial configuration.

For
$v \in \{ 0, 1 \}$,
let
\LongVersion 
\[\textstyle
\mathcal{D}_{v}
\, = \,
\left\{
\c \in \Naturals^{\Species}
\mid
\c(\Upsilon_{v}) > 0
\land
\c(\Upsilon_{1 - v}) = 0
\right\}
\]
\LongVersionEnd 
\ShortVersion 
$\mathcal{D}_{v}
=
\left\{
\c \in \Naturals^{\Species}
\mid
\c(\Upsilon_{v}) > 0
\land
\c(\Upsilon_{1 - v}) = 0
\right\}$
\ShortVersionEnd 
be the set of configurations that include
\LongVersion 
a positive molecular count of
\LongVersionEnd 
$v$-voters and no
$(1 - v)$-voters.
An input vector
$\mathbf{x} \in \Naturals^{\Sigma}$
is said to be \emph{stably accepted} (resp., \emph{haltingly accepted}) by
$\Pi$ if for every valid initial configuration
$\c^{0} \in \Naturals^{\Species}$
with
$\c^{0}|_{\Sigma} = \mathbf{x}$,
every weakly fair execution
$\eta = \langle \c^{t}, \alpha^{t} \rangle_{t \geq 0}$
\LongVersion 
emerging from $\c^{0}$
\LongVersionEnd 
stabilizes (resp., halts) into $\mathcal{D}_{1}$;
the input vector
$\mathbf{x} \in \Naturals^{\Sigma}$
is said to be \emph{stably rejected} (resp., \emph{haltingly rejected}) by
$\Pi$ if the same holds with $\mathcal{D}_{0}$.
The CRD $\Pi$ is stably (resp., haltingly) correct if every input vector
$\mathbf{x} \in \Naturals^{\Sigma}$
is either stably (resp., haltingly) accepted or stably (resp., haltingly)
rejected by $\Pi$.
In this case, we say that $\Pi$ \emph{stably decides} (resp., \emph{haltingly
decides}) the predicate
$\psi : \Naturals^{\Sigma} \rightarrow \{ 0, 1 \}$
defined so that
$\psi(\mathbf{x}) = 1$
if and only if
$\mathbf{x}$
is stably (resp., haltingly) accepted by $\Pi$.

By definition, the molecular count of the fuel species $F$ in the initial
configuration $\c^{0}$ does not affect the computation's outcome in terms of
whether the execution stabilizes (resp., halts) with $0$- or $1$-voters.
Consequently, one can increase the molecular count $\c^{0}(F)$ of the fuel
species in the initial configuration $\c^{0}$, thus increasing the initial
(total) molecular count
$n = \| \c^{0} \|$
for any given input vector
$\mathbf{x} \in \Naturals^{\Sigma}$.
Since the runtime of a CRN is expressed in terms of the initial molecular
count $n$, decoupling $\mathbf{x}$ from $n$ allows us to measure the
asymptotic runtime of the protocol while keeping $\mathbf{x}$ fixed.
In this regard, the CRD $\Pi$ is said to be \emph{stabilization speed fault
free} (resp., \emph{halting speed fault free})
\cite{ChenCDS2017speed-faults} if for
every input vector
$\mathbf{x} \in \Naturals^{\Sigma}$,
the family of valid initial configurations
$\c^{0} \in \Naturals^{\Species}$
with
$\c^{0}|_{\Sigma} = \mathbf{x}$
does not admit a stabilization (resp., halting) speed fault (as defined in
\Sect{}~\ref{section:runtime:speed-faults}).

\LongVersion 
Notice though that there is a caveat in the conception that $\c^{0}(F)$ can be
made arbitrarily large:
we can artificially drive the runtime of $\Pi$ (expressed as a function of
$n$) towards
$\RunTime^{\Pi}(n) = \Theta (n)$
simply by introducing an inert fuel species $F$ (i.e., a species that
participates only in void reactions) and ``pumping up'' its initial molecular
count $\c^{0}(F)$.
Indeed, this has the effect of
(1)
scaling the probability for choosing any ``meaningful'' reaction in a given
step as
$1 / n^{2}$;
and
(2)
scaling the time span of each step as
$1 / n$.
Consequently, the temporal cost associated with each round scales linearly
with $n$, whereas the number of rounds necessary for termination is
independent of $n$.

As a remedy, we subsequently allow for arbitrarily large initial molecular
counts $\c^{0}(F)$ of the fuel species $F$ only when we aim for sub-linear
runtime (upper) bounds, that is,
$\RunTime^{\Pi}(n) = o (n)$.
Otherwise, we restrict ourselves to \emph{fuel bounded} CRDs, namely,
CRDs that are subject to the (additional) requirement that
$\c^{0}(F) \leq O (\| \mathbf{x} \|)$,
ensuring that the fuel molecular count does not dominate (asymptotically) the
initial molecular count
$n = \| \c^{0} \|$.
\LongVersionEnd 

\subsection{Semilinear Predicates}
\label{section:crd:semilinear}
A predicate
$\psi : \Naturals^{\Sigma} \rightarrow \{ 0, 1 \}$
is \emph{linear} if there exist a finite set
$\mathcal{A} = \mathcal{A}(\psi) \subset \Naturals^{\Sigma}$
and a vector
$\mathbf{b} = \mathbf{b}(\psi) \in \Naturals^{\Sigma}$
such that
$\psi(\mathbf{x}) = 1$
if and only if
$\mathbf{x} = \mathbf{b} + \sum_{\mathbf{a} \in \mathcal{A}} k_{\mathbf{a}} \mathbf{a}$
for some coefficients
$k_{\mathbf{a}} = k_{\mathbf{a}}(\mathbf{x}) \in \Naturals$,
$\mathbf{a} \in \mathcal{A}$.
A predicate
$\psi : \Naturals^{\Sigma} \rightarrow \{ 0, 1 \}$
is \emph{semilinear} if it is the disjunction of finitely many linear
predicates.
The following theorem is established in the seminal work of Angluin et al.\
\cite{AngluinADFP2006computation,
AngluinAER2007computational}.

\begin{theorem}[%
\cite{AngluinADFP2006computation,
AngluinAER2007computational}]
\label{theorem:crd:semilinear-strong-fairness}
Fix a predicate
$\psi : \Naturals^{\Sigma} \rightarrow \{ 0, 1 \}$.
If $\psi$ is semilinear, then $\psi$ can be haltingly decided under a
strongly fair scheduler by a leaderless CRD.
If $\psi$ can be stably decided by a CRD under a strongly fair scheduler, then
$\psi$ is semilinear.
\end{theorem}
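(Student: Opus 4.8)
The plan is to prove the two implications separately. The first (``expressibility'') direction---every semilinear predicate is haltingly decided by a leaderless CRD under a strongly fair scheduler---is constructive. The second (``characterization'') direction---stable decidability under a strongly fair scheduler forces semilinearity---carries the real mathematical weight and is where I expect the main obstacle to lie.

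\textbf{Expressibility.}
I would start from the Ginsburg--Spanier theorem, by which the semilinear subsets of $\Naturals^{\Sigma}$ are exactly the Presburger-definable ones, hence exactly the finite Boolean combinations of \emph{threshold} sets $\{\mathbf{x} \mid \langle \mathbf{a}, \mathbf{x}\rangle \geq c\}$ and \emph{modular} sets $\{\mathbf{x} \mid \langle \mathbf{a}, \mathbf{x}\rangle \equiv c \pmod{m}\}$, where $\mathbf{a} \in \Integers^{\Sigma}$ and $c, m \in \Naturals$. It then suffices to (i) build a leaderless CRD for each threshold predicate and each modular predicate, and (ii) show that the class of predicates haltingly decided by leaderless CRDs is closed under negation, conjunction, and disjunction. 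Closure under negation is immediate: swap $\Upsilon_{0}$ and $\Upsilon_{1}$. For conjunction and disjunction, run the two sub-CRDs side by side (a standard side-by-side composition, harmless under strong fairness) and append a constant-size layer that computes the Boolean combination of the two votes. The base-case CRDs are the classical ones: for a modular predicate, fuel molecules carry residues modulo $m$ that add pairwise and then cancel one-sidedly; for a threshold predicate, agents carry a bounded, sign-annotated, clamped approximation of $\langle \mathbf{a}, \mathbf{x}\rangle$. The delicate point is promoting \emph{stable} correctness to \emph{halting} correctness without a leader: I would attach a termination-detection sublayer that, once the vote has converged, drives every working species into inert, voter-only form, exploiting the fuel species and the permission for reactions to have more products than reactants. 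This promotion step is the main obstacle in the expressibility direction.

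\textbf{Characterization.}
Let $\Pi$ stably decide $\psi$ under a strongly fair scheduler, and for $v \in \{0,1\}$ let $I_{v} \subseteq \Naturals^{\Sigma}$ be the set of inputs stably accepted with output $v$; the goal is that each $I_{v}$ is semilinear. By \Lem{}~\ref{lemma:correctness-via-configuration-digraph-strong-fairness}, stable decision means that from every valid initial configuration, every reachable configuration can in turn reach a \emph{$v$-committed} configuration, i.e., one all of whose reachable configurations lie in $\mathcal{D}_{v}$. The argument is a pumping argument in the spirit of Angluin et al.: one shows that $I_{v}$ is semilinear by exhibiting it as a finite Boolean combination of threshold and modular conditions on $\mathbf{x}$, the engine being a \emph{truncation / cut-and-paste lemma} asserting that there is a threshold $N = N(\Pi)$ such that whether a reachable configuration is $v$-committed depends only on its restriction to species counts below $N$ together with the set of species whose count is at least $N$. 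This lemma is proved via Dickson's lemma: in a finite execution witnessing $v$-commitment, two visited configurations are comparable, so the intervening ``loop'' has net effect non-negative on every species and may be excised or iterated, transferring commitment between configurations that differ only in their large coordinates; the strong-fairness hypothesis must be preserved through the surgery, so one also has to re-establish that every configuration reached after the cut can still reach a committed one. Feeding the truncation lemma through the closure of semilinear sets under finite Boolean combinations, projections, and affine preimages---pulling the (semilinear) description of committed configurations back along the affine map from input vectors to valid initial configurations---yields semilinearity of $I_{v}$. The truncation/cut-and-paste lemma, together with ensuring the pumping stays compatible with strong fairness, is the step I expect to be the hard part of the whole proof.
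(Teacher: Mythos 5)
The first thing to say is that the paper contains no proof of this statement: Theorem~\ref{theorem:crd:semilinear-strong-fairness} is imported verbatim from Angluin et al.\ \cite{AngluinADFP2006computation,AngluinAER2007computational}, so there is no in-paper proof to compare yours against. The closest in-paper material is the weak-fairness analogue, Theorem~\ref{theorem:crd:semilinear}, whose positive direction is proved by exactly the decomposition you propose: base CRDs for threshold and modulo predicates plus closure under Boolean combinations, justified by the Ginsburg--Spanier characterization of semilinear sets. Your expressibility direction is therefore on the standard track, with one substantive caveat: the ``termination-detection sublayer that, once the vote has converged, drives every working species into inert, voter-only form'' is not a step that can be made to work as described. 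A CRD cannot detect that its vote has converged (an as-yet-unprocessed input molecule could still flip it), so there is nothing to trigger the cleanup layer. The way halting is actually obtained --- both in the cited works and in the paper's own threshold and modulo constructions --- is structural rather than detective: every non-void reaction strictly decreases a potential (e.g., the number of $L$ molecules collapses to a single survivor and the residual $Y$ charge is absorbed), so the protocol provably runs out of applicable non-void reactions. If you replace the detection layer by that design discipline, the direction goes through; as written, that step fails.

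For the characterization direction, your sketch names the right ingredients of the Angluin--Aspnes--Eisenstat--Ruppert argument: output-stable (``committed'') configurations, Dickson's/Higman's lemma, a truncation lemma reducing commitment to bounded counts together with the set of species exceeding the bound, and closure of semilinear sets under Boolean operations and affine preimages. You also correctly locate the difficulty in the cut-and-paste surgery and in keeping the pumping compatible with fairness. But this remains a plan rather than a proof: the truncation lemma \emph{is} the hard content of the theorem, and making ``excise or iterate the loop while preserving reachability of a committed configuration'' rigorous occupies most of \cite{AngluinAER2007computational}. Since the statement is cited rather than proved in this paper, that level of detail is defensible, but the characterization direction should be presented as an appeal to the cited result (or to \Cor{}~\ref{corollary:weak-correctness-implies-strong-correctness} it contributes nothing), not as established by the sketch.
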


\LongVersion 
Our goal in this section is to
\LongVersionEnd 
\ShortVersion 
In the attached full version, we
\ShortVersionEnd 
extend \Thm{}~\ref{theorem:crd:semilinear-strong-fairness} to weak
fairness which allows us to bound the adversarial runtime of the corresponding
CRDs and establish the following theorem;
notice that the
$O (n)$
runtime bound is asymptotically tight
\LongVersion 
for general semilinear predicates
\LongVersionEnd 
--- see the speed fault freeness
discussion in \Sect{}~\ref{section:crd:detection}.

\begin{theorem} \label{theorem:crd:semilinear}
Fix a predicate
$\psi : \Naturals^{\Sigma} \rightarrow \{ 0, 1 \}$.
If $\psi$ is semilinear, then $\psi$ can be haltingly decided under a weakly
fair scheduler by a leaderless CRD whose halting runtime is
$O (n)$.
If $\psi$ can be stably decided by a CRD under a weakly fair scheduler, then
$\psi$ is semilinear.
\end{theorem}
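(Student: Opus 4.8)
The plan is to establish the two directions of \Thm{}~\ref{theorem:crd:semilinear} separately, leaning on \Thm{}~\ref{theorem:crd:semilinear-strong-fairness} and the machinery of \Sect{}~\ref{section:configuration-digraph}.

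For the ``only if'' direction (stable decidability under weak fairness $\Rightarrow$ semilinearity), the argument is immediate from \Cor{}~\ref{corollary:weak-correctness-implies-strong-correctness}: if a CRD $\Pi$ stably decides $\psi$ under the weakly fair scheduler, then by that corollary $\Pi$ is also stably correct under the strongly fair scheduler, hence stably decides $\psi$ under strong fairness, and the second half of \Thm{}~\ref{theorem:crd:semilinear-strong-fairness} forces $\psi$ to be semilinear. This direction should cost essentially nothing.

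For the ``if'' direction, I would start from the leaderless, haltingly correct (under strong fairness) CRD $\Pi_{0}$ for $\psi$ guaranteed by \Thm{}~\ref{theorem:crd:semilinear-strong-fairness}. The first subtlety is that strong-fairness correctness does \emph{not} imply weak-fairness correctness (the implication in \Cor{}~\ref{corollary:weak-correctness-implies-strong-correctness} goes the other way), so $\Pi_{0}$ must be modified. I would invoke \Lem{}~\ref{lemma:correctness-via-configuration-digraph-strong-fairness} and \Lem{}~\ref{lemma:correctness-via-configuration-digraph}: the two differ precisely in condition (1) — strong fairness needs each component to have \emph{some} edge leaving it, weak fairness needs an \emph{escaping} reaction (one applicable at, and leaving, every configuration of the component). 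So the construction task is to transform $\Pi_{0}$ into a CRD $\Pi$ whose reachable configuration digraph (from any valid initial configuration) has the property that every non-output component admits an escaping reaction. A natural way to do this is to build $\Pi$ so that each working configuration deterministically ``ages'' toward a canonical halting configuration: augment the species with counters/tokens that get consumed monotonically, so that the relevant components become singletons (halting configurations) or have a designated always-applicable, always-leaving reaction. One can also use the ignition gadget of \Sect{}~\ref{section:runtime:tools:ignition} to first convert the raw input/fuel molecules into working species, and the side-by-side simulation technique mentioned in \Sect{}~\ref{section:configuration-digraph} (avoiding Cartesian products) if $\psi$ is handled as a disjunction of linear predicates via the closure results of \Sect{}~\ref{section:crd:semilinear:closure}.

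Once $\Pi$ is weakly-fair haltingly correct, the remaining work is the $O(n)$ runtime bound, which is where the real effort lies. I would exhibit an explicit runtime policy $\RunPol$ and apply \Lem{}~\ref{lemma:runtime:tools:temporal-cost}: for each reachable configuration $\c$ that is not yet halting, let $\RunPol(\c)$ be (a subset of) the escaping reactions of $\c$'s component. Then any $\RunPol$-avoiding path from an effective configuration $\EffConf^{i}$ stays inside one component, on which the target reactions remain applicable, so $\Prp_{\c'}(\RunPol(\EffConf^{i})) \geq 1/\Vol = \Theta(1/n)$ throughout, giving $\TemporalCost^{\RunPol}(\EffConf^{i}) \leq O(n)$ per round via \Lem{}~\ref{lemma:runtime:tools:temporal-cost}. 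The crux is then to bound the \emph{number} of rounds before halting by $n^{O(1)}$ — ideally by $O(1)$ or $\mathrm{poly}(\|\mathbf{x}\|)$ independent of the fuel count, so that the total is $O(n)$ rather than $O(n^{2})$. This requires the ``aging'' structure of $\Pi$ to guarantee that each escaping reaction strictly decreases some bounded potential function (e.g., the number of working molecules still carrying a live token), so that after the ignition phase — whose runtime is $O(\log n)$ by \Lem{}~\ref{lemma:runtime:tools:ignition} — only $O(\mathrm{poly}(\|\mathbf{x}\|))$ further rounds occur. The main obstacle, and the place where a careful construction is needed, is precisely this: designing $\Pi$ so that it is simultaneously (a) weakly-fair correct, (b) has components with escaping reactions, and (c) has a bounded potential decreasing with each round, all while the standard semilinear-predicate constructions (which freely use Cartesian-product simulation and strong-fairness-only gadgets) do none of these out of the box. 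I expect most of the proof's length to go into specifying this modified CRD and verifying the potential argument; the temporal-cost bound itself is then a routine application of \Lem{}~\ref{lemma:runtime:tools:temporal-cost}.
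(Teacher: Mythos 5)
Your ``only if'' direction is exactly the paper's: \Cor{}~\ref{corollary:weak-correctness-implies-strong-correctness} plus the second half of \Thm{}~\ref{theorem:crd:semilinear-strong-fairness}. Your instinct to prove the ``if'' direction by decomposing $\psi$ into basic predicates and closing under Boolean operations is also the paper's route (it uses threshold and modulo predicates, each given a fresh leaderless construction built on the ignition gadget, plus a closure proposition), rather than a generic repair of the strong-fairness CRD; that difference is one of packaging, not substance.

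The genuine gap is in your runtime accounting. You plan to take $\RunPol(\c)$ to be escaping reactions, bound each round's temporal cost by $O(n)$ via $\Prp \geq 1/\Vol$, and then hope the number of rounds is $O(1)$ or $\mathrm{poly}(\|\mathbf{x}\|)$ ``independent of the fuel count.'' That hope cannot be realized for a \emph{leaderless} CRD deciding a threshold predicate: the fuel count may be $\Theta(n)$, the protocol must aggregate the charge carried by $\Theta(n)$ working molecules into a single surviving leader (or cancel $\Theta(n)$ opposite-sign units), and each target-accomplished round eliminates only $O(1)$ of them, so the execution necessarily passes through $\Theta(n)$ rounds. Multiplying by your uniform $O(n)$ per-round bound gives $O(n^{2})$, not $O(n)$. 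The missing idea is a \emph{potential-dependent} temporal cost: the paper chooses $\RunPol(\c)$ (e.g., all $\beta$ reactions between pairs of the $\ell$ surviving $L$-molecules, or all cancellations between $\chi^{+}$ positive and $\chi^{-}$ negative charge units) so that along every $\RunPol$-avoiding path the target propensity is $\Omega(\ell^{2}/n)$, whence \Lem{}~\ref{lemma:runtime:tools:temporal-cost} gives $\TemporalCost^{\RunPol}(\EffConf^{i}) \leq O(n/\ell_{i}^{2})$ and the total over $\Theta(n)$ rounds telescopes as $\sum_{\ell=1}^{n} O(n/\ell^{2}) = O(n)$. A secondary point you gloss over: rounds can be target-deprived (the target set becomes inapplicable without being scheduled), so one must separately argue that only $O(1)$ consecutive rounds can fail to decrease the potential — the paper does this with a lexicographic argument on which charge levels are occupied. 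Without these two ingredients the $O(n)$ bound does not follow.
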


\LongVersion 
The second claim of \Thm{}~\ref{theorem:crd:semilinear} follows immediately
from \Cor{}~\ref{corollary:weak-correctness-implies-strong-correctness} and
\Thm{}~\ref{theorem:crd:semilinear-strong-fairness}.
For the first claim, we define the following two predicate families (that
also play a crucial role in the proof of
\Thm{}~\ref{theorem:crd:semilinear-strong-fairness}
\cite{AngluinADFP2006computation}):
A predicate
$\psi : \Naturals^{\Sigma} \rightarrow \{ 0, 1 \}$
is a \emph{threshold} predicate if there exist a vector
$\mathbf{a} = \mathbf{a}(\psi) \in \Integers^{\Sigma}$
and a scalar
$b = b(\psi) \in \Integers$
such that
$\psi(\mathbf{x}) = 1$
if and only if
$\mathbf{a} \cdot \mathbf{x} < b$.
A predicate
$\psi : \Naturals^{\Sigma} \rightarrow \{ 0, 1 \}$
is a \emph{modulo} predicate if there exist a vector
$\mathbf{a} = \mathbf{a}(\psi) \in \Integers^{\Sigma}$
a scalar
$b = b(\psi) \in \Integers$
and a scalar
$m = m(\psi) \in \Integers_{> 0}$
such that
$\psi(\mathbf{x}) = 1$
if and only if
$\mathbf{a} \cdot \mathbf{x} = b \bmod{m}$.

A folklore result (see, e.g., \cite{GinsburgS1966semigroups}) states that a
predicate
$\psi : \Naturals^{\Sigma} \rightarrow \{ 0, 1 \}$
is semilinear if and only if it can be obtained from finitely many threshold
and modulo predicates through conjunction, disjunction, and negation
operations.
Consequently, we establish \Thm{}~\ref{theorem:crd:semilinear} by
proving the following three propositions.

\begin{proposition} \label{proposition:crd:semilinear:threshold-predicates}
For every threshold predicate
$\psi : \Naturals^{\Sigma} \rightarrow \{ 0, 1 \}$,
there exists a leaderless CRD that haltingly decides $\psi$ whose halting
runtime is
$O (n)$.
\end{proposition}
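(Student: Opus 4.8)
The plan is to give an explicit leaderless CRD $\Pi$ (so $\mathbf{k} = \mathbf{0}$) built from three components run \emph{side by side} --- an ignition front-end, a token-cancellation core, and a voter broadcast back-end --- deliberately avoiding a Cartesian-product construction so as not to create the livelock components discussed in \Sect{}~\ref{section:configuration-digraph}. Write $\psi(\mathbf{x}) = 1 \iff \mathbf{a}\cdot\mathbf{x} < b$ and $K = \max_{X}|\mathbf{a}(X)|$. The ignition reaction of an input species $X$ produces $|\mathbf{a}(X)|$ signed tokens $Y^{\Sign(\mathbf{a}(X))}$ (and a blank $B$ when $\mathbf{a}(X)=0$), while the fuel species $F$ ignites into a blank together with a leader candidate; each such reaction has a fixed product multiset, so this is a legal ignition gadget. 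The core then (i)~elects a unique leader among the candidates by the coalescing reaction $\ell + \ell \to \ell + B$, (ii)~has the elected leader account for the additive constant $b$ by releasing exactly $|b|$ further tokens of the appropriate sign --- a standard bounded-register maneuver (cf.\ \cite{AngluinADFP2006computation,Brijder2019tutorial}) that it performs once --- and (iii)~repeatedly cancels opposite tokens via $Y^{+} + Y^{-} \to B + B$. When cancellation saturates, the surviving sign equals $\Sign(\mathbf{a}\cdot\mathbf{x} - b)$, and the back-end has each surviving token catalyse the conversion of blanks and voters towards the matching species ($1$-voters when a $Y^{-}$ survives, i.e.\ $\mathbf{a}\cdot\mathbf{x} < b$; $0$-voters otherwise, including the no-survivor case), together with an epidemic reaction $v_{1} + v_{0} \to 2v_{1}$ (and its mirror) so that the signal spreads quickly once the token layer has settled.

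For correctness under the weakly fair adversary I would invoke \Lem{}~\ref{lemma:correctness-via-configuration-digraph}: for each valid initial configuration $\c^{0}$ one checks that every strongly connected component of $D^{\Pi}_{\c^{0}}$ either admits an escaping reaction or is contained in $\Halt(Z_{\mathcal{I}}(\c^{0}))$. Concretely, any component in which an ignition species is present, two leaders coexist, the constant is still pending, a cancellable pair of tokens remains, or some voter is of the wrong type has an escaping reaction --- this is exactly where the side-by-side composition matters, since a Cartesian product could trap the execution in such a component --- and the remaining components consist of genuine fixed points carrying only voters of the correct type. Hence $\Pi$ haltingly decides $\psi$, consistent via \Cor{}~\ref{corollary:weak-correctness-implies-strong-correctness} with the strong-fairness guarantee of \Thm{}~\ref{theorem:crd:semilinear-strong-fairness}.

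The runtime policy $\RunPol$ is defined in priority order: target the ignition reactions while an ignition species is present; else the coalescing reaction $\ell + \ell$ while at least two leaders remain; else the constant-release reaction while it is pending; else $Y^{+} + Y^{-}$ while both signs are present; and else the voter-creating reactions while a blank or $0$-voter remains. For each phase the per-round temporal cost is bounded via \Lem{}~\ref{lemma:runtime:tools:temporal-cost}: along a $\RunPol$-avoiding path out of a round's effective configuration the counts of the species driving the targeted reaction can only grow (through the finitely many ignition reactions) or stay fixed, so that reaction retains propensity $\Omega(pq/\Vol)$ with $p,q$ its reactant counts in that effective configuration, giving temporal cost $O(n/(pq))$. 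The ignition phase then contributes $O(\log n)$ by \Lem{}~\ref{lemma:runtime:tools:ignition}; in each of the other phases exactly one ``progress reaction'' occurs per round, so the governing quantity (number of leaders; number of minority tokens; number of unconverted voters, weighted by the number already converted) is strictly monotone across the rounds of that phase, and the per-round costs telescope in coalescence style to $\sum_{v} O(n/v^{2}) = O(n)$ (the broadcast even giving $\sum_{r} O(n/((r{+}1)(V{-}r))) = O(\log n)$). Summing the phases yields halting runtime $O(n)$. The main obstacle is precisely this accounting: the adversary can, through the skipping policy, place every round's effective configuration at a low-propensity configuration, so one must show it cannot amortise more than $O(n)$ in total --- which rests on the priority order of $\RunPol$ (ignition first, so a cost-$\Theta(n)$ two-leader coalescing round cannot be interleaved with un-ignited molecules) and on the strict monotonicity of the phase quantities across rounds, and it is also where the constant-release mechanism must be designed so that late ignitions cannot ``reset'' the leader population or re-open a completed phase.
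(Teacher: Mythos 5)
Your token-cancellation core and its round accounting are sound, and in one respect cleaner than the paper's own proof (the paper keeps the charge inside bounded leader counters $L_{u}$, $-s\le u\le s$, and needs overflow reactions plus a lexicographic potential to handle target-deprived rounds, whereas your pure-token cancellation makes every round target-accomplished and telescopes directly to $O(n)$). However, two components of your construction have genuine gaps. The first is the constant injection. In a leaderless CRD the number of leader candidates equals $\c^{0}(F)$, which the designer does not control, and a candidate cannot detect that it has become the unique survivor of coalescence (that would require sensing the \emph{absence} of other leaders). So ``the elected leader releases exactly $|b|$ tokens once'' is not implementable as stated: if every candidate is born pending and may release before coalescing, the adversary simply delays all coalescences until each of the $\c^{0}(F)$ candidates has released, injecting $|b|\cdot\c^{0}(F)$ tokens; if release is gated on uniqueness, it never fires. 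The paper avoids injecting $b$ altogether: each leader carries a counter saturated at $\pm s$ with $s\ge|b|+1$, coalescence adds counters, and the comparison with $b$ is performed only by the voter partition $\Upsilon_{1}=\{L_{u}\mid u<b\}$ applied to the single surviving leader. You would need to fold your constant into state that is merged idempotently under coalescence rather than released into the token pool.

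The second gap is the broadcast back-end. Including both $v_{1}+v_{0}\to 2v_{1}$ and its mirror reproduces exactly the random-walk-broadcast failure the paper documents: with a fixed surviving token population, the set of configurations in which both voter types are present (together with the all-$v_{0}$ configuration) forms a component of the configuration digraph from which \emph{no} reaction escapes --- each epidemic reaction is inapplicable at the all-$v_{0}$ configuration, and the token-catalysed conversion leads back into the component from every configuration except the one with a single wrong voter remaining --- so by \Lem{}~\ref{lemma:correctness-via-configuration-digraph} there is a weakly fair execution that never stabilizes, let alone halts. Dropping the mirror does not help (the correct direction depends on the answer), and dropping the epidemic leaves only the token-catalysed conversion, which costs $\Theta(n\log n)$ when only $\Theta(1)$ tokens survive. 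Fortunately the back-end is unnecessary for this proposition: halting correctness requires only a positive count of correct voters and zero incorrect ones, so the paper simply makes the lone surviving leader the output voter. If you drop the broadcast and repair the constant as above, your architecture and its $O(\log n)+O(n)+O(n)$ phase accounting go through.
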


\begin{proposition} \label{proposition:crd:semilinear:modulo-predicates}
For every modulo predicate
$\psi : \Naturals^{\Sigma} \rightarrow \{ 0, 1 \}$,
there exists a leaderless CRD that haltingly decides $\psi$ whose halting
runtime is
$O (n)$.
\end{proposition}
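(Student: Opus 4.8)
The plan is to decide $\psi$ with a leaderless ``summation modulo $m$'' protocol, essentially the population protocol gadget of Angluin et al.~\cite{AngluinADFP2006computation} for modulo predicates, re-examined under the weakly fair adversarial scheduler. Write $c_{j} = \mathbf{a}(X_{j}) \bmod m \in \{ 0, \dots, m - 1 \}$ and $b' = b \bmod m$, so that $\psi(\mathbf{x}) = 1$ if and only if $\sum_{j} c_{j} \mathbf{x}(X_{j}) \equiv b' \pmod{m}$. The species are the input species $\Sigma = \{ X_{1}, \dots, X_{k} \}$, the fuel species $F$, the \emph{leader} species $L_{0}, \dots, L_{m - 1}$, and a single inert \emph{sink} species $D$; take the initial context $\mathbf{k} = \mathbf{0}$ so that the CRD is leaderless and a valid initial configuration $\c^{0}$ consists of the input molecules together with $\c^{0}(F) \geq 1$ fuel molecules. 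The non-void reactions are the ignition reactions $\iota_{X_{j}} : X_{j} \to L_{c_{j}}$ and $\iota_{F} : F \to L_{0}$, together with the merge reactions $L_{i} + L_{j} \to L_{(i + j) \bmod m} + D$ for all $i, j$; these are never void (since $D$ appears only as a product), each leader pair has a unique associated merge reaction, and every remaining reactant multiset is assigned its void reaction, so in particular $D$ is inert and no leader reacts with $D$. Finally set $\Upsilon_{1} = \{ L_{b'} \}$ and $\Upsilon_{0} = \{ L_{i} \mid i \neq b' \}$, leaving $D$ and the input species as non-voters. The protocol is density preserving (hence respects finite density), and the one invariant that matters is that each merge preserves the total leader value modulo $m$, which right after maturity equals $\sum_{j} c_{j} \mathbf{x}(X_{j}) \bmod m$.

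\textbf{Correctness under weak fairness.} Since $\Species_{\Ignt} = \Sigma \cup \{ F \}$ meets the requirements of the ignition gadget --- its species occur only in their own ignition reactions and in void reactions --- every weakly fair execution reaches a mature configuration, past which only leaders and copies of $D$ are present, the leader count is non-increasing, and it drops by exactly one at each merge. The crux is that a weakly fair execution cannot livelock with two or more leaders: if no merge ever fired again, leader counts would freeze, and then any two leaders (whether of a common species or of distinct species) would witness a merge reaction that is \emph{continuously} applicable but never scheduled, contradicting weak fairness. Hence the leader count reaches $1$, and by the value invariant the surviving leader is $L_{r}$ with $r = \sum_{j} c_{j} \mathbf{x}(X_{j}) \bmod m$; the resulting configuration --- one $L_{r}$ and $n - 1$ copies of $D$ --- is halting and lies in $\mathcal{D}_{1}$ if $r = b'$ and in $\mathcal{D}_{0}$ otherwise, i.e.\ in $\mathcal{D}_{\psi(\mathbf{x})}$. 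The all-zero input is covered because $\c^{0}(F) \geq 1$ always yields at least one $L_{0}$. This verifies the conditions of \Lem{}~\ref{lemma:correctness-via-configuration-digraph} and establishes halting correctness; the semilinearity of any stably decidable $\psi$ is already supplied by \Cor{}~\ref{corollary:weak-correctness-implies-strong-correctness} and \Thm{}~\ref{theorem:crd:semilinear-strong-fairness}, so only the constructive direction is at stake here.

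\textbf{The runtime bound.} For the $O(n)$ halting runtime I would use the runtime policy $\RunPol$ that (i) returns all ignition reactions on every configuration still containing an ignition species; (ii) returns the set $M$ of all merge reactions on every mature configuration with at least two leaders; and (iii) is arbitrary on the remaining configurations, which are mature with a single leader and hence halting. Clause (i) together with \Lem{}~\ref{lemma:runtime:tools:ignition} bounds the runtime up to maturity by $O(\log n)$. For clause (ii): past maturity the ignition reactions are permanently inapplicable and every reaction outside $M$ is void, so the only configuration reachable from a mature $\c$ via a $\RunPol$-avoiding path is $\c$ itself, where the propensities of the merge reactions sum to $\binom{s}{2}/\Vol$ with $s$ the leader count; \Lem{}~\ref{lemma:runtime:tools:temporal-cost} then gives $\TemporalCost^{\RunPol}(\c) \leq \Vol/\binom{s}{2}$. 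Because leader counts only decrease past maturity and every round whose effective configuration has $\geq 2$ leaders is target-accomplished (it ends precisely when a merge fires), the leader counts of successive effective configurations are strictly decreasing integers in $\{ 2, \dots, n \}$, so the combined temporal cost of these rounds is at most $\Vol \sum_{s = 2}^{n} 2/(s(s - 1)) < 2 \Vol = O(n)$. Adding the contribution of the at most one further round straddling the halting step (whose effective configuration is halting, hence of temporal cost $\Theta(1/n)$), the halting runtime is $O(\log n) + O(n) = O(n)$; note the bound holds for every valid initial molecular count $n$, regardless of how large $\c^{0}(F)$ is.

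\textbf{Anticipated obstacle.} I expect the delicate part to be the weak-fairness correctness argument rather than the runtime accounting. Unlike under strong fairness, one has to argue explicitly that the merging never stalls, and the protocol must be kept deliberately ``reaction-sparse'' --- no leader--$D$, no $D$--$D$, and no follower--follower interactions --- precisely so that (a) a continuously applicable merge is always available while $\geq 2$ leaders remain and (b) the set of configurations reachable from an effective configuration by a $\RunPol$-avoiding path is a singleton, which is what collapses the telescoping runtime sum to $O(\Vol)$. The remaining steps --- reducing the coefficients modulo $m$ so that negative $\mathbf{a}(X_{j})$ cause no trouble, and confirming that the CRD is leaderless --- are routine.
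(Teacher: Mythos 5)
Your proposal is correct and follows essentially the same route as the paper: the same ignition-gadget-plus-pairwise-merge construction (your sink species $D$ playing the role of the paper's $Y$), the same value-mod-$m$ invariant, the same runtime policy targeting ignition reactions then all merge reactions, and the same temporal-cost bound $O(\Vol/\binom{s}{2})$ per round via the observation that $\RunPol$-avoiding paths are trivial, summed over strictly decreasing leader counts to get $O(n)$. The only (immaterial) differences are that you telescope $\sum_s 1/(s(s-1))$ where the paper sums $\sum_\ell 1/\ell^2$, and that you make the weak-fairness "merging cannot stall" argument explicit where the paper leaves the existence of the single-leader step implicit in the runtime accounting.
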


\begin{proposition}
\label{proposition:crd:semilinear:closure}
For
$j \in \{ 1, 2 \}$,
let
$\Pi_{j}
=
(\Species_{j}, \Reactions_{j}, \Sigma, \Upsilon_{j, 0}, \Upsilon_{j, 1},
F_{j}, \mathbf{0})$
be a leaderless CRD that haltingly decides the predicate
$\psi_{j} : \Naturals^{\Sigma} \rightarrow \{ 0, 1 \}$.
Let
$\xi : \{ 0, 1 \} \times \{ 0, 1 \} \rightarrow \{ 0, 1 \}$
be a Boolean function and let
$\psi_{\xi} : \Naturals^{\Sigma} \rightarrow \{ 0, 1 \}$
be the predicate defined by setting
$\psi_{\xi}(\mathbf{x}) = \xi(\psi_{1}(\mathbf{x}), \psi_{2}(\mathbf{x}))$.
Then, there exists a leaderless CRD
$\Pi_{\xi}
=
(\Species_{\xi}, \Reactions_{\xi}, \Sigma, \Upsilon_{\xi, 0}, \Upsilon_{\xi, 1},
F_{\xi}, \mathbf{0})$
that haltingly decides $\psi_{\xi}$ whose halting runtime satisfies
$\RunTime_{\Halt}^{\Pi_{\xi}}(n)
\leq
O (\RunTime_{\Halt}^{\Pi_{1}}(n) + \RunTime_{\Halt}^{\Pi_{2}}(n) + n)$.
Moreover, $\Pi_{\xi}$ uses
$|\Species_{\xi}| = |\Species_{1}| + |\Species_{2}| + |\Sigma| + O (1)$
species.
\end{proposition}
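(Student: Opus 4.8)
The plan is to run $\Pi_1$ and $\Pi_2$ ``side by side'' inside $\Pi_\xi$ on two disjoint molecule pools, each seeded with the common input $\mathbf{x}$, and then read $\xi$ off the two verdicts by means of auxiliary \emph{carrier} species. Concretely, take disjoint renamed copies $\Pi_1^{(1)},\Pi_2^{(2)}$ of $\Pi_1,\Pi_2$ and let $\Species_\xi$ consist of a fresh input set $\Sigma$, a fresh fuel species $F_\xi$, the species of $\Pi_1^{(1)}$ and $\Pi_2^{(2)}$, carrier species $O[a,b]$ for $a,b\in\{0,1\}$ with $\mu(O[a,b])=\xi(a,b)$ (so $\Upsilon_{\xi,v}=\{O[a,b]\mid \xi(a,b)=v\}$), and a handful of auxiliary species for the combination gadget, all other species being mapped to non-voters. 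The ignition species of $\Pi_\xi$ are $\Sigma\cup\{F_\xi\}$, with ignition reactions $X\to X^{(1)}+X^{(2)}$ for $X\in\Sigma$ and $F_\xi\to F_1^{(1)}+F_2^{(2)}+O[0,0]+\dots$ (also feeding the auxiliary species and, from both $\Sigma$ and $F_\xi$, a linear supply of ``ground'' molecules for the combination gadget). Besides those, $\Reactions_\xi$ contains the renamed reactions of $\Pi_1^{(1)}$ and $\Pi_2^{(2)}$, the reactions of the combination gadget (all catalytic in the sub-protocols' voter species, so that the $\Pi_j^{(j)}$-projection of any execution of $\Pi_\xi$ evolves exactly as a standalone $\Pi_j^{(j)}$), and void reactions to satisfy the model's requirement that every species and pair admits a reaction. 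The leaderless requirement holds since all non-input, non-fuel species have initial count $0$; and $|\Species_\xi|=|\Sigma|+|\Species_1|+|\Species_2|+O(1)$ by construction.

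First I would prove correctness. Ignition reactions are continuously applicable while an ignition species is present, so by weak fairness $\Pi_\xi$ reaches a mature configuration. At maturity, the $\Pi_j^{(j)}$-pool was produced by interleaving some ignition firings (creating all $X^{(j)}$'s and $F_j^{(j)}$'s) with $\Pi_j^{(j)}$-reactions; since reachability is monotone under adding molecules, replaying those $\Pi_j^{(j)}$-reactions from the configuration with all $X^{(j)}$'s and $F_j^{(j)}$'s already present reaches the same pool, which is therefore reachable from a genuine valid initial configuration of $\Pi_j^{(j)}$ with input $\mathbf{x}$ and some positive fuel count. Combining this with the fact that a finite prefix prepended to a weakly fair execution is again weakly fair, the $\Pi_j^{(j)}$-projection of the execution is a weakly fair execution of $\Pi_j^{(j)}$; as $\Pi_j$ haltingly decides $\psi_j$, the $\Pi_j^{(j)}$-pool halts into $\mathcal{D}^{(j)}_{\psi_j(\mathbf{x})}$. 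From then on its $\psi_j(\mathbf{x})$-voters are present forever and its $(1-\psi_j(\mathbf{x}))$-voters are absent forever, so the (catalytic) gadget reactions can only steer the carriers toward the state $O[\psi_1(\mathbf{x}),\psi_2(\mathbf{x})]$; a potential argument (the number of ``wrong carrier coordinates'' cannot increase once both sub-pools have halted, and strictly decreases whenever a continuously applicable correcting reaction fires) shows every carrier reaches that state, after which the whole configuration is halting (all remaining reactions being void or inapplicable) and lies in $\mathcal{D}_{\xi(\psi_1(\mathbf{x}),\psi_2(\mathbf{x}))}=\mathcal{D}_{\psi_\xi(\mathbf{x})}$. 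The case $\mathbf{x}=\mathbf{0}$ is handled uniformly since $F_\xi$ is itself an ignition species, so each sub-pool still receives input $\mathbf{0}$ plus fuel.

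For the runtime I would exhibit a phased runtime policy $\RunPol_\xi$ built from near-optimal runtime policies $\RunPol_1,\RunPol_2$ for $\Pi_1,\Pi_2$ (which exist by \Lem{}~\ref{lemma:run-time-policy-for-all-executions}). If the current configuration still has ignition species, $\RunPol_\xi$ returns all ignition reactions, and \Lem{}~\ref{lemma:runtime:tools:ignition} charges $O(\log n)$ for the prefix up to maturity. Otherwise, if the $\Pi_1^{(1)}$-pool is not a halting configuration of $\Pi_1^{(1)}$, $\RunPol_\xi$ returns the renamed image of the set $\RunPol_1$ assigns to that pool (symmetrically for $\Pi_2^{(2)}$); along a $\RunPol_\xi$-avoiding path the $\Pi_1^{(1)}$-pool follows a $\RunPol_1$-avoiding path (all other reactions leave it fixed), and since $\Pi_\xi$ respects finite density its total molecular count is $\Theta(n)$, which is also the order of each sub-pool's count, so the volume factor relating propensities in $\Pi_\xi$ to those in a standalone $\Pi_j^{(j)}$ of the same size is $\Theta(1)$. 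Hence \Lem{}~\ref{lemma:runtime:tools:temporal-cost} transfers the temporal-cost bounds of $\RunPol_j$, and a routine translation of the round partition and the skipping policy (viewing the $\Pi_j^{(j)}$-phase of $\eta$ as a suffix of a weakly fair execution of $\Pi_j$ from a valid initial configuration of count $\Theta(n)$) bounds the total contribution of these phases by $O(\RunTime_{\Halt}^{\Pi_1}(n)+\RunTime_{\Halt}^{\Pi_2}(n))$. In the final phase, where both sub-pools have halted, $\RunPol_\xi$ targets the gadget reactions, and the goal is to have the carriers inherit the two verdicts within $O(n)$ additional time.

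I expect the crux — and the part requiring the most care — to be the design of this combination gadget and the $O(n)$ bound on its runtime. Naive approaches all fail under weak fairness: a Cartesian-product simulation livelocks (exactly the phenomenon the paper warns about); a value-copying epidemic among carriers is not weakly fair, since a stale carrier can re-infect a correct one; and reading each carrier against a possibly scarce voter species one at a time costs $\Theta(n\log n)$. I would resolve this by first feeding each sub-protocol's stabilized voters, as permanent voters, into a vote-amplification protocol (\Sect{}~\ref{section:amplification}) — which, by a reachability argument like the one above, tolerates the permanent voters being fixed only after a finite prefix — obtaining a $\Theta(n)$-size population of amplified voters of the correct color in $O(n)$ time; the carriers then read off those abundant amplified voters in $O(\log n)$ time, with no carrier--carrier reactions and hence no livelock, and the configuration halts. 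Either way the combination contributes $O(n)$, yielding $\RunTime_{\Halt}^{\Pi_\xi}(n)\le O(\RunTime_{\Halt}^{\Pi_1}(n)+\RunTime_{\Halt}^{\Pi_2}(n)+n)$.
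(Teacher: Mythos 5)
Your overall architecture (ignition gadget splitting the input into two disjoint renamed pools, the two sub-protocols running side by side with their voters acting only as catalysts for the combination layer, a phased runtime policy that charges $O(\log n)$ for ignition and $O(\RunTime_{\Halt}^{\Pi_{1}}(n)+\RunTime_{\Halt}^{\Pi_{2}}(n))$ for the two sub-executions) matches the paper's proof, and your correctness argument via projection is sound. The divergence, and the gap, is in the combination gadget. You correctly diagnose that a one-at-a-time read of $\Theta(n)$ carriers against possibly scarce voter species costs $\Theta(n\log n)$, but your proposed fix --- amplifying each sub-protocol's verdict via the vote amplification machinery of \Sect{}~\ref{section:amplification} and then letting the carriers read the abundant amplified voters --- does not deliver the claimed $O(n)$ overhead for \emph{halting}. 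The VA protocol of \Thm{}~\ref{theorem:amplification:protocol} stabilizes in $O(n)$ time but is only known to halt in $O(n\log n)$ time (the pending $L_{v}\rightarrow H_{v}$ confidence-upgrade reactions remain applicable after stabilization), and the paper explicitly leaves the existence of an $O(n)$-halting VA protocol as an open question. Since $\Pi_{\xi}$ must \emph{haltingly} decide $\psi_{\xi}$, your construction's halting configuration cannot be reached before the VA sub-protocol itself halts, so your bound degrades to $O(\RunTime_{\Halt}^{\Pi_{1}}(n)+\RunTime_{\Halt}^{\Pi_{2}}(n)+n\log n)$. This is not a harmless loss: the proposition is used with $\RunTime_{\Halt}^{\Pi_{j}}(n)=O(n)$ for threshold and modulo predicates, and the extra $\log n$ factor would weaken the $O(n)$ bound of \Thm{}~\ref{theorem:crd:semilinear}.

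The paper resolves the tension in the opposite direction: instead of making the voters abundant so that many carriers can read them cheaply, it makes the carriers scarce. The global voter molecules $G_{u_{1},u_{2}}$ first undergo a pairwise leader election ($G_{u_{1},u_{2}}+G_{w_{1},w_{2}}\rightarrow G_{0,0}+W$), whose total temporal cost is $O(n)$ by the usual $\sum_{j} n/j^{2}$ telescoping (the target-deprived rounds caused by interleaved catalytic $\gamma$ reactions are bounded in number once both sub-pools have halted). After a single $G$ molecule survives, only at most two $\gamma$ reactions remain, each a genuine pitfall costing $O(n)$, which is affordable because there are $O(1)$ of them. This yields a halting configuration in $O(n)$ additional time with no amplification needed. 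If you want to salvage your route, you would need either an $O(n)$-halting VA protocol (open) or to fold a leader-election step on the carriers into your gadget --- at which point you have reconstructed the paper's argument.
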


\Prop{}\ \ref{proposition:crd:semilinear:threshold-predicates},
\ref{proposition:crd:semilinear:modulo-predicates}, and
\ref{proposition:crd:semilinear:closure} are established
in \Sect{}\ \ref{section:crd:semilinear:threshold},
\ref{section:crd:semilinear:modulo}, and
\ref{section:crd:semilinear:closure}, respectively.
The proofs borrow many ideas from the existing literature (particularly
\cite{AngluinADFP2006computation}) although some adaptations are needed to
accommodate the weak fairness condition as well as for the (adversarial)
runtime analysis.

\subsubsection{Threshold Predicates}
\label{section:crd:semilinear:threshold}
In this section, we establish
\Prop{}~\ref{proposition:crd:semilinear:threshold-predicates} by designing the
promised CRD $\Pi$.
Specifically, given a vector
$\mathbf{a} \in \Integers^{\Sigma}$
and a scalar
$b \in \Integers$,
the (leaderless) CRD
$\Pi
=
(\Species, \Reactions, \Sigma, \Upsilon_{0}, \Upsilon_{1}, F, \mathbf{0})$
haltingly decides the predicate
$\psi : \Naturals^{\Sigma} \rightarrow \{ 0, 1 \}$
defined so that
$\psi(\mathbf{x}) = 1$
if and only if
$\mathbf{a} \cdot \mathbf{x} < b$.
Moreover, the halting runtime of $\Pi$ is
$\RunTime_{\Halt}^{\Pi}(n) = O (n)$.

Taking
$s = \max \left\{ |b| + 1, \max_{A \in \Sigma} |\mathbf{a}(A)| \right\}$,
the species set of protocol $\Pi$ is defined to be
\[\textstyle
\Species
\, = \,
\Sigma
\cup
\{ F \}
\cup
\left\{ L_{u} \mid -s \leq u \leq s \right\}
\cup
\left\{ Y_{-1}, Y_{0}, Y_{+1} \right\}
\]
The species in
$\Sigma \cup \{ F \}$
are regarded as the ignition species of the ignition gadget presented in
\Sect{}~\ref{section:runtime:tools:ignition}, taking the ignition reaction
associated with species
$A \in \Sigma$
to be
\\
$\iota_{A}$:
$A \rightarrow L_{\mathbf{a}(A)}$;
\\
and the ignition reaction associated with species $F$ to be
\\
$\iota_{F}$:
$F \rightarrow L_{0}$.

Semantically, we think of the molecules of the different species as
carrying an abstract charge that may be positive, negative, or neutral
(i.e., zero):
each molecule of species
$A \in \Sigma$
carries
$\chi(A) = \mathbf{a}(A)$
units of charge;
each fuel molecule carries
$\chi(F) = 0$
units of charge;
each molecule of species $L_{u}$,
$-s \leq u \leq s$,
carries
$\chi(L_{u}) = u$
units of charge;
and
each molecule of species $Y_{j}$,
$j \in \{ -1, 0, +1 \}$,
carries
$\chi(Y_{j}) = j$
units of charge.
From this point of view, the ignition reactions can be interpreted as
transferring the charge from the ignition species in
$\Sigma \cup \{ F \}$
to the working species in
$\left\{ L_{u} \mid -s \leq u \leq s \right\}
\cup
\left\{ Y_{-1}, Y_{0}, Y_{+1} \right\}$.

We design the reaction set $\Reactions$ so that the total charge remains
invariant throughout the execution (see
\Obs{}~\ref{observation:crd:semilinear:threshold:charge-invariant}).
Moreover, when the execution halts, there is exactly one $L$ molecule
left (i.e., a leader) and we can determine whether or not the total charge is
below the threshold $b$ based solely on the charge of this $L$ molecule.
Following this logic, the voter species are defined as
\[\textstyle
\Upsilon_{0}
\, = \,
\left\{ L_{u} \mid u \geq b \right\}
\quad \text{and} \quad
\Upsilon_{1}
\, = \,
\left\{ L_{u} \mid u < b \right\}
\, .
\]

Concretely, the non-void reaction set $\NonVoid(\Reactions)$ of protocol $\Pi$
includes the following reactions on top of the aforementioned ignition
reactions:
\begin{itemize}

\item
$\beta_{u, u'}$:
$L_{u} + L_{u'}
\rightarrow L_{u + u'} + Y_{0}$
for every
$-s \leq u, u' \leq s$
such that
$|u + u'| \leq s$;

\item
$\hat{\beta}_{u, u'}$:
$L_{u} + L_{u'} \rightarrow
L_{\Sign(u + u') \cdot s} + (|u + u'| - s) \cdot Y_{\Sign(u + u')}$
for every
$-s \leq u, u' \leq s$
such that
$|u + u'| > s$;

\item
$\gamma$:
$Y_{-1} + Y_{+1} \rightarrow
2 Y_{0}$;
and

\item
$\delta_{u, j}$:
$L_{u} + Y_{j} \rightarrow
L_{u + j} + Y_{0}$
for every
$-s \leq u \leq s$
and
$j \in \{ -1, +1 \}$
such that
$|u + j| \leq s$.

\end{itemize}
In other words, the $\beta$ and $\hat{\beta}$ reactions decrement the number
of $L$ molecules, where the latter reactions introduce an appropriate number
of $Y_{-1}$ or $Y_{+1}$ molecules so as to maintain the total charge;
reaction $\gamma$ cancels a negative unit of charge with a positive unit of
charge held by the $Y$ molecules;
and
the $\delta$ reactions shift a (negative or positive) unit of charge from the
$Y$ molecules to the $L$ molecules.

\subparagraph{Analysis.}
For the analysis of protocol $\Pi$, fix some input vector
$\mathbf{x} \in \Naturals^{\Sigma}$
and let
$\c^{0} \in \Naturals^{\Species}$
be a valid initial configuration of $\Pi$ with
$\c^{0}|_{\Sigma} = \mathbf{x}$.
Consider a weakly fair execution
$\eta = \langle \c^{t}, \zeta^{t} \rangle_{t \geq 0}$
emerging from $\c^{0}$.

\begin{observation}
\label{observation:crd:semilinear:threshold:charge-invariant}
For every step
$t \geq 0$,
we have
$\sum_{A \in \Species} \chi(A) \cdot \c^{t}(A)
=
\mathbf{a} \cdot \mathbf{x}$.
\end{observation}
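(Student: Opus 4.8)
The plan is a routine induction on the step index $t$, leveraging the fact that every reaction of $\Pi$ is charge preserving by construction.

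For the base case $t = 0$, recall that $\c^{0}$ is a valid initial configuration with $\c^{0}|_{\Sigma} = \mathbf{x}$, initial context $\mathbf{0}$ (the CRD is leaderless), and an arbitrary number $\c^{0}(F) \geq 1$ of fuel molecules; thus the only species with positive count in $\c^{0}$ lie in $\Sigma \cup \{ F \}$. Hence $\sum_{A \in \Species} \chi(A) \cdot \c^{0}(A) = \sum_{A \in \Sigma} \mathbf{a}(A) \cdot \mathbf{x}(A) + \chi(F) \cdot \c^{0}(F) = \mathbf{a} \cdot \mathbf{x}$, using $\chi(A) = \mathbf{a}(A)$ for $A \in \Sigma$ and $\chi(F) = 0$. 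Note in particular that the (adversarially uncontrolled) fuel count is invisible to the total charge precisely because $\chi(F) = 0$.

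For the inductive step, assume the claim holds for step $t$ and let $\zeta^{t} = (\mathbf{r}, \mathbf{p})$ be the reaction scheduled in step $t$, so that $\c^{t + 1} = \c^{t} - \mathbf{r} + \mathbf{p}$. It suffices to show $\sum_{A \in \Species} \chi(A) \cdot \mathbf{r}(A) = \sum_{A \in \Species} \chi(A) \cdot \mathbf{p}(A)$ for each reaction in $\NonVoid(\Reactions)$ (void reactions trivially preserve the sum), which is a short case analysis over the reaction families: for the ignition reactions $\iota_{A} \colon A \to L_{\mathbf{a}(A)}$ and $\iota_{F} \colon F \to L_{0}$ this is immediate from $\chi(L_{u}) = u$; for $\beta_{u, u'}$ and $\delta_{u, j}$ the products carry charge $u + u'$ (resp.\ $u + j$) together with a charge-neutral $Y_{0}$; for $\gamma$ we have $\chi(Y_{-1}) + \chi(Y_{+1}) = 0 = 2 \chi(Y_{0})$; and for $\hat{\beta}_{u, u'}$ the product charge equals $\Sign(u + u') \cdot s + (|u + u'| - s) \cdot \Sign(u + u') = \Sign(u + u') \cdot |u + u'| = u + u'$. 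In every case reactant and product charges coincide, so $\sum_{A} \chi(A) \cdot \c^{t + 1}(A) = \sum_{A} \chi(A) \cdot \c^{t}(A) = \mathbf{a} \cdot \mathbf{x}$ by the induction hypothesis.

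There is essentially no obstacle here: the statement was engineered to hold by the design of $\Reactions$, and the only mild subtleties are the base-case remark that the arbitrary fuel count does not perturb the charge sum, and the $\hat{\beta}$ computation, which hinges on the identity $s \cdot \Sign(w) + (|w| - s) \cdot \Sign(w) = w$ valid whenever $|w| > s$.
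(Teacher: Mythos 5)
Your proof is correct and follows essentially the same route as the paper, which simply asserts that the design of $\Reactions$ guarantees the base-case equality and the step-to-step invariance; you have merely expanded the latter into the explicit per-reaction case analysis (including the correct $\hat{\beta}$ computation). No issues.
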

\begin{proof}
Follows from the design of $\Reactions$ ensuring that
(1)
$\sum_{A \in \Species} \chi(A) \cdot \c^{0}(A)
=
\mathbf{a} \cdot \mathbf{x}$;
and
(2)
$\sum_{A \in \Species} \chi(A) \cdot \c^{t}(A)
=
\sum_{A \in \Species} \chi(A) \cdot \c^{t - 1}(A)$
for every
$t > 0$.
\end{proof}

We make extensive use of the notation
\[\textstyle
\chi^{+}(\c)
\, = \,
\c(Y_{+1}) + \sum_{1 \leq u \leq s} u \cdot \c(L_{u})
\qquad \text{and} \qquad
\chi^{-}(\c)
\, = \,
\c(Y_{-1}) + \sum_{-s \leq u \leq -1} -u \cdot \c(L_{u})
\, ,
\]
as well as
$\c(L) = \c(\{ L_{u} \mid -s \leq u \leq s \})$,
defined for each configuration
$\c \in \Naturals^{\Species}$.
The following steps play a key role in the analysis:
\begin{itemize}

\item
let
$t_{\Ignt} \geq 0$
be the earliest step in which the ignition gadget matures in $\eta$ (as
promised in \Lem{}~\ref{lemma:runtime:tools:ignition});

\item
let $t_{\Sign}$ be the earliest step
$t \geq t_{\Ignt}$
such that
$\chi^{+}(\c^{t}) = 0$
or
$\chi^{-}(\c^{t}) = 0$;

\item
let $t_{\mathrm{leader}}$ be the earliest step
$t \geq t_{\Sign}$
such that
$\c^{t}(L) = 1$;
and

\item
let $t_{\Halt}$
be the earliest step
$t \geq t_{\mathrm{leader}}$
such that $\Applicable(\c^{t})$ does not include any $\delta$ reaction.

\end{itemize}
The existence of steps $t_{\Sign}$, $t_{\mathrm{leader}}$, and $t_{\Halt}$ is
established (implicitly) in the sequel as part of the runtime analysis.
Here, we prove the following three observations.

\begin{observation} \label{observation:crd:semilinear:threshold:t-sign}
$\chi^{+}(\c^{t}) = 0$
or
$\chi^{-}(\c^{t}) = 0$
for all
$t \geq t_{\Sign}$.
In particular, reaction $\gamma$ is inapplicable from step $t_{\Sign}$ onward.
\end{observation}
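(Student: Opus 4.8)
The plan is to show that the two quantities $\chi^{+}$ and $\chi^{-}$ are both non-negative and, along $\eta$, monotonically non-increasing from step $t_{\Ignt}$ onward; the observation then follows immediately. Indeed, $\chi^{+}(\c)\ge 0$ and $\chi^{-}(\c)\ge 0$ for every configuration $\c$, since each is a sum of non-negative terms, and by definition $t_{\Sign}\ge t_{\Ignt}$ with $\chi^{+}(\c^{t_{\Sign}})=0$ or $\chi^{-}(\c^{t_{\Sign}})=0$. A non-negative, non-increasing (integer) sequence that reaches $0$ stays at $0$, so whichever of the two quantities vanishes at step $t_{\Sign}$ remains $0$ at every later step, which is exactly the first claim.

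For the monotonicity I would first note that from step $t_{\Ignt}$ on the ignition gadget is mature, so the only reactions $\eta$ can schedule are void reactions and the reactions $\beta_{u,u'}$, $\hat{\beta}_{u,u'}$, $\gamma$, and $\delta_{u,j}$; the ignition reactions $\iota_{A}$ with $\mathbf{a}(A)>0$ would actually raise $\chi^{+}$ (the input species in $\Sigma$ are not counted by $\chi^{+}$ or $\chi^{-}$), which is precisely why maturity is needed here. Void reactions change nothing. For the four remaining reaction schemas, the engine of the argument is the elementary fact that for any multiset $M$ of species with total charge $c=\sum_{A}\chi(A)\,M(A)$ one has $\sum_{A}\max\{\chi(A),0\}\,M(A)\ge\max\{c,0\}$ and $\sum_{A}\max\{-\chi(A),0\}\,M(A)\ge\max\{-c,0\}$, because these two sums are non-negative and their difference equals $c$. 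Each of these reactions conserves total charge (\Obs{}~\ref{observation:crd:semilinear:threshold:charge-invariant}), and its product side attains equality in both inequalities: the products of $\beta_{u,u'}$ and of $\delta_{u,j}$ consist of a single $L$-species carrying all the charge together with a neutral $Y_{0}$; the products of $\gamma$ are neutral; and the products of $\hat{\beta}_{u,u'}$ are $L_{\Sign(u+u')\cdot s}$ together with $|u+u'|-s$ copies of the same-sign species $Y_{\Sign(u+u')}$. Hence for each such reaction the $\chi^{+}$-weight and $\chi^{-}$-weight of the products are no larger than those of the reactants, so neither $\chi^{+}$ nor $\chi^{-}$ increases when it fires, completing the monotonicity proof.

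The ``in particular'' part is then immediate: $\c(Y_{+1})\le\chi^{+}(\c)$ and $\c(Y_{-1})\le\chi^{-}(\c)$ for every $\c$ (each is one of the summands in the defining expression), so for $t\ge t_{\Sign}$ at least one of $\c^{t}(Y_{+1}),\c^{t}(Y_{-1})$ is $0$, and reaction $\gamma$, which consumes one molecule of each of $Y_{-1}$ and $Y_{+1}$, is therefore inapplicable. The only genuinely delicate point is the interaction with the ignition phase — because $\chi^{+}$ and $\chi^{-}$ deliberately ignore the input species, the monotonicity only begins at $t_{\Ignt}$, and one must invoke maturity of the ignition gadget to rule out the ignition reactions; beyond that, the per-reaction verification (the $\hat{\beta}$ case, with its sign and multiplicity bookkeeping, being the fussiest) is routine.
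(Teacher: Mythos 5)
Your proof is correct and takes essentially the same route as the paper, which simply asserts that $\chi^{+}$ and $\chi^{-}$ are non-increasing along the execution; your per-reaction verification via subadditivity of the positive/negative parts is a fleshed-out version of that one-line argument. You are in fact slightly more careful than the paper, which claims monotonicity ``for every $t \geq 0$'' even though ignition reactions $\iota_{A}$ with $\mathbf{a}(A) \neq 0$ can increase $\chi^{+}$ or $\chi^{-}$ --- your restriction to $t \geq t_{\Ignt}$ is the right fix and suffices since $t_{\Sign} \geq t_{\Ignt}$.
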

\begin{proof}
Follows by noticing that
$\chi^{+}(\c^{t + 1}) \leq \chi^{+}(\c^{t})$
and
$\chi^{-}(\c^{t + 1}) \leq \chi^{-}(\c^{t})$
for every
$t \geq 0$.
\end{proof}

\begin{observation} \label{observation:crd:semilinear:threshold:t-leader}
$\c^{t}(L) = 1$
for all
$t \geq t_{\mathrm{leader}}$.
In particular, the $\beta$ and $\hat{\beta}$ reactions are inapplicable from
step $t_{\mathrm{leader}}$ onward.
\end{observation}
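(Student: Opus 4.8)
The plan is to track the single quantity $\c^{t}(L)$, the total number of $L$ molecules in the configuration, and to show that from step $t_{\mathrm{leader}}$ onward it can neither increase nor drop below $1$, so it stays pinned at $1$. First I would read off from the reaction list how each non-void reaction of $\Pi$ affects $\c(L)$: the ignition reactions $\iota_{A}$ ($A\in\Sigma$) and $\iota_{F}$ each create one fresh $L$ molecule, hence increase $\c(L)$ by one; the reactions $\beta_{u,u'}$ and $\hat{\beta}_{u,u'}$ each consume two $L$ molecules and produce one, hence decrease $\c(L)$ by one; and the reactions $\gamma$ and $\delta_{u,j}$ leave $\c(L)$ unchanged (the $\delta$ reactions consume one $L$ and produce one). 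In particular, the only reactions that can \emph{increase} $\c(L)$ are the ignition reactions.

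Next I would invoke the ordering $t_{\mathrm{leader}} \geq t_{\Sign} \geq t_{\Ignt}$ together with the definition of maturity and the ignition-gadget requirement that ignition species are never produced by a non-void reaction, to conclude that $\c^{t}(\Species_{\Ignt}) = 0$ for every $t \geq t_{\mathrm{leader}}$, so that no ignition reaction is applicable from step $t_{\mathrm{leader}}$ onward (here $t_{\Ignt}$ is the maturity step promised in \Lem{}~\ref{lemma:runtime:tools:ignition}). Combined with the previous paragraph, this shows that $\c^{t}(L)$ is non-increasing for $t \geq t_{\mathrm{leader}}$. Finally, since every $\beta_{u,u'}$ and every $\hat{\beta}_{u,u'}$ reaction has two $L$ molecules among its reactants, such a reaction is applicable only to configurations $\c$ with $\c(L) \geq 2$; as $\c^{t_{\mathrm{leader}}}(L) = 1$, none of them is applicable at step $t_{\mathrm{leader}}$, so $\c^{t_{\mathrm{leader}}+1}(L)=1$ as well, and a one-line induction (using the monotonicity just established) gives $\c^{t}(L) = 1$ for all $t \geq t_{\mathrm{leader}}$. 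The ``in particular'' clause is then immediate, since $\c^{t}(L) = 1 < 2$ keeps every $\beta$ and $\hat{\beta}$ reaction inapplicable.

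There is no real obstacle here: the statement is a short invariant/monotonicity argument that is read off directly from the stoichiometry of the reactions. The only point that needs any care — and the only place the surrounding development is used — is the appeal to the ignition gadget's maturity to rule out $\c(L)$ ever growing back after $t_{\mathrm{leader}}$; once that is in hand, everything else follows from the reactant counts of the $\beta$ and $\hat{\beta}$ reactions.
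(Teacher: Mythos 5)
Your proof is correct and follows essentially the same route as the paper: both arguments track $\c^{t}(L)$, note that only ignition reactions increase it (and these are exhausted by maturity at $t_{\Ignt} \leq t_{\mathrm{leader}}$), that $\gamma$ and $\delta$ leave it fixed, and that $\beta$/$\hat{\beta}$ require two $L$ reactants while producing one, so the count is pinned at $1$ from $t_{\mathrm{leader}}$ onward. The paper phrases the last step via ``$\c^{t_{\Ignt}}(L)\geq 1$ and each $\beta$/$\hat{\beta}$ still produces one $L$'' rather than your explicit induction, but the content is identical.
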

\begin{proof}
Reaction $\gamma$ and the $\delta$ reactions do not change
$\c^{t}(L)$,
whereas each application of a $\beta$ or $\hat{\beta}$ reaction decreases
$\c^{t}(L)$
while still producing one $L$ molecule.
The assertion is established by recalling that $L_{0}$ is produced by the fuel
ignition reaction $\iota_{F}$, hence
$\c^{t_{\Ignt}}(L)
\geq
\c^{t_{\Ignt}}(L_{0})
\geq 1$.
\end{proof}

\begin{observation} \label{observation:crd:semilinear:threshold:t-trm}
Exactly one of the following two properties holds:
(1)
$\c^{t_{\Halt}}(L_{u}) = 1$
for some
$-s + 1 \leq u \leq s - 1$
and
$\c^{t_{\Halt}}(Y_{-1}) = \c^{t_{\Halt}}(Y_{+1}) = 0$;
or
(2)
$\c^{t_{\Halt}}(L_{u}) = 1$
for some
$u \in \{ -s, +s \}$
and
$\c^{t_{\Halt}}(Y_{-\Sign(u)}) = 0$.
In particular, the $\delta$ reactions are inapplicable in step $t_{\Halt}$.
\end{observation}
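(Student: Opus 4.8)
The plan is to show that from step $t_{\mathrm{leader}}$ onward the only non-void reactions that can be applicable are the $\delta$ reactions, and then to read off the state at $t_{\Halt}$ by tracking the charge of the unique surviving $L$ molecule. First I would combine \Obs{}~\ref{observation:crd:semilinear:threshold:t-leader} (which gives $\c^{t}(L) = 1$ for all $t \ge t_{\mathrm{leader}}$ and the inapplicability of the $\beta$ and $\hat{\beta}$ reactions from $t_{\mathrm{leader}}$ on) with \Obs{}~\ref{observation:crd:semilinear:threshold:t-sign} (inapplicability of $\gamma$ from $t_{\Sign} \le t_{\mathrm{leader}}$ on) and the maturity of the ignition gadget from step $t_{\Ignt} \le t_{\Sign}$ (\Lem{}~\ref{lemma:runtime:tools:ignition}, so the ignition reactions $\iota_{A}$ and $\iota_{F}$ are inapplicable and stay so). For $t \ge t_{\mathrm{leader}}$, write $u_{t}$ for the unique index with $\c^{t}(L_{u_{t}}) = 1$, noting that $L$ species exist only for $-s \le u \le s$, so $u_{t}$ is an integer in $[-s,s]$.

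Next I would reduce to one sign. By \Obs{}~\ref{observation:crd:semilinear:threshold:t-sign}, either $\chi^{+}(\c^{t}) = 0$ for all $t \ge t_{\Sign}$ or $\chi^{-}(\c^{t}) = 0$ for all $t \ge t_{\Sign}$; assume without loss of generality the latter (the complementary case is symmetric, since both the reaction set after ignition and properties~(1)--(2) are invariant under negating all charges). Then $\c^{t}(Y_{-1}) = 0$ and $\c^{t}(L_{u}) = 0$ for every $u < 0$ whenever $t \ge t_{\Sign}$; in particular $0 \le u_{t} \le s$. Consequently, for $t \ge t_{\mathrm{leader}}$ the only $\delta$ reaction that can be applicable is $\delta_{u_{t}, +1}\colon L_{u_{t}} + Y_{+1} \rightarrow L_{u_{t}+1} + Y_{0}$, which requires both $u_{t} \le s - 1$ (so that $|u_{t}+1| \le s$ and the reaction exists) and $\c^{t}(Y_{+1}) > 0$.

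Finally I would do the case split at $t_{\Halt}$. By the definition of $t_{\Halt}$ no $\delta$ reaction is applicable in $\c^{t_{\Halt}}$, so the previous paragraph forces either $u_{t_{\Halt}} = s$ or $\c^{t_{\Halt}}(Y_{+1}) = 0$. Using $s = \max\{|b|+1, \max_{A}|\mathbf{a}(A)|\} \ge 1$: if $u_{t_{\Halt}} = s$ then $u_{t_{\Halt}} = +s \in \{-s, +s\}$, $\Sign(u_{t_{\Halt}}) = +1$, and $\c^{t_{\Halt}}(Y_{-\Sign(u_{t_{\Halt}})}) = \c^{t_{\Halt}}(Y_{-1}) = 0$, which is property~(2), while property~(1) fails since $s > s-1$; otherwise $\c^{t_{\Halt}}(Y_{+1}) = \c^{t_{\Halt}}(Y_{-1}) = 0$ and $0 \le u_{t_{\Halt}} \le s-1$, hence $-s+1 \le u_{t_{\Halt}} \le s-1$, which is property~(1), while property~(2) fails since $0 \le u_{t_{\Halt}} \le s-1$ excludes $\pm s$. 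As the conditions $-s+1 \le u_{t_{\Halt}} \le s-1$ and $u_{t_{\Halt}} \in \{-s,+s\}$ are mutually exclusive, exactly one of~(1) and~(2) holds; the ``in particular'' clause is then immediate (it is the defining property of $t_{\Halt}$, and one can also see it directly: under~(1) no $Y_{\pm 1}$ molecule is present, and under~(2) the unique $L$ molecule sits at an extremal charge with no opposite-sign $Y$ molecule, so no $\delta$ reaction fires). I do not foresee a genuine obstacle: the argument is bookkeeping, and the only point demanding care is keeping the charge-sign symmetry used in the reduction separate from the fixed, asymmetric assignment of the $L_{u}$ species to the voter classes $\Upsilon_{0}, \Upsilon_{1}$, which is irrelevant to this observation.
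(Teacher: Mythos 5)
Your proof is correct and follows essentially the same route as the paper's: combine \Obs{}~\ref{observation:crd:semilinear:threshold:t-sign} and \Obs{}~\ref{observation:crd:semilinear:threshold:t-leader} to see that from $t_{\mathrm{leader}}$ onward only a single $L$ molecule remains and only the $\delta$ reaction pushing its charge toward the surviving sign can be applicable, then read off the two cases at $t_{\Halt}$. The paper states this in two sentences; your version merely spells out the bookkeeping (and, incidentally, handles the $u=0$ corner more carefully than the paper's phrasing ``$j=\Sign(u)$'').
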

\begin{proof}
By \Obs{}\ \ref{observation:crd:semilinear:threshold:t-sign} and
\ref{observation:crd:semilinear:threshold:t-leader}, from step
$t_{\mathrm{leader}}$ onward, there is a single $L$ molecule $L_{u}$ present in
the configuration and the only non-void reactions that may still be applicable
are the
$\delta_{u, j}$
reactions for
$j = \Sign(u)$.
\end{proof}

\Lem{}~\ref{lemma:runtime:tools:ignition} and \Obs{}\
\ref{observation:crd:semilinear:threshold:t-sign},
\ref{observation:crd:semilinear:threshold:t-leader}, and
\ref{observation:crd:semilinear:threshold:t-trm} imply that $\eta$
halts in step $t_{\Halt}$, thus establishing
\Cor{}~\ref{corollary:crd:semilinear:threshold:correctness} due to
\Obs{}~\ref{observation:crd:semilinear:threshold:charge-invariant}.
The halting correctness of protocol $\Pi$ follows by the choice of
$\Upsilon_{0}$ and $\Upsilon_{1}$.

\begin{corollary} \label{corollary:crd:semilinear:threshold:correctness}
Execution $\eta$ halts in a configuration $\c$ that includes a
single $L$ molecule $L_{u}$ whose index $u$ satisfies:
(1)
if
$|\mathbf{a} \cdot \mathbf{x}| \leq s$,
then
$u = \mathbf{a} \cdot \mathbf{x}$;
and
(2)
if
$|\mathbf{a} \cdot \mathbf{x}| > s$,
then
$u = \Sign(\mathbf{a} \cdot \mathbf{x}) \cdot s$.
\end{corollary}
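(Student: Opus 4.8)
The plan is to expand the informal argument that precedes the statement into a rigorous one, deriving everything from Lemma~\ref{lemma:runtime:tools:ignition}, the three observations above, and the charge invariant of Observation~\ref{observation:crd:semilinear:threshold:charge-invariant}. First I would verify that $\eta$ indeed halts in step $t_{\Halt}$ (granting the existence of the steps $t_{\Sign}, t_{\mathrm{leader}}, t_{\Halt}$, which the runtime analysis that follows establishes). Since $t_{\Halt} \geq t_{\Ignt}$, Lemma~\ref{lemma:runtime:tools:ignition} tells us the ignition gadget is mature in $\c^{t_{\Halt}}$, so no ignition reaction $\iota_{A}$, $\iota_{F}$ is applicable to $\c^{t_{\Halt}}$; Observation~\ref{observation:crd:semilinear:threshold:t-leader} gives $\c^{t_{\Halt}}(L) = 1$, which rules out the $\beta$ and $\hat{\beta}$ reactions; Observation~\ref{observation:crd:semilinear:threshold:t-sign} rules out reaction $\gamma$; and Observation~\ref{observation:crd:semilinear:threshold:t-trm} (equivalently, the defining property of $t_{\Halt}$) rules out the $\delta$ reactions. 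As $\NonVoid(\Reactions)$ consists of exactly these reactions, no non-void reaction is applicable to $\c^{t_{\Halt}}$, hence $\eta$ halts in step $t_{\Halt}$; write $\c = \c^{t_{\Halt}}$ and let $L_{u}$ be the unique $L$ molecule in $\c$.

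Next I would pin down the charge carried by the halting configuration. Since $\c$ is mature it contains no molecules of species in $\Sigma \cup \{ F \}$, so the only species with positive molecular count in $\c$ are $L_{u}$ together with (possibly) some of $Y_{-1}, Y_{0}, Y_{+1}$. By the definition of $\chi$, the total charge of $\c$ is therefore $u + \c(Y_{+1}) - \c(Y_{-1})$, and Observation~\ref{observation:crd:semilinear:threshold:charge-invariant} equates this with $\mathbf{a} \cdot \mathbf{x}$.

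It then remains to run the case analysis of Observation~\ref{observation:crd:semilinear:threshold:t-trm}. In its first case $\c(Y_{-1}) = \c(Y_{+1}) = 0$ and $-s+1 \leq u \leq s-1$, so $\mathbf{a} \cdot \mathbf{x} = u$ with $|\mathbf{a} \cdot \mathbf{x}| = |u| \leq s-1 \leq s$, which is claim~(1). In its second case $u \in \{ -s, +s \}$ and $\c(Y_{-\Sign(u)}) = 0$, so $\c(Y_{+1}) - \c(Y_{-1}) = \Sign(u) \cdot \c(Y_{\Sign(u)})$ shares the sign of $u$ or vanishes; hence $\mathbf{a} \cdot \mathbf{x} = \Sign(u) \cdot (s + \c(Y_{\Sign(u)}))$, so $|\mathbf{a} \cdot \mathbf{x}| \geq s$. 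If $|\mathbf{a} \cdot \mathbf{x}| = s$ then $\c(Y_{\Sign(u)}) = 0$ and $u = \mathbf{a} \cdot \mathbf{x}$ (claim~(1)); if $|\mathbf{a} \cdot \mathbf{x}| > s$ then $\Sign(\mathbf{a} \cdot \mathbf{x}) = \Sign(u)$ and $u = \Sign(u) \cdot s = \Sign(\mathbf{a} \cdot \mathbf{x}) \cdot s$ (claim~(2)). Since the two cases of Observation~\ref{observation:crd:semilinear:threshold:t-trm} are exhaustive, this yields the corollary, and the halting correctness of $\Pi$ follows at once from the choice of $\Upsilon_{0}$ and $\Upsilon_{1}$. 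I expect no serious obstacle: the halting claim is a direct inspection of $\NonVoid(\Reactions)$, and the only point needing care is the boundary value $|\mathbf{a} \cdot \mathbf{x}| = s$, which, as above, is absorbed into claim~(1) from whichever case of Observation~\ref{observation:crd:semilinear:threshold:t-trm} it arises.
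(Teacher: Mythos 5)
Your proposal is correct and follows exactly the paper's (very terse) argument: the paper derives halting at step $t_{\Halt}$ from Lemma~\ref{lemma:runtime:tools:ignition} together with Observations~\ref{observation:crd:semilinear:threshold:t-sign}--\ref{observation:crd:semilinear:threshold:t-trm}, and then reads off the index $u$ from the charge invariant of Observation~\ref{observation:crd:semilinear:threshold:charge-invariant} via the case split of Observation~\ref{observation:crd:semilinear:threshold:t-trm}. Your write-up simply makes explicit the boundary case $|\mathbf{a} \cdot \mathbf{x}| = s$, which the paper leaves implicit.
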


For the halting runtime analysis, let
$n = \| \c^{0} \|$
denote the molecular count of the initial configuration and fix some
skipping policy $\SkipPol$.
We prove that
$\RunTime_{\Halt}^{\Pi}(n) \leq O (n)$
by presenting a runtime policy $\RunPol$ for $\Pi$ (defined independently of
$\eta$ and $\SkipPol$) and showing that
$\RunTime_{\Halt}^{\RunPol, \SkipPol}(\eta) \leq O (n)$.
Given a configuration
$\c \in \Naturals^{\Species}$,
the runtime policy $\RunPol$ is defined as follows:
\begin{itemize}

\item
if
$\c(\Sigma \cup \{ F \}) > 0$,
then $\RunPol(\c)$ consists of the ignition reactions;

\item
else if
$\chi^{+}(\c) > 0$
and
$\chi^{-}(\c) > 0$,
then
$\RunPol(\c)
=
\left\{ \beta_{u, u'} \mid \Sign(u) \cdot \Sign(u') = -1 \right\}
\cup
\left\{ \gamma \right\}
\cup
\left\{ \delta_{u, j} \mid \Sign(u) \cdot \Sign(j) = -1 \right\}$;

\item
else if
$\c(L) > 1$,
then $\RunPol(\c)$ consists of the $\beta$ and $\hat{\beta}$ reactions;

\item
else $\RunPol(\c)$ consists of the $\delta$ reactions.

\end{itemize}

Let $\EffStep(i)$ and
$\EffConf^{i} = \c^{\EffStep(i)}$
be the effective step and effective configuration, respectively, of round
$i \geq 0$
under $\RunPol$ and $\SkipPol$.
We establish the desired upper bound on
$\RunTime_{\Halt}^{\RunPol, \SkipPol}(\eta)$
by introducing the following four rounds:
\begin{itemize}

\item
$i_{\Ignt}
=
\min \{ i \geq 0 \mid \EffStep(i) \geq t_{\Ignt} \}$;

\item
$i_{\Sign}
=
\min \{ i \geq i_{\Ignt} \mid \EffStep(i) \geq t_{\Sign} \}$;

\item
$i_{\mathrm{leader}}
=
\min \{ i \geq i_{\Sign} \mid \EffStep(i) \geq t_{\mathrm{leader}} \}$;
and

\item
$i_{\Halt}
=
\min \{ i \geq i_{\mathrm{leader}} \mid \EffStep(i) \geq t_{\Halt} \}$.

\end{itemize}
\Lem{}~\ref{lemma:runtime:tools:ignition} guarantees that the total
contribution of rounds
$0 \leq i < i_{\Ignt}$
to
$\RunTime_{\Halt}^{\RunPol, \SkipPol}(\eta)$
is up-bounded by
$O (\log n)$.

For the contribution of the subsequent rounds to
$\RunTime_{\Halt}^{\RunPol, \SkipPol}(\eta)$,
we need to define the following notation:
Let
$K^{+}_{i} \in \{ 0, 1 \}^{s}$
and
$K^{-}_{i} \in \{ 0, 1 \}^{s}$
be the binary vectors defined so that
$K^{+}_{i}(u) = \Indicator_{\EffConf^{i}(u) > 0}$
and
$K^{-}_{i}(u) = \Indicator_{\EffConf^{i}(-u) > 0}$
for each
$1 \leq u \leq s$.
Let $\prec$ denote the lexicographic (strict) order over
$\{ 0, 1 \}^{s}$
in decreasing index significance;
that is, for every
$\mathbf{f}, \mathbf{g} \in \{ 0, 1 \}^{s}$,
the relation
$\mathbf{f} \prec \mathbf{g}$
holds if and only if there exists an integer
$1 \leq u \leq s$
such that
$\mathbf{f}(u) < \mathbf{g}(u)$
and
$\mathbf{f}(u') = \mathbf{g}(u')$
for every
$u < u' \leq s$.
Define the binary relations $\succ$ and $\succeq$ over
$\{ 0, 1 \}^{s}$
so that
$\mathbf{f} \succ \mathbf{g}
\Longleftrightarrow
\mathbf{g} \prec \mathbf{f}$
and
$\mathbf{f} \succeq \mathbf{g}
\Longleftrightarrow
[\mathbf{f} \succ \mathbf{g} \lor \mathbf{f} = \mathbf{g}]$.
We can now establish the following three lemmas.

\begin{lemma} \label{lemma:crd:semilinear:threshold:temp-cost-sign}
The total contribution of rounds
$i_{\Ignt} \leq i < i_{\Sign}$
to
$\RunTime_{\Halt}^{\RunPol, \SkipPol}(\eta)$
is up-bounded by
$O (n)$.
\end{lemma}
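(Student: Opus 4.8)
The plan is to partition the rounds $i_{\Ignt} \le i < i_{\Sign}$ according to the value $\chi^{+}(\EffConf^{i})$ of the effective configuration, bound the number of distinct ``charge levels'' and the number of rounds spent at each one, bound the temporal cost charged to a single round, and multiply.

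\emph{Step 1 (the target set on these rounds).} I would first check that for every round $i$ with $i_{\Ignt} \le i < i_{\Sign}$ the effective configuration $\EffConf^{i}$ is mature and has $\chi^{+}(\EffConf^{i}) > 0$ and $\chi^{-}(\EffConf^{i}) > 0$. Maturity holds because $\EffStep(i) \ge t_{\Ignt}$: for $i = i_{\Ignt}$ this is the definition of $i_{\Ignt}$, and for $i > i_{\Ignt}$ we have $\EffStep(i) \ge \InitStep(i) \ge \InitStep(i_{\Ignt}+1) > \EffStep(i_{\Ignt}) \ge t_{\Ignt}$ since $\InitStep$ is strictly increasing and maturity persists; positivity of $\chi^{\pm}$ holds since $\EffStep(i) < t_{\Sign}$ by definition of $t_{\Sign}$. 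Consequently $\RunPol(\EffConf^{i})$ is always the \emph{same} fixed set $Q = \{\beta_{u,u'} : \Sign(u)\Sign(u') = -1\} \cup \{\gamma\} \cup \{\delta_{u,j} : \Sign(u)\Sign(j) = -1\}$. Inspecting the reactions shows that every reaction outside $Q$ leaves $\chi^{+}$ and $\chi^{-}$ unchanged, while every reaction in $Q$ decreases each of them by at least $1$; hence $\chi^{+},\chi^{-}$ are non-increasing along $\eta$, and $\chi^{+}$ can change only at a round's last step (the only step that may carry a $Q$-reaction, by condition~(I) of $\Stop$). I would also record the \emph{pairing fact}: in a mature configuration with $\chi^{+},\chi^{-} > 0$, for any present positive carrier $A \in \{L_{1},\dots,L_{s},Y_{+1}\}$ and negative carrier $B \in \{L_{-1},\dots,L_{-s},Y_{-1}\}$ there is an applicable reaction of $Q$ with reactants $A+B$ (a $\gamma$, a $\delta$, or a $\beta$). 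Now group the rounds $i_{\Ignt} \le i < i_{\Sign}$ into maximal blocks (\emph{plateaus}) on which $\chi^{+}(\EffConf^{i})$ is constant; each plateau consists of several target-deprived rounds followed by exactly one target-accomplished round.

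\emph{Step 2 (temporal cost of one round).} Along any $\RunPol$-avoiding path from such an $\EffConf^{i}$, a single fixed reaction $\alpha_{0} \in Q$ stays applicable, so its propensity at every configuration of the path is at least $\tfrac{1}{\Vol\,|\Reactions(\mathbf{r}_{0})|} = \Omega(1/n)$ (the normalizing constant is fixed and $\Vol = \Theta(n)$). By \Lem{}~\ref{lemma:runtime:tools:temporal-cost}, $\TemporalCost^{\RunPol}(\EffConf^{i}) = O(n)$. \emph{Step 3 (counting rounds).} The number of plateaus is at most $\chi^{+}(\EffConf^{i_{\Ignt}}) \le \chi^{+}(\c^{t_{\Ignt}})$; since the ignition reactions transfer each unit of input charge to an $L$-molecule of the same charge and no reaction ever increases the total positive charge (\Obs{}~\ref{observation:crd:semilinear:threshold:charge-invariant}), $\chi^{+}(\c^{t_{\Ignt}}) \le \sum_{A \in \Sigma:\,\mathbf{a}(A)>0}\mathbf{a}(A)\,\mathbf{x}(A) \le s\|\mathbf{x}\|$, a constant (as $\mathbf{x}$ is fixed). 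For the number of target-deprived rounds inside one plateau, one uses that the last reaction of a target-deprived round is non-void and outside $Q$ (a void reaction changes nothing, so it cannot create the new inapplicability that triggers~(II), and in a length-one round it would force all of $Q$ to be inapplicable at the effective configuration, contradicting the pairing fact), and that this reaction must alter the multiset of charge-carrying molecules, since the applicability of the $Q$-reactions depends only on that multiset. The carrier multisets compatible with the fixed charge level form a finite set (whose size is bounded in terms of the level and $s$ only), so, after ruling out that the ``cycles'' in the induced transition graph (which are fueled by consuming neutral $L_{0}$-molecules) can power fresh target-deprived rounds indefinitely, one concludes that each plateau contains only a bounded number of rounds; hence the total number of rounds with $i_{\Ignt} \le i < i_{\Sign}$ is $O(1)$. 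Multiplying by the $O(n)$ per-round bound of Step~2 gives $\sum_{i_{\Ignt} \le i < i_{\Sign}}\TemporalCost^{\RunPol}(\EffConf^{i}) = O(n)$.

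\emph{Main obstacle.} The delicate point is the bound in Step~3 on the number of target-deprived rounds per plateau: showing that the reshuffling of the charge carriers by the non-target reactions (merging $L$-molecules, capping at $L_{\pm s}$ via the $\hat{\beta}$ reactions, and moving charge between $Y$- and $L$-molecules via $\delta_{0,\pm 1}$) can furnish only boundedly many genuinely distinct target-deprived rounds, even though the carrier-multiset transition graph has cycles supported by the unboundedly many $L_{0}$ molecules. The resolution hinges on the pairing fact, which forces a target-accomplished round as soon as the charge on each side has been driven onto carriers that the non-target reactions can no longer eliminate.
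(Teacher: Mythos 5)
There is a genuine gap in Step~3, and it breaks the whole argument. You bound each round's temporal cost by $O(n)$ (via the propensity of a single applicable target reaction, $\Omega(1/\Vol)$) and then claim the number of rounds between $i_{\Ignt}$ and $i_{\Sign}$ is $O(1)$ because $\chi^{+}(\c^{t_{\Ignt}}) \leq s\|\mathbf{x}\|$ is ``a constant (as $\mathbf{x}$ is fixed)''. But for the $O(n)$ halting-runtime bound the paper works in the \emph{fuel-bounded} regime, where $\c^{0}(F) \leq O(\|\mathbf{x}\|)$ and hence $\|\mathbf{x}\| = \Theta(n)$; the runtime function maximizes over all valid initial configurations of molecular count $n$, which includes inputs with $\chi^{+}(\c^{t_{\Ignt}}) = \Theta(n)$ and $\chi^{-}(\c^{t_{\Ignt}}) = \Theta(n)$. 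Since a target-accomplished round decreases each of $\chi^{+},\chi^{-}$ by as little as one unit, the number of plateaus (and rounds) in this phase can be $\Theta(n)$, not $O(1)$. Multiplying your per-round bound $O(n)$ by $\Theta(n)$ rounds yields only $O(n^{2})$. (Your argument does cover the special case of a fixed input padded with fuel, but that case is excluded here precisely because unbounded inert fuel would trivialize the linear bound.)

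The paper avoids this by proving a much sharper per-round bound: on any $\RunPol$-avoiding path from $\EffConf^{i}$ the quantities $\chi^{\pm}$ are invariant, and the propensity of the \emph{entire} target set $Q$ is $\Theta\bigl(\chi^{+}(\EffConf^{i}) \cdot \chi^{-}(\EffConf^{i})/n\bigr)$ --- every unit of positive charge can pair with every unit of negative charge through some reaction of $Q$, up to a factor $s^{2}$. \Lem{}~\ref{lemma:runtime:tools:temporal-cost} then gives $\TemporalCost^{\RunPol}(\EffConf^{i}) \leq O\bigl(n/\ell_{i}^{2}\bigr)$ with $\ell_{i} = \min\{\chi^{+}(\EffConf^{i}), \chi^{-}(\EffConf^{i})\}$. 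Combined with the fact that $\ell_{i}$ is non-increasing and strictly decreases within $O(1)$ rounds (your plateau-counting idea, which the paper implements via the lexicographic order on the support vectors $K^{\pm}_{i}$ --- and note the paper must track both the positive and the negative side, not only $\chi^{+}$), the total is $\sum_{j=1}^{ns} O(n/j^{2}) = O(n)$. The convergence of $\sum_{j} 1/j^{2}$ is exactly what absorbs the $\Theta(n)$ rounds; a flat $O(n)$ per-round charge cannot.
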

\begin{proof}
Fix a round
$i_{\Ignt} \leq i < i_{\Sign}$
and recall that the runtime policy $\RunPol$ is designed so that
$\RunPol(\EffConf^{i})
=
Q
=
\{ \beta_{u, u'} \mid \Sign(u) \cdot \Sign(u') = -1 \}
\cup
\{ \gamma \}
\cup
\{ \delta_{u, j} \mid \Sign(u) \cdot \Sign(j) = -1 \}$.
Consider a configuration $\c$ reachable from $\EffConf^{i}$ via a
$\RunPol$-avoiding path.
Since each reactant of a $Q$ reaction carries at least $1$ and at most $s$
units of charge, it follows that the propensity $\Prp_{\c}(Q)$ satisfies
\[\textstyle
\frac{\chi^{+}(\c) \cdot \chi^{-}(\c)}{\Vol \cdot s^{2}}
\, \leq \,
\Prp_{\c}(Q)
\, \leq \,
\frac{\chi^{+}(\c) \cdot \chi^{-}(\c)}{\Vol}
\, ,
\]
thus
$\Prp_{\c}(Q)
=
\Theta \left( \frac{\chi^{+}(\c) \cdot \chi^{-}(\c)}{n} \right)$
as
$s = O (1)$.
Inspecting the reactions in
$\Reactions - Q$,
we deduce that
$\chi^{+}(\c) = \chi^{+}(\EffConf^{i})$
and
$\chi^{-}(\c) = \chi^{-}(\EffConf^{i})$,
hence we can employ \Lem{}~\ref{lemma:runtime:tools:temporal-cost} to conclude
that the contribution of round $i$ to
$\RunTime_{\Halt}^{\RunPol}(\eta)$
is
$\TemporalCost^{\RunPol}(\EffConf^{i})
\leq
O \left( \frac{n}{\chi^{+}(\EffConf^{i}) \cdot \chi^{-}(\EffConf^{i})} \right)$.

If round $i$ is target-accomplished, then
$\chi^{+}(\EffConf^{i + 1}) < \chi^{+}(\EffConf^{i})$
and
$\chi^{-}(\EffConf^{i + 1}) < \chi^{-}(\EffConf^{i})$
This is no longer guaranteed if round $i$ is target-deprived, however, we
argue that if
$\chi^{+}(\EffConf^{i'}) = \chi^{+}(\EffConf^{i})$
or
$\chi^{-}(\EffConf^{i'}) = \chi^{-}(\EffConf^{i})$
for some
$i' > i$,
then
$i' - i \leq 2^{s} = O (1)$.\footnote{%
Using a more delicate argument, one can improve this bound to
$i' - i \leq O (s)$.}
Indeed, if
$\chi^{+}(\EffConf^{i + 1}) = \chi^{+}(\EffConf^{i})$
and
$\chi^{-}(\EffConf^{i + 1}) = \chi^{-}(\EffConf^{i})$,
then
$K^{+}_{i + 1} \succeq K^{+}_{i}$
and
$K^{-}_{i + 1} \succeq K^{-}_{i}$,
while at least one of the two relations must be strict.

Taking
$\ell_{i} = \min \{ \chi^{+}(\EffConf^{i}), \chi^{-}(\EffConf^{i}) \}$
for each
$i_{\Ignt} \leq i < i_{\Sign}$,
we conclude that
\\
(1)
$\TemporalCost^{\RunPol}(\EffConf^{i})
\leq
O (n / \ell_{i}^{2})$;
\\
(2)
$\ell_{i + 1} \leq \ell_{i}$;
and
\\
(3)
there exists a constant
$h \geq 1$
such that
$\ell_{i + h} < \ell_{i}$.
\\
As
$\ell_{0} \leq n \cdot s$,
we can bound the total contribution of rounds
$i_{\Ignt} \leq i < i_{\Sign}$
to
$\RunTime_{\Halt}^{\RunPol, \SkipPol}(\eta)$
by
\[\textstyle
\sum_{j = 1}^{n \cdot s} O (n / j^{2})
\, \leq \,
O (n) \cdot \sum_{j = 1}^{\infty} 1 / j^{2}
\, \leq \,
O (n)
\, ,
\]
thus establishing the assertion.
\end{proof}

\begin{lemma} \label{lemma:crd:semilinear:threshold:temp-cost-leader}
The total contribution of rounds
$i_{\Sign} \leq i < i_{\mathrm{leader}}$
to
$\RunTime_{\Halt}^{\RunPol, \SkipPol}(\eta)$
is up-bounded by
$O (n)$.
\end{lemma}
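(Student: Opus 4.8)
The plan is to mirror the proof of \Lem{}~\ref{lemma:crd:semilinear:threshold:temp-cost-sign}, but now exploiting the structural collapse that the execution undergoes once it passes step $t_{\Sign}$. Assume without loss of generality that $\chi^{-}(\c^{t}) = 0$ for every $t \geq t_{\Sign}$ (the case $\chi^{+}(\c^{t}) = 0$ being symmetric, with the roles of positive and negative charge interchanged); by \Obs{}~\ref{observation:crd:semilinear:threshold:t-sign} this means $\c^{t}(Y_{-1}) = 0$ and $\c^{t}(L_{u}) = 0$ for all $u < 0$ from step $t_{\Sign}$ onward. First I would check that in this regime every reaction applicable to $\c^{t}$ --- necessarily one of the $\beta$, $\hat{\beta}$, $\delta_{u, +1}$, or void reactions, since $\gamma$ and $\delta_{u, -1}$ are inapplicable when $\c^{t}(Y_{-1}) = 0$ and the ignition reactions are inapplicable because the gadget is mature by step $t_{\Ignt} \leq t_{\Sign}$ --- preserves the quantity $\chi^{+}(\cdot)$, so \Obs{}~\ref{observation:crd:semilinear:threshold:charge-invariant} gives $\chi^{+}(\c^{t}) = |\mathbf{a} \cdot \mathbf{x}| =: c_{0} = O(1)$ for all $t \geq t_{\Sign}$. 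Consequently $\c^{t}(Y_{+1}) \leq c_{0}$, at most $c_{0}$ of the $L$-molecules carry a positive index, and therefore $\c^{t}(L_{0}) \geq \c^{t}(L) - c_{0}$ for every $t \geq t_{\Sign}$.

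Next I would pin down the runtime policy on the rounds at hand. Using monotonicity of $\InitStep(\cdot)$ and the minimality in the definitions of $i_{\Sign}$ and $i_{\mathrm{leader}}$, every round $i$ with $i_{\Sign} \leq i < i_{\mathrm{leader}}$ has $t_{\Sign} \leq \EffStep(i) < t_{\mathrm{leader}}$; since $\c^{t}(L)$ is non-increasing for $t \geq t_{\Ignt}$ and first drops to $1$ at step $t_{\mathrm{leader}}$, we get $\ell_{i} := \EffConf^{i}(L) \geq 2$, and because the gadget is mature, $\min\{\chi^{+}(\EffConf^{i}), \chi^{-}(\EffConf^{i})\} = 0$, and $\EffConf^{i}(L) > 1$, the policy gives $\RunPol(\EffConf^{i}) = Q$, the set of all reactions whose reactants are a pair of $L$-molecules. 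For an arbitrary configuration $\c$, summing propensities over $Q$ telescopes to $\Prp_{\c}(Q) = \frac{1}{\Vol}\binom{\c(L)}{2}$; moreover, along any $\RunPol$-avoiding path out of $\EffConf^{i}$ only $\gamma$, $\delta$, and void reactions may fire ($Q$ is excluded by definition and the ignition reactions are inapplicable at $\EffConf^{i}$ and at every configuration reachable from it), none of which changes $\c(L)$. Hence $\Prp_{\c'}(Q) = \frac{1}{\Vol}\binom{\ell_{i}}{2}$ for every $\c'$ with $\EffConf^{i} \Reaching_{\langle \RunPol \rangle} \c'$, and \Lem{}~\ref{lemma:runtime:tools:temporal-cost} yields $\TemporalCost^{\RunPol}(\EffConf^{i}) \leq \Vol / \binom{\ell_{i}}{2} = O(n / \ell_{i}^{2})$.

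Finally I would bound how many rounds carry each value of $\ell_{i}$. The sequence $(\ell_{i})$ is non-increasing, ranges in $\{2, \dots, \ell_{i_{\Sign}}\}$ with $\ell_{i_{\Sign}} \leq n$, and strictly decreases whenever a round is target-accomplished (a $\beta$ or $\hat{\beta}$ fires, decreasing $\c(L)$). For a round with $\ell_{i} \geq c_{0} + 2$ the inequality $\c(L_{0}) \geq \c(L) - c_{0} \geq 2$ holds at every configuration reachable from $\EffConf^{i}$ along a $\RunPol$-avoiding path, so $\beta_{0,0} \in Q$ stays applicable throughout the round; then condition~(II) in the definition of $\Stop$ can never be met, forcing the round to be target-accomplished with $\ell_{i+1} < \ell_{i}$. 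Thus there is at most one round per value $v \in \{c_{0}+2, \dots, \ell_{i_{\Sign}}\}$, and these contribute $\sum_{v \geq c_{0}+2} O(n/v^{2}) \leq O(n)\sum_{v \geq 1} 1/v^{2} = O(n)$. The rounds with $\ell_{i} \leq c_{0}+1$ form a contiguous suffix over only $c_{0} = O(1)$ values; I would argue there are $O(1)$ of them by counting $\delta$-firings in the suffix: $\c(Y_{+1})$ starts at most $c_{0}$ and is increased only by $\hat{\beta}$-firings, each producing at most $s = O(1)$ copies of $Y_{+1}$ and occurring only at the (at most $c_{0}$) target-accomplished rounds of the suffix, while each $\delta$-firing consumes one copy; hence at most $O(1)$ $\delta$-reactions ever fire here, and every target-deprived round must contain at least one of them (otherwise its configuration is constant, $Q$ still has an applicable reaction since $\c(L) \geq 2$, condition~(II) cannot be met, and weak fairness forces a $Q$-reaction, making the round target-accomplished). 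So the suffix has $O(1)$ rounds, each contributing $O(n/\ell_{i}^{2}) = O(n)$. Adding the two parts gives the claimed $O(n)$ bound. I expect the last step --- controlling the target-deprived rounds (those ending via condition~(II) of $\Stop$) precisely when $\ell_{i}$ is small and a single round is expensive --- to be the main obstacle; the charge-conservation argument bounding $\c(Y_{+1})$, hence the number of $\delta$-firings, is what keeps their number $O(1)$.
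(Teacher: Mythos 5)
Your overall architecture (temporal cost $O(n/\ell_{i}^{2})$ per round via \Lem{}~\ref{lemma:runtime:tools:temporal-cost}, the observation that non-target reactions preserve $\c(L)$ along $\RunPol$-avoiding paths, and then a bound on how many rounds share each value of $\ell_{i}$) matches the paper's proof, and your first two paragraphs are essentially its argument. The gap is in the third paragraph, exactly where you predicted the difficulty would lie: the claim $\chi^{+}(\c^{t}) = |\mathbf{a}\cdot\mathbf{x}| =: c_{0} = O(1)$ is false. Only the entries of $\mathbf{a}$ and the threshold $b$ are constants (via $s$); the input vector $\mathbf{x}$ is not, so $\mathbf{a}\cdot\mathbf{x}$ can be as large as $s\,\|\mathbf{x}\| = \Theta(n)$ (take $\Sigma = \{A\}$ with $\mathbf{a}(A)=1$ and $\mathbf{x}(A) = n/2$: then $\chi^{+} = n/2$ and $\chi^{-} = 0$ already at ignition). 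Consequently the ``suffix'' of rounds with $\ell_{i} \leq c_{0}+1$ can comprise essentially all rounds, your bound $\c^{t}(Y_{+1}) \leq c_{0}$ permits $\Theta(n)$ $\delta$-firings rather than $O(1)$, and the number of target-deprived rounds stuck at a fixed small $\ell_{i}$ (each charged up to $\Theta(n)$) is no longer $O(1)$; your accounting then yields only $O(n^{2})$.

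The paper closes this hole differently: it bounds the number of consecutive target-deprived rounds by $2^{s} = O(1)$ using the lexicographic order $\succ$ on the presence-profile $K^{+}_{i} \in \{0,1\}^{s}$ of the positive-index $L$-species. A target-deprived round can only end via condition (II) of the $\Stop$ operator after $\delta$-firings change which $L_{u}$ species are present, and the paper argues this profile strictly increases in $\succ$ across each such round, so at most $2^{s}$ rounds pass before $\ell_{i}$ strictly drops --- an argument insensitive to the magnitude of $\chi^{+}$. Your observation that a round in which $\c(L_{0}) \geq 2$ persists is forced to be target-accomplished (since $\beta_{0,0}$ never becomes inapplicable, condition (II) cannot be met) is correct and self-contained, but it only covers rounds with $\ell_{i} \geq \chi^{+}(\EffConf^{i}) + 2$; to repair the proof you need a profile-type argument (or some other $O(1)$-per-$\ell$-value bound) for the remaining rounds.
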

\begin{proof}
Fix a round
$i_{\Sign} \leq i < i_{\mathrm{leader}}$
and assume without loss of generality that
$\chi^{-}(\EffConf^{i}) = 0$
(the case where
$\chi^{+}(\EffConf^{i}) = 0$
is proved symmetrically).
Recall that the runtime policy $\RunPol$ is designed so that
$\RunPol(\EffConf^{i})
=
Q
=
\{ \beta_{u, u'}, \hat{\beta}_{u, u'} \mid -s \leq u, u' \leq s \}$.
Consider a configuration $\c$ reachable from $\EffConf^{i}$ via a
$\RunPol$-avoiding path and notice that the propensity $\Prp_{\c}(Q)$
satisfies
$\Prp_{\c}(Q)
\geq
\Omega((\c(L))^{2} / n)$.
Inspecting the reactions in
$\Reactions - Q$,
we deduce that
$\c(L) = \EffConf^{i}(L)$,
hence we can employ \Lem{}~\ref{lemma:runtime:tools:temporal-cost} to conclude
that the contribution of round $i$ to
$\RunTime_{\Halt}^{\RunPol, \SkipPol}(\eta)$
is
$\TemporalCost^{\RunPol}(\EffConf^{i})
\leq
O (n / (\EffConf^{i}(L))^{2})$.

If round $i$ is target-accomplished, then
$\EffConf^{i + 1}(L) < \EffConf^{i}(L)$.
This is no longer the case if round $i$ is target-deprived, however, we argue
that if
$\chi^{+}(\EffConf^{i'}) = \chi^{+}(\EffConf^{i})$
for some
$i' > i$,
then
$i' - i \leq 2^{s} = O (1)$.\footnote{%
Using a more delicate argument, one can improve this bound to
$i' - i \leq O (s^{2})$.}
Indeed, if
$\chi^{+}(\EffConf^{i + 1}) = \chi^{+}(\EffConf^{i})$,
then
$K^{+}_{i + 1} \succ K^{+}_{i}$.

Taking
$\ell_{i} = \EffConf^{i}(L)$
for each
$i_{\Sign} \leq i < i_{\mathrm{leader}}$,
we conclude that
\\
(1)
$\TemporalCost^{\RunPol}(\EffConf^{i})
\leq
O (n / \ell_{i}^{2})$;
\\
(2)
$\ell_{i + 1} \leq \ell_{i}$;
and
\\
(3)
there exists a constant
$h \geq 1$
such that
$\ell_{i + h} < \ell_{i}$.
\\
As
$\ell_{i_{\Sign}} \leq n$,
we can bound the total contribution of rounds
$i_{\Sign} \leq i < i_{\mathrm{leader}}$
to
$\RunTime_{\Halt}^{\RunPol, \SkipPol}(\eta)$
by
\[\textstyle
\sum_{j = 1}^{n} O (n / j^{2})
\, \leq \,
O (n) \cdot \sum_{j = 1}^{\infty} 1 / j^{2}
\, \leq \,
O (n)
\, ,
\]
thus establishing the assertion.
\end{proof}

\begin{lemma} \label{lemma:crd:semilinear:threshold:temp-cost-trm}
The total contribution of rounds
$i_{\mathrm{leader}} \leq i < i_{\Halt}$
to
$\RunTime_{\Halt}^{\RunPol, \SkipPol}(\eta)$
is up-bounded by
$O (n)$.
\end{lemma}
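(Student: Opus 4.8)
The plan is to show that the phase from step $t_{\mathrm{leader}}$ to step $t_{\Halt}$ comprises only $O(1)$ rounds, each of temporal cost $O(n)$, so that their total contribution to $\RunTime_{\Halt}^{\RunPol,\SkipPol}(\eta)$ is $O(1)\cdot O(n)=O(n)$. Throughout I assume, without loss of generality (the other case being symmetric), that $\chi^{-}(\c^{t})=0$ for every $t\geq t_{\Sign}$; then, by \Obs{}~\ref{observation:crd:semilinear:threshold:t-sign} and~\ref{observation:crd:semilinear:threshold:t-leader} together with the maturity of the ignition gadget (\Lem{}~\ref{lemma:runtime:tools:ignition}), every configuration $\c^{t}$ with $t\geq t_{\mathrm{leader}}$ contains exactly one $L$ molecule $L_{u}$ with $u\geq 0$, no $Y_{-1}$ molecule, and no $\Sigma\cup\{F\}$ molecule. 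Consequently the only non-void reaction that can be applicable to such a configuration is $\delta_{u,+1}$ (the $\beta,\hat{\beta}$ reactions need two $L$ molecules; $\gamma$ and the $\delta_{\cdot,-1}$ reactions need a $Y_{-1}$ molecule; the ignition reactions need a $\Sigma\cup\{F\}$ molecule), and it is applicable iff $u\leq s-1$ and $\c(Y_{+1})\geq 1$. I also record that, from step $t_{\mathrm{leader}}$ onward, the index $u$ of the unique $L$ molecule is non-decreasing in the step number --- only $\delta_{u,+1}$ changes it, and it increments it by one --- and is bounded above by $s$.

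First I would bound the temporal cost of a single round $i$ with $i_{\mathrm{leader}}\leq i<i_{\Halt}$. Here $\EffStep(i)\geq t_{\mathrm{leader}}$, so the structure above applies to $\EffConf^{i}$, and since $\EffStep(i)<t_{\Halt}$, the effective configuration $\EffConf^{i}$ admits an applicable $\delta$ reaction, namely $\delta_{u_{i},+1}$, where $u_{i}\in\{0,\dots,s-1\}$ is the index of the unique $L$ molecule of $\EffConf^{i}$ and $\EffConf^{i}(Y_{+1})\geq 1$. The runtime policy $\RunPol$ sets $\RunPol(\EffConf^{i})=Q$ to the set of all $\delta$ reactions, and the only reactions of $\Reactions-Q$ applicable to $\EffConf^{i}$ are void, so the only configuration reachable from $\EffConf^{i}$ via a $\RunPol$-avoiding path is $\EffConf^{i}$ itself. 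Using $\Vol=\Theta(n)$, $\EffConf^{i}(L_{u_{i}})=\EffConf^{i}(L)=1$, $\EffConf^{i}(Y_{+1})\geq 1$, and $|\Reactions(L_{u_{i}}+Y_{+1})|=O(1)$, we get $\Prp_{\EffConf^{i}}(Q)\geq\Prp_{\EffConf^{i}}(\delta_{u_{i},+1})\geq\Omega(1/n)$, so \Lem{}~\ref{lemma:runtime:tools:temporal-cost} yields $\TemporalCost^{\RunPol}(\EffConf^{i})\leq O(n)$.

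Next I would bound the number of rounds in the range by $s=O(1)$. The crux is that every such round is target-accomplished: in the suffix of round $i$ no reaction other than $\delta_{u_{i},+1}$ can change $\EffConf^{i}(Y_{+1})$ or touch the $L$ molecule, so $\delta_{u_{i},+1}$ stays continuously applicable until it is scheduled, whence weak fairness (condition (I) of $\Stop$) forces a $\delta$ reaction to be scheduled at step $\InitStep(i+1)-1$, raising the $L$-index from $u_{i}$ to $u_{i}+1$ at step $\InitStep(i+1)$. Since $\EffStep(i+1)\geq\InitStep(i+1)$ and the $L$-index is non-decreasing in the step number, $u_{i+1}\geq u_{i}+1$; and since the effective steps are strictly increasing (by $\EffStep(i)<\InitStep(i+1)\leq\EffStep(i+1)$), the sequence $u_{i_{\mathrm{leader}}},u_{i_{\mathrm{leader}}+1},\dots,u_{i_{\Halt}-1}$ is strictly increasing and contained in $\{0,\dots,s-1\}$, so $i_{\Halt}-i_{\mathrm{leader}}\leq s$. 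Combining the two bounds, $\sum_{i=i_{\mathrm{leader}}}^{i_{\Halt}-1}\TemporalCost^{\RunPol}(\EffConf^{i})\leq s\cdot O(n)=O(n)$, as claimed.

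The step I expect to be the main obstacle is pinning down the number of rounds despite the adversary's freedom in choosing the skipping policy: the adversary may push the effective step of round $i$ arbitrarily far into the round (past several $\delta$ reactions occurring in the round's prefix), so a naive ``one unit of progress per round'' argument does not apply; one must instead exploit the monotonicity and boundedness of the $L$-index together with the ``every round is target-accomplished'' claim, the latter resting on $\delta_{u_{i},+1}$ remaining continuously applicable throughout the round's suffix.
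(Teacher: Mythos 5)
Your proof is correct and follows essentially the same route as the paper's: bound each round's temporal cost by $O(n)$ via the $\Omega(1/n)$ propensity of the applicable $\delta$ reaction, and bound the number of rounds by $O(s) = O(1)$ using the monotonicity and boundedness of the surviving leader's index. The paper's own proof is far terser (it simply notes that only $\delta$ reactions are applicable and that there are at most $s-1$ such rounds); your explicit argument that each round is target-accomplished because $\delta_{u_i,+1}$ remains continuously applicable until scheduled is exactly the detail the paper leaves implicit.
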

\begin{proof}
\Obs{}\ \ref{observation:crd:semilinear:threshold:t-sign} and
\ref{observation:crd:semilinear:threshold:t-leader} imply that
$\Applicable(\EffConf^{i})$
consists only of $\delta$ reactions for each round
$i_{\mathrm{leader}} \leq i < i_{\Halt}$.
Since
$i_{\mathrm{leader}} \geq i_{\Sign}$,
it follows that there are at most
$s - 1 = O (1)$
such rounds.
The assertion follows as each round contributes at most
$O (n)$
temporal cost to
$\RunTime_{\Halt}^{\RunPol, \SkipPol}(\eta)$.
\end{proof}

Combining \Lem{}~\ref{lemma:runtime:tools:ignition} with
\Lem{}\ \ref{lemma:crd:semilinear:threshold:temp-cost-sign},
\ref{lemma:crd:semilinear:threshold:temp-cost-leader}, and
\ref{lemma:crd:semilinear:threshold:temp-cost-trm},
we conclude that
$\RunTime_{\Halt}^{\RunPol, \SkipPol}(\eta)
=
O (n)$,
which yields \Prop{}~\ref{proposition:crd:semilinear:threshold-predicates}.

\subsubsection{Modulo Predicates}
\label{section:crd:semilinear:modulo}
In this section, we establish
\Prop{}~\ref{proposition:crd:semilinear:modulo-predicates} by designing the
promised CRD $\Pi$.
Specifically, given a vector
$\mathbf{a} \in \Integers^{\Sigma}$
and scalars
$b \in \Integers$
and
$m \in \Integers_{> 0}$,
the (leaderless) CRD
$\Pi
=
(\Species, \Reactions, \Sigma, \Upsilon_{0}, \Upsilon_{1}, F, \mathbf{0})$
haltingly decides the predicate
$\psi : \Naturals^{\Sigma} \rightarrow \{ 0, 1 \}$
defined so that
$\psi(\mathbf{x}) = 1$
if and only if
$\mathbf{a} \cdot \mathbf{x} = b \bmod{m}$.
Moreover, the halting runtime of $\Pi$ is
$\RunTime_{\Halt}^{\Pi}(n) = O (n)$.

The species set of protocol $\Pi$ is defined to be
\[\textstyle
\Species
\, = \,
\Sigma
\cup
\{ F \}
\cup
\left\{ L_{u} \mid 0 \leq u \leq m - 1 \right\}
\cup
\left\{ Y \right\}
\, .
\]
The species in
$\Sigma \cup \{ F \}$
are regarded as the ignition species of the ignition gadget presented in
\Sect{}~\ref{section:runtime:tools:ignition}, taking the ignition reaction
associated with species
$A \in \Sigma$
to be
\\
$\iota_{A}$:
$A \rightarrow L_{\mathbf{a}(A) \bmod m}$;
\\
and the ignition reaction associated with species $F$ to be
\\
$\iota_{F}$:
$F \rightarrow L_{0}$.

Semantically, we think of the molecules of species
$A \in \Sigma$
as carrying $\mathbf{a}(A)$ units of an abstract charge.
Each molecule of species $L_{u}$,
$0 \leq u \leq m - 1$,
encodes the consumption of $\chi$ units of charge for some
$\chi = u \bmod m$,
whereas the $F$ and $Y$ molecules carry a neutral charge.
From this point of view, the ignition reactions can be interpreted as
transferring the charge (modulo $m$) from the ignition species to the working
species.

We design the reaction set $\Reactions$ so that the total charge remains
invariant modulo $m$ throughout the execution (see
\Obs{}~\ref{observation:crd:semilinear:modulo:charge-invariant}).
Moreover, when the execution halts, there is exactly one $L$ molecule left
(i.e., a leader) and we can determine whether or not the total charge modulo
$m$ is $b$ based solely on the species of the remaining $L$ molecule.
Following this logic, the voter species are defined as
\[\textstyle
\Upsilon_{0}
\, = \,
\left\{ L_{u} \mid u \neq b \right\}
\quad \text{and} \quad
\Upsilon_{1}
\, = \,
\left\{ L_{b} \right\}
\, .
\]

Concretely, the non-void reaction set $\NonVoid(\Reactions)$ of protocol $\Pi$
includes the following reactions on top of the aforementioned ignition
reactions:
\\
$\beta_{u, u'}$:
$L_{u} + L_{u'} \rightarrow
L_{u + u' \bmod m} + Y$
for every
$0 \leq u, u' \leq m - 1$.
\\
In other words, the $\beta$ reactions decrement the number of $L$ molecules
while maintaining the total charge modulo $m$.

\subparagraph{Analysis.}
For the analysis of protocol $\Pi$, fix some input vector
$\mathbf{x} \in \Naturals^{\Sigma}$
and let
$\c^{0} \in \Naturals^{\Species}$
be a valid initial configuration of $\Pi$ with
$\c^{0}|_{\Sigma} = \mathbf{x}$.
Consider a weakly fair execution
$\eta = \langle \c^{t}, \zeta^{t} \rangle_{t \geq 0}$
emerging from $\c^{0}$.

\begin{observation}
\label{observation:crd:semilinear:modulo:charge-invariant}
For every step
$t \geq 0$,
we have
$\sum_{A \in \Sigma} \mathbf{a}(A) \cdot \c^{t}(A)
+
\sum_{0 \leq u \leq m - 1} u \cdot \c^{t}(L_{u})
=
\mathbf{a} \cdot \mathbf{x}
\bmod m$.
\end{observation}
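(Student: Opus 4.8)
The plan is to follow the two-line template of \Obs{}~\ref{observation:crd:semilinear:threshold:charge-invariant}: verify the claimed congruence for the initial configuration $\c^{0}$, and then argue that every reaction of $\Pi$ preserves the relevant quantity modulo $m$, so that the congruence propagates by a straightforward induction on the step $t$.

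For the base case, I would use that $\Pi$ is leaderless, so its initial context is $\mathbf{0}$, whence a valid initial configuration $\c^{0}$ satisfies $\c^{0}(L_{u}) = 0$ for all $0 \leq u \leq m-1$ (and $\c^{0}(Y) = 0$), while $\c^{0}|_{\Sigma} = \mathbf{x}$; thus at step $0$ the left-hand side of the asserted identity equals $\sum_{A \in \Sigma} \mathbf{a}(A) \cdot \mathbf{x}(A) = \mathbf{a} \cdot \mathbf{x}$, which is congruent to $\mathbf{a} \cdot \mathbf{x}$ modulo $m$. For the inductive step I would write $\Phi(\c) = \sum_{A \in \Sigma} \mathbf{a}(A) \cdot \c(A) + \sum_{0 \leq u \leq m-1} u \cdot \c(L_{u})$ and check, reaction by reaction, that applying any reaction changes $\Phi$ by a multiple of $m$: the ignition reaction $\iota_{A}$ swaps one $A$ molecule (weight $\mathbf{a}(A)$) for one $L_{\mathbf{a}(A) \bmod m}$ molecule (weight $\mathbf{a}(A) \bmod m$), altering $\Phi$ by $-\mathbf{a}(A) + (\mathbf{a}(A) \bmod m) \equiv 0 \pmod{m}$; the ignition reaction $\iota_{F}$ swaps the neutral $F$ for the neutral $L_{0}$ and leaves $\Phi$ unchanged; the reaction $\beta_{u,u'}$ swaps $L_{u}, L_{u'}$ (combined weight $u + u'$) for $L_{u+u' \bmod m}$ together with the neutral $Y$ (combined weight $u + u' \bmod m$), again altering $\Phi$ by a multiple of $m$; and the void reactions leave $\Phi$ untouched.

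This observation poses essentially no difficulty; the only point that requires a moment of care is noting that $F$, $Y$, and any species absent from $\c^{0}$ enter $\Phi$ with weight $0$, so that the $L$-molecules produced by $\iota_{A}$, $\iota_{F}$, and $\beta_{u,u'}$ genuinely account for the entire change in $\Phi$ at each step.
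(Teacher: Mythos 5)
Your proposal is correct and follows essentially the same route as the paper's proof: verify the congruence at $\c^{0}$ (using that the initial context is $\mathbf{0}$) and then check that each reaction — the ignition reactions and the $\beta$ reactions — preserves the weighted sum modulo $m$. The paper states this more tersely, but the underlying induction and reaction-by-reaction bookkeeping are identical.
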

\begin{proof}
Follows from the design of $\Reactions$, ensuring that
(1)
$\sum_{A \in \Sigma} \mathbf{a}(A) \cdot \c^{0}(A)
=
\mathbf{a} \cdot \mathbf{x}
\bmod m$;
and
(2)
$\sum_{A \in \Sigma} \mathbf{a}(A) \cdot \c^{t}(A)
+
\sum_{0 \leq u \leq m - 1} u \cdot \c^{t}(L_{u})
=
\sum_{A \in \Sigma} \mathbf{a}(A) \cdot \c^{t - 1}(A)
+
\sum_{0 \leq u \leq m - 1} u \cdot \c^{t - 1}(L_{u})
\bmod m$
for every
$t > 0$.
\end{proof}

We subsequently use the notation
$\c(L) = \sum_{0 \leq u \leq m - 1} \c(L_{u})$
defined for each configuration
$\c \in \Naturals^{\Species}$.
The following steps play a key role in the analysis:
\begin{itemize}

\item
let
$t_{\Ignt} \geq 0$
be the earliest step in which the ignition gadget matures in $\eta$ (as
promised in \Lem{}~\ref{lemma:runtime:tools:ignition});
and

\item
let $t_{\mathrm{leader}}$ be the earliest step
$t \geq t_{\Ignt}$
such that $\c^{t}(L) = 1$.

\end{itemize}
The existence of step $t_{\mathrm{leader}}$ is established (implicitly) in the
sequel as part of the runtime analysis.
Here, we prove the following observation.

\begin{observation} \label{observation:crd:semilinear:modulo:t-beta}
$\c^{t}(L) = 1$
for all
$t \geq t_{\mathrm{leader}}$.
In particular, the $\beta$ reactions are inapplicable from step
$t_{\mathrm{leader}}$ onward.
\end{observation}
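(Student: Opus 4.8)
The plan is to mirror the proof of Observation~\ref{observation:crd:semilinear:threshold:t-leader}: show that $\c^{t}(L)$ is non-increasing from step $t_{\Ignt}$ onward and that the only reactions able to decrease it are the $\beta$ reactions, which require at least two $L$ molecules to be applicable. First I would observe that once the ignition gadget is mature at step $t_{\Ignt}$, the ignition species in $\Sigma \cup \{ F \}$ have all disappeared and, since no non-void reaction produces them, never reappear; hence from step $t_{\Ignt}$ onward the only non-void reactions that can be scheduled are the $\beta_{u, u'}$ reactions. Each $\beta_{u, u'}$ reaction consumes two $L$ molecules (either one $L_{u}$ and one $L_{u'}$ when $u \neq u'$, or two copies of $L_{u}$ when $u = u'$) and produces exactly one $L$ molecule, namely $L_{u + u' \bmod m}$, so applying it decreases $\c(L)$ by exactly $1$, while void reactions leave $\c(L)$ unchanged. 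Consequently $\c^{t}(L)$ is weakly decreasing for $t \geq t_{\Ignt}$.

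Next, applying any $\beta$ reaction to a configuration $\c$ requires $\c(L) \geq 2$, simply because its reactant multiset contains two $L$ molecules. Therefore, at any step $t \geq t_{\mathrm{leader}}$ with $\c^{t}(L) = 1$ no $\beta$ reaction is applicable; since the ignition reactions are also inapplicable (the ignition species are gone and $t_{\mathrm{leader}} \geq t_{\Ignt}$), every reaction applicable to $\c^{t}$ is void, so $\c^{t + 1} = \c^{t}$ and in particular $\c^{t + 1}(L) = 1$. As $\c^{t_{\mathrm{leader}}}(L) = 1$ by the definition of $t_{\mathrm{leader}}$, a straightforward induction on $t$ then yields $\c^{t}(L) = 1$ for all $t \geq t_{\mathrm{leader}}$, and in particular the $\beta$ reactions are inapplicable throughout that range. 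This argument is entirely routine and poses no real obstacle; the only point worth flagging for the surrounding analysis, though not needed for the statement as phrased, is that $\c^{t_{\Ignt}}(L) \geq \c^{t_{\Ignt}}(L_{0}) \geq 1$, because $\c^{0}(F) \geq 1$ and the fuel ignition reaction $\iota_{F}$ turns each fuel molecule into an $L_{0}$ molecule, which thereafter may only be replaced by, not fully removed via, $\beta$ reactions.
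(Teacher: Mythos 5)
Your proof is correct and follows essentially the same route as the paper's: each $\beta$ reaction consumes two $L$ molecules while producing one, so $\c^{t}(L)$ is non-increasing and a $\beta$ reaction is inapplicable once $\c^{t}(L)=1$, which pins the count at $1$ from $t_{\mathrm{leader}}$ onward. Your version merely spells out the induction and the role of the void reactions more explicitly than the paper does.
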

\begin{proof}
Each application of a $\beta$ reaction decreases $\c^{t}(L)$ while still
producing one $L$ molecule.
The assertion is established by recalling that $L_{0}$ is produced by the fuel
ignition reaction $\iota_{F}$, hence
$\c^{t_{\Ignt}}(L)
\geq
\c^{t_{\Ignt}}(L_{0})
\geq 1$.
\end{proof}

\Lem{}~\ref{lemma:runtime:tools:ignition} and
\Obs{}~\ref{observation:crd:semilinear:modulo:t-beta} imply that $\eta$ halts
in step $t_{\mathrm{leader}}$, thus establishing
\Cor{}~\ref{corollary:crd:semilinear:modulo:correctness} due to
\Obs{}~\ref{observation:crd:semilinear:modulo:charge-invariant}.
The halting correctness of protocol $\Pi$ follows by the choice of
$\Upsilon_{0}$
and
$\Upsilon_{1}$.

\begin{corollary} \label{corollary:crd:semilinear:modulo:correctness}
Execution $\eta$ halts in a configuration $\c$ that includes a
single $L$ molecule $L_{u}$ whose index $u$ satisfies
$u = \mathbf{a} \cdot \mathbf{x} \bmod m$.
\end{corollary}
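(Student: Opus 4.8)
The plan is to flesh out the four-stage skeleton indicated just before the statement. First I would apply \Lem{}~\ref{lemma:runtime:tools:ignition} with the runtime policy that targets the ignition reactions, obtaining the maturity step $t_{\Ignt} \geq 0$ with $\c^{t}(\Sigma \cup \{ F \}) = 0$ for every $t \geq t_{\Ignt}$. Since, by the definition of the ignition gadget, no species in $\Sigma \cup \{ F \}$ occurs in a non-void reaction other than its own ignition reaction (the remaining non-void reactions are the $\beta_{u, u'}$ reactions, which involve only the species $\{ L_{u} \mid 0 \leq u \leq m - 1 \} \cup \{ Y \}$), it follows that from step $t_{\Ignt}$ onward the only non-void reactions that can ever be applicable are the $\beta$ reactions.

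Next I would establish that the step $t_{\mathrm{leader}}$ exists, which I expect to be the crux (the authors defer this to the runtime analysis, but a direct argument is available). Because $\c^{0}(F) \geq 1$ and $\iota_{F}$ produces an $L_{0}$ molecule, while each $\beta$ reaction consumes two $L$ molecules and produces exactly one, the quantity $\c^{t}(L)$ is nonincreasing in $t$ and satisfies $\c^{t}(L) \geq 1$ for all $t \geq t_{\Ignt}$; hence it stabilizes at some value $k \geq 1$ from a step $t_{0} \geq t_{\Ignt}$ onward. Suppose $k \geq 2$. From step $t_{0}$ on, no $\beta$ reaction is ever scheduled, since scheduling one would drop $\c(L)$ below $k$, and no ignition reaction is scheduled either (the ignition species have been exhausted); thus every reaction scheduled from step $t_{0}$ on is void, and so $\c^{t} = \c^{t_{0}}$ for all $t \geq t_{0}$. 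But $\c^{t_{0}}(L) = k \geq 2$ means some $\beta_{u, u'}$ is applicable to $\c^{t_{0}}$ (the reactant subsets $\Reactions(\mathbf{r})$ are nonempty for the relevant bimolecular $\mathbf{r}$, and each $\beta$ reaction is non-void), so $\beta_{u, u'} \in \Applicable(\c^{t})$ for every $t \geq t_{0}$ while it is never scheduled --- contradicting weak fairness. Therefore $k = 1$, i.e.\ there is an earliest step $t_{\mathrm{leader}} \geq t_{\Ignt}$ with $\c^{t_{\mathrm{leader}}}(L) = 1$, and $\c^{t}(L) = 1$ for all $t \geq t_{\mathrm{leader}}$ by \Obs{}~\ref{observation:crd:semilinear:modulo:t-beta}.

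It then remains to verify halting and to read off the surviving $L$ species. At step $t_{\mathrm{leader}}$ we have $\c^{t_{\mathrm{leader}}}(\Sigma \cup \{ F \}) = 0$ and $\c^{t_{\mathrm{leader}}}(L) = 1$, so every ignition reaction and every $\beta$ reaction (each needing two $L$ molecules) is inapplicable; the only applicable reactions are therefore void, i.e.\ $\c^{t_{\mathrm{leader}}} \Reaching \c'$ implies $\c' = \c^{t_{\mathrm{leader}}}$, and hence $\eta$ halts in step $t_{\mathrm{leader}}$. Writing $\c = \c^{t_{\mathrm{leader}}}$ and letting $L_{u}$ denote its unique $L$ molecule, \Obs{}~\ref{observation:crd:semilinear:modulo:charge-invariant} applied at $t = t_{\mathrm{leader}}$ gives $\mathbf{a} \cdot \mathbf{x} = \sum_{A \in \Sigma} \mathbf{a}(A) \cdot \c(A) + \sum_{0 \leq u' \leq m - 1} u' \cdot \c(L_{u'}) = u \bmod m$, which is exactly the claimed identity. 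The only step that needs real care is the weak-fairness argument for the existence of $t_{\mathrm{leader}}$, and its essential ingredient is that the $Y$ molecules are inert --- they never occur as a reactant of a non-void reaction --- so the execution freezes entirely once it stops firing $\beta$ reactions; everything else is bookkeeping with the charge invariant.
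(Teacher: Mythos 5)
Your proof is correct and follows essentially the same route as the paper: ignition maturity via \Lem{}~\ref{lemma:runtime:tools:ignition}, reduction to a single surviving $L$ molecule, the observation that the resulting configuration is halting, and then reading off $u$ from the charge invariant of \Obs{}~\ref{observation:crd:semilinear:modulo:charge-invariant}. The only genuine addition is your direct weak-fairness contradiction argument for the existence of $t_{\mathrm{leader}}$ (the configuration would freeze with a continuously applicable, never-scheduled $\beta$ reaction), which is a sound, self-contained substitute for the paper's implicit derivation of this fact inside the round-based runtime analysis of \Lem{}~\ref{lemma:crd:semilinear:modulo:temp-cost-beta}.
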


For the halting runtime analysis, let
$n = \| \c^{0} \|$
denote the molecular count of the initial configuration and fix some skipping
policy $\SkipPol$.
We prove that
$\RunTime_{\Halt}^{\Pi}(n) \leq O (n)$
by presenting a runtime policy $\RunPol$ for $\Pi$ (defined independently of
$\eta$ and $\SkipPol$) and showing that
$\RunTime_{\Halt}^{\RunPol, \SkipPol}(\eta) \leq O (n)$.
Given a configuration
$\c \in \Naturals^{\Species}$,
the runtime policy $\RunPol$ is defined as follows:
\begin{itemize}

\item
if
$\c(\Sigma \cup \{ F \}) > 0$,
then $\RunPol(\c)$ consists of the ignition reactions;

\item
else $\RunPol(\c)$ consists of the $\beta$ reactions.

\end{itemize}

Let $\EffStep(i)$ and
$\EffConf^{i} = \c^{\EffStep(i)}$
be the effective step and effective configuration, respectively, of round
$i \geq 0$
under $\RunPol$ and $\SkipPol$.
We establish the desired upper bound on
$\RunTime_{\Halt}^{\RunPol, \SkipPol}(\eta)$
by introducing the following two rounds:
\begin{itemize}

\item
$i_{\Ignt}
=
\min \{ i \geq 0 \mid \EffStep(i) \geq t_{\Ignt} \}$;
and

\item
$i_{\mathrm{leader}}
=
\min \{ i \geq i_{\Ignt} \mid \EffStep(i) \geq t_{\mathrm{leader}} \}$.

\end{itemize}
\Lem{}~\ref{lemma:runtime:tools:ignition} guarantees that the total
contribution of rounds
$0 \leq i < i_{\Ignt}$
to
$\RunTime_{\Halt}^{\RunPol, \SkipPol}(\eta)$
is up-bounded by
$O (\log n)$.
For the contribution of the subsequent rounds to
$\RunTime_{\Halt}^{\RunPol, \SkipPol}(\eta)$,
we establish the following lemma.

\begin{lemma} \label{lemma:crd:semilinear:modulo:temp-cost-beta}
The total contribution of rounds
$i_{\Ignt} \leq i < i_{\mathrm{leader}}$
to
$\RunTime_{\Halt}^{\RunPol, \SkipPol}(\eta)$
is up-bounded by
$O (n)$.
\end{lemma}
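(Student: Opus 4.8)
The plan is to charge each round $i$ in the range $i_{\Ignt} \le i < i_{\mathrm{leader}}$ with the temporal cost of its effective configuration $\EffConf^{i}$, to show that this cost is $O(n / (\EffConf^{i}(L))^{2})$, to show that $\EffConf^{i}(L)$ strictly decreases along these rounds, and then to sum the resulting convergent series (scaled by $n$). To get started I would observe that for $i_{\Ignt} \le i < i_{\mathrm{leader}}$ we have $t_{\Ignt} \le \EffStep(i) < t_{\mathrm{leader}}$, and that once the ignition gadget matures the only applicable non-void reactions are the $\beta$ reactions, each of which decreases $\c^{t}(L)$ by exactly one; hence $\c^{t}(L)$ is a positive, non-increasing integer for $t \ge t_{\Ignt}$, which together with the definition of $t_{\mathrm{leader}}$ forces $\EffConf^{i}(\Species_{\Ignt}) = 0$ and $\EffConf^{i}(L) \ge 2$. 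By the definition of $\RunPol$, this means $\RunPol(\EffConf^{i}) = Q$, the set of all $\beta$ reactions.

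To bound $\TemporalCost^{\RunPol}(\EffConf^{i})$ I would invoke \Lem{}~\ref{lemma:runtime:tools:temporal-cost}. The key point is that the only configuration reachable from $\EffConf^{i}$ via a $\RunPol$-avoiding path is $\EffConf^{i}$ itself, because the reactions in $\Reactions - Q$ are the ignition reactions (inapplicable once the gadget is mature) and the void reactions (which leave the configuration unchanged); so the propensity lower bound required by \Lem{}~\ref{lemma:runtime:tools:temporal-cost} is just $\Prp_{\EffConf^{i}}(Q)$. Summing the $\beta$-reaction propensities over all same- and distinct-index reactant pairs gives $\Prp_{\EffConf^{i}}(Q) = \frac{1}{\Vol} \binom{\EffConf^{i}(L)}{2} = \Theta\bigl((\EffConf^{i}(L))^{2} / n\bigr)$, using $\EffConf^{i}(L) \ge 2$ and $\Vol = \Theta(n)$ (the protocol is density preserving, so $\| \EffConf^{i} \| = n$); hence $\TemporalCost^{\RunPol}(\EffConf^{i}) \le O(n / (\EffConf^{i}(L))^{2})$. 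For the progress claim, note that since $\EffConf^{i}(L) \ge 2$ some $\beta$ reaction is continuously applicable at $\EffConf^{i}$ and, as only $\beta$ reactions change the configuration after maturity, weak fairness forces condition (I) of the operator $\Stop$ to trigger before condition (II): round $i$ is target-accomplished, exactly one $\beta$ fires between $\EffStep(i)$ and $\InitStep(i + 1)$, so $\c^{\InitStep(i + 1)}(L) = \EffConf^{i}(L) - 1$, and since $\c^{t}(L)$ remains non-increasing we get $\EffConf^{i + 1}(L) \le \EffConf^{i}(L) - 1$. Thus the values $\EffConf^{i_{\Ignt}}(L) > \EffConf^{i_{\Ignt} + 1}(L) > \cdots$ are distinct integers in $\{ 2, \dots, n \}$ (in particular there are finitely many such rounds, which also justifies the existence of $t_{\mathrm{leader}}$), and therefore
\[
\sum_{i = i_{\Ignt}}^{i_{\mathrm{leader}} - 1} \TemporalCost^{\RunPol}(\EffConf^{i})
\, \le \,
\sum_{j = 2}^{n} O(n / j^{2})
\, \le \,
O(n) \cdot \sum_{j \ge 1} 1 / j^{2}
\, = \,
O(n) \, .
\]

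The main obstacle is a mild one, since the modulo protocol is far better behaved than the threshold protocol: it is the bookkeeping in the middle step --- verifying that $\RunPol$-avoiding paths out of $\EffConf^{i}$ collapse to the single configuration $\EffConf^{i}$ (so that the propensity bound supplied to \Lem{}~\ref{lemma:runtime:tools:temporal-cost} is exactly $\Prp_{\EffConf^{i}}(Q)$), and evaluating $\Prp_{\EffConf^{i}}(Q)$ correctly as $\frac{1}{\Vol} \binom{\EffConf^{i}(L)}{2}$. The remaining parts follow the template of the threshold-predicate analysis but are simpler, because here $\c(L)$ is monotone and every round in the range makes exactly one unit of progress, so no lexicographic slack argument over the $L$-species indices is needed.
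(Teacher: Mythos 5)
Your proposal is correct and follows essentially the same route as the paper's proof: both argue that after ignition maturity the $\RunPol$-avoiding paths from $\EffConf^{i}$ collapse to $\EffConf^{i}$ itself, evaluate $\Prp_{\EffConf^{i}}(Q) = \frac{1}{\Vol}\binom{\EffConf^{i}(L)}{2}$, apply \Lem{}~\ref{lemma:runtime:tools:temporal-cost} to get $\TemporalCost^{\RunPol}(\EffConf^{i}) \leq O(n/\ell_{i}^{2})$, observe that $\ell_{i}$ strictly decreases, and sum the convergent series. Your additional justification that each such round is target-accomplished (via weak fairness and the continuous applicability of the $\beta$ reactions) is only implicit in the paper but is a welcome elaboration.
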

\begin{proof}
Fix a round
$i_{\Ignt} \leq i \leq i_{\beta}$
and recall that the runtime policy $\RunPol$ is designed so that
$\RunPol(\EffConf^{i})
=
Q
=
\{ \beta_{u, u'} \mid 0 \leq u, u' \leq m - 1 \}$.
Since
$Q = \NonVoid(\Applicable(\EffConf^{i}))$,
it follows that $\EffConf^{i}$ is the only configuration reachable from
$\EffConf^{i}$ via a $\RunPol$-avoiding path.
Let
$\ell_{i} = \EffConf^{i}(L)$.
As
$\Prp_{\EffConf^{i}}(Q)
=
\frac{1}{\Vol} \cdot \binom{\ell_{i}}{2}
\geq
\Omega (\ell_{i}^{2} / n)$,
we can employ \Lem{}~\ref{lemma:runtime:tools:temporal-cost} to conclude that
the contribution of round $i$ to
$\RunTime_{\Halt}^{\RunPol, \SkipPol}(\eta)$
is
$\TemporalCost^{\RunPol}(\EffConf^{i}) \leq O (n / \ell_{i}^{2})$.
Observing that
$\ell_{i + 1} < \ell_{i}$
and
$\ell_{0} \leq n$,
we can bound the total contribution of rounds
$i_{\Ignt} \leq i < i_{\mathrm{leader}}$
to
$\RunTime_{\Halt}^{\RunPol, \SkipPol} (\eta)$
by
\[\textstyle
\sum_{\ell = 2}^{n} O \left( \frac{n}{\ell^{2}} \right)
\, \leq \,
O (n) \cdot \sum_{\ell = 1}^{\infty} \frac{1}{\ell^{2}}
\, \leq \,
O (n)
\, ,
\]
thus establishing the assertion.
\end{proof}

Combining \Lem{}~\ref{lemma:runtime:tools:ignition} with
\Lem{}~\ref{lemma:crd:semilinear:modulo:temp-cost-beta}, we conclude that
$\RunTime_{\Halt}^{\RunPol, \SkipPol}(\eta)
=
O (n)$,
which yields \Prop{}~\ref{proposition:crd:semilinear:modulo-predicates}.

\subsubsection{Closure under Boolean Operations}
\label{section:crd:semilinear:closure}
In this section, we establish
\Prop{}~\ref{proposition:crd:semilinear:closure} by
designing the promised CRD $\Pi_{\xi}$.
Intuitively, we employ the ignition gadget to produce two copies of each input
molecule, one for protocol $\Pi_{1}$ and one for protocol $\Pi_{2}$;
following that, the two protocols run in parallel, each on its own
molecules.
The ignition gadget is further employed to produce ``global voter'' molecules
whose role is to interact with the ``local voter'' molecules of $\Pi_{1}$ and
$\Pi_{2}$, recording their votes.
To ensure that the runtime overhead is
$O (1)$,
we invoke a leader election process on the global voters so that a single
global voter survives.

Formally, for
$j \in \{ 1, 2 \}$,
let
$\Pi'_{j}
=
(\Species'_{j}, \Reactions'_{j}, \Sigma'_{j}, \Upsilon'_{j, 0}, \Upsilon'_{j, 1},
F'_{j}, \mathbf{0})$
be the (leaderless) CRD derived from $\Pi_{j}$ by replacing each species
$A \in \Species_{j}$
with a $\Pi'_{j}$ designated species
$A'_{j} \in \Species'_{j}$;
in particular, the CRD $\Pi'_{j}$ is defined over the input species
$\Sigma'_{j} = \{ A'_{j} \mid A \in \Sigma \}$.

The species set $\Species_{\xi}$ of protocol $\Pi_{\xi}$ is defined to be
\[\textstyle
\Species_{\xi}
\, = \,
\Sigma
\cup
\{ F_{\xi} \}
\cup
\Species'_{1}
\cup
\Species'_{2}
\cup
\{ G_{0, 0}, G_{0, 1}, G_{1, 0}, G_{1, 1}, W \}
\, .
\]
The species in
$\Sigma \cup \{ F_{\xi} \}$
are regarded as the ignition species of the ignition gadget presented in
\Sect{}~\ref{section:runtime:tools:ignition}, taking the ignition reaction
associated with species
$A \in \Sigma$
to be
\\
$\iota_{A}$:
$A \rightarrow A'_{1} + A'_{2}$;
\\
and the ignition reaction associated with species $F_{\xi}$ to be
\\
$\iota_{F_{\xi}}$:
$F_{\xi} \rightarrow F'_{1} + F'_{2} + G_{0, 0}$.

On top of the aforementioned ignition reactions, the non-void reaction set
$\NonVoid(\Reactions_{\xi})$ of protocol $\Pi_{\xi}$ consists of the
(non-void) reactions in
$\NonVoid(\Reactions'_{1}) \cup \NonVoid(\Reactions'_{2})$
as well as the following reactions:
\begin{itemize}

\item
$\beta_{u_{1}, u_{2}, w_{1}, w_{2}}$:
$G_{u_{1}, u_{2}} + G_{w_{1}, w_{2}} \rightarrow G_{0, 0} + W$
for every
$u_{1}, u_{2}, w_{1}, w_{2} \in \{ 0, 1 \}$;

\item
$\gamma_{u_{1}, u_{2}}^{V'_{1}}$:
$G_{u_{1}, u_{2}} + V'_{1} \rightarrow G_{1 - u_{1}, u_{2}} + V'_{1}$
for every
$u_{1}, u_{2} \in \{ 0, 1 \}$
and
$V'_{1} \in \Upsilon'_{1, 1 - u_{1}}$;
and

\item
$\gamma_{u_{1}, u_{2}}^{V'_{2}}$:
$G_{u_{1}, u_{2}} + V'_{2} \rightarrow G_{u_{1}, 1 - u_{2}} + V'_{2}$
for every
$u_{1}, u_{2} \in \{ 0, 1 \}$
and
$V'_{2} \in \Upsilon'_{2, 1 - u_{2}}$.

\end{itemize}

Finally, the voter species of $\Pi_{\xi}$ are defined as
\[\textstyle
\Upsilon_{\xi, 0}
\, = \,
\left\{ G_{u_{1}, u_{2}} \mid \xi(u_{1}, u_{2}) = 0 \right\}
\qquad \text{and} \qquad
\Upsilon_{\xi, 1}
\, = \,
\left\{ G_{u_{1}, u_{2}} \mid \xi(u_{1}, u_{2}) = 1 \right\}
\, .
\]

\subparagraph{Analysis.}
For the analysis of protocol $\Pi_{\xi}$, fix some input vector
$\mathbf{x} \in \Naturals^{\Sigma}$
and let
$\c^{0} \in \Naturals^{\Species_{\xi}}$
be a valid initial configuration with
$\c^{0}|_{\Sigma} = \mathbf{x}$.
Consider a weakly fair execution
$\eta = \langle \c^{t}, \zeta^{t} \rangle_{t \geq 0}$
of $\Pi_{\xi}$ emerging from $\c^{0}$.

Let
$t_{\Ignt} \geq 0$
be the earliest step in which the ignition gadget matures in $\eta$ (as
promised in \Lem{}~\ref{lemma:runtime:tools:ignition}).
For
$j \in \{ 1, 2 \}$,
let $t_{j}$ be the earliest step
$t \geq t_{\Ignt}$
such that
$\Applicable(\c^{t}) \cap \NonVoid(\Reactions'_{j}) = \emptyset$.
The halting correctness of $\Pi'_{j}$ and the fact that the species in
$\Species'_{j}$ are catalysts for any reaction in
$\Reactions - \Reactions'_{j}$
ensure that $t_{j}$ exists.
They also yield the following observation.

\begin{observation} \label{observation:crd:semilinear:closure:local-halting}
For each
$j \in \{ 1, 2 \}$,
we have
$\c^{t_{j}}(\Upsilon_{j, v}) > 0$
and
$\c^{t_{j}}(\Upsilon_{j, 1 - v}) = 0$,
where
$v = \psi_{j}(\mathbf{x})$.
Moreover,
$\c^{t}|_{\Species'_{j}} = \c^{t_{j}}|_{\Species'_{j}}$
for every
$t \geq t_{j}$.
\end{observation}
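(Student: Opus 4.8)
The plan is to argue that, from step $t_{\Ignt}$ on, the $\Species'_{j}$-part of $\eta$ evolves exactly as an execution of $\Pi'_{j}$ does, and then to read both claims off the halting correctness of $\Pi'_{j}$. The starting point is the observation that the only non-void reactions of $\Pi_{\xi}$ that can change $\c^{t}|_{\Species'_{j}}$ are those of $\NonVoid(\Reactions'_{j})$: the ignition reactions merely produce $\Species'_{j}$-molecules and do not fire at any step $\geq t_{\Ignt}$ (the ignition gadget being mature there, and $t_{\Ignt} \leq t_{j}$); the $\beta$-reactions and the reactions of $\NonVoid(\Reactions'_{3 - j})$ mention no $\Species'_{j}$-species; and every $\gamma$-reaction that mentions an $\Species'_{j}$-species uses it as a catalyst. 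Furthermore, a reaction of $\NonVoid(\Reactions'_{j})$ acts on $\c^{t}|_{\Species'_{j}}$ exactly as the corresponding $\Pi'_{j}$-reaction, and its applicability depends only on $\c^{t}|_{\Species'_{j}}$. The second claim of the observation is then immediate: by definition of $t_{j}$ no reaction of $\NonVoid(\Reactions'_{j})$ is applicable to $\c^{t_{j}}$, and since $\c^{t}|_{\Species'_{j}}$ changes only when such a reaction fires, an induction on $t \geq t_{j}$ gives $\c^{t}|_{\Species'_{j}} = \c^{t_{j}}|_{\Species'_{j}}$ for all $t \geq t_{j}$ (and no such reaction is ever applicable again), so $\c^{t_{j}}|_{\Species'_{j}}$ is a halting configuration of $\Pi'_{j}$.

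For the first claim, note that $\Pi'_{j}$, being a bijective renaming of $\Pi_{j}$, also haltingly decides $\psi_{j}$ under the weakly fair scheduler. I would then exhibit a weakly fair execution of $\Pi'_{j}$ that starts from the valid initial configuration $\hat{\c}^{0}$ with input $\mathbf{x}$ (in the renamed input species), zero context, and $\c^{0}(F_{\xi}) \geq 1$ fuel molecules, and reaches $\c^{t_{j}}|_{\Species'_{j}}$. Given such an execution, halting correctness forces it to halt into the set of configurations that contain a $\Upsilon'_{j, v}$-voter and no $\Upsilon'_{j, 1 - v}$-voter, where $v = \psi_{j}(\mathbf{x})$; since it cannot leave $\c^{t_{j}}|_{\Species'_{j}}$ and passes through no halting configuration before reaching it, $\c^{t_{j}}|_{\Species'_{j}}$ itself lies in that set, which is exactly the first claim. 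To build the execution I would take a path from $\hat{\c}^{0}$ to $\c^{t_{j}}|_{\Species'_{j}}$ in $\Pi'_{j}$ that uses only non-void reactions, and then extend it to an infinite execution by round-robin scheduling of the void reactions applicable at $\c^{t_{j}}|_{\Species'_{j}}$; this is weakly fair because every non-void reaction applicable along the finite path is inapplicable at the halting configuration $\c^{t_{j}}|_{\Species'_{j}}$.

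The main obstacle is producing the path from $\hat{\c}^{0}$ to $\c^{t_{j}}|_{\Species'_{j}}$: the adversary in $\eta$ may interleave $\NonVoid(\Reactions'_{j})$-reactions with still-pending ignition reactions, so $\c^{t_{\Ignt}}|_{\Species'_{j}}$ need not equal $\hat{\c}^{0}$ and a reaction applicable at its moment of occurrence in $\eta$ need not remain applicable once the ignitions are moved to the front. I would handle this with the standard monotonicity of reachability in CRNs: if $\c \Reaching \c'$ via some sequence of reactions and $\mathbf{d} \geq \c$ componentwise, then $\mathbf{d}$ admits the same sequence and reaches some $\mathbf{d}' \geq \c'$. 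Since all ignitions are completed by step $t_{\Ignt} \leq t_{j}$, the configuration $\hat{\c}^{0}$ componentwise dominates the $\Species'_{j}$-content at every step where only some of the ignitions have fired; hence replaying, in their order of occurrence, exactly the reactions of $\NonVoid(\Reactions'_{j})$ that appear in $\eta$ (all of which, by the second claim, occur before step $t_{j}$) stays applicable throughout and ends precisely at $\c^{t_{j}}|_{\Species'_{j}}$.
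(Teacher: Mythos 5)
Your proof is correct and follows the same route the paper takes, which states the observation without a separate proof, attributing it to the halting correctness of $\Pi'_{j}$ together with the fact that $\Species'_{j}$ species act only as catalysts (or products of ignition reactions, which cannot fire after $t_{\Ignt} \leq t_{j}$) in reactions outside $\Reactions'_{j}$. You supply the details the paper leaves implicit — in particular the replay-plus-monotonicity argument showing that the projected reaction sequence, prepended with all ignitions, yields a weakly fair valid execution of $\Pi'_{j}$ ending exactly at $\c^{t_{j}}|_{\Species'_{j}}$ — and these details are sound (the one loosely worded invariant should read that the replayed configuration, not $\hat{\c}^{0}$ itself, dominates the actual $\Species'_{j}$-content at each step).
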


Let
$t_{\max} = \max \{ t_{1}, t_{2} \}$.
The only non-void reactions that can be applicable from step $t_{\max}$ onward
are the $\beta$ and $\gamma$ reactions.
Moreover, \Obs{}~\ref{observation:crd:semilinear:closure:local-halting}
implies that the number of $\gamma$ reactions that can be scheduled between
any two consecutive $\beta$ reactions is up-bounded by a linear function of
the molecular count of the $G$ species.
Since each $\beta$ reaction decreases the molecular count of the $G$ species
and since this molecular count never increases, it follows that there exists a
step
$t_{\mathrm{leader}} \geq t_{\max}$
such that
$\c^{t_{\mathrm{leader}}}(\{ G_{0, 0}, G_{0, 1}, G_{1, 0}, G_{1, 1}\}) = 1$.

From step $t_{\mathrm{leader}}$ onward, the only non-void reactions that can
be applicable are $\gamma$ reactions and these can be scheduled at most twice
(in total) until $\eta$ reaches a halting configuration in step
$t^{*} \geq t_{\mathrm{leader}}$.
\Cor{}~\ref{corollary:crd:semilinear:closure:correctness} follows by the choice
of
$\Upsilon_{\xi, 0}$
and
$\Upsilon_{\xi, 1}$.

\begin{corollary} \label{corollary:crd:semilinear:closure:correctness}
Execution $\eta$ halts in a configuration that includes a single $G$ molecule
that belongs to
$\Upsilon_{\xi, v}$,
where
$v = \xi(\psi_{1}(\mathbf{x}), \psi_{2}(\mathbf{x}))$.
\end{corollary}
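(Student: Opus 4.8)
The plan is to obtain the corollary almost directly from the structural facts already assembled for $\eta$, adding only the short argument that pins down which $G$ molecule survives. First I would record what the discussion preceding the corollary gives: from step $t_{\mathrm{leader}}$ on, the ignition gadget is mature, exactly one $G$ molecule is present (so no $\beta$ reaction is applicable), and by \Obs{}~\ref{observation:crd:semilinear:closure:local-halting} no non-void reaction of $\Reactions'_{1} \cup \Reactions'_{2}$ is applicable; hence the only non-void reactions that can fire past $t_{\mathrm{leader}}$ are $\gamma$ reactions acting on the unique $G$ molecule. Writing $v_{j} = \psi_{j}(\mathbf{x})$, \Obs{}~\ref{observation:crd:semilinear:closure:local-halting} also says that from step $t_{\max}$ on the only voters present among $\Upsilon'_{j,0} \cup \Upsilon'_{j,1}$ lie in $\Upsilon'_{j,v_{j}}$, so an applicable $\gamma$ reaction can only flip a coordinate of the $G$ molecule from $1-v_{j}$ to $v_{j}$, never back. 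Thus at most two $\gamma$ reactions can ever fire after $t_{\mathrm{leader}}$, and weak fairness forces $\eta$ into a halting configuration $\c^{t^{*}}$ for some step $t^{*} \ge t_{\mathrm{leader}}$.

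Next I would identify the survivor. Let $G_{u_{1},u_{2}}$ be the unique $G$ molecule of $\c^{t^{*}}$. If $u_{1} = 1-v_{1}$, then since $\c^{t^{*}}(\Upsilon'_{1,v_{1}}) = \c^{t_{1}}(\Upsilon'_{1,v_{1}}) > 0$ by \Obs{}~\ref{observation:crd:semilinear:closure:local-halting}, the reaction $\gamma^{V'_{1}}_{u_{1},u_{2}}$ with any $V'_{1} \in \Upsilon'_{1,v_{1}} = \Upsilon'_{1,1-u_{1}}$ is applicable to $\c^{t^{*}}$, contradicting that $\c^{t^{*}}$ is halting; hence $u_{1} = v_{1} = \psi_{1}(\mathbf{x})$, and symmetrically $u_{2} = \psi_{2}(\mathbf{x})$. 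So $\c^{t^{*}}$ contains exactly one $G$ molecule, namely $G_{\psi_{1}(\mathbf{x}),\psi_{2}(\mathbf{x})}$. By the definition of the voter sets, this molecule lies in $\Upsilon_{\xi,v}$ with $v = \xi(\psi_{1}(\mathbf{x}),\psi_{2}(\mathbf{x}))$, and since $\Upsilon_{\xi,1-v} \subseteq \{G_{0,0},G_{0,1},G_{1,0},G_{1,1}\}$ while only one $G$ molecule is present, $\c^{t^{*}}(\Upsilon_{\xi,1-v}) = 0$ as well; this is exactly the claim.

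The only step needing a bit more care is the termination assertion --- used already in the text to produce $t_{\mathrm{leader}}$ and reused above to produce $t^{*}$ --- that only finitely many non-void reactions can fire after $t_{\max}$. I would make this precise with a well-founded measure: the number of $G$ molecules never increases and strictly decreases at every $\beta$ reaction, and for a fixed number of $G$ molecules the total number of $G$-coordinates still equal to $1-v_{j}$ (summed over the two coordinates and all $G$ molecules) never increases and strictly decreases at every $\gamma$ reaction. Hence finitely many non-void reactions remain after $t_{\max}$, so weak fairness genuinely drives $\eta$ to a halting configuration. Everything else is immediate from the interface of $\Pi_{\xi}$ and the already-proved \Obs{}~\ref{observation:crd:semilinear:closure:local-halting}, so I do not expect a substantive obstacle.
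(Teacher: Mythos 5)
Your proposal is correct and follows essentially the same route as the paper: the paper also establishes termination by noting that the $G$ molecular count never increases and strictly decreases at each $\beta$ reaction, while the number of $\gamma$ reactions between consecutive $\beta$ reactions is bounded (your lexicographic measure is just a cleaner phrasing of this), then observes that after $t_{\mathrm{leader}}$ at most two $\gamma$ reactions remain before halting. The only difference is that you spell out the final identification of the surviving molecule as $G_{\psi_1(\mathbf{x}),\psi_2(\mathbf{x})}$ (via the contradiction with halting if some coordinate were $1-v_j$), which the paper compresses into ``follows by the choice of $\Upsilon_{\xi,0}$ and $\Upsilon_{\xi,1}$''.
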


For the halting runtime analysis, let
$n = \| \c^{0} \|$
denote the molecular count of the initial configuration and fix some
skipping policy $\SkipPol$.
For
$j \in \{ 1, 2 \}$,
let $\RunPol_{j}$ be a runtime policy for the CRD $\Pi_{j}$ that realizes
$\RunTime_{\Halt}^{\Pi_{j}}(n)$
and let $\RunPol'_{j}$ be the runtime policy for $\Pi'_{j}$ derived from
$\RunPol_{j}$ by replacing each species
$A \in \Species_{j}$
with the $\Pi'_{j}$ designated species $A'_{j}$.
We shall bound
$\RunTime_{\Halt}^{\Pi_{\xi}}(n)$
by introducing a runtime policy $\RunPol$ for $\Pi_{\xi}$ (defined
independently of $\eta$ and $\SkipPol$) and showing that
$\RunTime_{\Halt}^{\RunPol, \SkipPol}(\eta)
\leq
O (\RunTime_{\Halt}^{\Pi_{1}}(n) + \RunTime_{\Halt}^{\Pi_{2}}(n) + n)$.

The runtime policy $\RunPol$ is defined as follows for every configuration
$\c \in \Naturals^{\Species_{\xi}}$:
\begin{itemize}

\item
if
$\c(\Sigma \cup \{ F_{\xi} \}) > 0$,
then $\RunPol(\c)$ consists of the ignition reactions;

\item
else if
$\c|_{\Species'_{1}}$
is not a halting configuration of $\Pi'_{1}$, then
$\RunPol(\c) = \RunPol'_{1}(\c|_{\Species'_{1}})$;

\item
else if
$\c|_{\Species'_{2}}$
is not a halting configuration of $\Pi'_{2}$, then
$\RunPol(\c) = \RunPol'_{2}(\c|_{\Species'_{2}})$;

\item
else if
$\c(\{ G_{0, 0}, G_{0, 1}, G_{1, 0}, G_{1, 1} \}) > 1$,
then $\RunPol(\c)$ consists of the $\beta$ reactions;

\item
else $\RunPol(\c)$ consists of the $\gamma$ reactions.

\end{itemize}

Let $\EffStep(i)$ and
$\EffConf^{i} = \c^{\EffStep(i)}$
be the effective step and effective configuration, respectively, of round
$i \geq 0$
under $\RunPol$ and $\SkipPol$.
We establish the desired upper bound on
$\RunTime_{\Halt}^{\RunPol, \SkipPol}(\eta)$
by introducing the following four rounds:
\begin{itemize}

\item
$i_{\Ignt}
=
\min \{ i \geq 0 \mid \EffStep(i) \geq t_{\Ignt} \}$;

\item
$i_{\max} = \min \{ i \geq i_{\Ignt} \mid \EffStep(i) \geq t_{\max} \}$;

\item
$i_{\mathrm{leader}}
=
\min \{ i \geq i_{\max} \mid \EffStep(i) \geq t_{\mathrm{leader}} \}$;
and

\item
$i^{*} = \min \{ i \geq i_{\mathrm{leader}} \mid \EffStep(i) \geq t^{*} \}$.

\end{itemize}
\Lem{}~\ref{lemma:runtime:tools:ignition} guarantees that the total
contribution of rounds
$0 \leq i < i_{\Ignt}$
to
$\RunTime_{\Halt}^{\RunPol}(\eta)$
is up-bounded by
$O (\log n)$.
For the contribution of the subsequent rounds to
$\RunTime_{\Halt}^{\RunPol}(\eta)$,
we establish the following three lemmas.

\begin{lemma} \label{lemma:crd:semilinear:closure:bound-contribution-i-max}
The total contribution of rounds
$i_{\Ignt} \leq i < i_{\max}$
to
$\RunTime_{\Halt}^{\RunPol, \SkipPol}(\eta)$
is up-bounded by
$O (\RunTime_{\Halt}^{\Pi_{1}}(n) + \RunTime_{\Halt}^{\Pi_{2}}(n))$.
\end{lemma}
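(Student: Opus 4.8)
\emph{Proof approach.} The plan is to split the rounds $i$ with $i_{\Ignt} \le i < i_{\max}$ according to which embedded sub-protocol the runtime policy $\RunPol$ targets, bound the total contribution of each part by the optimal halting runtime of that sub-protocol, and add the two bounds. For every such round, $\EffStep(i) \ge t_{\Ignt}$, so the ignition gadget is mature and $\EffConf^{i}$ carries no ignition molecules; and $\EffStep(i) < t_{\max} = \max\{t_{1}, t_{2}\}$, so $\EffStep(i) < t_{j}$ for some $j \in \{1,2\}$, which means that $\EffConf^{i}|_{\Species'_{j}}$ is not a halting configuration of $\Pi'_{j}$. Consulting the definition of $\RunPol$, the round then belongs to exactly one of two classes: $I_{1}$, on which $\RunPol(\EffConf^{i}) = \RunPol'_{1}(\EffConf^{i}|_{\Species'_{1}}) \subseteq \Reactions'_{1}$, and $I_{2}$, on which $\EffConf^{i}|_{\Species'_{1}}$ is halting for $\Pi'_{1}$ while $\RunPol(\EffConf^{i}) = \RunPol'_{2}(\EffConf^{i}|_{\Species'_{2}}) \subseteq \Reactions'_{2}$. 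By \Obs{}~\ref{observation:crd:semilinear:closure:local-halting}, the restriction to $\Species'_{1}$ becomes and remains halting at step $t_{1}$, and $\EffStep(\cdot)$ is strictly increasing, so $I_{1}$ and $I_{2}$ are two consecutive (possibly empty) index intervals that together exhaust $\{i_{\Ignt}, \dots, i_{\max} - 1\}$. It therefore suffices to prove $\sum_{i \in I_{1}} \TemporalCost^{\RunPol}(\EffConf^{i}) \le O(\RunTime_{\Halt}^{\Pi_{1}}(n))$; the bound on $\sum_{i \in I_{2}}$ is symmetric, and summing the two proves the lemma.

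\emph{Step 1 (temporal-cost transfer).} For $i \in I_{1}$, I claim that $\TemporalCost^{\RunPol}(\EffConf^{i}) = \Theta\bigl(\TemporalCost^{\RunPol'_{1}}(\EffConf^{i}|_{\Species'_{1}})\bigr)$, the right-hand side being the temporal cost in $\Pi'_{1}$ run with its volume, which is $\Theta(n)$. In a stochastic execution of $\Pi_{\xi}$ emerging from $\EffConf^{i}$, no ignition reaction is ever applicable (none of the ignition species is present), and every reaction of $\Pi_{\xi}$ other than those of $\Reactions'_{1}$ --- the $\beta$ reactions, the $\gamma$ reactions, the reactions of $\Reactions'_{2}$, and the void reactions --- has the species of $\Species'_{1}$ as catalysts; hence the $\Species'_{1}$-coordinates of the configuration change only when a non-void $\Reactions'_{1}$ reaction is scheduled. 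Moreover, the propensity in $\Pi_{\xi}$ of a $\Reactions'_{1}$ reaction, whose reactants lie wholly in $\Species'_{1}$, equals its propensity in $\Pi'_{1}$ up to the ratio of the two volumes (and $|\Reactions_{\xi}(\mathbf{r})| = |\Reactions'_{1}(\mathbf{r})|$ for every reactant multiset $\mathbf{r}$ over $\Species'_{1}$, by construction), this ratio being $\Theta(1)$ as both volumes are $\Theta(n)$. Consequently the $\Species'_{1}$-marginal of the stochastic $\Pi_{\xi}$-execution is, up to a bounded time-scaling, a stochastic execution of $\Pi'_{1}$ emerging from $\EffConf^{i}|_{\Species'_{1}}$; and since the event that terminates the $\Stop$-process of round $i$ --- a target reaction scheduled, or every target reaction rendered inapplicable --- is a function of the $\Species'_{1}$-coordinates alone, its expected continuous occurrence time matches (up to a constant) the corresponding quantity for $\Pi'_{1}$, which is exactly $\TemporalCost^{\RunPol'_{1}}(\EffConf^{i}|_{\Species'_{1}})$.

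\emph{Step 2 (round-structure transfer).} I now exhibit a weakly fair valid halting execution $\eta_{1}$ of $\Pi'_{1}$ of initial molecular count $n$, together with a skipping policy $\SkipPol_{1}$, whose effective configurations $\EffConf_{1}^{0}, \EffConf_{1}^{1}, \dots$ begin with the list $(\EffConf^{i}|_{\Species'_{1}})_{i \in I_{1}}$. Reordering the ignition steps of $\eta$ before all of its $\Reactions'_{1}$-steps --- advancing a molecule-producing ignition past a $\Reactions'_{1}$ reaction cannot disable that reaction --- shows that $\c^{t_{\Ignt}}|_{\Species'_{1}}$ is $\Pi'_{1}$-reachable from the valid initial configuration $\mathbf{x}'_{1} + \c^{0}(F_{\xi}) \cdot F'_{1}$, whose molecular count is $\|\mathbf{x}\| + \c^{0}(F_{\xi}) = n$ since $\Pi_{\xi}$ is leaderless. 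Let $\eta_{1}$ traverse this reaching path, then follow the $\Species'_{1}$-projection of $\eta$ from step $t_{\Ignt}$ on with the stationary steps deleted, and then (from step $t_{1}$, where a $\Pi'_{1}$-halting configuration is reached) round-robin over the applicable void reactions; $\eta_{1}$ is weakly fair because its tail is the $\Species'_{1}$-projection of a weakly fair $\Pi_{\xi}$-execution and applicability to $\c^{t}$ and to $\c^{t}|_{\Species'_{1}}$ coincide for reactions over $\Species'_{1}$. Letting $\SkipPol_{1}$ skip the reaching path and otherwise mimic $\SkipPol$ inside each round, and observing that the $\Species'_{1}$-projection of round $i$ of $\eta$ is, step for step, a round of $\Pi'_{1}$ with effective configuration $\EffConf^{i}|_{\Species'_{1}}$ and target $\RunPol'_{1}(\EffConf^{i}|_{\Species'_{1}})$ (the $\Stop$-stopping conditions being invariant under the projection), an induction gives $\EffConf_{1}^{k} = \EffConf^{i_{\Ignt}+k}|_{\Species'_{1}}$ for all $k < |I_{1}|$, while $\eta_{1}$ has already halted by the time round $|I_{1}|$ would start. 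Since $\RunPol'_{1}$ realizes $\RunTime_{\Halt}^{\Pi'_{1}}(n) = \RunTime_{\Halt}^{\Pi_{1}}(n)$, Step 1 now yields
\[
\sum_{i \in I_{1}} \TemporalCost^{\RunPol}(\EffConf^{i})
\,\le\,
O(1) \cdot \sum_{k=0}^{|I_{1}|-1} \TemporalCost^{\RunPol'_{1}}(\EffConf_{1}^{k})
\,\le\,
O\bigl(\RunTime_{\Halt}^{\RunPol'_{1}, \SkipPol_{1}}(\eta_{1})\bigr)
\,\le\,
O\bigl(\RunTime_{\Halt}^{\Pi_{1}}(n)\bigr),
\]
and the symmetric argument for $I_{2}$ completes the proof.

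\emph{Main obstacle.} The delicate part is Step 2: one must choose $\SkipPol_{1}$ so that the round boundaries of $\eta_{1}$ line up with the $\Species'_{1}$-projections of $\eta$'s rounds, and verify that the initial reaching path and the trailing halting tail of $\eta_{1}$ contribute nothing to $\RunTime_{\Halt}^{\RunPol'_{1}, \SkipPol_{1}}(\eta_{1})$. A secondary point is confirming that the volume of $\Pi_{\xi}$ and that of the $n$-molecule execution of $\Pi'_{1}$ are comparable, so that the temporal-cost transfer of Step 1 is lossless up to a constant factor.
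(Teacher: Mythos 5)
Your proposal is correct and follows essentially the same route as the paper: the paper's own proof is a one-line appeal to exactly the two facts you use, namely the halting runtime bounds of $\Pi_{1}$ and $\Pi_{2}$ and the fact that the $\Species'_{j}$ species are catalysts for every reaction outside $\Reactions'_{j}$ (so the $\Species'_{j}$-marginal evolves as a weakly fair $\Pi'_{j}$-execution and the target propensities, hence temporal costs, transfer up to constants). Your Steps 1 and 2 are a careful spelling-out of the details the paper leaves implicit, and they are sound.
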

\begin{proof}
Follows by the halting runtime bound of $\Pi_{1}$ and $\Pi_{2}$ and the fact
that the species in $\Species'_{1}$ and $\Species'_{2}$ are catalysts for any
reaction in
$\Reactions - \Reactions'_{1}$
and
$\Reactions - \Reactions'_{2}$,
respectively.
\end{proof}

\begin{lemma} \label{lemma:crd:semilinear:closure:bound-contribution-i-leader}
The total contribution of rounds
$i_{\max} \leq i < i_{\mathrm{leader}}$
to
$\RunTime_{\Halt}^{\RunPol, \SkipPol}(\eta)$
is up-bounded by
$O (n)$.
\end{lemma}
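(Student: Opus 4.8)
The plan is to follow the same template as the proofs of \Lem{}~\ref{lemma:crd:semilinear:threshold:temp-cost-leader} and \Lem{}~\ref{lemma:crd:semilinear:modulo:temp-cost-beta}, with the role of the leader count played here by the molecular count of the $G$ species. First I would fix a round $i$ with $i_{\max} \leq i < i_{\mathrm{leader}}$ and note that, since effective steps are strictly increasing, $t_{\max} \leq \EffStep(i) < t_{\mathrm{leader}}$. From step $t_{\max}$ onward, \Obs{}~\ref{observation:crd:semilinear:closure:local-halting} (or rather its proof, which shows that the $\Species'_{j}$-restriction stays a halting configuration of $\Pi'_{j}$ along \emph{every} path out of $\c^{t_{\max}}$, as the $\Species'_{j}$ species are catalysts for every reaction in $\Reactions - \Reactions'_{j}$) guarantees that the only non-void reactions that can ever be applicable are the $\beta$ and $\gamma$ reactions; consequently, by the definition of the runtime policy $\RunPol$ in this regime, $\RunPol(\EffConf^{i}) = Q = \{ \beta_{u_{1}, u_{2}, w_{1}, w_{2}} \mid u_{1}, u_{2}, w_{1}, w_{2} \in \{ 0, 1 \} \}$. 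Writing $g_{i} = \EffConf^{i}(\{ G_{0, 0}, G_{0, 1}, G_{1, 0}, G_{1, 1} \})$ for the $G$-count of the effective configuration, and recalling that $t_{\mathrm{leader}}$ is the first step at or after $t_{\max}$ in which this count equals $1$ --- a count that is non-increasing past $t_{\Ignt}$ and that is decreased only by $\beta$ reactions --- we obtain $g_{i} \geq 2$.

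Next I would invoke \Lem{}~\ref{lemma:runtime:tools:temporal-cost} to bound $\TemporalCost^{\RunPol}(\EffConf^{i})$. Any $\RunPol$-avoiding path out of $\EffConf^{i}$ uses only reactions in $\Reactions - Q$, hence (past $t_{\max}$) only $\gamma$ and void reactions; since the $\gamma$ reactions are catalytic in the voter species and merely flip a $G$-index, every configuration $\c$ with $\EffConf^{i} \Reaching_{\langle \RunPol \rangle} \c$ satisfies $\c(\{ G_{0, 0}, G_{0, 1}, G_{1, 0}, G_{1, 1} \}) = g_{i}$. The total propensity of the $\beta$ reactions in such a $\c$ is then at least $\Omega(g_{i}^{2} / n)$: each of the $\binom{g_{i}}{2}$ unordered pairs of $G$ molecules contributes to the propensity of some $\beta$ reaction, the sizes $|\Reactions(\mathbf{r})|$ of the relevant reactant subsets are $O(1)$, and $\Vol = \Theta(n)$. \Lem{}~\ref{lemma:runtime:tools:temporal-cost} then gives $\TemporalCost^{\RunPol}(\EffConf^{i}) \leq O(n / g_{i}^{2})$.

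Finally I would sum over the rounds. Each round $i_{\max} \leq i < i_{\mathrm{leader}}$ is target-accomplished: condition (II) in the definition of $\Stop$ cannot fire, because the $G$-count drops only when a $\beta$ reaction is scheduled, so absent any scheduled $\beta$ reaction it stays $\geq 2$ and some $\beta$ reaction remains applicable. Hence the reaction closing round $i$ lies in $Q$, it strictly decreases the $G$-count, and --- as this count never subsequently increases --- $g_{i + 1} < g_{i}$. Since the CRD respects finite density we have $g_{i_{\max}} \leq \| \c^{t_{\max}} \| = O(n)$, so the values $g_{i}$ range over distinct integers in $\{ 2, \dots, O(n) \}$, and $\sum_{i = i_{\max}}^{i_{\mathrm{leader}} - 1} \TemporalCost^{\RunPol}(\EffConf^{i}) \leq O(n) \cdot \sum_{g \geq 2} 1 / g^{2} = O(n)$. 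I expect the main obstacle to be the second paragraph --- specifically, justifying cleanly (via \Obs{}~\ref{observation:crd:semilinear:closure:local-halting}) that on a $\RunPol$-avoiding path out of $\EffConf^{i}$ the only applicable non-void reactions are the $G$-count-preserving $\gamma$ reactions, and converting the ``$\binom{g_{i}}{2}$ pairs of $G$ molecules'' observation into the lower bound $\Prp_{\c}(Q) = \Omega(g_{i}^{2} / n)$.
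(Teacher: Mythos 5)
Your overall template is the right one and matches the paper's: the bound $\TemporalCost^{\RunPol}(\EffConf^{i}) \leq O(n/g_{i}^{2})$ via \Lem{}~\ref{lemma:runtime:tools:temporal-cost} (using that $\RunPol$-avoiding paths use only $G$-count-preserving $\gamma$ and void reactions) and the final summation $\sum_{g} n/g^{2} = O(n)$ are exactly what the paper does. However, there is a genuine gap in your third paragraph: the claim that every round $i_{\max} \leq i < i_{\mathrm{leader}}$ is target-accomplished is false, and your justification misreads condition (II) of the operator $\Stop$. Condition (II) is $Q \subseteq \bigcup_{t \leq t' \leq s} \overline{\Applicable}(\c^{t'})$, i.e., each reaction of $Q$ must be inapplicable at \emph{some} (possibly different) step of the interval; it does not require all of $Q$ to be simultaneously inapplicable at a single step. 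Your argument only shows that at every step \emph{some} $\beta$ reaction is applicable, which does not block (II). Concretely, suppose $\psi_{1}(\mathbf{x}) = \psi_{2}(\mathbf{x}) = 1$ and the effective configuration has $G$-part $G_{0,0} + G_{0,1}$: at the effective step every $\beta$ reaction except the one with reactants $G_{0,0} + G_{0,1}$ is already inapplicable, and after the scheduler applies the (non-void, applicable) reaction $\gamma_{0,0}^{V'_{2}}$ the $G$-part becomes $2\,G_{0,1}$, making that last $\beta$ reaction inapplicable too. Condition (II) then fires, the round ends target-deprived, and $g_{i+1} = g_{i}$, so your strict-decrease argument (and hence the claim that the $g_{i}$ are distinct) breaks. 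This regime behaves like the leader-election phase of the threshold CRD (\Lem{}~\ref{lemma:crd:semilinear:threshold:temp-cost-leader}), not like the modulo CRD where $Q$ exhausts all applicable non-void reactions.

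The repair is the one the paper uses: $g_{i+1} \leq g_{i}$ holds unconditionally, and one argues via \Obs{}~\ref{observation:crd:semilinear:closure:local-halting} that at most $O(1)$ consecutive rounds can share the same $G$-count --- after $t_{\max}$ the applicable $\gamma$ reactions can only push each $G$ molecule's indices toward $(\psi_{1}(\mathbf{x}), \psi_{2}(\mathbf{x}))$, so only boundedly many reconfigurations of the $G$ species (as a set) are possible before some $\beta$ reaction remains continuously applicable and forces a target-accomplished round. This multiplies your sum by only a constant, preserving the $O(n)$ bound.
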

\begin{proof}
Given a round
$i_{\max} \leq i < i_{\mathrm{leader}}$,
let
$\ell_{i} = \EffConf^{i}(\{ G_{0, 0}, G_{0, 1}, G_{1, 0}, G_{1, 1} \})$
and recall that
$\RunPol(\EffConf^{i})
=
\left\{
\beta_{u_{1}, u_{2}, w_{1}, w_{2}}
\mid
u_{1}, u_{2}, w_{1}, w_{2} \in \{ 0, 1 \}
\right\}$.
Employing \Lem{}~\ref{lemma:runtime:tools:temporal-cost}, we conclude that the
contribution of round $i$ to
$\RunTime_{\Halt}^{\RunPol, \SkipPol}(\eta)$
is up-bounded as
$\TemporalCost^{\RunPol}(\EffConf^{i}) \leq O (n / \ell_{i}^{2})$.
Notice that
$\ell_{i + 1} \leq \ell_{i}$
and that the inequality is strict if round $i$ is target-accomplished.
This is no longer guaranteed if round $i$ is target-deprived, however we argue
that if
$\ell_{i'} = \ell_{i}$
for some
$i' > i$,
then
$i' - i \leq O (1)$.
Indeed, this is ensured by
\Obs{}~\ref{observation:crd:semilinear:closure:local-halting} as
$i \geq i_{\max}$.
Therefore, the total contribution of rounds
$i_{\max} \leq i < i_{\mathrm{leader}}$
to
$\RunTime_{\Halt}^{\RunPol, \SkipPol}(\eta)$
is up-bounded by
\[\textstyle
\sum_{j = 2}^{n} \frac{n}{j^{2}}
\, \leq \,
O (n) \cdot \sum_{j = 1}^{\infty} \frac{1}{j^{2}}
\, \leq \,
O (n)
\, ,
\]
thus establishing the assertion.
\end{proof}

\begin{lemma} \label{lemma:crd:semilinear:closure:bound-contribution-i-star}
The total contribution of rounds
$i_{\mathrm{leader}} \leq i < i^{*}$
to
$\RunTime_{\Halt}^{\RunPol, \SkipPol}(\eta)$
is up-bounded by
$O (n)$.
\end{lemma}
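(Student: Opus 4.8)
The plan is to show that at most a constant number of rounds fall in the range $i_{\mathrm{leader}} \le i < i^{*}$ and that each of them contributes only $O(n)$ to $\RunTime_{\Halt}^{\RunPol,\SkipPol}(\eta)$. First I would fix such a round $i$ and record its structure. Since $\EffStep$ is strictly increasing, the definitions of $i_{\mathrm{leader}}$ and $i^{*}$ give $t_{\mathrm{leader}} \le \EffStep(i) < t^{*}$. Hence the effective configuration $\EffConf^{i} = \c^{\EffStep(i)}$ lies past the halting of both $\Pi'_{1}$ and $\Pi'_{2}$ (as $\EffStep(i) \ge t_{\mathrm{leader}} \ge t_{\max}$), contains a single $G$ molecule, and is not a halting configuration of $\Pi_{\xi}$ (as $\EffStep(i) < t^{*}$). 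Consequently, the definition of $\RunPol$ puts us in its last case, so $\RunPol(\EffConf^{i})$ is the set of $\gamma$ reactions; and since the only non-void reactions that can be applicable from step $t_{\mathrm{leader}}$ onward are $\gamma$ reactions, the non-halting of $\EffConf^{i}$ means that at least one $\gamma$ reaction is applicable to $\EffConf^{i}$.

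Second, I would bound the temporal cost of $\EffConf^{i}$ using Lemma~\ref{lemma:runtime:tools:temporal-cost}. The only non-$\gamma$ reactions of $\Pi_{\xi}$ that could be applicable to a configuration reachable from $\EffConf^{i}$ via a $\RunPol$-avoiding path are void ones: the ignition reactions are gone, the $\Pi'_{1}$- and $\Pi'_{2}$-reactions are dead by Observation~\ref{observation:crd:semilinear:closure:local-halting}, and the $\beta$ reactions require two $G$ molecules while only one is present. Therefore the only configuration reachable from $\EffConf^{i}$ along a $\RunPol$-avoiding path is $\EffConf^{i}$ itself, and the applicable $\gamma$ reaction --- bimolecular on two distinct species each present in count at least $1$ --- guarantees $\Prp_{\EffConf^{i}}(\RunPol(\EffConf^{i})) \ge 1/(\Vol \cdot O(1)) = \Omega(1/n)$. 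Lemma~\ref{lemma:runtime:tools:temporal-cost} then yields $\TemporalCost^{\RunPol}(\EffConf^{i}) \le O(n)$.

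Third, I would argue that $[i_{\mathrm{leader}}, i^{*})$ contains at most two rounds. The crux is that every round in this range is target-accomplished: as long as no $\gamma$ reaction is scheduled, the single $G$ molecule cannot change its subscript and the local votes stay frozen, so the $\gamma$ reaction applicable to $\EffConf^{i}$ remains applicable throughout the round; hence condition~(II) in the definition of the $\Stop$ operator never triggers and the round must end, via condition~(I), by scheduling a $\gamma$ reaction, necessarily at a step in $[t_{\mathrm{leader}}, t^{*})$. As established in the discussion preceding the lemma, at most two $\gamma$ reactions are scheduled in $\eta$ from step $t_{\mathrm{leader}}$ onward, and since distinct rounds occupy disjoint step intervals, each such round accounts for a distinct one of these firings; hence there are at most two rounds in the range, and combining with the previous paragraph gives the desired $O(n)$ total. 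The only real subtlety --- more bookkeeping than difficulty --- is ensuring the adversarial skipping policy cannot slip a halting effective configuration into this range (which would break the ``not halting $\Rightarrow$ $\gamma$ applicable'' step) and that no target-deprived round can occur; both follow from the inequality $\EffStep(i) < t^{*}$ and from the persistence of an applicable $\gamma$ reaction until some $\gamma$ fires.
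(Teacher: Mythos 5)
Your proposal is correct and follows essentially the same route as the paper, whose proof is simply the one-liner ``Follows since there can be at most two such rounds'' (relying on the earlier observation that at most two $\gamma$ reactions fire after step $t_{\mathrm{leader}}$, plus the generic $O(n)$ bound on a single round's temporal cost). You have merely spelled out the details the paper leaves implicit --- that each such round is target-accomplished and ends with a $\gamma$ firing, and that the applicable $\gamma$ reaction has propensity $\Omega(1/n)$ --- and these details check out.
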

\begin{proof}
Follows since there can be at most two such rounds.
\end{proof}

Combining \Lem{}~\ref{lemma:runtime:tools:ignition} with
\Lem{}\ \ref{lemma:crd:semilinear:closure:bound-contribution-i-max},
\ref{lemma:crd:semilinear:closure:bound-contribution-i-leader}, and
\ref{lemma:crd:semilinear:closure:bound-contribution-i-star},
we conclude that
$\RunTime_{\Halt}^{\RunPol, \SkipPol}(\eta)
=
O (\RunTime_{\Halt}^{\Pi_{1}}(n) + \RunTime_{\Halt}^{\Pi_{2}}(n) + n)$,
which yields
\Prop{}~\ref{proposition:crd:semilinear:closure}.
\LongVersionEnd 

\subsection{Detection Predicates}
\label{section:crd:detection}
For a vector
$\mathbf{x} \in \Naturals^{\Sigma}$,
let
$\mathbf{x}_{\downarrow} \in \{ 0, 1 \}^{\Sigma} \subset \Naturals^{\Sigma}$
be the vector defined
\LongVersion 
by setting
$\mathbf{x}_{\downarrow}(A) = 1$
if
$\mathbf{x}(A) > 0$;
and
$\mathbf{x}_{\downarrow}(A) = 0$
otherwise%
\LongVersionEnd 
\ShortVersion 
so that
$\mathbf{x}_{\downarrow}(A) = 1
\Longleftrightarrow
\mathbf{x}(A) > 0$%
\ShortVersionEnd 
.
A predicate
$\psi : \Naturals^{\Sigma} \rightarrow \{ 0, 1 \}$
is a \emph{detection} predicate
if
$\psi(\mathbf{x}) = \psi(\mathbf{x}_{\downarrow})$
for
\LongVersion 
every vector
\LongVersionEnd 
\ShortVersion 
all
\ShortVersionEnd 
$\mathbf{x} \in \Naturals^{\Sigma}$
(cf.\
\cite{AlistarhDKSU2017robust,
ChenCDS2017speed-faults,
DudekK2018universal}).
Chen et al.~\cite{ChenCDS2017speed-faults} prove that
\LongVersion 
in the context of the strongly fair adversarial scheduler,
\LongVersionEnd 
a predicate
$\psi : \Naturals^{\Sigma} \rightarrow \{ 0, 1 \}$
can be stably decided
\ShortVersion 
under the strongly fair adversarial scheduler
\ShortVersionEnd 
by a stabilization speed fault free CRD if and only if it is a detection
predicate.
\Cor{}~\ref{corollary:weak-correctness-implies-strong-correctness} ensures
that the only if direction translates to our weakly fair adversarial
scheduler;
employing \Lem{}~\ref{lemma:speed-fault-lower-bound}, we conclude that a
non-detection predicate cannot be decided by a CRD whose stabilization
\LongVersion 
(and hence also halting)
\LongVersionEnd 
runtime is better than
$\Omega (n)$.
For the if direction, the construction in \cite{ChenCDS2017speed-faults}
yields leaderless CRDs that haltingly decide $\psi$ whose expected halting
runtime under the stochastic scheduler is
$O (\log n)$.
The following theorem states that the same (asymptotic) runtime upper bound
can be obtained under the weakly fair adversarial scheduler%
\ShortVersion 
;
the theorem is proved in the attached full version, where we also explain why
the promised upper bound is asymptotically tight%
\ShortVersionEnd 
.

\begin{theorem} \label{theorem:crd:detection}
For every detection predicate
$\psi : \Naturals^{\Sigma} \rightarrow \{ 0, 1 \}$,
there exists a leaderless CRD that haltingly decides $\psi$ whose
halting runtime is
$O (\log n)$.
Moreover, the CRD is designed so that all molecules in the halting
configuration are voters.
\end{theorem}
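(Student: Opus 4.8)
The plan is to take (essentially) the epidemic‑based detection CRD underlying the construction of Chen et al.~\cite{ChenCDS2017speed-faults}, re‑establish its correctness against the \emph{weakly} fair adversarial scheduler through the configuration‑digraph characterization of \Sect{}~\ref{section:configuration-digraph}, and then equip it with a runtime policy whose per‑round temporal costs telescope to $O(\log n)$. The protocol I would use has $\Sigma\cup\{F\}$ as ignition species and, for every $T\subseteq\Sigma$, a working species $M_{T}$ read as ``this molecule has witnessed exactly the input species in $T$''; the ignition reactions are $\iota_{A}\colon A\to M_{\{A\}}$ for $A\in\Sigma$ and $\iota_{F}\colon F\to M_{\emptyset}$, and the only non‑void working reactions are the \emph{epidemic merges} $\beta_{T,T'}\colon M_{T}+M_{T'}\to M_{T\cup T'}+M_{T\cup T'}$ for all $T\neq T'$. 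The voter species are $\Upsilon_{v}=\{M_{T}\mid\psi(\mathbf{z}_{T})=v\}$, where $\mathbf{z}_{T}\in\{0,1\}^{\Sigma}$ is the indicator of $T$; the CRD is leaderless and density preserving, and since $\psi$ is a detection predicate, $\psi(\mathbf{x})=\psi(\mathbf{x}_{\downarrow})=\psi(\mathbf{z}_{T^{*}})$ for $T^{*}:=\{A\mid\mathbf{x}(A)>0\}$.

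For correctness I would argue that once the ignition gadget matures only $M$‑molecules remain, the union $\bigcup\{T\mid\c(M_{T})>0\}$ equals $T^{*}$ and stays equal to $T^{*}$ (merges only enlarge the individual sets), and every present $M_{T}$ has $T\subseteq T^{*}$; a reachable configuration is halting iff all molecules are $M_{T^{*}}$, since otherwise the union being $T^{*}$ forces two present, subset‑incomparable types and hence an applicable non‑void $\beta$. The potential $\Psi(\c)=\sum_{T}\c(M_{T})\,(|T^{*}|-|T|)\in[0,n|\Sigma|]$ strictly decreases with every non‑void $\beta$, so $D^{\Pi}_{\c^{0}}$ has only trivial strongly connected components on its reachable part; each non‑halting component therefore admits an escaping reaction, while the unique halting component (all‑$M_{T^{*}}$) lies in $\Halt(\mathcal{D}_{v})$ with $v=\psi(\mathbf{z}_{T^{*}})=\psi(\mathbf{x})$, so \Lem{}~\ref{lemma:correctness-via-configuration-digraph} gives halting correctness. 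The ``moreover'' clause is then immediate, as the halting configuration consists solely of the voter species $M_{T^{*}}$.

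For the runtime, pre‑maturity the policy $\RunPol$ targets the ignition reactions and contributes $O(\log n)$ by \Lem{}~\ref{lemma:runtime:tools:ignition}. Post‑maturity and non‑halting, at a configuration with union $T^{*}$ I would let the \emph{bulk} be the most numerous type $T_{0}$ (ties broken deterministically), of count $m_{0}$, and set $\RunPol(\c)$ to be the set of all epidemic merges in which $M_{T_{0}}$ meets a different type. This set is applicable in every non‑halting configuration (so the scheduler cannot pad the execution with progress‑free rounds), along any $\RunPol$‑avoiding path the bulk type and its count can only grow, and condition~(2) of $\Reaching_{\langle \RunPol \rangle}$ pins some $\beta_{T_{0},T'}$ applicable throughout, forcing the propensity of $\RunPol(\c)$ to stay at least $m_{0}(n-m_{0})/\Vol$; hence \Lem{}~\ref{lemma:runtime:tools:temporal-cost} yields temporal cost $O\!\left(n/(m_{0}(n-m_{0}))\right)=O(1/m_{0}+1/(n-m_{0}))$ for such a round. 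Two monotone quantities drive the telescoping: the size $|T_{0}|$ of the bulk type is non‑decreasing over the whole execution and can increase only $|\Sigma|=O(1)$ times (``epochs''), and within an epoch both $m_{0}$ and the count $\c(M_{T^{*}})$ of fully‑informed molecules are effectively non‑decreasing, so the effective configurations of the rounds of one epoch carry distinct parameter values in $[1,n-1]$; summing $1/m+1/(n-m)$ over those values bounds an epoch's contribution by $O(\log n)$, and there are $O(1)$ epochs.

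The hard part I expect is the ``sparse'' regime: the skipping policy can hand us an effective configuration made of a few under‑informed types together with a bulk that does not yet contain $T^{*}$, and I must verify that repeatedly merging the bulk with a non‑bulk molecule (a propensity‑$\Omega(m_{0}/n)$ step, cheap once $m_{0}=\Omega(n)$) does enlarge the bulk type after $O(1)$ such merges, and that the non‑monotonicity of $m_{0}$ \emph{within} an epoch costs only $O(1)$ additional rounds — handled by a secondary lexicographic potential on the multiset of present types, exactly in the spirit of the threshold and modulo analyses. Together with \Lem{}~\ref{lemma:run-time-policy-for-all-executions} this keeps the runtime well defined and $O(\log n)$. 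Finally, the bound is asymptotically tight: the detection predicate ``$\exists A\colon\mathbf{x}(A)\ge 1$'' is decided only by CRDs that, in the worst case, must carry out a broadcast, an $\Omega(\log n)$ lower bound that already holds for stochastically scheduled executions and hence, by \Lem{}~\ref{lemma:stochastic-benchmark}, for the adversarial runtime as well.
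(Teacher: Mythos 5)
Your protocol is, up to renaming ($M_{T}$ versus $D_{\mathbf{u}}$ with $\mathbf{u}$ the indicator vector of $T$), exactly the one in the paper: same ignition gadget, same epidemic merges, same voter sets, and the ``moreover'' clause falls out identically. Your correctness argument (union invariant plus a strictly increasing potential, fed into \Lem{}~\ref{lemma:correctness-via-configuration-digraph}) is a valid, essentially equivalent route to the paper's direct observations. The divergence --- and the gap --- is in the runtime policy. The paper simply sets $\RunPol(\c) = \NonVoid(\Reactions)$ after maturity, so every $\RunPol$-avoiding path is trivial, every round ends at the first non-void reaction, and the temporal cost of a round is bounded by $O\bigl(n/(w_{A}(n-w_{A}))\bigr)$ for a species $A$ whose weight $w_{A}$ increments at the round's end (every pair of molecules split by the $A$-bit reacts non-voidly); telescoping per species over the at most $n|\Sigma|$ increments gives $O(\log n)$ with no case analysis.

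Your bulk-targeting policy does not support the temporal-cost bound you claim. Along a $\RunPol$-avoiding path from an effective configuration with bulk type $T_{0}$ of count $m_{0}$, the permitted non-target merges $\beta_{T,T'}$ with $T,T'\neq T_{0}$ and $T\cup T'=T_{0}$ convert pairs of non-bulk molecules into bulk molecules, so the non-bulk population can shrink from $n-m_{0}$ down to $O(1)$ while the path remains avoiding (condition~(2) pins only a single surviving non-bulk molecule). The uniform propensity lower bound required by \Lem{}~\ref{lemma:runtime:tools:temporal-cost} is therefore only $p=\Theta(m_{0}/\Vol)$, not $m_{0}(n-m_{0})/\Vol$, and the lemma yields $\TemporalCost^{\RunPol}(\c)\leq O(n/m_{0})=O(1)$ rather than $O(1/m_{0}+1/(n-m_{0}))$. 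Since there can be $\Theta(n)$ rounds in which both $m_{0}$ and $n-m_{0}$ are $\Theta(n)$ and the bulk is the union of two abundantly present non-bulk types, summing these $O(1)$ charges gives only $O(n)$. (The true expected temporal cost from such configurations is indeed $O(1/n)$, but proving that needs a direct stochastic argument, not the worst-case-over-avoiding-paths lemma.) On top of this, your telescoping rests on monotonicity of $m_{0}$ and of the bulk identity within an ``epoch,'' which you acknowledge fails and defer to an unspecified secondary potential. Both difficulties disappear with the paper's choice of target set, so I would recommend switching to it rather than trying to patch the bulk-based analysis.
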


\LongVersion 
A standard probabilistic argument reveals that in a stochastically scheduled
execution, the expected stochastic runtime until each molecule, present in the
initial configuration, reacts at least once is
$\Omega (\log n)$.
Employing Lemma~\ref{lemma:stochastic-benchmark}, we deduce the same
asymptotic lower bound for the stabilization (and thus also halting) runtime
of any protocol whose outcome may be altered by a single input molecule.
Since CRD protocols that decide detection predicates satisfy this property, it
follows that the runtime upper bound promised in
\Thm{}~\ref{theorem:crd:detection} is (asymptotically) tight.

We establish \Thm{}~\ref{theorem:crd:detection} by presenting a
(leaderless) CRD
$\Pi
=
(\Species, \Reactions, \Sigma, \Upsilon_{0}, \Upsilon_{1}, F, \mathbf{0})$
that haltingly decides a given detection predicate
$\psi : \Naturals^{\Sigma} \rightarrow \{ 0, 1 \}$.
As promised in the theorem, the halting runtime of $\Pi$ is
$\RunTime_{\Halt}^{\Pi}(n) = O (\log n)$.
We note that both the construction and the runtime analysis of $\Pi$ are
heavily inspired by the construction of \cite{ChenCDS2017speed-faults} for a
CRD that decides $\psi$ (although the runtime in
\cite{ChenCDS2017speed-faults} is analyzed assuming a stochastic scheduler).

The species set of protocol $\Pi$ is defined to be
\[\textstyle
\Species
\, = \,
\Sigma
\cup
\{ F \}
\cup
\left\{ D_{\mathbf{u}} \mid \mathbf{u} \in \{ 0, 1 \}^{\Sigma} \right\}
\, .
\]
The species in
$\Sigma \cup \{ F \}$
are regarded as the ignition species of the ignition gadget presented in
\Sect{}~\ref{section:runtime:tools:ignition}, taking the ignition reaction
associated with species
$A \in \Sigma$
to be
\\
$\iota_{A}$:
$A \rightarrow D_{\mathbf{z}^{A}}$,
\\
where
$\mathbf{z}^{A} \in \{ 0, 1 \}^{\Sigma}$
denotes the (unit) vector that corresponds to the multiset
$1 A$;
and the ignition reaction associated with species $F$ to be
\\
$\iota_{F}$:
$F \rightarrow D_{\mathbf{0}}$.

Semantically, the presence of species $D_{\mathbf{u}}$ in the configuration
indicates that it has already been detected that input species $A$ was present
in the initial configuration for each
$A \in \Sigma$
such that
$\mathbf{u}(A) = 1$.
Following this semantics, the voter species are defined as
\[\textstyle
\Upsilon_{0}
\, = \,
\left\{ D_{\mathbf{u}} \mid \psi(\mathbf{u}) = 0 \right\}
\qquad \text{and} \qquad
\Upsilon_{1}
\, = \,
\left\{ D_{\mathbf{u}} \mid \psi(\mathbf{u}) = 1 \right\}
\, .
\]

Concretely, given two vectors
$\mathbf{u}, \mathbf{u}' \in \{ 0, 1 \}^{\Sigma}$,
let
$\mathbf{u} \lor \mathbf{u}' \in \{ 0, 1 \}^{\Sigma}$
denote the bitwise logical or of $\mathbf{u}$ and $\mathbf{u}'$.
The non-void reaction set $\NonVoid(\Reactions)$ of protocol $\Pi$ includes
the following reactions on top of the aforementioned ignition reactions:
\\
$\beta_{\mathbf{u}, \mathbf{u}'}$:
$D_{\mathbf{u}} + D_{\mathbf{u'}}
\rightarrow
2 D_{\mathbf{u} \lor \mathbf{u}'}$
for every distinct
$\mathbf{u}, \mathbf{u}' \in \{ 0, 1 \}^{\Sigma}$.
\\
In other words, the $\beta$ reactions ``spread'' the detection of the various
input species among all (working) molecules.

\subparagraph{Analysis.}
For the analysis of protocol $\Pi$, fix some input vector
$\mathbf{x} \in \Naturals^{\Sigma}$
and let
$\c^{0} \in \Naturals^{\Species}$
be a valid initial configuration with
$\c^{0}|_{\Sigma} = \mathbf{x}$.
Consider a weakly fair execution
$\eta = \langle \c^{t}, \zeta^{t} \rangle_{t \geq 0}$
of $\Pi$ emerging from $\c^{0}$.

Let
$t_{\Ignt} \geq 0$
be the earliest step in which the ignition gadget matures in $\eta$ (as
promised in \Lem{}~\ref{lemma:runtime:tools:ignition}).
For a configuration
$\c \in \Naturals^{\Species}$,
let
$\OR(\c)$ denote the result of the bitwise logical or of all vectors
$\mathbf{u} \in \{ 0, 1 \}^{\Sigma}$
such that
$\c(D_{\mathbf{u}}) > 0$.

\begin{observation} \label{observation:crd:detection:or-invariant}
For every step
$t \geq t_{\Ignt}$,
we have
$\OR(\c^{t})
=
\mathbf{x}_{\downarrow}$.
\end{observation}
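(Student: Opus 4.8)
The plan is to identify a quantity that is invariant under every reaction of $\Pi$, that equals $\mathbf{x}_{\downarrow}$, and that coincides with $\OR(\c^{t})$ once the ignition gadget has matured. For a configuration $\c \in \Naturals^{\Species}$ I would define the \emph{augmented detection vector}
\[\textstyle
\OR^{*}(\c)
\, \triangleq \,
\OR(\c) \, \lor \bigvee_{A \in \Sigma \,:\, \c(A) > 0} \mathbf{z}^{A}
\]
obtained as the bitwise logical or of all $D$-species present in $\c$ together with the unit vectors $\mathbf{z}^{A}$ of the input species still present in $\c$. The first (easy) step is to evaluate this at the initial configuration: since the CRD is leaderless with initial context $\mathbf{0}$, a valid $\c^{0}$ contains no $D$-molecule, so $\OR(\c^{0}) = \mathbf{0}$ and $\OR^{*}(\c^{0}) = \bigvee_{A \,:\, \mathbf{x}(A) > 0} \mathbf{z}^{A} = \mathbf{x}_{\downarrow}$.

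The core step is to show $\OR^{*}(\c^{t+1}) = \OR^{*}(\c^{t})$ for every $t \geq 0$. By the ignition gadget convention, each input species $A$ and the fuel $F$ participate only in their ignition reaction and in void reactions, so the scheduled reaction $\zeta^{t}$ is a void reaction, an ignition reaction $\iota_{A}$, the fuel ignition $\iota_{F}$, or a $\beta$ reaction, and I would check these cases in turn. Void reactions change nothing; $\iota_{F}$ trades an $F$ molecule (which contributes no bit, as $F \notin \Sigma$) for a $D_{\mathbf{0}}$ molecule (likewise no bit); $\iota_{A}$ removes one $A$ molecule and adds one $D_{\mathbf{z}^{A}}$ molecule, and bit $A$ remains contributed afterwards (by the surviving $A$ molecules if $\c^{t}(A) \geq 2$, otherwise by the new $D_{\mathbf{z}^{A}}$) while no other bit is affected. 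The only case requiring care is $\beta_{\mathbf{u}, \mathbf{u}'} \colon D_{\mathbf{u}} + D_{\mathbf{u}'} \to 2 D_{\mathbf{u} \lor \mathbf{u}'}$: the componentwise inequality $\OR^{*}(\c^{t+1}) \leq \OR^{*}(\c^{t})$ holds because the only new species $D_{\mathbf{u} \lor \mathbf{u}'}$ carries only bits already present via $D_{\mathbf{u}}$ and $D_{\mathbf{u}'}$; for $\OR^{*}(\c^{t+1}) \geq \OR^{*}(\c^{t})$, any bit $A$ witnessed in $\c^{t}$ by some $D_{\mathbf{w}}$ with $\mathbf{w}(A) = 1$ is still witnessed in $\c^{t+1}$ --- either $D_{\mathbf{w}}$ survives ($\mathbf{w} \notin \{\mathbf{u}, \mathbf{u}'\}$, or $\c^{t}(D_{\mathbf{w}}) \geq 2$), or $\mathbf{w} \in \{\mathbf{u}, \mathbf{u}'\}$ is the unique such copy and is consumed, but then $D_{\mathbf{u} \lor \mathbf{u}'}$, which also has bit $A$ set, is produced. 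I expect this last inequality to be the only point where a (short) case analysis is genuinely needed; the rest is bookkeeping.

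It remains to transfer the invariant to $\OR(\cdot)$ after maturity. Recall from the ignition gadget (cf.\ the proof of \Lem{}~\ref{lemma:runtime:tools:ignition}) that no non-void reaction produces an ignition species, so $\c^{t}(\Species_{\Ignt}) = 0$ for every $t \geq t_{\Ignt}$; hence the second summand in $\OR^{*}(\c^{t})$ is empty and $\OR^{*}(\c^{t}) = \OR(\c^{t})$ for all such $t$. Combining this with the invariance chain $\OR^{*}(\c^{t}) = \OR^{*}(\c^{0}) = \mathbf{x}_{\downarrow}$ established above yields $\OR(\c^{t}) = \mathbf{x}_{\downarrow}$ for every $t \geq t_{\Ignt}$, which is the assertion.
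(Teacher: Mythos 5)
Your proof is correct and follows essentially the same route as the paper's (very terse) argument: the bitwise-or of the present $D$-species is an invariant, the ignition reactions transfer exactly the bits of the input species present in $\c^{0}$ into that invariant, and the $\beta$ reactions preserve it. Your augmented quantity $\OR^{*}$ is just a clean way of making rigorous the pre-maturity phase (where ignition and $\beta$ reactions interleave), which the paper leaves implicit.
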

\begin{proof}
Follows by the choice of the ignition reactions as
$\OR(\c^{t + 1})
=
\OR(\c^{t})$
for every
$t \geq t_{\Ignt}$.
\end{proof}

Let
$n = \| \c^{0} \|$
be the molecular count of the initial configuration and observe that
$\| \c^{t} \| = n$
for all
$t \geq 0$
as $\Pi$ is density preserving.
Given a configuration
$\c \in \Naturals^{\Species}$
and an input species
$A \in \Sigma$,
let
\[\textstyle
w_{A}(\c)
\, = \,
\sum_{\mathbf{u} \in \{0, 1 \}^{\Sigma} \, : \, \mathbf{u}(A) = 1}
\,
\c(D_{\mathbf{u}})
\qquad \text{and} \qquad
w(\c)
\, = \,
\sum_{A \in \Sigma} w_{A}(\c)
\, .
\]

\begin{observation} \label{observation:crd:detection:weight-progress}
The following three properties hold for every step
$t \geq t_{\Ignt}$
and input species
$A \in \Sigma$:
\\
(1)
$w_{A}(\c^{t + 1}) \geq w_{A}(\c^{t})$;
\\
(2)
if
$\zeta^{t} = \beta_{\mathbf{u}, \mathbf{u}'}$
and
$\mathbf{u}(A) \neq \mathbf{u}'(A)$,
then
$w_{A}(\c^{t + 1}) = w_{A}(\c^{t}) + 1$;
and
\\
(3)
$0 \leq w(\c^{t}) \leq n \cdot |\Sigma|$.
\end{observation}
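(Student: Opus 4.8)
The plan is to argue directly from the structure of $\Reactions$, exploiting that once the ignition gadget is mature the dynamics is driven only by the $\beta$ reactions. Concretely, since $t \geq t_{\Ignt}$ we have $\c^{t}(\Species_{\Ignt}) = 0$, so the ignition reactions $\iota_{A}$ and $\iota_{F}$ are inapplicable to $\c^{t}$; hence the scheduled reaction $\zeta^{t}$ is either a void reaction or one of the $\beta_{\mathbf{u}, \mathbf{u}'}$ reactions. If $\zeta^{t}$ is void then $\c^{t+1} = \c^{t}$, so property~(1) holds with equality and property~(2) is vacuous. It therefore suffices to treat the case $\zeta^{t} = \beta_{\mathbf{u}, \mathbf{u}'}$ for distinct $\mathbf{u}, \mathbf{u}' \in \{0,1\}^{\Sigma}$.

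For that case, first I would write the one-step change as the vector $\c^{t+1} - \c^{t} = 2\,\mathbf{e}_{D_{\mathbf{u} \lor \mathbf{u}'}} - \mathbf{e}_{D_{\mathbf{u}}} - \mathbf{e}_{D_{\mathbf{u}'}}$ over the $D$-species (where $\mathbf{e}_{D_{\mathbf{v}}}$ is the unit vector supported on species $D_{\mathbf{v}}$), and then apply to it the linear functional $\c \mapsto w_{A}(\c) = \sum_{\mathbf{v} : \mathbf{v}(A) = 1} \c(D_{\mathbf{v}})$. By linearity, and using that $(\mathbf{u} \lor \mathbf{u}')(A) = 1$ exactly when $\mathbf{u}(A) = 1$ or $\mathbf{u}'(A) = 1$, this yields
\[
w_{A}(\c^{t+1}) - w_{A}(\c^{t})
\;=\;
2\,\Indicator_{\mathbf{u}(A) = 1 \,\lor\, \mathbf{u}'(A) = 1}
\;-\; \Indicator_{\mathbf{u}(A) = 1}
\;-\; \Indicator_{\mathbf{u}'(A) = 1}
\,.
\]
A case analysis on $(\mathbf{u}(A), \mathbf{u}'(A)) \in \{0,1\}^{2}$ then finishes (1) and (2) simultaneously: the right-hand side is $0-0+0 = 0$ when $\mathbf{u}(A) = \mathbf{u}'(A) = 0$, it is $2 - 1 - 1 = 0$ when $\mathbf{u}(A) = \mathbf{u}'(A) = 1$, and it is $2 - 1 - 0 = 1$ (symmetrically $2 - 0 - 1 = 1$) when $\mathbf{u}(A) \neq \mathbf{u}'(A)$. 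In particular the difference is always nonnegative, which is (1), and it equals $1$ precisely when $\mathbf{u}(A) \neq \mathbf{u}'(A)$, which is (2).

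For property~(3), the bound $w(\c^{t}) \geq 0$ is immediate since $w$ is a sum of molecular counts. For the upper bound I would swap the order of summation,
\[
w(\c) \;=\; \sum_{A \in \Sigma} \sum_{\mathbf{u} : \mathbf{u}(A) = 1} \c(D_{\mathbf{u}})
\;=\; \sum_{\mathbf{u} \in \{0,1\}^{\Sigma}} \| \mathbf{u} \| \cdot \c(D_{\mathbf{u}})
\;\leq\; |\Sigma| \cdot \| \c \|
\,,
\]
and then invoke the fact, noted just above the statement, that $\Pi$ is density preserving, so $\| \c^{t} \| = n$ for every $t \geq 0$; hence $w(\c^{t}) \leq n \cdot |\Sigma|$.

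I do not expect a genuine obstacle here; the only place that calls for a little care is the one-step computation of $w_{A}$, which must remain valid in the degenerate subcases where $\mathbf{u} \lor \mathbf{u}'$ coincides with $\mathbf{u}$ or with $\mathbf{u}'$ (which can happen when one of the two vectors dominates the other). Working with the net-change vector $\c^{t+1} - \c^{t}$ and applying the linear functional $w_{A}(\cdot)$ to it, rather than tracking the three species $D_{\mathbf{u}}, D_{\mathbf{u}'}, D_{\mathbf{u} \lor \mathbf{u}'}$ separately, handles these collisions automatically.
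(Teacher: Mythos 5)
Your proposal is correct; the paper states this observation without proof, treating it as immediate from the structure of the $\beta$ reactions, and your direct computation (only $\beta$ or void reactions are schedulable after ignition matures, plus the linear-functional case analysis on $(\mathbf{u}(A),\mathbf{u}'(A))$ and the density-preservation bound $\|\c^{t}\|=n$) is exactly the verification the paper leaves implicit.
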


As
$\Applicable(\c^{t})$
includes a $\beta$ reaction if and only if
$\c^{t}(D_{\mathbf{u}}) > 0$
for some
$\mathbf{u} \neq \OR(\c^{t})$,
we obtain the following observation.

\begin{observation} \label{observation:crd:detection:halting-configuration}
There exists a step
$t^{*} \geq t_{\Ignt}$
such that
$\c^{t^{*}}$
is halting and
$\c^{t^{*}}(D_{\mathbf{u}}) > 0$
implies that
$\mathbf{u} = \OR(\c^{t^{*}})$.
\end{observation}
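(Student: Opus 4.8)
The plan is to combine the structural characterization of halting configurations (stated just above the observation) with a monotone potential argument driven by weak fairness. First I would make the structural claim precise: for any step $t \geq t_{\Ignt}$ the ignition species are absent and are never re-created, so the only non-void reactions that can be applicable to $\c^{t}$ are the $\beta$ reactions, and $\beta_{\mathbf{u}, \mathbf{u}'}$ is applicable for some distinct $\mathbf{u}, \mathbf{u}'$ if and only if there is a vector $\mathbf{u}$ with $\c^{t}(D_{\mathbf{u}}) > 0$ and $\mathbf{u} \neq \OR(\c^{t})$. One direction is immediate, since two distinct present vectors enable the corresponding $\beta$ reaction. For the converse, every present vector is $\leq \OR(\c^{t})$ componentwise; if $\mathbf{u} \neq \OR(\c^{t})$ is present then some coordinate $A$ has $\mathbf{u}(A) = 0 < \OR(\c^{t})(A) = 1$, so by the definition of $\OR$ there is another present vector $\mathbf{u}'$ with $\mathbf{u}'(A) = 1$, hence $\mathbf{u}' \neq \mathbf{u}$ and $\beta_{\mathbf{u}, \mathbf{u}'}$ is applicable. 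Consequently $\c^{t}$ is halting precisely when all present $D$ molecules carry a common vector, which by \Obs{}~\ref{observation:crd:detection:or-invariant} must equal $\OR(\c^{t}) = \mathbf{x}_{\downarrow}$.

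It then remains to exhibit a halting step $t^{*} \geq t_{\Ignt}$, for which I would use the potential $w(\c^{t})$ of \Obs{}~\ref{observation:crd:detection:weight-progress}: that observation gives $w(\c^{t+1}) \geq w(\c^{t})$ and $0 \leq w(\c^{t}) \leq n \cdot |\Sigma|$ for all $t \geq t_{\Ignt}$, and since any applicable $\beta$ reaction $\beta_{\mathbf{u}, \mathbf{u}'}$ has $\mathbf{u} \neq \mathbf{u}'$, there is a coordinate $A$ with $\mathbf{u}(A) \neq \mathbf{u}'(A)$, so part~(2) of \Obs{}~\ref{observation:crd:detection:weight-progress} shows that firing any $\beta$ reaction strictly increases $w$. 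Now suppose toward a contradiction that no step $t \geq t_{\Ignt}$ is halting. Then at every such step some $\beta$ reaction $\alpha$ is applicable; by weak fairness there is a later step at which $\alpha$ is either scheduled or has become inapplicable, and — since from $t_{\Ignt}$ onward the configuration can change only through a (non-void) $\beta$ reaction — in both cases some $\beta$ reaction fires at or after step $t$. Iterating from $t_{\Ignt}$ yields infinitely many steps at which a $\beta$ reaction fires, so $w(\c^{t}) \to \infty$, contradicting $w(\c^{t}) \leq n \cdot |\Sigma|$. Hence some step $t^{*} \geq t_{\Ignt}$ is halting, and by the structural claim $\c^{t^{*}}(D_{\mathbf{u}}) > 0$ forces $\mathbf{u} = \OR(\c^{t^{*}})$, as required.

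The main obstacle I anticipate is the weak-fairness bookkeeping in the second step: one must be careful that ``$\alpha$ becomes inapplicable'' can only happen because some non-void reaction has already fired, and after maturity every non-void reaction is a $\beta$ reaction, so in either branch of the weak-fairness dichotomy the potential has strictly increased. Void reactions, which weak fairness is allowed to starve, cause no difficulty precisely because they leave the configuration unchanged. Everything else is routine monotonicity together with the already-established invariant $\OR(\c^{t}) = \mathbf{x}_{\downarrow}$.
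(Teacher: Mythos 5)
Your proof is correct and follows the same route the paper intends: the paper states the observation with only the one-line justification that a $\beta$ reaction is applicable to $\c^{t}$ if and only if some present vector differs from $\OR(\c^{t})$, relying implicitly on the bounded, monotone potential $w$ of \Obs{}~\ref{observation:crd:detection:weight-progress} and weak fairness to guarantee that a halting step is reached. You have simply spelled out those implicit steps (including the correct handling of the ``becomes inapplicable'' branch of weak fairness), so there is nothing to add.
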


\Cor{}~\ref{corollary:crd:detection:correctness} now follows by the definition
of
$\Upsilon_{0}$
and
$\Upsilon_{1}$
due to \Obs{}\ \ref{observation:crd:detection:or-invariant} and
\ref{observation:crd:detection:halting-configuration}.

\begin{corollary} \label{corollary:crd:detection:correctness}
Protocol $\Pi$ is haltingly correct.
\end{corollary}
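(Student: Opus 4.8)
The plan is to unwind what halting correctness means for the CRD $\Pi$ and then read it off the two preceding observations. Recall that $\Pi$ is haltingly correct precisely when every input vector $\mathbf{x} \in \Naturals^{\Sigma}$ is either haltingly accepted or haltingly rejected; equivalently, fixing an arbitrary $\mathbf{x}$, an arbitrary valid initial configuration $\c^{0}$ with $\c^{0}|_{\Sigma} = \mathbf{x}$, and an arbitrary weakly fair execution $\eta = \langle \c^{t}, \zeta^{t} \rangle_{t \geq 0}$ emerging from $\c^{0}$, it suffices to exhibit a step $t^{*}$ with $\c^{t^{*}} \in \Halt(\mathcal{D}_{v})$ for $v = \psi(\mathbf{x})$. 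Since the observations in this subsection are proven for an arbitrary such triple $\mathbf{x}$, $\c^{0}$, $\eta$, establishing this for one fixed triple settles the claim for all of them.

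First I would invoke \Obs{}~\ref{observation:crd:detection:halting-configuration} to obtain a step $t^{*} \geq t_{\Ignt}$ at which $\c^{t^{*}}$ is halting and $\c^{t^{*}}(D_{\mathbf{u}}) > 0$ forces $\mathbf{u} = \OR(\c^{t^{*}})$; in words, at step $t^{*}$ every $D$ molecule is of the single species $D_{\OR(\c^{t^{*}})}$. Because $t^{*} \geq t_{\Ignt}$, the ignition gadget is mature, so no species in $\Sigma \cup \{ F \}$ is present, and thus every molecule in $\c^{t^{*}}$ is of species $D_{\OR(\c^{t^{*}})}$. Next I would apply \Obs{}~\ref{observation:crd:detection:or-invariant} to identify $\OR(\c^{t^{*}}) = \mathbf{x}_{\downarrow}$. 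Since $\Pi$ is density preserving, $\| \c^{t^{*}} \| = n \geq 1$, so in fact $\c^{t^{*}}(D_{\mathbf{x}_{\downarrow}}) = n$ while $\c^{t^{*}}(D_{\mathbf{u}}) = 0$ for every $\mathbf{u} \neq \mathbf{x}_{\downarrow}$.

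The decisive step is invoking the hypothesis that $\psi$ is a \emph{detection} predicate, which gives $\psi(\mathbf{x}_{\downarrow}) = \psi(\mathbf{x}) = v$. By the definition of the voter species, $D_{\mathbf{x}_{\downarrow}} \in \Upsilon_{v}$ and $D_{\mathbf{x}_{\downarrow}} \notin \Upsilon_{1 - v}$ (the two voter sets being disjoint). Combined with the previous paragraph, this yields $\c^{t^{*}}(\Upsilon_{v}) = n > 0$ and $\c^{t^{*}}(\Upsilon_{1 - v}) = 0$, i.e., $\c^{t^{*}} \in \mathcal{D}_{v}$. As $\c^{t^{*}}$ is halting, the only configuration reachable from it is itself, so $\c^{t^{*}} \in \Halt(\mathcal{D}_{v})$ and $\eta$ halts into $\mathcal{D}_{v}$. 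Hence $\mathbf{x}$ is haltingly accepted when $v = 1$ and haltingly rejected when $v = 0$, which is exactly halting correctness.

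I do not anticipate a genuine obstacle here: the substantive work is already packaged into \Obs{}\ \ref{observation:crd:detection:or-invariant} and \ref{observation:crd:detection:halting-configuration}, and the corollary is essentially a bookkeeping combination of these with the detection property of $\psi$. The only points requiring a moment of care are confirming that at least one voter of the correct type survives (handled by density preservation, which forces $n$ copies of $D_{\mathbf{x}_{\downarrow}}$) and that none of the wrong type survive (immediate from the halting observation pinning all $D$ molecules to the single species $D_{\mathbf{x}_{\downarrow}}$). It is worth emphasizing that the detection hypothesis is indispensable: without it, the protocol would only determine $\psi(\mathbf{x}_{\downarrow})$, which may differ from $\psi(\mathbf{x})$.
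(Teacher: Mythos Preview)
Your proposal is correct and follows exactly the approach the paper intends: combine \Obs{}~\ref{observation:crd:detection:or-invariant} and \Obs{}~\ref{observation:crd:detection:halting-configuration} with the definition of $\Upsilon_{0}$ and $\Upsilon_{1}$, using the detection property of $\psi$ to identify $\psi(\mathbf{x}_{\downarrow})$ with $\psi(\mathbf{x})$. The paper states the corollary with only a one-line justification, and your write-up simply unpacks that justification in full detail; there is no substantive difference.
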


For the halting runtime analysis, we fix some skipping policy $\SkipPol$ and
prove that
$\RunTime_{\Halt}^{\Pi}(n) \leq O (\log n)$
by presenting a runtime policy $\RunPol$ for $\Pi$ (defined independently of
$\eta$ and $\SkipPol$) and showing that
$\RunTime_{\Halt}^{\RunPol, \SkipPol}(\eta) \leq O (\log n)$.
Given a configuration
$\c \in \Naturals^{\Species}$,
the runtime policy $\RunPol$ is defined as follows:
\begin{itemize}

\item
if
$\c(\Sigma \cup \{ F \}) > 0$,
then $\RunPol(\c)$ consists of the ignition reactions;

\item
else
$\RunPol(\c) = \NonVoid(\Reactions)$.

\end{itemize}

\begin{lemma} \label{lemma:crd:detection:runtime}
$\RunTime_{\Halt}^{\RunPol, \SkipPol}(\eta) \leq O (\log n)$.
\end{lemma}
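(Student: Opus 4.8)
The plan is to partition the rounds of $\eta$ induced by the runtime policy $\RunPol$ and the (given) skipping policy $\SkipPol$ into \emph{ignition rounds} and \emph{post-ignition rounds}, and to bound the total temporal cost of each group by $O(\log n)$. Concretely, let $t^{*}$ be the halting step of $\eta$ (so $\c^{t} = \c^{t^{*}}$ is a halting configuration for all $t \geq t^{*}$, by \Obs{}~\ref{observation:crd:detection:halting-configuration} and \Cor{}~\ref{corollary:crd:detection:correctness}), let $i^{*} = \min \{ i \geq 0 \mid \InitStep(i) \geq t^{*} \}$, and let $i_{\Ignt} = \min \{ i \geq 0 \mid \InitStep(i) \geq t_{\Ignt} \}$ be as in \Lem{}~\ref{lemma:runtime:tools:ignition}. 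Since the ignition species are not produced by any non-void reaction, a configuration carrying an ignition species cannot be halting, hence $t_{\Ignt} \leq t^{*}$ and $i_{\Ignt} \leq i^{*}$. The runtime policy $\RunPol$ agrees with the form required by \Lem{}~\ref{lemma:runtime:tools:ignition} on every configuration carrying an ignition species, so that lemma bounds the contribution of rounds $0 \leq i < i_{\Ignt}$ by $O(\log n)$.

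For the post-ignition rounds $i_{\Ignt} \leq i < i^{*}$, note that $\EffStep(i) \geq \InitStep(i) \geq t_{\Ignt}$, so $\EffConf^{i}$ carries no ignition species and $\RunPol(\EffConf^{i}) = \NonVoid(\Reactions)$; moreover, past maturity the only non-void reactions that can ever be applicable are the $\beta$ reactions. I would first observe that a post-ignition round whose effective configuration $\EffConf^{i}$ is non-halting must be target-accomplished: were it target-deprived, then by minimality of $\Stop(\eta, \EffStep(i), \NonVoid(\Reactions))$ every reaction scheduled in the round is void, so the configuration stays equal to $\EffConf^{i}$ throughout, and condition (II) of the $\Stop$ operator then forces $\NonVoid(\Reactions) \subseteq \overline{\Applicable}(\EffConf^{i})$, i.e.\ $\EffConf^{i}$ halting --- a contradiction. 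The same reasoning shows that a post-ignition round with a \emph{halting} effective configuration ends at its effective step plus one, so its initial-step successor exceeds $t^{*}$; hence there is at most one such round, it must be $i = i^{*} - 1$, and its temporal cost is $\Theta(1 / n)$.

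It remains to bound the total temporal cost of the non-halting post-ignition rounds. Fix such a round $i$. Because the configuration is constant on the round until its final, non-void, reaction $\beta_{\mathbf{u}^{i}, \mathbf{v}^{i}}$ fires, both $D_{\mathbf{u}^{i}}$ and $D_{\mathbf{v}^{i}}$ are present in $\EffConf^{i}$ and $\mathbf{u}^{i} \neq \mathbf{v}^{i}$; choose an input species $A_{i} \in \Sigma$ on which $\mathbf{u}^{i}$ and $\mathbf{v}^{i}$ differ, say $\mathbf{u}^{i}(A_{i}) = 1$ and $\mathbf{v}^{i}(A_{i}) = 0$. Then $1 \leq w_{A_{i}}(\EffConf^{i}) \leq n - 1$, since a molecule knowing $A_{i}$ and one not knowing it are both present and the total $D$-count is $n$. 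The $\beta$ reactions that pair an $A_{i}$-knowing molecule with an $A_{i}$-non-knowing molecule lie in $\NonVoid(\Reactions)$, are applicable to every $\RunPol$-avoiding successor of $\EffConf^{i}$ (of which $\EffConf^{i}$ is the only one, as avoiding paths use void reactions only), and their total propensity equals $\frac{1}{\Vol} w_{A_{i}}(\EffConf^{i}) \bigl(n - w_{A_{i}}(\EffConf^{i})\bigr)$. Hence \Lem{}~\ref{lemma:runtime:tools:temporal-cost} gives $\TemporalCost^{\RunPol}(\EffConf^{i}) \leq \Vol \big/ \bigl( w_{A_{i}}(\EffConf^{i}) (n - w_{A_{i}}(\EffConf^{i})) \bigr) = \Theta\bigl( n \big/ (w_{A_{i}}(\EffConf^{i}) (n - w_{A_{i}}(\EffConf^{i}))) \bigr)$, using $\Vol = \Theta(n)$.

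The final ingredient is a charging argument. By \Obs{}~\ref{observation:crd:detection:weight-progress}, $w_{A}$ is non-decreasing along the execution, and the final reaction of round $i$ increments $w_{A_{i}}$ by exactly one; since $\EffStep(j) \geq \InitStep(i + 1)$ for every $j > i$, it follows that for a fixed species $A$ the values $w_{A}(\EffConf^{i})$, taken over all non-halting post-ignition rounds $i$ with $A_{i} = A$, are pairwise distinct integers in $\{ 1, \dots, n - 1 \}$. Therefore the non-halting post-ignition rounds contribute at most $\sum_{A \in \Sigma} \sum_{v = 1}^{n - 1} \Theta(n) \big/ (v (n - v)) = |\Sigma| \cdot \Theta(n) \cdot (2 H_{n - 1} / n) = O(\log n)$, using $|\Sigma| = O(1)$ and $\sum_{v = 1}^{n - 1} 1 / (v(n - v)) = 2 H_{n - 1} / n$. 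Summing the three contributions gives $\RunTime_{\Halt}^{\RunPol, \SkipPol}(\eta) = O(\log n)$. The main obstacle is precisely this charging step: a round need not advance the species whose current imbalance controls its temporal cost, so one must commit, for each round, to the species witnessing that round's decrementing $\beta$ reaction and verify that this assignment hits each pair $(A, v)$ at most once --- this is exactly what turns the (up to) $\Omega(n)$ many rounds into a telescoping $O(\log n)$ bound.
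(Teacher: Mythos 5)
Your proposal is correct and follows essentially the same route as the paper's proof: the ignition-gadget lemma handles the prefix, each post-ignition round is target-accomplished and its temporal cost is bounded by $O\bigl(n / (w_{A}(\EffConf^{i})(n - w_{A}(\EffConf^{i})))\bigr)$ for a species $A$ witnessing the round's terminating $\beta$ reaction, and the charge is telescoped per species via $\sum_{v} 1/(v(n-v)) = O(\log n / n)$. Your write-up is in fact somewhat more explicit than the paper's (which asserts the per-round temporal-cost bound as a ``key observation'' without spelling out the propensity computation or the distinctness of the charged values), but the decomposition and the key estimates are identical.
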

\begin{proof}
Let $\EffStep(i)$ and
$\EffConf^{i} = \c^{\EffStep(i)}$
be the effective step and effective configuration, respectively, of round
$i \geq 0$
under $\RunPol$ and $\SkipPol$.
Let
$i_{\Ignt} = \min \{ i \geq 0 \mid \EffStep(i) \geq t_{\Ignt} \}$
and let
$i^{*} = \min \{ i \geq i_{\Ignt} \mid \EffStep(i) \geq t^{*} \}$.
\Lem{}~\ref{lemma:runtime:tools:ignition} ensures that
$\sum_{i = 0}^{i_{\Ignt} - 1} \TemporalCost^{\RunPol}(\EffConf^{i})
\leq
O (\log n)$,
so it remains to prove that
$\sum_{i = i_{\Ignt}}^{i^{*} - 1} \TemporalCost^{\RunPol}(\EffConf^{i})
\leq
O (\log n)$.

Fix a round
$i_{\Ignt} \leq i < i^{*}$
and notice that the definition of $\RunPol$ guarantees that round $i$ is
target-accomplished.
For an input species
$A \in \Sigma$,
denote
$w_{A}(i) = w_{A}(\EffConf^{i})$.
Let $A(i)$ be the first (according to an arbitrary total order on $\Sigma$)
input species
$A  \in \Sigma$
that satisfies
$w_{A}(i + 1) > w_{A}(i)$.

The key observation now is that the temporal cost associated with round $i$ is
up-bounded as
\[\textstyle
\TemporalCost^{\RunPol}(\EffConf^{i})
\, \leq \,
O \left(
\frac{n}{w_{A(i)}(i) \cdot (n - w_{A(i)}(i))}
\right)
\, .
\]
Therefore, we can establish the assertion by developing
\begin{align*}\textstyle
\sum_{i = i_{\Ignt}}^{i^{*} - 1} \TemporalCost^{\RunPol}(\EffConf^{i})
\, \leq \, &
\sum_{A \in \Sigma} \,
\sum_{i_{\Ignt} \leq i < i^{*} \, : \, A_{i} = A} \,
O \left( \frac{n}{w_{A}(i) \cdot (n - w_{A}(i))} \right)
\\
\leq \, &
\sum_{A \in \Sigma} \,
\sum_{i_{\Ignt} \leq i < i^{*} \, : \, w_{A}(i + 1) = w_{A}(i) + 1} \,
O \left( \frac{n}{w_{A}(i) \cdot (n - w_{A}(i))} \right)
\\
\leq \, &
O (n) \cdot |\Sigma| \cdot \sum_{i = 1}^{n - 1} \frac{1}{i \cdot (n - i)}
\\
= \, &
O (n) \cdot |\Sigma| \cdot \left(
\sum_{i = 1}^{\lfloor n / 2 \rfloor} \frac{1}{i \cdot (n - i)}
\, + \,
\sum_{i = \lfloor n / 2 \rfloor + 1}^{n - 1} \frac{1}{i \cdot (n - i)}
\right)
\\
\leq \, &
O (1) \cdot |\Sigma| \cdot \left(
\sum_{i = 1}^{\lfloor n / 2 \rfloor} \frac{1}{i}
\right)
\, \leq \,
O (\log n)
\, ,
\end{align*}
where
the third transition follows from
\Obs{}~\ref{observation:crd:detection:weight-progress}
and
the last transition holds as
$|\Sigma| = O (1)$.
\end{proof}

\Thm{}~\ref{theorem:crd:detection} follows from
\Cor{}~\ref{corollary:crd:detection:correctness} and
\Lem{}~\ref{lemma:crd:detection:runtime}.
\LongVersionEnd 

\section{Vote Amplification}
\label{section:amplification}
Recall that CRDs are required to stabilize/halt into configurations $\c$
that include a positive number of $v$-voter molecules and zero
$(1 - v)$-voter
molecules, where
$v \in \{ 0, 1 \}$
is determined by the decided predicate according to the input vector.
This requirement alone does not rule out the possibility of having a small
(yet positive) voter molecular count in $\c$.
Indeed, the semilinear predicate CRDs promised in
\Thm{}~\ref{theorem:crd:semilinear} are designed so that
the configuration $\c$ includes a single voter molecule (this is in contrast
to the detection predicate CRDs promised in
\Thm{}~\ref{theorem:crd:detection}, where all molecules in $\c$ are voters).

In practice though, it may be difficult to obtain a meaningful signal from
small molecular counts.
Consequently, we aim for \emph{vote amplified} CRDs, namely, CRDs
that guarantee to stabilize/halt into configurations in which the voter
molecules take all but an $\epsilon$-fraction of the total molecular count for
an arbitrarily small constant
$\epsilon > 0$.
These are obtained by means of a ``generic compiler'' that can be applied, in
a black-box manner, to any existing CRD, turning it into a vote amplified CRD
while preserving the original stabilization/halting correctness.
At the heart of this compiler lies a CRN protocol for a standalone
computational task, referred to as \emph{vote amplification (VA)}, whose
runtime dominates the runtime overhead of the compiler, as stated in the
following theorem
\LongVersion 
(proved in \Sect{}~\ref{section:amplification:application})%
\LongVersionEnd 
\ShortVersion 
(proved in the attached full version)%
\ShortVersionEnd 
.

\begin{theorem} \label{theorem:amplification:application}
Consider a predicate
$\psi : \Naturals^{\Sigma} \rightarrow \{ 0, 1 \}$
that can be haltingly decided by a (leaderless) CRD in $T_{\psi}(n)$ time.
The existence of a VA protocol that stabilizes (resp., halts) in $T_{\Amp}(n)$
time implies
\LongVersion 
that for any constant
$\epsilon > 0$,
there exists
\LongVersionEnd 
\ShortVersion 
the existence of
\ShortVersionEnd 
a (leaderless)
\ShortVersion 
vote amplified
\ShortVersionEnd 
CRD that stably (resp., haltingly) decides $\psi$ in
$T_{\psi}(O (n)) + T_{\Amp}(O (n)) + O (\log n)$
time%
\LongVersion 
so that the non-voter molecules take at most an $\epsilon$-fraction
of the molecular count of the configuration(s) into which the CRD stabilizes
(resp., halts)%
\LongVersionEnd 
.
\end{theorem}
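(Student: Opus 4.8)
The plan is to realize the generic compiler by running a private, renamed copy $\Pi'_{\psi}$ of the given CRD $\Pi_{\psi}$ side by side with the VA protocol, wired together through the ignition gadget of \Sect{}~\ref{section:runtime:tools:ignition}. The species of the compiled CRD $\Pi$ are the input species $\Sigma$, a fresh fuel species $F$, the renamed species $\Species'_{\psi}$ of $\Pi'_{\psi}$, the four VA species classes --- permanent voters $P_{0}, P_{1}$ and fluid voters $\Phi_{0}, \Phi_{1}$ --- together with the VA protocol's internal species, and a handful of auxiliary ``signal'' species including an undetermined permanent-voter state $P_{\perp}$. The ignition reactions are $\iota_{A} : A \rightarrow A'_{\psi} + \Phi_{0}$ for each $A \in \Sigma$ and $\iota_{F} : F \rightarrow \Phi_{0} + P_{\perp}$, so that every one of the $n$ initial molecules contributes a fluid voter. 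On top of these, $\Pi$ contains: (i) all non-void reactions of $\Pi'_{\psi}$ and all reactions of the VA protocol; (ii) catalytic \emph{recording} reactions $P_{q} + V \rightarrow P_{v} + V$ for $q \in \{ \perp, 1 - v \}$ and every voter species $V \in \Upsilon'_{\psi, v}$ of $\Pi'_{\psi}$ (with $V$ a catalyst), together with a constant-size consensus gadget on the permanent-voter population ($P_{\perp} + P_{v} \rightarrow 2 P_{v}$ and $P_{0} + P_{1} \rightarrow 2 P_{\perp}$) whose role is to drive the permanent voters to the value dictated by $\Pi'_{\psi}$; and (iii) catalytic \emph{recycling} reactions $M + \Phi_{u} \rightarrow \Phi_{0} + \Phi_{u}$ (for $u \in \{ 0, 1 \}$) turning each non-voter molecule $M$ of $\Pi'_{\psi}$ into a fluid voter. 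The voter sets of $\Pi$ are $\Upsilon_{\Pi, v} = \{ P_{v}, \Phi_{v} \} \cup \Upsilon'_{\psi, v}$.

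For correctness, the starting point is the observation --- as in the proof of \Prop{}~\ref{proposition:crd:semilinear:closure} --- that every $\Species'_{\psi}$-molecule is a catalyst for every reaction of $\Pi$ outside $\Reactions'_{\psi}$, so the $\Species'_{\psi}$-restriction of any weakly fair execution of $\Pi$ is an honest execution of $\Pi'_{\psi}$ and therefore reaches a halting configuration carrying $\Pi'_{\psi}$-voters of type $v = \psi(\mathbf{x})$ and none of type $1 - v$. (Here one must take care, in line with the excerpt's warning about side-by-side composition, that the recycling reactions of~(iii), which \emph{do} consume $\Species'_{\psi}$-molecules, are applied only to \emph{inert} $\Pi'_{\psi}$-species; this is arranged by first normalizing $\Pi_{\psi}$ so that every species occurring in one of its halting configurations is either a voter or never a reactant of a non-void reaction --- a normalization automatic for the CRDs of \Sect{}~\ref{section:crd} and forceable in general at the cost of $O (1)$ extra species.) From the (finite) step at which $\Pi'_{\psi}$ halts onward, a persistent $\Pi'_{\psi}$-voter of type $v$ is present as a catalyst and none of type $1 - v$, so weak fairness together with the consensus gadget forces the permanent voters to settle at $P_{v}$ and stay there (the permanent-voter count is conserved and positive, so VA's precondition is met). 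Hence some weakly fair suffix begins at a configuration with a positive count of permanent $v$-voters, no permanent $(1 - v)$-voters, and arbitrary fluid voters, while the surviving non-VA reactions are either inapplicable or, for finitely many further steps, merely relabel or inject fluid voters; correctness of the VA protocol then yields stabilization (resp.\ halting) into configurations containing only (permanent and fluid) $v$-voters together with the $\Pi'_{\psi}$-voters, i.e., $\Pi$ stably (resp.\ haltingly) decides $\psi$. Amplification follows since, once recycling has also completed, \emph{every} molecule of $\Pi$ is a $v$-voter, so the non-voter fraction in the terminal configuration(s) is $0 \leq \epsilon$.

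For the runtime, fix an arbitrary skipping policy $\SkipPol$ and use a \emph{phased} runtime policy $\RunPol$: on a configuration $\c$, target the ignition reactions if ignition is incomplete; else, if $\c|_{\Species'_{\psi}}$ is a non-halting configuration of $\Pi'_{\psi}$, target $\RunPol_{\psi}(\c|_{\Species'_{\psi}})$ where $\RunPol_{\psi}$ realizes $\RunTime_{\Halt}^{\Pi_{\psi}}$; else, if the permanent voters are not yet in consensus at some value $P_{v}$, target the recording and consensus reactions; else, if $\Pi'_{\psi}$-residue molecules remain, target the recycling reactions; else, target the VA protocol's reactions under a policy realizing its $T_{\Amp}$-runtime. \Lem{}~\ref{lemma:runtime:tools:ignition} bounds the ignition phase by $O (\log n)$; the catalyst property and the argument of \Prop{}~\ref{proposition:crd:semilinear:closure} bound the $\Pi'_{\psi}$-phase by $T_{\psi}(O (n))$; the consensus/recording phase and the recycling phase are each $O (\log n)$ by the same $\sum_{k} \Theta (1 / k)$ estimate used for \Thm{}~\ref{theorem:crd:detection} --- the driving propensities stay $\Theta (k)$ in the remaining ``bad'' count $k$ because the fluid-voter population and, after $\Pi'_{\psi}$ halts, the correct-permanent-voter population remain $\Theta (n)$, and \Lem{}~\ref{lemma:runtime:tools:temporal-cost} converts this into $O (n / k)$ per-round temporal costs (with $O (1)$ consecutive target-deprived rounds, as usual); and the VA phase is $T_{\Amp}(O (n))$ since, once residue-free and consensus-settled, the configuration stays so and the execution from then on is a VA execution. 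Summing the phase contributions gives $\RunTime_{\mathrm{x}}^{\RunPol, \SkipPol}(\eta) \leq T_{\psi}(O (n)) + T_{\Amp}(O (n)) + O (\log n)$.

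I expect the main obstacle to be the double role of the permanent-voter/consensus gadget: it must (a) be robust to the \emph{non-monotone} pre-halting dynamics of $\Pi'_{\psi}$ (the permanent bit may flip repeatedly, and transient wrong $\Pi'_{\psi}$-voters may seed wrong permanent voters), yet (b) converge to the correct consensus in only $O (\log n)$ time rather than the $\Omega (n)$ time a leader-election-style design would incur --- the $O (\log n)$ slack in the target bound (accounting precisely for ignition) leaves no room for a separate $\Omega (n)$ synchronization step. I would resolve~(a) exactly as in \Prop{}~\ref{proposition:crd:semilinear:closure}, charging all pre-halting activity to the $\Pi'_{\psi}$-phase and noting that the consensus phase only ``truly begins''--- and becomes monotone --- once $\Pi'_{\psi}$ has halted, and~(b) by using a majority/broadcast-style gadget modelled on the detection-predicate CRD of \Thm{}~\ref{theorem:crd:detection}, keeping the auxiliary populations $\Theta (n)$ throughout so that every reduction step has propensity linear in the remaining bad count.
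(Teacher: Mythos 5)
Your high-level architecture --- ignition gadget, a renamed private copy $\Pi'_{\psi}$ whose species act as catalysts for everything outside $\Reactions'_{\psi}$, and a phased runtime policy yielding $O(\log n) + T_{\psi}(O(n)) + T_{\Amp}(O(n))$ --- is the same as the paper's. But the paper avoids the one component on which your proof actually hinges: it introduces \emph{no} separate permanent-voter population. Instead, the permanent $v$-voters of the VA protocol are \emph{identified} with the voter species $\Upsilon'_{\psi, v}$ of $\Pi'_{\psi}$ itself. This is consistent with the VA task definition (permanent voters are catalysts of all VA reactions, so $\Pi_{\Amp}$ cannot disturb $\Pi'_{\psi}$), and once $\Pi'_{\psi}$ halts its voter counts are frozen with a positive count on side $v = \psi(\mathbf{x})$ and zero on side $1 - v$ --- exactly a valid VA initial condition. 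There is nothing to synchronize, the entire amplification burden sits inside the $T_{\Amp}(O(n))$ budget, and the $\epsilon$-fraction guarantee comes from emitting $\lceil 1/\epsilon \rceil$ fluid voters per ignition reaction rather than from recycling residue.

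The genuine gap is your recording/consensus gadget and its claimed $O(\log n)$ cost. Before $\Pi'_{\psi}$ halts, transient wrong voters let the adversary drive the $\Theta(n)$ permanent voters to any mixture --- in particular to all-$P_{1 - v}$ with no $P_{v}$ and no $P_{\perp}$. From that point the correct vote is held only by the $O(1)$ surviving $\Upsilon'_{\psi, v}$ molecules, and propagating it to a $\Theta(n)$ population is precisely the vote-amplification problem, which \Thm{}~\ref{theorem:amplification:protocol} solves in $\Theta(n)$ and which random-walk broadcast (Appendix~\ref{appendix:random-walk}) fails to solve at all under weak fairness; an $O(\log n)$ gadget here would make the paper's VA protocol redundant. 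Concretely: dropping the annihilation reaction, the only reaction that converts a $P_{1 - v}$ is recording against an $O(1)$-count catalyst, with propensity $\Theta(k / n)$ when $k$ bad voters remain, giving $\sum_{k} \Theta(n / k) = \Theta(n \log n)$. Keeping the annihilation reaction, the cycle ``record one $P_{v}$; annihilate it against a $P_{1 - v}$ into $2 P_{\perp}$; convert both $P_{\perp}$ back via $P_{\perp} + P_{1 - v} \rightarrow 2 P_{1 - v}$'' returns to the starting configuration while satisfying weak fairness (the starved reaction $P_{\perp} + P_{v}$ becomes inapplicable whenever $P_{\perp}$ hits zero), so you would need to check the non-escaping-component condition of \Lem{}~\ref{lemma:correctness-via-configuration-digraph}, and this cycle indicates it fails. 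The analogy with \Thm{}~\ref{theorem:crd:detection} does not rescue the bound, because that analysis requires a progress measure that is monotone under \emph{every} non-void reaction, which your annihilation and reverse-epidemic reactions violate. The recycling reactions are a secondary issue (they consume $\Species'_{\psi}$ molecules, hence the normalization you need), but they are simply unnecessary under the paper's construction.
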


Assuming a stochastic scheduler, Angluin et al.~\cite{AngluinAE2008fast}
develop a VA protocol that halts in
$O (n)$
time.
Unfortunately, the protocol of \cite{AngluinAE2008fast} does not meet the
topological conditions of
\Lem{}~\ref{lemma:correctness-via-configuration-digraph}, hence the (weakly
fair) adversarial scheduler can prevent this protocol from stabilizing (see
\LongVersion 
Appendix~\ref{appendix:random-walk}
\LongVersionEnd 
\ShortVersion 
the attached full version
\ShortVersionEnd 
for more details).
Using a completely different technique, we develop a VA protocol whose
guarantees are cast in the following theorem.

\begin{theorem} \label{theorem:amplification:protocol}
There exists a VA protocol (operating under the weakly fair scheduler) that
stabilizes in
$O (n)$
time and halts in
$O (n \log n)$
time.
\end{theorem}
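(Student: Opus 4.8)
The plan is to exhibit an explicit VA protocol $\Pi$ together with a runtime policy tailored to the two claimed running times, reusing the ``ignite, then work'' template of \Sect{}~\ref{section:crd}. The permanent and fluid voter molecules present in the initial configuration are declared ignition species which, through the ignition gadget of \Sect{}~\ref{section:runtime:tools:ignition}, transform into \emph{working} species: a permanent‑voter working species $\bar{P}_{w}$ for each $w\in\{0,1\}$; a small constant number of ``flavoured'' fluid‑voter states (an \emph{active} state, a \emph{settled} state that already counts as a $w$‑vote, and a neutral/undecided state); and $O(1)$ auxiliary species used only in a final clean‑up. The single most important design principle — and the mechanism that yields correctness under \emph{weak} fairness, in contrast to the random‑walk broadcast of \cite{AngluinAE2008fast} — is that $\Pi$ is \emph{oblivious to $v$}: the reaction set treats the two values symmetrically, and the only asymmetry comes from the permanent voters. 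Since a valid initial configuration contains $\bar{P}_{v}$ and no $\bar{P}_{1-v}$, only reactions ``speaking for $v$'' can ever ignite a $v$‑flavoured active molecule, so no $(1-v)$‑flavoured active molecule is ever created; this is precisely what forbids the livelocks that would otherwise be available to the weakly fair adversary.

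For correctness I would invoke \Lem{}~\ref{lemma:correctness-via-configuration-digraph}. Fixing $v$ and a valid initial configuration $\c^{0}$, I would exhibit a monotone progress measure (the number of molecules not yet $v$‑flavoured, in the spirit of the charge/leader‑count quantities of \Sect{}~\ref{section:crd:semilinear}) and argue: (a) as long as a non‑$v$‑flavoured fluid voter is present, some ``$v$‑conversion'' reaction is continuously applicable, so by weak fairness the progress measure strictly decreases, until all fluid voters are $v$‑flavoured — this is the stabilization event, and from such a configuration every reachable configuration is still all‑$v$‑flavoured, so it lies in $\Stab$ of the correct set; (b) thereafter only the clean‑up reactions remain applicable, and they are continuously applicable until all settled $v$‑voters are finalized and the auxiliary species are exhausted — this is the halting event. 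Feeding (a) and (b) into \Lem{}~\ref{lemma:correctness-via-configuration-digraph} gives stable (resp.\ halting) correctness under the weakly fair scheduler, and hence that $\Pi$ is a legal VA protocol; \Thm{}~\ref{theorem:amplification:application} then turns it into the promised vote‑amplified compiler.

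For the runtime I would define a runtime policy $\RunPol$ by a case analysis on $\c$, exactly as in the CRD proofs: the ignition reactions while ignition species remain; while a non‑$v$‑flavoured fluid voter is present, the set of conversion reactions — each of which, at every configuration reachable from $\c$ via a $\RunPol$‑avoiding path, has propensity $\Omega(j^{2}/n)$, where $j$ is the current value of the progress measure, because conversion pairs two $\Theta(j)$‑sized sub‑populations; and, after stabilization, the clean‑up reactions, a bounded‑count catalytic step of propensity $\Omega(j/n)$ with $j$ the number of residual non‑finalized molecules. Splitting a weakly fair execution $\eta$ into the corresponding phases and bounding each via \Lem{}~\ref{lemma:runtime:tools:temporal-cost} in the style of \Sect{}~\ref{section:crd:semilinear:threshold}: the ignition phase costs $O(\log n)$ by \Lem{}~\ref{lemma:runtime:tools:ignition}; the conversion phase costs $\sum_{j} O(n/j^{2})=O(n)$ and, since it ends exactly at the stabilization step, already gives $\RunTime_{\Stab}^{\RunPol,\SkipPol}(\eta)=O(n)$; and the clean‑up phase contributes a further $\sum_{j} O(n/j)=O(n\log n)$, giving $\RunTime_{\Halt}^{\RunPol,\SkipPol}(\eta)=O(n\log n)$. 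A short argument, mirroring the ``target‑deprived rounds are $O(1)$‑bounded'' lemmas of \Sect{}~\ref{section:crd:semilinear}, handles rounds in which the skipping policy temporarily freezes the progress measure.

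The main obstacle, by a wide margin, is the first step: pinning down a reaction set that is simultaneously (1) oblivious to $v$ yet symmetry‑broken solely by the permanent voters; (2) free of weakly‑fair livelocks, i.e.\ every strongly connected component of every $D^{\Pi}_{\c^{0}}$ either admits an escaping reaction or is already all‑$v$‑flavoured; and (3) structured so that the ``hard'' work separating a non‑$v$‑voter from its elimination is a pairing of two linearly‑sized populations — so that conversion, and hence stabilization, is $O(n)$ — with the only unavoidably $\omega(n)$ bottleneck (finalizing the residual molecules through a possibly $O(1)$‑count catalyst) pushed entirely into the post‑stabilization clean‑up, which is what both produces the $O(n\log n)$ halting bound and explains the gap between the two. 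Once such a protocol is in hand, the correctness proof and the runtime accounting are routine adaptations of the techniques already developed in \Sect{}~\ref{section:crd}.
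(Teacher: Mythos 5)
Your proposal does not actually prove the theorem: it is an existence statement, and you never exhibit the protocol. You candidly identify ``pinning down a reaction set'' satisfying your three desiderata as ``the main obstacle, by a wide margin'' and defer it --- but that obstacle \emph{is} the theorem. The paper's proof supplies the missing object concretely: fluid voters come in two confidence levels, $\FluidVoters_{v} = \{H_{v}, L_{v}\}$, with reactions $P_{v} + A \rightarrow P_{v} + H_{v}$ (for $A \in \{H_{1-v}, L_{0}, L_{1}\}$), $H_{0} + H_{1} \rightarrow L_{0} + L_{1}$, and $H_{v} + L_{1-v} \rightarrow 2 L_{v}$. The key idea you are missing is the mechanism that defeats the weakly fair adversary: an integer score $s(H_0)=-4$, $s(L_0)=-1$, $s(L_1)=1$, $s(H_1)=2$ under which \emph{every} non-void reaction strictly increases the total score. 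This simultaneously rules out livelocks (so correctness holds) and bounds the total number of non-void reactions by $O(n)$, so with the full runtime policy $\RunPol(\c) = \NonVoid(\Reactions)$ every round is target-accomplished, there are $O(n)$ rounds, and a two-case propensity argument gives $O(1)$ temporal cost per round before stabilization. Your alternative accounting ($\sum_j O(n/j^{2})$ over a monotone progress measure paired with two $\Theta(j)$-sized sub-populations) presupposes structural properties of the undesigned protocol --- in particular that no non-void reaction can fire without decreasing the measure, and that the target set retains propensity $\Omega(j^{2}/n)$ along all $\RunPol$-avoiding paths --- which is precisely the livelock issue you flag but do not resolve. Only your post-stabilization clean-up phase ($\sum_j O(n/j) = O(n\log n)$ through a bounded-count catalyst) matches the paper's argument.

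Beyond the missing construction, the architecture you do commit to conflicts with the VA task definition. Permanent voters may participate in reactions only as catalysts, with invariant molecular counts, so they cannot be ignition species (ignition reactions consume their reactant); and the species set is partitioned exactly into $\PermVoters_{0} \cup \PermVoters_{1} \cup \FluidVoters_{0} \cup \FluidVoters_{1}$, leaving no room for standalone auxiliary clean-up species --- any such species would have to be classified as a fluid voter and would then have to end up on the $v$ side for the output to be an amplification. The paper's protocol avoids both problems by using no ignition gadget at all and by making the ``confidence levels'' themselves the only extra structure, folded into the fluid voter classes.
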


Combined with \Thm{}~\ref{theorem:amplification:application}, we obtain a
compiler whose stabilization and halting runtime overheads are
$O (n)$
and
$O (n \log n)$,
respectively.
Applying this compiler to the CRDs promised in
\Thm{}~\ref{theorem:crd:semilinear} results in vote amplified CRDs whose
stabilization runtime remains
$O (n)$,
however their halting runtime increases to
$O (n \log n)$.
The excessive
$\log n$
factor would be shaved by a VA protocol that halts in
$O (n)$
time whose existence remains an open question.

\subparagraph{Task Formalization.}
%
\LongVersion 
As stated in \Thm{}~\ref{theorem:amplification:application}, our
compiler is formalized by means of the VA task.
\LongVersionEnd 
A VA protocol is a CRN protocol
$\Pi = (\Species, \Reactions)$
whose species set $\Species$ is partitioned into the pairwise disjoint sets
$
\PermVoters_{0} \cup \PermVoters_{1}
\cup
\FluidVoters_{0} \cup \FluidVoters_{1}
=
\Species$,
where for
$v \in \{ 0, 1 \}$,
the species in $\PermVoters_{v}$ are referred to as \emph{permanent $v$-voters}
and the species in $\FluidVoters_{v}$ are referred to as \emph{fluid
$v$-voters}.
The permanent voters are regarded as part of the task specification and can
participate in the reactions of $\Pi$ only as catalysts (which means that the
molecular count of each permanent voter remains invariant throughout the
execution).

A configuration
$\c^{0} \in \Naturals^{\Species}$
is
\LongVersion 
valid as an 
\LongVersionEnd 
\ShortVersion 
a valid
\ShortVersionEnd 
initial configuration for the VA task if there exists a vote
$v \in \{ 0, 1 \}$
such that
$\c^{0}(\PermVoters_{v}) > 0$
and
$\c^{0}(\PermVoters_{1 - v}) = 0$,
in which case we refer to $\c^{0}$ as a \emph{$v$-voting} initial
configuration.
\LongVersion 
For convenience, we further require that
$\c^{0}(\{ \PermVoters_{0}, \PermVoters_{1} \})
\leq
\c^{0}(\{ \FluidVoters_{0}, \FluidVoters_{1} \})$,
which means that the permanent voters (in fact, the voters in
$\PermVoters_{v}$) do not dominate the initial molecular count.\footnote{%
This requirement is not inherent to the task formulation and we present it
solely for the sake of simplifying the analysis.}
\par
\LongVersionEnd 
A configuration
$\c \in \Naturals^{\Species}$
is
\LongVersion 
said to be
\LongVersionEnd 
an \emph{amplification} of a $v$-voting initial configuration $\c^{0}$
if
(1)
$\c(A) = \c^{0}(A)$
for every
$A \in \PermVoters_{0} \cup \PermVoters_{1}$;
(2)
$\c(\FluidVoters_{v}) = \c^{0}(\{ \FluidVoters_{0} \cup \FluidVoters_{1} \})$;
and
(3)
$\c(\FluidVoters_{1 - v}) = 0$.
In other words, an amplification of a $v$-voting initial configuration keeps
the original permanent voter molecules and shifts all fluid voter molecules to
the $v$-voting side.

The VA protocol $\Pi$ is stably (resp., haltingly) correct if every weakly
fair valid execution
$\eta = \langle \c^{t}, \alpha^{t} \rangle_{t \geq 0}$
stabilizes (resp., halts) into the (set of)
amplifications of
\LongVersion 
its initial configuration
\LongVersionEnd 
$\c^{0}$.
The typical scenario involves a small number of permanent $v$-voter molecules
and the challenge is to ensure that all
\LongVersion 
(asymptotically many)
\LongVersionEnd 
fluid voter molecules ``end up'' in $\FluidVoters_{v}$.
We emphasize that for $\Pi$ to be correct, the protocol should handle any
initial configuration
$\c^{0}|_{\FluidVoters_{0} \cup \FluidVoters_{1}}$
of the fluid voters.

\subparagraph{The VA Protocol.}
We now turn to develop the VA protocol
$\Pi = (\Species, \Reactions)$
promised in
\Thm{}~\ref{theorem:amplification:protocol}.
\LongVersion 
Fix some sets $\PermVoters_{0}$ and $\PermVoters_{1}$ of permanent $0$- and
$1$-voters, respectively.
\LongVersionEnd 
\ShortVersion 
For simplicity, assume in this extended abstract that $\PermVoters_{0}$ and
$\PermVoters_{1}$ are singleton sets with
$\PermVoters_{0} = \{ P_{0} \}$
and
$\PermVoters_{1} = \{ P_{1} \}$;
the general case is handled in the attached full version.
\ShortVersionEnd 
Protocol $\Pi$ is defined over the fluid voter sets
$\FluidVoters_{0} = \{ H_{0}, L_{0} \}$
and
$\FluidVoters_{1} = \{ H_{1}, L_{1} \}$.
Semantically, we think of the $H$ (resp., $L$) fluid voters as having a high
(resp., low) confidence level in their vote.
The reaction set $\Reactions$ of $\Pi$ includes the following non-void
reactions:
\\
\LongVersion 
$\beta_{v, P_{v}}^{A}$%
\LongVersionEnd 
\ShortVersion 
$\beta_{v}^{A}$%
\ShortVersionEnd 
:
$P_{v} + A \rightarrow P_{v} + H_{v}$
for every
$v \in \{ 0, 1 \}$%
\LongVersion 
,
$P_{v} \in \PermVoters_{v}$,
\LongVersionEnd 
\ and
$A \in \{ H_{1 - v}, L_{0}, L_{1} \}$;
\\
$\gamma$:
$H_{0} + H_{1} \rightarrow L_{0} + L_{1}$;
and
\\
$\delta_{v}$:
$H_{v} + L_{1 - v} \rightarrow 2 L_{v}$
for every
$v \in \{ 0, 1 \}$.
\LongVersion 
\par
In other words, the $\beta_{v}$ reactions turn any fluid voter into a high
confidence fluid $v$-voter;
reaction $\gamma$ turns high confidence fluid voters with opposite votes into
low confidence fluid voters with opposite votes;
and
reaction $\delta_{v}$ turns a high confidence fluid $v$-voter and a low
confidence fluid
$(1 - v)$-voter
into two low confidence fluid $v$-voters.
\LongVersionEnd 
\ShortVersion 
\\
\ShortVersionEnd 
Informally, these reactions guarantee that the adversary has little leverage
because, as we show soon, \emph{all} of the non-void reactions make nontrivial
progress in their own different ways.

For the runtime analysis of protocol $\Pi$, consider a weakly fair valid
execution
$\eta = \langle \c^{t}, \zeta^{t} \rangle_{t \geq 0}$
of initial molecular count
$\| \c^{0} \| = n$.
Assume for simplicity that the initial configuration $\c^{0}$ is $1$-voting
which means that
\LongVersion 
all permanent voters present in $\c^{0}$ (and in $\c^{t}$ for any
$t \geq 0$)
belong to $\PermVoters_{1}$%
\LongVersionEnd 
\ShortVersion 
$\c^{t}(P_{1}) > 0$
and
$\c^{t}(P_{0}) = 0$
for all
$t \geq 0$%
\ShortVersionEnd 
;
the case where $\c^{0}$ is $0$-voting is analyzed symmetrically.
Let
$m = \c^{0}(\{ H_{0}, L_{0}, L_{1}, H_{1} \})$
be the initial molecular count of the fluid voters and observe that
$\c^{t}(\{ H_{0}, L_{0}, L_{1}, H_{1} \}) = m$
for every
$t \geq 0$.

To capture progress
\LongVersion 
made as execution $\eta$ advances%
\LongVersionEnd 
, we assign an integral score $s(\cdot)$ to each fluid voter by setting
\LongVersion 
\[\textstyle
s(H_0) = -4,
\quad
s(L_0) = -1,
\quad
s(L_1) = 1,
\quad\text{and}\quad
s(H_1) =  2
\, .
\]
\LongVersionEnd 
\ShortVersion 
$s(H_0) = -4$,
$s(L_0) = -1$,
$s(L_1) = 1$,
and
$s(H_1) =  2$.
\ShortVersionEnd 
Substituting the $s(\cdot)$ scores into each reaction
$\alpha \in \NonVoid(\Reactions)$
reveals that the sum of scores of $\alpha$'s fluid reactants is strictly
smaller than the sum of scores of $\alpha$'s fluid products.
Denoting the total score in a configuration
$\c \in \Naturals^{\Species}$
by
$s(\c) = \sum_{A \in \{ H_{0}, L_{0}, L_{1}, H_{1} \}} \c(A) \cdot s(A)$,
we deduce that
$s(\c^{t + 1}) \geq s(\c^{t})$
and that
$\zeta^{t} \in \NonVoid(\Reactions)
\Longrightarrow
s(\c^{t + 1}) > s(\c^{t})$
for every
$t \geq 0$.
Since 
$-4 m
\leq
s(\c^{t})
\leq
2 m$
for every
$t \geq 0$,
it follows that $\eta$ includes, in total, at most
$O(m) \leq O (n)$
non-void reactions until it
\LongVersion 
halts%
\LongVersionEnd 
\ShortVersion 
stabilizes%
\ShortVersionEnd 
.

The last bound ensures that progress is made
\LongVersion 
whenever a non-void reaction is scheduled%
\LongVersionEnd 
\ShortVersion 
on each non-void reaction%
\ShortVersionEnd 
.
Accordingly, we choose the runtime policy $\RunPol$ so that
$\RunPol(\c) = \NonVoid(\Reactions)$
for all configurations
$\c \in \Naturals^{\Species}$.\footnote{%
Although it serves its purpose in the current analysis, for many CRN protocols,
a runtime policy whose targets cover all non-void reactions is suboptimal;
this is elaborated in
\LongVersion 
\Sect{}~\ref{section:justify-runtime-policy}%
\LongVersionEnd 
\ShortVersion 
the attached full version%
\ShortVersionEnd 
.
}
\LongVersion 
This means in particular that for every configuration
$\c \in \Naturals^{\Species}$,
the only configuration reachable from $\c$ via a $\RunPol$-avoiding path is
$\c$ itself.
\LongVersionEnd 

Fix some skipping policy $\SkipPol$ and let $\EffConf^{i}$ be the effective
configuration of round
$i \geq 0$
under $\RunPol$ and $\SkipPol$.
Let
$i^{*} = \min \{
i \geq 0 \mid \EffConf^{i}(\{ H_{0}, L_{0}\}) = 0
\}$
be the first round whose effective step appears after $\eta$ stabilizes%
\LongVersion 
\ and
let
$i^{**} = \min \{
i \geq 0 \mid \EffConf^{i}(\{ H_{0}, L_{0}, L_{1} \}) = 0
\}$
be the first round whose effective step appears after $\eta$ halts%
\LongVersionEnd 
.
Since the choice of $\RunPol$ ensures that each round
\LongVersion 
$0 \leq i < i^{**}$
\LongVersionEnd 
\ShortVersion 
$0 \leq i < i^{*}$
\ShortVersionEnd 
is target-accomplished, ending with a non-void reaction, it follows that
\LongVersion 
$i^{*} \leq i^{**} \leq O (n)$%
\LongVersionEnd 
\ShortVersion 
$i^{*} \leq O (n)$%
\ShortVersionEnd 
.

To bound the stabilization runtime of execution $\eta$ under $\RunPol$ and
$\SkipPol$, we argue that
$\Prp_{\EffConf^{i}}(\NonVoid(\Reactions))
\geq
\Omega (1)$
for every
$0 \leq i < i^{*}$;
\LongVersion 
this allows us to employ \Lem{}~\ref{lemma:runtime:tools:temporal-cost} and
\LongVersionEnd 
\ShortVersion 
by a simple probabilistic argument (elaborated in the attached full version),
this allows us to
\ShortVersionEnd 
conclude that
$\TemporalCost^{\RunPol}(\EffConf^{i}) \leq O (1)$
for every
$0 \leq i < i^{*}$.
To this end, notice that if
$\EffConf^{i}(H_{1}) \geq m / 2$,
then
\[\textstyle
\Prp_{\EffConf^{i}} \left( \{ \gamma, \delta_{1} \} \right)
\, = \,
\frac{1}{\Vol}
\cdot
\EffConf^{i}(H_{1}) \cdot \EffConf^{i}(\{ H_{0}, L_{0} \})
\, \geq \,
\Omega (m / n)
\, = \,
\Omega (1)
\, .
\]
Otherwise
($\EffConf^{i}(H_{1}) < m / 2$),
we know that
$\EffConf^{i}(\{ H_{0}, L_{0}, L_{1} \}) > m / 2$,
hence
\LongVersion 
\begin{align*}
\Prp_{\EffConf^{i}} \left(
\left\{ \beta_{1, P_{1}}^{A}
\mid
P_{1} \in \PermVoters_{1}, A \in \{ H_{0}, L_{0}, L_{1} \} \right\}
\right)
\, = \, &
\tfrac{1}{\Vol}
\cdot
\EffConf^{i}(\{ H_{0}, L_{0}, L_{1} \}) \cdot \EffConf^{i}(\PermVoters_{1})
\\
\geq \, &
\Omega (m / n)
\, = \,
\Omega (1)
\, ,
\end{align*}
\LongVersionEnd 
\ShortVersion 
\[\textstyle
\Prp_{\EffConf^{i}} \left(
\left\{ \beta_{1}^{A}
\mid
A \in \{ H_{0}, L_{0}, L_{1} \} \right\}
\right)
\, = \,
\tfrac{1}{\Vol}
\cdot
\EffConf^{i}(\{ H_{0}, L_{0}, L_{1} \}) \cdot \EffConf^{i}(\PermVoters_{1})
\, \geq \,
\Omega (m / n)
\, = \,
\Omega (1)
\, ,
\]
\ShortVersionEnd 
thus establishing the argument.
Therefore, the stabilization runtime of $\eta$ satisfies
\[\textstyle
\RunTime_{\Stab}^{\RunPol, \SkipPol}(\eta)
\, = \,
\sum_{i = 0}^{i^{*} - 1} \TemporalCost^{\RunPol}(\EffConf^{i})
\, \leq \,
\sum_{i = 0}^{O (n)} O (1)
\, = \,
O (n)
\, .
\]

\ShortVersion 
The proof of \Thm{}~\ref{theorem:amplification:protocol} is completed by
showing that protocol $\Pi$ halts in
$O (n \log n)$
time.
This part of the proof is deferred to the attached full version.
\ShortVersionEnd 

\LongVersion 
To bound the halting runtime of $\eta$ under $\RunPol$ and $\SkipPol$, fix
some round
$i^{*} \leq i < i^{**}$
and observe that
$\EffConf^{i}(\{ H_{0}, L_{0}\}) = 0$
and that
$\Applicable(\EffConf^{i}) \cap \NonVoid(\Reactions)
=
\{ \beta_{1, P_{1}}^{L_{1}} \mid P_{1} \in \PermVoters_{1} \}$.
Let
$\ell_{i} = \EffConf^{i}(L_{1})$
and notice that
$\Prp_{\EffConf^{i}}(\{ \beta_{1, P_{1}}^{L_{1}} \mid P_{1} \in \PermVoters_{1} \})
\geq
\ell_{i} / \Vol
\geq
\Omega (\ell_{i} / n)$.
Therefore, we can employ \Lem{}~\ref{lemma:runtime:tools:temporal-cost} to
conclude that
$\TemporalCost^{\RunPol}(\EffConf^{i}) \leq O (n / \ell_{i})$.
Since
$\ell_{i + 1} < \ell_{i}$
for every
$i^{*} \leq i < i^{**}$
and since
$\ell_{i^{*}} \leq m \leq n$
and
$\ell_{i^{**}} = 0$,
it follows that the halting runtime of $\eta$ satisfies
\begin{align*}
\RunTime_{\Halt}^{\RunPol, \SkipPol}(\eta)
\, = \, &
\RunTime_{\Stab}^{\RunPol, \SkipPol}(\eta)
+
\sum_{i = i^{*}}^{i^{**} - 1} \TemporalCost^{\RunPol}(\EffConf^{i})
\\
\leq \, &
O (n)
+
\sum_{\ell = 1}^{n} O (n / \ell)
\, = \,
O (n) \cdot \sum_{\ell = 1}^{n} 1 / \ell
\, = \,
O (n \log n)
\, ,
\end{align*}
thus establishing \Thm{}~\ref{theorem:amplification:protocol}.
\LongVersionEnd 

\LongVersion 
\subsection{Obtaining Vote Amplified CRDs}
\label{section:amplification:application}
Let
$\Pi_{\psi}
=
(\Species_{\psi}, \Reactions_{\psi}, \Sigma, \Upsilon_{\psi, 0},
\Upsilon_{\psi, 1}, F_{\psi}, \mathbf{k}_{\psi})$
be a CRD protocol that haltingly decides the predicate
$\psi : \Naturals^{\Sigma} \rightarrow \{ 0, 1 \}$
in $T_{\psi}(n)$ time using the runtime policy $\RunPol_{\psi}$.
Let
$\Pi'_{\psi}
=
(\Species'_{\psi}, \Reactions'_{\psi}, \Sigma', \Upsilon'_{\psi, 0},
\Upsilon'_{\psi, 1}, F'_{\psi}, \mathbf{k}'_{\psi})$
be the CRD derived from $\Pi_{\psi}$ by replacing each species
$A \in \Species_{\psi}$
with a $\Pi'_{\psi}$ designated species
$A' \in \Species'_{\psi}$;
in particular, the CRD $\Pi'_{\psi}$ is defined over the input species
$\Sigma' = \{ A' \mid A \in \Sigma \}$.
Let $\RunPol'_{\psi}$ be the runtime policy for $\Pi'_{\psi}$ derived from
$\RunPol_{\psi}$ by replacing each species
$A \in \Species_{\psi}$
with the corresponding species
$A' \in \Species'_{\psi}$.
Consider a VA protocol
$\Pi_{\Amp} = (\Species_{\Amp}, \Reactions_{\Amp})$
that stabilizes (resp., halts)
in $T_{\Amp}(n)$ time using the runtime policy $\RunPol_{\Amp}$ and assume
that the permanent $v$-voters of $\Pi_{\Amp}$ are identified with the species
in
$\Upsilon'_{\psi, v}$
for
$v \in \{ 0, 1 \}$.

We construct the CRD
$\Pi
=
(\Species, \Reactions, \Sigma, \Upsilon_{0}, \Upsilon_{1}, F, \mathbf{k})$
promised in \Thm{}~\ref{theorem:amplification:application} as
follows:
The species set of $\Pi$ is taken to be
$\Species = \Species_{\psi} \cup \Species'_{\psi} \cup \Species_{\Amp}$.
The species in $\Species_{\psi}$ are regarded as the ignition species of the
ignition gadget presented in \Sect{}~\ref{section:runtime:tools:ignition},
taking the ignition reaction associated with species
$A \in \Species_{\psi}$
to be
\\
$\iota_{A}$:
$A \rightarrow A' + \lceil 1 / \epsilon \rceil \cdot B$,
\\
where
$B \in \Species_{\Amp}$
is an arbitrary fluid voter of $\Pi_{\Amp}$.
The remaining non-void reactions of $\Pi$ are the non-void reactions of
$\Pi'_{\psi}$ and $\Pi_{\Amp}$ so that
$\NonVoid(\Reactions)
=
\NonVoid(\Reactions'_{\psi})
\cup
\NonVoid(\Reactions_{\Amp})
\cup
\{ \iota_{A} \mid A \in \Species_{\psi} \}$.
For
$v \in \{ 0, 1 \}$,
the $v$-voters of $\Pi$ are taken to be the (permanent and fluid)
$v$-voters of $\Pi_{\Amp}$.
Finally, the context $\mathbf{k}$ of $\Pi$ is taken to be
$\mathbf{k} = \mathbf{k}_{\psi}$.

Intuitively, the construction of $\Pi$ ensures that once the ignition gadget
has matured, all but an $\epsilon$-fraction of the molecules in the test tube
are fluid voters (of $\Pi_{\Amp}$) and that this remains the case
subsequently.
The fluid voters may interact with the voters of $\Pi'_{\psi}$ --- playing the
role of the permanent voters of $\Pi_{\Amp}$ --- and consequently ``switch
side'' back and forth.
However, once the (projected) execution of $\Pi'_{\psi}$ halts, all permanent
voters present in the test tube have the same vote, so $\Pi_{\Amp}$ can now
run in accordance with the definition of the VA task.

Formally, to establish
\Thm{}~\ref{theorem:amplification:application}, we construct the
runtime policy $\RunPol$ for $\Pi$ by setting $\RunPol(\c)$ as follows for
each configuration
$\c \in \Naturals^{\Species}$:
\begin{itemize}

\item
if
$\c(\Species_{\psi}) > 0$,
then
$\RunPol(\c) = \{ \iota_{A} \mid A \in \Species_{\psi} \}$;

\item
else if
$\c|_{\Species'_{\psi}}$
is not a halting configuration of $\Pi'_{\psi}$, then
$\RunPol(\c) = \RunPol'_{\psi}(\c|_{\Species'_{\psi}})$;

\item
else
$\RunPol(\c) = \RunPol_{\Amp}(\c|_{\Species_{\Amp}})$.

\end{itemize}
Consider a weakly fair valid execution
$\eta = \langle \c^{t}, \zeta^{t} \rangle_{t \geq 0}$
of $\Pi$ of initial molecular
count
$\| \c^{0} \| = n$
and fix a skipping policy $\SkipPol$.

By Lemma~\ref{lemma:runtime:tools:ignition}, the construction of the runtime
policy $\RunPol$ guarantees that the ignition gadget matures in $\eta$ within
$O (\log n)$
time;
following that, the configurations of $\eta$ consist only of $\Pi'_{\psi}$
and $\Pi_{\Amp}$ molecules.
Recall that if a $\Pi'_{\psi}$ species
$A' \in \Species'_{\psi}$
participates in a $\Pi_{\Amp}$ reaction
$\alpha = (\mathbf{r}, \mathbf{p}) \in \Reactions_{\Amp}$,
then $A'$ is a catalyst for $\alpha$, i.e.,
$\mathbf{r}(A') = \mathbf{p}(A')$,
hence the execution of $\Pi'_{\psi}$ is not affected by that of $\Pi_{\Amp}$.
In particular, the construction of $\RunPol$ guarantees that $\Pi'_{\psi}$
halts within
$T_{\psi}(O (n))$
time.
Once the execution of $\Pi'_{\psi}$ halts, that of $\Pi_{\Amp}$ sets into
motion, exploiting the fact that the molecular counts of the voters of
$\Pi'_{\psi}$ (that play the role of the permanent voters in $\Pi_{\Amp}$)
remain fixed.
the construction of $\RunPol$ guarantees that $\Pi_{\Amp}$ stabilizes (resp.,
halts) within
$T_{\Amp}(O (n))$
time.
\LongVersionEnd 

\LongVersion 
\section{Justifying the Runtime Policy Definition}
\label{section:justify-runtime-policy}
Consider a CRN protocol
$\Pi = (\Species, \Reactions)$.
Recall that as defined in \Sect{}~\ref{section:runtime}, a runtime policy
$\RunPol$ for $\Pi$ can map a given configuration
$\c \in \Naturals^{\Species}$
to any subset
$\RunPol(\c) \subseteq \NonVoid(\Reactions)$
of non-void target reactions.
This definition is fairly general and the reader may wonder whether it can be
restricted without hurting the (asymptotic) runtime efficiency of $\Pi$.
In the current section, we show that this is not the case for four ``natural''
such restrictions as stated in \Prop{}
\ref{proposition:justify-runtime-policy:fixed},
\ref{proposition:justify-runtime-policy:singleton},
\ref{proposition:justify-runtime-policy:subset-escaping},
and
\ref{proposition:justify-runtime-policy:superset-escaping}.

\begin{proposition}
\label{proposition:justify-runtime-policy:fixed}
There exists a CRN protocol
$\Pi = (\Species, \Reactions)$
such that
$\RunTime_{\Halt}^{\Pi}(n) = O (n)$,
however if we restrict the runtime policies $\RunPol$ so that
$\RunPol(\c) = Q$
for every configuration
$\c \in \Naturals^{\Species}$,
where $Q$ is a fixed subset of $\NonVoid(\Reactions)$, then
$\RunTime_{\Halt}^{\Pi}(n)$
cannot be bounded as a function of $n$.
\end{proposition}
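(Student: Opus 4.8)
\emph{Plan.} I would build a haltingly correct protocol $\Pi$ (over a suitable interface for which a valid initial configuration of molecular count $n$ is a fixed configuration $\c^{0}$ --- say one molecule of a ``seed'' species plus $n-1$ inert fuel molecules) whose configuration digraph contains \emph{two} cyclic components with incompatible ``exit requirements''. From $\c^{0}$, two unimolecular seed reactions $\sigma_{1},\sigma_{2}$ (the adversary picks which one fires) lead into two components $S_{1}$ and $S_{2}$; each $S_{i}$ is made a non-trivial strongly connected component by a two-state ``flipper'' token that is oscillated by a component-specific marker species $M_{i}$ used as a catalyst. I give $S_{1}$ a \emph{unique} escaping reaction $\alpha_{1}$: concretely $\alpha_{1}$ rewrites a dedicated species that is present throughout $S_{1}$, and this rewrite can be undone only in the presence of the marker $M_{2}$, which is absent from $S_{1}$. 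The twist is that $M_{2}$ \emph{is} present in $S_{2}$, so there $\alpha_{1}$ is reversible (via a catalytic undo reaction), hence applicable at configurations of $S_{2}$ with its application staying inside $S_{2}$; thus $\alpha_{1}$ is a non-escaping \emph{internal} reaction of $S_{2}$, whose own unique escaping reaction is a different reaction $\alpha_{2}$ (one that consumes $M_{2}$). Firing $\alpha_{1}$ out of $S_{1}$, or $\alpha_{2}$ out of $S_{2}$, leads through one or two short ``cleanup'' reactions into a halting configuration. A direct inspection of the (finite) reachable space shows that every component either is all-halting or admits an escaping reaction, so by \Lem{}~\ref{lemma:correctness-via-configuration-digraph} $\Pi$ is haltingly correct, and it plainly respects finite density.

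\emph{Upper bound.} For $\RunTime_{\Halt}^{\Pi}(n)=O(n)$ I would use the \emph{configuration-dependent} runtime policy $\RunPol$ that targets $\{\sigma_{1},\sigma_{2}\}$ at $\c^{0}$, $\{\alpha_{1}\}$ on $S_{1}$, $\{\alpha_{2}\}$ on $S_{2}$, and an escaping reaction on each of the boundedly many other non-halting components (and anything on halting configurations). Since the targeted reaction is applicable at every configuration of the component in question, and every $\RunPol$-avoiding path out of such a configuration stays inside that component, \Lem{}~\ref{lemma:runtime:tools:temporal-cost} bounds the temporal cost of any round whose effective configuration lies in that component by the reciprocal of the targeted reaction's propensity, which is $\Omega(1/n)$ (at worst $\Theta(n)$, for a bimolecular cleanup reaction with $O(1)$-count reactants). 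Because the effective steps of successive rounds are strictly increasing, at most one round has its effective configuration in any given component, and each execution meets only $O(1)$ components; hence $O(1)$ rounds are charged $O(n)$ and the rest $\Theta(1/n)$, giving $\RunTime_{\Halt}^{\RunPol,\SkipPol}(\eta)=O(n)$ for every weakly fair $\eta$ and every skipping policy $\SkipPol$.

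\emph{Restricted lower bound.} Fix an arbitrary \emph{fixed} target set $Q\subseteq\NonVoid(\Reactions)$ and an arbitrary $n$ above a small constant; I claim $\max_{\eta,\SkipPol}\RunTime_{\Halt}^{Q,\SkipPol}(\eta)=\infty$, which suffices because the minimisation over fixed policies ranges over a finite set. Consider the dichotomy on whether $\alpha_{1}\in Q$. If $\alpha_{1}\notin Q$ then, as $\alpha_{1}$ is the \emph{unique} escaping reaction of $S_{1}$, the set $Q$ contains no escaping reaction of $S_{1}$; take the weakly fair execution that ignites into $S_{1}$, traverses the flipper cycle $k$ times, and only then fires $\alpha_{1}$ and halts (weak fairness holds since $\alpha_{1}$ does eventually fire and every reaction is scheduled cofinally). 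Under the identity skipping policy this $k$-fold loop is split into $\Omega(k)$ rounds whose effective configurations lie in $S_{1}$ --- one round per step if $Q$ is disjoint from the reactions active in $S_{1}$, otherwise one per scheduled $Q$-reaction --- and each such round is charged a temporal cost of at least one time span, i.e.\ $\Omega(1/n)$; letting $k\to\infty$ drives $\RunTime_{\Halt}^{Q,\SkipPol}(\eta)\to\infty$. If instead $\alpha_{1}\in Q$, take the weakly fair execution that ignites into $S_{2}$ and loops there $k$ times (interleaving the flipper and the $\alpha_{1}$/undo reactions so that $\alpha_{1}$ is scheduled once per loop) before escaping via $\alpha_{2}$ and halting; since $\alpha_{1}\in Q$ is applicable inside $S_{2}$, this loop is again split into $\Omega(k)$ rounds with effective configuration in $S_{2}$ (each scheduled $\alpha_{1}$ triggers condition~(I) of the operator $\Stop$), each of temporal cost $\Omega(1/n)$, so $\RunTime_{\Halt}^{Q,\SkipPol}(\eta)\to\infty$ as before.

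\emph{Main obstacle.} The hard part is the construction itself: forcing $\alpha_{1}$ to be simultaneously the \emph{only} escape from $S_{1}$ (so that omitting it from $Q$ is fatal for $S_{1}$) and a \emph{reversible internal} reaction of a \emph{distinct} component $S_{2}$ (so that including it in $Q$ is fatal for $S_{2}$). Since a reaction's effect is a fixed vector, ``$\alpha_{1}$ reversible'' cannot depend on context unless the undo reaction does; the resolution is to gate the undo on a catalyst $M_{2}$ present in $S_{2}$ but absent both from $S_{1}$ and from the configuration immediately reached by firing $\alpha_{1}$ out of $S_{1}$ --- which keeps $S_{1}$ and $S_{2}$ in different strongly connected components while making $\alpha_{1}$ behave differently in the two. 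One then has to verify that no other weakly fair valid execution fails to halt (no spurious flipper-like component lacking an escaping reaction) and re-confirm the propensity/temporal-cost estimates behind the $O(n)$ bound; these checks are routine but somewhat tedious.
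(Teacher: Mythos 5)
Your proposal is correct in outline and would go through, but it takes a genuinely different and substantially heavier route than the paper. The paper's counterexample protocol has only five species and three non-void reactions --- an annihilation $\beta\colon X_0+X_1\to 2W$ and two catalytic flips $\gamma_0\colon A_1+X_0\to A_0+X_0$, $\gamma_1\colon A_0+X_1\to A_1+X_1$ --- and replaces your explicit dichotomy on a single reaction by a forcing argument: any fixed $Q$ \emph{must} contain both $\gamma_0$ and $\gamma_1$, because at a reachable configuration where $\gamma_j$ is the only applicable non-void reaction, omitting it from $Q$ lets the adversary stall there with arbitrarily many one-step rounds of temporal cost $\Theta(1/n)$ each (condition (II) of the operator $\Stop$ fires immediately --- the same mechanism as your ``$Q$ disjoint from the reactions active in $S_1$'' subcase). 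Having forced $\{\gamma_0,\gamma_1\}\subseteq Q$, the adversary then alternates the two flips from a configuration containing both $X_0$ and $X_1$, so that every step ends a round via condition (I) at cost $\Omega(1/n)$; the $O(n)$ upper bound for the unrestricted policy is obtained exactly as you do, by targeting $\{\beta\}$ until one $X$ species is exhausted and invoking \Lem{}~\ref{lemma:runtime:tools:temporal-cost}. Your two-component gadget, with $\alpha_1$ the unique escape from $S_1$ yet an internal reversible reaction of $S_2$ via the $M_2$-gated undo, realizes the same tension, but it concentrates all the difficulty in the construction itself and in verifying that no spurious non-escaping component arises --- precisely the parts you defer as ``routine but tedious''. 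Both of your lower-bound mechanisms and your upper-bound bookkeeping match the paper's; what the paper's choice buys is a protocol whose correctness and round-counting are immediate to check, while yours makes the include/exclude dilemma symmetric and explicit at the level of a single reaction.
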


\begin{proposition}
\label{proposition:justify-runtime-policy:singleton}
There exists a CRN protocol
$\Pi = (\Species, \Reactions)$
such that
$\RunTime_{\Stab}^{\Pi}(n) = O (\log n)$,
however if we restrict the runtime policies $\RunPol$ so that
$|\RunPol(\c)| \leq 1$
for every configuration
$\c \in \Naturals^{\Species}$,
then
$\RunTime_{\Stab}^{\Pi}(n) = \Omega (n)$.
\end{proposition}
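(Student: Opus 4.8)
The plan is to design a CRN protocol $\Pi$ together with a set‑valued runtime policy witnessing $\RunTime_{\Stab}^{\Pi}(n)=O(\log n)$, and then to argue that no runtime policy obeying $|\RunPol(\c)|\le 1$ can match this. The protocol I have in mind is a variant of rumour spreading in which the ``informed'' signal may be carried in any of a constant number $k\ge 2$ of interchangeable forms $I_{1},\dots,I_{k}$ (alongside an uninformed species $U$ and a source species whose ignition seeds one informed molecule and $\Theta(n)$ uninformed ones), with, for each form $j$, its own spreading reaction $\beta_{j}\colon I_{j}+U\to 2I_{j}$, plus auxiliary reactions that let the scheduler freely move the surplus informed mass from one form to another. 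The declared task is ``the signal has spread'', i.e.\ stabilization into $\{\c : \c(U)=0\}$. The first step is to fix the reaction set so that, using \Lem{}~\ref{lemma:correctness-via-configuration-digraph}, $\Pi$ is stably correct under the weakly fair scheduler; the key structural feature is that along every weakly fair execution the uninformed count only decreases, so stabilization is reached after $O(n)$ non‑void reactions no matter how the scheduler shuffles the informed forms.

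For the upper bound I would let $\RunPol$ map every reachable non‑stabilized configuration $\c$ to the set of all spreading reactions applicable to $\c$. Then every counted round is target‑accomplished by a single spreading reaction, so the number of counted rounds is at most the number of spreading reactions in $\eta$, which is $O(n)$; and by \Lem{}~\ref{lemma:runtime:tools:temporal-cost} the temporal cost of a round with effective configuration $\c$ is $O\!\big(1/\textstyle\sum_{j}\Prp_{\c}(\beta_{j})\big)=O\!\big(n/(\c(U)\,(n-\c(U)))\big)$, using $\sum_{j}\c(I_{j})=n-\c(U)$. Summing over a chain of effective configurations, along which $\c(U)$ runs from $n-O(1)$ down to $0$, gives a harmonic sum $\sum_{u=1}^{n-1}O(n/(u(n-u)))=O(\log n)$ that is uniform over the skipping policy --- exactly the estimate used in the proof of \Thm{}~\ref{theorem:crd:detection}. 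Hence $\RunTime_{\Stab}^{\Pi}(n)=O(\log n)$.

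For the lower bound, fix any $\RunPol$ with $|\RunPol(\c)|\le 1$. Using the skipping policy, the adversary makes the effective configuration of a counted round a reachable, not‑yet‑stabilized configuration $\c^{\dagger}$ (with $\c^{\dagger}(U)=\Theta(n)$) at which the auxiliary form‑conversion reactions have concentrated all but one unit of the informed mass into forms \emph{other} than the one the single reaction $\RunPol(\c^{\dagger})$ acts on. Then $\RunPol(\c^{\dagger})$ is a bimolecular reaction with an $O(1)$‑count reactant, of propensity $\Theta(1/n)$, and no reaction available at $\c^{\dagger}$ replenishes that reactant faster than $\RunPol(\c^{\dagger})$ itself fires; consequently the imaginary stochastic execution from $\c^{\dagger}$ neither schedules $\RunPol(\c^{\dagger})$ nor renders it inapplicable for $\Omega(n)$ expected time, so this single counted round already costs $\TemporalCost^{\RunPol}(\c^{\dagger})=\Omega(n)$. (If $\RunPol(\c^{\dagger})$ is empty or inapplicable at $\c^{\dagger}$ the round ends at once but makes no progress, and the adversary instead lets the weakly fair execution loop for $\Omega(n^{2})$ steps through the conversion cycle while all those rounds --- each of temporal cost $\Theta(1/n)$ --- are counted; guaranteeing that at least one of these two cases is always available is the crux of the construction.) The set‑valued policy of the second paragraph avoids both traps precisely because it targets all of $\beta_{1},\dots,\beta_{k}$ at once: whichever form the informed mass currently resides in, the corresponding $\beta_{j}$ is fast, so the round ends quickly and with progress. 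The main obstacle is therefore engineering a single reaction set in which ``wait for any one of the $k$ spreading reactions'' costs $O(\log n)$ overall while ``commit to one of them'' can always be forced to cost $\Omega(n)$, and simultaneously keeping the protocol stably correct on every weakly fair execution --- the latter being the routine‑but‑delicate check via the configuration‑digraph characterization of \Sect{}~\ref{section:configuration-digraph}.
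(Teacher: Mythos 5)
Your high-level dichotomy matches the paper's (a set-valued policy that hedges over several reactions, any one of which constitutes progress, versus a singleton policy that the adversary can point at a starved reaction), but the construction as sketched has two concrete problems, and you yourself defer the step that would resolve them. First, the lower bound: for a spreading reaction $\beta_{j}\colon I_{j}+U\to 2I_{j}$, concentrating all but one unit of the informed mass away from form $j$ leaves $\Prp_{\c^{\dagger}}(\beta_{j}) = \c^{\dagger}(I_{j})\cdot\c^{\dagger}(U)/\Vol = \Theta(1)$ when $\c^{\dagger}(U)=\Theta(n)$, not $\Theta(1/n)$; a bimolecular reaction is slow only when \emph{both} reactant counts are $O(1)$. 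So your first trap never springs for reactions of this shape, and the whole lower bound rests on the second case (forcing $\Omega(n^{2})$ cheap target-deprived rounds), whose feasibility you explicitly leave open as ``the crux of the construction.'' The paper instead extracts the $\Omega(n)$ cost from a single round by using a reaction $\beta\colon A+A\to 2B$ with exactly two $A$ molecules present --- a genuine pitfall in the sense of \Sect{}~\ref{section:runtime:speed-faults} --- which is the \emph{unique} escaping reaction of the component in which the adversary parks the execution, so any singleton policy with bounded runtime is forced to target it somewhere in that component.

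Second, the upper bound is itself in danger: condition (II) of the $\Stop$ operator only requires each target reaction to be inapplicable at \emph{some} step of the round, not at a common step. With conversion reactions that ``freely move the informed mass between forms,'' the adversary can cycle the mass so that every form is empty at some moment of the round, ending the round target-deprived while $\c(U)$ is unchanged; repeating this yields unboundedly many counted rounds, each of positive temporal cost, so the proposed policy does not give $O(\log n)$ (or any bound at all). The paper's policy avoids exactly this by always keeping in the target set a reaction ($\beta$, whose two $A$ reactants are present throughout the component) that cannot become inapplicable before it fires, so every counted round is target-accomplished and provably makes progress. Repairing your construction essentially forces you to reintroduce such a continuously applicable, $O(1)$-propensity escaping reaction alongside the high-propensity alternatives --- at which point you have rediscovered the paper's example.
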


The following two propositions rely on the notation $\mathcal{E}(\c)$,
denoting the set of reactions that escape from the component of configuration
$\c \in \Naturals^{\Species}$
in the configuration digraph $D^{\Pi}$.

\begin{proposition}
\label{proposition:justify-runtime-policy:subset-escaping}
There exists a CRN protocol
$\Pi = (\Species, \Reactions)$
such that
$\RunTime_{\Stab}^{\Pi}(n) = O (\log n)$,
however if we restrict the runtime policies $\RunPol$ so that
$\RunPol(\c) \subseteq \mathcal{E}(\c)$
for every configuration
$\c \in \Naturals^{\Species}$,
then
$\RunTime_{\Stab}^{\Pi}(n) = \Omega (n)$.
\end{proposition}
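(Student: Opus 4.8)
The plan is to exhibit a protocol $\Pi$ (equipped with a suitable interface $\mathcal{I}$) whose configuration digraph contains a strongly connected component $S$, reachable from a valid initial configuration only through an adversarially steered detour, with two properties: (i) every reaction in $\mathcal{E}(S)$ is \emph{slow} --- concretely $\mathcal{E}(S)=\{\epsilon\}$ where $\epsilon\colon L_{1}+L_{2}\to L_{1}+L_{2}'$ acts on two ``leader'' species present in a single molecule each and never consumed, so $\Prp_{\c'}(\epsilon)=\Theta(1/\Vol)=\Theta(1/n)$ for every relevant $\c'$ and, moreover, $\epsilon$ stays applicable indefinitely, in particular past stabilization; and (ii) \emph{every} configuration of $S$ admits a \emph{fast} reaction (bimolecular, with a $\Theta(n)$-count reactant) that is not in $\mathcal{E}(S)$ --- because it is inapplicable at some other configuration of $S$ --- whose firing leaves $S$ and makes quantifiable progress toward the stable set $\Stab(Z_{\mathcal{I}}(\c^{0}))$, which we arrange to be disjoint from $S$. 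Outside the detour, $\Pi$ runs a plain $O(\log n)$-time epidemic-style computation; the detour creates $S$ by enabling a small reversible ``core'' of fast reactions among non-leader molecules, which can always be exited by one of these fast non-escaping reactions but contains no fast escaping reaction.

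For the lower bound, fix any runtime policy $\RunPol$ with $\RunPol(\c)\subseteq\mathcal{E}(\c)$ for all $\c$, so $\RunPol(\c)\in\{\emptyset,\{\epsilon\}\}$ for $\c\in S$. If some $\c\in S$ that is reachable before stabilization has $\RunPol(\c)=\emptyset$, the adversary makes a weakly fair execution loiter in $S$ for arbitrarily many steps before (as weak fairness eventually forces) $\epsilon$ fires, yielding unboundedly many one-step rounds and hence unbounded runtime. Otherwise $\RunPol(\c)=\{\epsilon\}$ for every such $\c$; now the adversary picks a weakly fair execution $\eta$ of initial count $n$ that follows the detour, visits some $\c\in S$, and then proceeds to stabilize, together with a skipping policy making $\c=\EffConf^{i}$ for a round with $\InitStep(i)$ strictly below $\eta$'s stabilization step. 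Then $\RunTime_{\Stab}^{\RunPol,\SkipPol}(\eta)\geq\TemporalCost^{\RunPol}(\c)$, and $\TemporalCost^{\RunPol}(\c)$ is the expected stochastic runtime from $\c$ until $\epsilon$ is scheduled or becomes inapplicable; but the stochastic benchmark from $\c$ leaves $S$ via a \emph{fast} non-escaping reaction (not touching $L_{1},L_{2}$) and stabilizes in $O(\log n)$ time into a configuration still containing both leaders, after which $\epsilon$ remains applicable with propensity $\Theta(1/n)$ and is scheduled only after $\Omega(n)$ further expected time. Hence $\TemporalCost^{\RunPol}(\c)\geq\Omega(n)$, so in all cases $\max_{\eta,\SkipPol}\RunTime_{\Stab}^{\RunPol,\SkipPol}(\eta)\geq\Omega(n)$, giving $\RunTime_{\Stab}^{\Pi}(n)=\Omega(n)$ under the escaping restriction.

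For the unrestricted $O(\log n)$ bound I would take $\RunPol$ to target, at each configuration of the ignition/main-line part, the fast reactions of the epidemic computation (whence $O(\log n)$ by \Lem{}~\ref{lemma:runtime:tools:ignition} and \Lem{}~\ref{lemma:runtime:tools:temporal-cost}), and, at each configuration $\c$ of the detour and of $S$, the set consisting of $\epsilon$ together with all fast reactions leaving $\c$'s component toward a smaller value of a potential $\Psi\leq O(n)$. Because property (ii) guarantees such a fast reaction is always available, every round in the detour ends with the execution leaving its current component toward smaller $\Psi$ (or with $\epsilon$ firing, which lands in the stable set), so the monotonicity of $\Psi$ limits the number of such rounds to $O(n)$, each of temporal cost $O(1/n)$ by \Lem{}~\ref{lemma:runtime:tools:temporal-cost} (the targeted reactions being fast), for an $O(1)$ total. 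Since the stochastic benchmark from any configuration of $S$ reaches $\Stab(Z_{\mathcal{I}}(\c^{0}))$ through fast reactions only, $\Pi$ is stabilization speed fault free and \Lem{}~\ref{lemma:speed-fault-lower-bound} is consistent with this bound --- $\epsilon$ merely \emph{looks} like a speed fault but is never needed to stabilize.

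The main obstacle is the construction itself: one must produce concrete species and reactions so that, simultaneously, $\Pi$ is correct w.r.t.\ $\mathcal{I}$ (by \Lem{}~\ref{lemma:correctness-via-configuration-digraph}, every component of each reachable subgraph must escape or lie in the stable set), $S$ is a genuine strongly connected component whose \emph{only} escaping reaction is the slow $\epsilon$, the leaders are never consumed so $\epsilon$ persists through stabilization, \emph{and} $\Pi$ is speed fault free. The tension between the last two --- a slow, forever-applicable bimolecular reaction between constant-count species that is nonetheless not a speed fault --- is exactly what forces property (ii): a carefully limited amount of reversibility inside $S$ so that a fast, progress-making non-escaping reaction is available at every configuration of $S$ while none of them belongs to $\mathcal{E}(S)$. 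Making this precise, and with it the monotonicity/round-counting argument behind the $O(\log n)$ bound, is the second obstacle.
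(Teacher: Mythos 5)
Your strategy is the right one, and it is essentially the paper's: build a protocol with a reachable strongly connected component $S$ whose \emph{unique} escaping reaction $\epsilon$ is a bimolecular reaction between $O(1)$-count species (propensity $\Theta(1/\Vol)$, applicable until scheduled), while every configuration of $S$ admits a fast non-escaping reaction that makes progress. Under the restriction $\RunPol(\c) \subseteq \mathcal{E}(\c)$ this forces $\RunPol(\c) \in \{\emptyset, \{\epsilon\}\}$ on $S$; your treatment of both cases (loitering through unboundedly many cheap rounds when the target set is empty, versus a single round of temporal cost $\Theta(\Vol) = \Omega(n)$ when it is $\{\epsilon\}$) is sound, as is your plan for the unrestricted $O(\log n)$ bound via a policy that targets the fast non-escaping reactions.

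The gap is the one you flag yourself: the proposition asserts the \emph{existence} of a protocol, and you never exhibit one, deferring "the construction itself" as the main obstacle. All of your quantitative claims are made about an unspecified object, so as written the proof is incomplete. The paper closes this gap by simply reusing the protocol of \Prop{}~\ref{proposition:justify-runtime-policy:singleton} (species $A, B, B', X, X', Y$ with $\beta\colon A + A \to 2B$, the toggle $\gamma\colon A + X \to A + X'$, $\gamma'\colon A + X' \to A + X$, and $\delta\colon X + Y \to 2Y$, $\delta'\colon X' + Y \to 2Y$, started from $\c^{0}(A) = 2$, $\c^{0}(Y) = 1$); the realization is considerably simpler than your blueprint suggests. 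No adversarial "detour" is needed --- the set $S_{0}$ of valid initial configurations is itself the bad component --- and the $\gamma/\gamma'$ toggle is exactly the "carefully limited reversibility" you call for: it makes each of $\delta, \delta'$ inapplicable somewhere in $S_{0}$ (hence non-escaping) while $\{\delta, \delta'\}$ always has propensity $\Omega(1)$, leaving $\beta$ as the only escaping reaction, with propensity $\Theta(1/\Vol)$. The restriction then lands every admissible policy in precisely the case already analyzed for singleton target sets, so the lower bound is inherited verbatim. If you supply a concrete protocol meeting your properties (i) and (ii) --- or borrow this one --- your argument goes through.
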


\begin{proposition}
\label{proposition:justify-runtime-policy:superset-escaping}
There exists a CRN protocol
$\Pi = (\Species, \Reactions)$
such that
$\RunTime_{\Halt}^{\Pi}(n) = O (n)$,
however if we restrict the runtime policies $\RunPol$ so that
$\RunPol(\c) \supseteq \mathcal{E}(\c)$
for every configuration
$\c \in \Naturals^{\Species}$,
then
$\RunTime_{\Halt}^{\Pi}(n) = \Omega (n \log n)$.
\end{proposition}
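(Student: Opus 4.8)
The plan is to realise the claimed gap with a single protocol $\Pi$, most conveniently described through the part of its configuration digraph $D^{\Pi}$ that does the work (and which I would draw in an accompanying figure). The core of $\Pi$ is a ``slow backbone'': after an $O(\log n)$-time ignition phase (absorbed by the ignition gadget of \Lem{}~\ref{lemma:runtime:tools:ignition}) the execution sits in some component $S_{j}$ of a sequence of components $S_{m}, S_{m-1}, \dots, S_{1}$ forming a path in the component DAG of $D^{\Pi}$, with $m = \Theta(n)$, where $j$ is the molecular count of an abundant ``work'' species; the \emph{only} reaction escaping $S_{j}$ --- i.e.\ the only route down towards the unique halting region --- is a leader-mediated reaction $\alpha_{j}$ between a singleton ``leader'' species and a work molecule, so that $\Prp_{\c}(\alpha_{j}) = \Theta(j/n)$ for every $\c \in S_{j}$ and hence $\mathcal{E}(\c) = \{\alpha_{j}\}$ for such $\c$. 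Crucially, $\Pi$ is also designed with auxiliary \emph{non-escaping} reactions inside $S_{j}$ that are bimolecular among the $\Theta(j)$ work molecules (propensity $\Theta(j^{2}/n)$), together with a potential/bookkeeping device, arranged so that a runtime policy targeting these faster reactions still drives every weakly fair execution out of $S_{j}$ within $O(1)$ rounds while $\mathcal{E}(\c)$ stays equal to $\{\alpha_{j}\}$; by \Lem{}~\ref{lemma:correctness-via-configuration-digraph} (every component either escapes or lies in the halting region) this yields a haltingly correct protocol.

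For the upper bound $\RunTime_{\Halt}^{\Pi}(n) = O(n)$ I would exhibit the runtime policy $\RunPol$ that, on configurations carrying ignition species, targets the ignition reactions, and on a configuration $\c$ of $S_{j}$ targets (a fixed superset of) the fast auxiliary reactions of $S_{j}$. By \Lem{}~\ref{lemma:runtime:tools:ignition} the ignition prefix costs $O(\log n)$. For the rest, the bookkeeping device bounds the number of rounds with effective configuration in $S_{j}$ by $O(1)$; for such a round with effective configuration $\c \in S_{j}$ we have $\Prp_{\c'}(\RunPol(\c)) = \Omega(j^{2}/n)$ for every $\c'$ with $\c \Reaching_{\langle \RunPol \rangle} \c'$ (the fast reactions remain applicable and their reactants abundant until $S_{j}$ is left), so \Lem{}~\ref{lemma:runtime:tools:temporal-cost} gives $\TemporalCost^{\RunPol}(\c) = O(n/j^{2})$. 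Summing over the backbone, $\RunTime_{\Halt}^{\RunPol, \SkipPol}(\eta) \le O(\log n) + \sum_{j=1}^{m} O(n/j^{2}) = O(n)$ for every weakly fair valid $\eta$ and every skipping policy $\SkipPol$, hence $\RunTime_{\Halt}^{\Pi}(n) = O(n)$.

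For the lower bound, fix any runtime policy $\RunPol$ with $\RunPol(\c) \supseteq \mathcal{E}(\c)$ for all $\c$, so $\alpha_{j} \in \RunPol(\c)$ for every $\c \in S_{j}$. I would then build a single weakly fair execution $\eta$ together with a skipping policy $\SkipPol$ that walks $\eta$ down the backbone and, for each $j$, designates as the effective configuration of some round a configuration $\c^{*}_{j} \in S_{j}$ chosen so that along every $\RunPol$-avoiding path out of $\c^{*}_{j}$ the slow escaping reaction $\alpha_{j}$ is the only applicable target reaction; a direct analysis of the stochastic benchmark started from $\c^{*}_{j}$ (in the spirit of \Lem{}~\ref{lemma:runtime:tools:temporal-cost} and of the pitfall argument behind \Lem{}~\ref{lemma:speed-fault-lower-bound}) then shows $\TemporalCost^{\RunPol}(\c^{*}_{j}) = \Omega(n/j)$, so that $\RunTime_{\Halt}^{\RunPol, \SkipPol}(\eta) \ge \sum_{j=1}^{m} \Omega(n/j) = \Omega(n\log n)$, and therefore $\RunTime_{\Halt}^{\Pi}(n) = \Omega(n\log n)$ under the restriction. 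The main obstacle --- the step I would spend the most care on --- is engineering the configurations $\c^{*}_{j}$ and the internal structure of each $S_{j}$ so that this works for \emph{every} such $\RunPol$: the ``$\alpha_{j}$-only'' configurations $\c^{*}_{j}$ must be reachable within a single weakly fair execution yet be such that whatever additional fast reactions the designer piles on top of the forced $\mathcal{E}(\c^{*}_{j}) = \{\alpha_{j}\}$ are genuinely unavailable on the benchmark region reachable from $\c^{*}_{j}$ by $\RunPol$-avoiding paths, whereas in the unrestricted regime the designer can instead route the accounting through the fast auxiliary reactions (which, being non-escaping, need not be included) and so never pay the $\Theta(n/j)$ per-component cost; reconciling ``rich component, unique slow escaping reaction, usable fast non-escaping reaction, no stalling'' is precisely the delicate part of the construction.
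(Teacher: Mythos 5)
There is a genuine gap, and it lies exactly where you flag ``the delicate part'': the requirements you impose on the components $S_{j}$ are mutually inconsistent, and the inconsistency sinks the $O(n)$ upper bound. For your lower bound you need, inside $S_{j}$, an adversarially reachable effective configuration $\c^{*}_{j}$ from which the fast auxiliary reactions are inapplicable along every $\RunPol$-avoiding path, so that any policy containing $\mathcal{E}(\c^{*}_{j}) = \{\alpha_{j}\}$ pays $\Omega(n/j)$. But the adversary chooses effective configurations via the skipping policy irrespective of which runtime policy the designer commits to, so it will present $\c^{*}_{j}$ to the \emph{unrestricted} policy as well. At $\c^{*}_{j}$ that policy has only two options: target $\alpha_{j}$ (or any set whose applicable part is $\{\alpha_{j}\}$) and pay the same $\Omega(n/j)$, which sums to $\Omega(n\log n)$ over the backbone; or target only reactions inapplicable throughout the benchmark region, in which case each round ends vacuously at cost $\Theta(1/n)$ and the adversary --- who may stall at $\c^{*}_{j}$ with void reactions for arbitrarily long before honoring weak fairness of $\alpha_{j}$ --- generates unboundedly many such rounds. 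Either way $\RunTime_{\Halt}^{\Pi}(n) = O(n)$ fails; in effect you have built $\Theta(n)$ nested pitfalls (cf.\ \Lem{}~\ref{lemma:speed-fault-lower-bound} and \Fig{}~\ref{figure:crn-multiple-pitfalls}), which lower-bounds \emph{every} policy, not just the restricted ones. Your upper-bound claim that ``the fast reactions remain applicable and their reactants abundant until $S_{j}$ is left'' directly contradicts the availability assumption your lower bound needs at $\c^{*}_{j}$.

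The paper's construction sidesteps all of this by making the gap a pure round-counting phenomenon rather than a propensity phenomenon. It uses $\Species = \{L, X, W\}$ with $\alpha\colon L + X \rightarrow L + W$ and $\beta\colon L + L \rightarrow 2W$, starting from $2L + (n-2)X$. Every configuration is a singleton component, so $\mathcal{E}(\c)$ equals the full set of applicable non-void reactions and the restriction forces $\RunPol_{\supseteq}(\c) = \{\alpha, \beta\}$. The adversary schedules $\alpha$ for $n-2$ steps and then $\beta$; under the identity skipping policy each step is its own round with temporal cost $\Vol / (1 + 2(n-2-t))$, and the sum is a harmonic series, $\Omega(n\log n)$. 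The unrestricted policy targets only the continuously applicable $\beta$, so the entire execution is a \emph{single} round of temporal cost at most $\Vol = O(n)$ by \Lem{}~\ref{lemma:runtime:tools:temporal-cost}. If you want to salvage your write-up, the lesson is that the escaping reaction to exclude should be the \emph{fast, frequently scheduled, progress-free} one (so that omitting it collapses many rounds into one), not the slow bottleneck reaction, which no policy can avoid paying for.
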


\Prop{}
\ref{proposition:justify-runtime-policy:fixed},
\ref{proposition:justify-runtime-policy:singleton},
\ref{proposition:justify-runtime-policy:subset-escaping},
and
\ref{proposition:justify-runtime-policy:superset-escaping}
are proved in \Sect{}
\ref{section:justify-runtime-policy:fixed},
\ref{section:justify-runtime-policy:singleton},
\ref{section:justify-runtime-policy:subset-escaping}, and
\ref{section:justify-runtime-policy:superset-escaping},
respectively.
Each proof is followed by a short discussion explaining why the adversarial
runtime obtained with our general definition is intuitively more plausible
than that obtained with the more restricted definition.

\subsection{Fixed Policies}
\label{section:justify-runtime-policy:fixed}
In this section, we prove
\Prop{}~\ref{proposition:justify-runtime-policy:fixed}.
To this end, consider the CRN protocol
$\Pi = (\Species, \Reactions)$
defined over the species set
$\Species = \{ A_{0}, A_{1}, X_{0}, X_{1}, W \}$
and the following non-void reactions:
\\
$\beta$:
$X_{0} + X_{1} \rightarrow 2 W$;
\\
$\gamma_{0}$:
$A_{1} + X_{0} \rightarrow A_{0} + X_{0}$;
and
\\
$\gamma_{1}$:
$A_{0} + X_{1} \rightarrow A_{1} + X_{1}$.
\\
A configuration
$\c^{0} \in \Naturals^{\Species}$
is valid as an initial configuration of $\Pi$ if
$\c^{0}(\{ A_{0}, A_{1} \}) = 1$
and
$\c^{0}(X_{0}) \neq \c^{0}(X_{1})$.

\begin{observation}
\label{observation:justify-runtime-policy:fixed:prefix}
For every weakly fair valid execution
$\eta = \langle \c^{t}, \alpha^{t} \rangle_{t \geq 0}$,
there exists a step
$\hat{t} \geq 0$
such that
\\
(1)
$\min \{ \c^{t}(X_{0}), \c^{t}(X_{1}) \} > 0$
for every
$0 \leq t < \hat{t}$;
and
\\
(2)
$\min \{ \c^{t}(X_{0}), \c^{t}(X_{1}) \} = 0$
for every
$t \geq \hat{t}$.
\end{observation}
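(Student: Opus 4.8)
The plan is to track the integer quantity $\mu^{t} \triangleq \min\{\c^{t}(X_{0}), \c^{t}(X_{1})\}$, show it is non-increasing along any execution, and then use weak fairness to show it must eventually reach $0$; the step $\hat{t}$ promised in the statement will simply be the earliest step at which $\mu^{t} = 0$.

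First I would observe that none of the three non-void reactions creates a molecule of $X_{0}$ or of $X_{1}$: in $\gamma_{0}$ the species $X_{0}$ is a catalyst, in $\gamma_{1}$ the species $X_{1}$ is a catalyst, and $\beta$ consumes one molecule of each of $X_{0}$ and $X_{1}$. Hence $\c^{t+1}(X_{0}) \le \c^{t}(X_{0})$ and $\c^{t+1}(X_{1}) \le \c^{t}(X_{1})$ for every $t \ge 0$, so $\mu^{t+1} \le \mu^{t}$. This already gives property~(2): once $\mu^{t} = 0$, monotonicity forces $\mu^{t'} = 0$ for all $t' \ge t$. Consequently, \emph{if} some step with $\mu^{t} = 0$ exists, we may set $\hat{t}$ to be the earliest such step, and then property~(1) holds by the minimality of $\hat{t}$.

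It then remains to rule out $\mu^{t} > 0$ for all $t \ge 0$. The key point is that $\beta \in \Applicable(\c^{t})$ if and only if $\c^{t}(X_{0}) \ge 1$ and $\c^{t}(X_{1}) \ge 1$, i.e., precisely when $\mu^{t} \ge 1$; so under the contrary assumption $\beta$ is \emph{continuously} applicable from step $0$ onward and never becomes inapplicable. Weak fairness of $\eta$ then forces $\beta$ to be scheduled infinitely often (at each step, option (II) of the fairness condition is impossible, so option (I) must hold). But every application of $\beta$ decreases $\mu^{t}$ by exactly $1$ — it lowers both $\c^{t}(X_{0})$ and $\c^{t}(X_{1})$ by $1$, and both are positive when $\beta$ fires — whereas $\gamma_{0}$ and $\gamma_{1}$ leave $\mu^{t}$ unchanged. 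Scheduling $\beta$ infinitely often would therefore drive the non-negative integer $\mu^{t}$ below $0$, a contradiction. Hence $\mu^{t} = 0$ for some $t$, so $\hat{t}$ is well defined, completing the proof.

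I do not anticipate a real obstacle; the one spot that needs care is the appeal to weak fairness, where one must check that $\beta$ is continuously applicable under the contrary hypothesis — which it is, since $\mu^{t} \ge 1$ there — so that the adversary is not permitted to starve it. (For later use it is also worth recording the invariant $\c^{t}(X_{0}) - \c^{t}(X_{1}) = \c^{0}(X_{0}) - \c^{0}(X_{1}) \ne 0$, which holds because $\beta$ changes the two counts equally and $\gamma_{0}, \gamma_{1}$ change neither; this shows that at step $\hat{t}$ exactly one of $X_{0}, X_{1}$ has count $0$ while the other has count $|\c^{0}(X_{0}) - \c^{0}(X_{1})| \ge 1$.)
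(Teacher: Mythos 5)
Your proof is correct; the paper states this as an unproved observation, and your argument (monotonicity of $\min\{\c^{t}(X_{0}), \c^{t}(X_{1})\}$ under all reactions, plus weak fairness forcing the continuously applicable reaction $\beta$ to fire until one count hits zero) is exactly the intended justification. The closing remark about the invariant $\c^{t}(X_{0}) - \c^{t}(X_{1})$ is also the right observation to record, as it is what the companion observation on halting configurations relies on.
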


\begin{observation}
\label{observation:justify-runtime-policy:fixed:halting}
Every weakly fair execution emerging from a valid initial configuration
$\c^{0} \in \Naturals^{\Species}$
with
$\c^{0}(X_{j}) > \c^{0}(X_{1 - j})$,
$j \in \{ 0, 1 \},$
halts into a configuration
$\c \in \Naturals^{\Species}$
that satisfies
\begin{itemize}

\item
$\c(A_{j}) = 1$;

\item
$\c(X_{j}) = \c^{0}(X_{j}) - \c^{0}(X_{1 - j})$;
and

\item
$\c(X_{1 - j}) = \c(A_{1 - j}) = 0$.

\end{itemize}
\end{observation}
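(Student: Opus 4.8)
The plan is to combine two conservation laws maintained by $\Reactions$ with weak fairness and \Obs{}~\ref{observation:justify-runtime-policy:fixed:prefix}. First I would record the invariants: (i) the total $A$-count $\c^{t}(\{A_{0},A_{1}\})$ is preserved by every reaction, since $\gamma_{0}$ and $\gamma_{1}$ merely relabel the single $A$-molecule and $\beta$ involves no $A$-species, hence $\c^{t}(\{A_{0},A_{1}\}) = 1$ for all $t$; and (ii) the difference $\c^{t}(X_{0}) - \c^{t}(X_{1})$ is preserved, since $\gamma_{0}$ and $\gamma_{1}$ use $X_{0}$, resp.\ $X_{1}$, as a catalyst, while $\beta$ decrements both $X$-counts by one. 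Writing $d = \c^{0}(X_{j}) - \c^{0}(X_{1-j}) \geq 1$, invariant (ii) gives $\c^{t}(X_{j}) - \c^{t}(X_{1-j}) = d > 0$ throughout, so $\min\{\c^{t}(X_{0}), \c^{t}(X_{1})\} = \c^{t}(X_{1-j})$.

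Next I would invoke \Obs{}~\ref{observation:justify-runtime-policy:fixed:prefix}: let $\hat{t}$ be the step it provides. For every $t \geq \hat{t}$ we then have $\c^{t}(X_{1-j}) = 0$, and therefore $\c^{t}(X_{j}) = d$ by invariant (ii). From step $\hat{t}$ onward, $\beta$ is inapplicable (it needs both $X_{0}$ and $X_{1}$) and the reaction catalyzed by $X_{1-j}$ is inapplicable; the only non-void reaction that can still be applicable is the one catalyzed by $X_{j}$, which converts $A_{1-j}$ into $A_{j}$, and (since $\c^{t}(X_{j}) = d \geq 1$) it is applicable precisely when $\c^{t}(A_{1-j}) = 1$.

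I would then split into two cases according to $\c^{\hat{t}}(A_{1-j})$. If $\c^{\hat{t}}(A_{1-j}) = 0$, then $\c^{\hat{t}}(A_{j}) = 1$ by invariant (i), no non-void reaction is applicable to $\c^{\hat{t}}$, and $\c^{\hat{t}}$ is already halting with the asserted $A$- and $X$-counts. Otherwise $\c^{\hat{t}}(A_{1-j}) = 1$, and the reaction catalyzed by $X_{j}$ is continuously applicable from step $\hat{t}$ until it is scheduled --- on that interval only void reactions (which do not change the configuration) and that reaction itself can be applicable --- so by weak fairness it fires at some step $t' \geq \hat{t}$; the configuration $\c^{t'+1}$ then has $\c(A_{j}) = 1$, $\c(A_{1-j}) = 0$, $\c(X_{j}) = d$, $\c(X_{1-j}) = 0$, and admits no applicable non-void reaction, hence is halting. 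In both cases the execution halts into a configuration of the claimed form. The statement for arbitrary $j \in \{0,1\}$ then follows since $\Reactions$ is symmetric under exchanging the indices $0$ and $1$.

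The argument is essentially bookkeeping; the step I would be most careful about is verifying that the reaction catalyzed by $X_{j}$ is \emph{continuously} applicable on $[\hat{t}, t')$, so that weak fairness genuinely forces it to fire --- which reduces to checking that no other non-void reaction can intervene to consume the lone $A_{1-j}$ molecule or deplete the supply of $X_{j}$, and this is exactly what invariants (i) and (ii) rule out.
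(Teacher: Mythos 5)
Your proof is correct. The paper states this observation without an explicit proof, and your argument --- the two conservation laws (the unit $A$-count and the invariant difference $\c^{t}(X_{0}) - \c^{t}(X_{1})$), combined with \Obs{}~\ref{observation:justify-runtime-policy:fixed:prefix} and a weak-fairness argument forcing the final $\gamma_{j}$ reaction to fire --- is exactly the bookkeeping the paper implicitly relies on, including the one point that genuinely needs checking, namely that $\gamma_{j}$ is \emph{continuously} applicable from step $\hat{t}$ onward because no other non-void reaction can intervene.
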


Consider the runtime policy $\RunPol$ defined as
\[\textstyle
\RunPol(\c)
\, = \,
\begin{cases}
\{ \beta \}
\, , &
\c(X_{0}) > 0 \, \land \, \c(X_{1}) > 0
\\
\{ \gamma_{0}, \gamma_{1} \}
\, , &
\text{otherwise}
\end{cases}
\, .
\]
Fix a weakly fair valid execution
$\eta = \langle \c^{t}, \alpha^{t} \rangle_{t \geq 0}$
of initial molecular count
$\| \c^{0} \| = n$
and a skipping policy $\SkipPol$.

\begin{lemma}
\label{lemma:justify-runtime-policy:fixed:good-policy}
$\RunTime_{\Halt}^{\RunPol, \SkipPol}(\eta) \leq O (n)$.
\end{lemma}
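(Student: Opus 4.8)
The plan is to exploit the rigid structure of $\Pi$. First I would record the invariants: only $\beta$ changes the $X$-counts and it removes one copy of each, so $\c^{t}(X_0)-\c^{t}(X_1)$ is constant along $\eta$; since $\c^{0}(X_0)\ne\c^{0}(X_1)$ we may assume w.l.o.g.\ $\c^{0}(X_0)>\c^{0}(X_1)$ (the other case is symmetric), whence $\c^{t}(X_1)=\min\{\c^{t}(X_0),\c^{t}(X_1)\}$ is non-increasing in $t$ and $\c^{t}(X_0)\ge\c^{t}(X_1)$ throughout; also $\c^{t}(\{A_0,A_1\})=1$ for all $t$. By \Obs{}~\ref{observation:justify-runtime-policy:fixed:halting} the execution $\eta$ halts; let $t^{*}$ be its halting step, $i^{*}=\min\{i\ge0\mid\InitStep(i)\ge t^{*}\}$, and $b_i=\EffConf^{i}(X_1)$. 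Since the effective steps are strictly increasing, $b_i$ is non-increasing in $i$, and since $\RunPol(\EffConf^{i})=\{\beta\}$ precisely when $b_i>0$, the ``$\beta$-rounds'' form a prefix $0,1,\dots,i_{\beta}-1$ of $[0,i^{*})$.

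For a $\beta$-round $i$ I would bound the temporal cost via \Lem{}~\ref{lemma:runtime:tools:temporal-cost}: a $\RunPol$-avoiding path out of $\EffConf^{i}$ uses only $\gamma_0,\gamma_1$ and void reactions, none of which touches the $X$-counts, so $\beta$ stays applicable along it and every configuration $\c'$ it reaches has $\Prp_{\c'}(\{\beta\})=\Theta\bigl(\EffConf^{i}(X_0)\,\EffConf^{i}(X_1)/n\bigr)\ge\Omega(b_i^{2}/n)$, giving $\TemporalCost^{\RunPol}(\EffConf^{i})\le O(n/b_i^{2})$. The crucial point is that $b_i$ \emph{strictly} decreases on this prefix: such a round is target-accomplished with $\beta$ (the only way $\beta$ can leave the applicable set is to be scheduled), and no $\beta$ fires between the round's effective step and the $\beta$ ending it, so the $X_1$-count equals $b_i$ at step $\InitStep(i+1)-1$ and $b_i-1$ at step $\InitStep(i+1)$; as $\EffStep(i+1)\ge\InitStep(i+1)$ and $\c^{t}(X_1)$ is non-increasing, $b_{i+1}\le b_i-1$. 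Since $b_0\le\c^{0}(X_1)\le n$, the prefix contributes $\sum_{i<i_{\beta}}O(n/b_i^{2})\le O(n)\sum_{b\ge1}b^{-2}=O(n)$.

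Next I would dispatch the remaining rounds $i_{\beta}\le i<i^{*}$, where $b_i=0$, the invariant forces $\c^{t}(X_0)=d:=\c^{0}(X_0)-\c^{0}(X_1)\ge1$ and $\gamma_1$ permanently inapplicable, and $\RunPol(\EffConf^{i})=\{\gamma_0,\gamma_1\}$. If $\EffConf^{i}(A_1)=1$, then the only configuration reachable from $\EffConf^{i}$ via a $\RunPol$-avoiding path is $\EffConf^{i}$ itself, $\gamma_0$ is applicable there with $\Prp_{\EffConf^{i}}(\gamma_0)=\Theta(d/n)\ge\Omega(1/n)$, so \Lem{}~\ref{lemma:runtime:tools:temporal-cost} gives $\TemporalCost^{\RunPol}(\EffConf^{i})\le O(n)$; moreover this round ends with a $\gamma_0$ after which $\c^{t}(A_1)=0$ forever (no $X_1$ to feed $\gamma_1$), so at most one such round occurs. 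If instead $\EffConf^{i}(A_0)=1$, then $\EffConf^{i}$ is halting, hence $\EffStep(i)\ge t^{*}$, forcing $\InitStep(i+1)>t^{*}$ and thus $i=i^{*}-1$; being halting, its temporal cost is $\Theta(1/\|\EffConf^{i}\|)=O(1)$ directly from the definition (the $\Stop$-operator stops after a single step). Thus $[i_{\beta},i^{*})$ holds at most two rounds and contributes $O(n)$; adding the prefix bound yields $\RunTime_{\Halt}^{\RunPol,\SkipPol}(\eta)=O(n)$ with a constant independent of $\eta$ and $\SkipPol$.

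The one genuinely delicate point is the strict decrease of $b_i$ over the $\beta$-rounds: because the skipping policy may move the effective configuration arbitrarily far ahead, a priori $b_i$ could be ``reset'' upward; the resolution is that a $\beta$-round must terminate with a $\beta$ firing (since $\{\beta\}$ is a singleton target and $\beta$ can only leave the applicable set by being scheduled), which pins $b_{i+1}\le b_i-1$ and makes the $\sum n/b_i^{2}$ telescoping go through. Everything else is bookkeeping with the invariants.
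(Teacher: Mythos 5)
Your proposal is correct and follows essentially the same route as the paper's proof: split the rounds into the prefix where both $X_0$ and $X_1$ are positive (target $\{\beta\}$, temporal cost $O(n/\ell_i^2)$ via \Lem{}~\ref{lemma:runtime:tools:temporal-cost} with $\ell_i=\min\{\EffConf^i(X_0),\EffConf^i(X_1)\}$ strictly decreasing, summing to $O(n)$ by $\sum \ell^{-2}<\infty$) and an $O(1)$-length suffix of rounds each costing $O(n)$. Your extra care about why $\beta$-rounds must be target-accomplished (so the minimum count strictly drops) and the two-case analysis of the suffix are just more explicit versions of what the paper asserts.
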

\begin{proof}
Let $\EffStep(i)$ and
$\EffConf^{i} = \c^{\EffStep(i)}$
be the effective step and effective configuration, respectively, of round
$i \geq 0$
under $\RunPol$ and $\SkipPol$.
Let
$\hat{i}
=
\min \{ i \geq 0 \mid \EffStep(i) \geq \hat{t} \}$,
where $\hat{t}$ is the step promised in
\Obs{}~\ref{observation:justify-runtime-policy:fixed:prefix}, and let
$i^{*}
=
\min \{ i \geq 0 \mid \EffStep(i) \geq t^{*} \}$,
where $t^{*}$ is the halting step promised in
\Obs{}~\ref{observation:justify-runtime-policy:fixed:halting}.
We establish the assertion by proving the following two claims:
\\
(C1)
the total contribution of rounds
$0 \leq i < \hat{i}$
to
$\RunTime_{\Halt}^{\RunPol, \SkipPol}(\eta)$
is up-bounded by
$O (n)$;
and
\\
(C2)
$i^{*} - \hat{i} \leq 1$.
\\
Indeed, the assertion follows as the temporal cost charged to a single round
is at most
$O (n)$.

To establish claim (C1), let
$\ell_{i} = \min \{ \EffConf^{i}(X_{0}), \EffConf^{i}(X_{1}) \}$
for each round
$0 \leq i < \hat{i}$.
Notice that
$\Prp_{\EffConf^{i}}(\beta)
=
\frac{\EffConf^{i}(X_{0}) \cdot \EffConf^{i}(X_{1})}{\Vol}
\geq
\Omega (\ell_{i}^{2} / n)$
and that
$\Prp_{\c}(\beta) = \Prp_{\EffConf^{i}}(\beta)$
for every configuration
$\c \in \Naturals^{\Species}$
reachable form $\EffConf^{i}$ via a $\RunPol$-avoiding path.
Employing \Lem{}~\ref{lemma:runtime:tools:temporal-cost}, we conclude that
$\TemporalCost^{\RunPol}(\EffConf^{i})
\leq
O (n / \ell_{i}^{2})$.
Since
$\ell_{i + 1} > \ell_{i}$
for every
$0 \leq i < \hat{i}$,
it follows that
\[\textstyle
\sum_{i = 0}^{\hat{i}} \TemporalCost^{\RunPol}(\EffConf^{i})
\, \leq \,
\sum_{i = 0}^{\hat{i}} O (n / \ell_{i}^{2})
\, \leq \,
O (n) \cdot \sum_{\ell = 1}^{\infty} \frac{1}{\ell^{2}}
\, = \,
O (n)
\, .
\]

To establish claim (C2), it suffices to observe that from step $\hat{t}$
onward, only one of the reactions $\gamma_{0}$ and $\gamma_{1}$ may be
applicable and that the execution halts once this reaction is scheduled.
\end{proof}

Now, consider a runtime policy $\RunPol_{Q}$ with a fixed target reaction set
$Q \subseteq \{ \beta, \gamma_{0}, \gamma_{1} \}$,
that is,
$\RunPol_{Q}(\c) = Q$
for every configuration
$\c \in \Naturals^{\Species}$.
We argue that
$\gamma_{j} \in Q$
for each
$j \in \{ 0, 1 \}$.
Indeed, if
$\c(A_{1 - j}) = 1$,
$\c(X_{j}) > 0$,
and
$\c(X_{1 - j}) = 0$,
then
$\Applicable(\c) = \{ \gamma_{j} \}$,
hence $\RunPol_{Q}(\c)$ must include $\gamma_{j}$ in order to bound the halting
runtime.

However, if
$\{ \gamma_{0}, \gamma_{1} \} \subseteq Q$,
then the adversarial scheduler can construct a weakly fair valid execution
$\eta$ of initial molecular count $n$ in a manner that forces the protocol to
go through arbitrarily many rounds under $\RunPol_{Q}$ and the identity
skipping policy $\SkipPol_{\Identity}$ before the execution halts, charging an
$\Omega (1 / n)$
temporal cost to each one of them.
This is done simply by starting with a configuration that includes both
$X_{0}$ and $X_{1}$ molecules and then scheduling reactions $\gamma_{0}$ and
$\gamma_{1}$ in alternation.
We conclude that
$\RunTime_{\Halt}^{\RunPol_{Q}, \SkipPol_{\Identity}}(\eta)$
is unbounded (as a function of $n$), thus establishing
\Prop{}~\ref{proposition:justify-runtime-policy:fixed}.

In summary, a policy with a fixed target reaction set can inappropriately
reward an adversarial scheduler that delays progress indefinitely:
the longer the delay, the larger the runtime.
This runs counter to the philosophy of adversarial runtimes that are standard
in distributed computing, discussed in \Sect{}~\ref{section:introduction}.

\subsection{Singleton Target Reaction Sets}
\label{section:justify-runtime-policy:singleton}
In this section, we prove
\Prop{}~\ref{proposition:justify-runtime-policy:singleton}.
To this end, consider the CRN protocol
$\Pi = (\Species, \Reactions)$
defined over the species set
$\Species = \{ A, B, B' X, X', Y \}$
and the following non-void reactions:
\\
$\beta$:
$A + A \rightarrow 2 B$;
\\
$\gamma$:
$A + X \rightarrow A + X'$;
\\
$\gamma'$:
$A + X' \rightarrow A + X$;
\\
$\delta$:
$X + Y \rightarrow 2 Y$;
\\
$\delta'$:
$X' + Y \rightarrow 2 Y$;
\\
$\chi$:
$B + B \rightarrow 2 B'$;
and
\\
$\chi'$:
$B' + B' \rightarrow 2 B$.
\\
A configuration
$\c^{0} \in \Naturals^{\Species}$
is valid as an initial configuration of $\Pi$ if
$\c^{0}(A) = 2$,
$\c^{0}(Y) = 1$,
and
$\c^{0}(\{ B, B' \}) = 0$.

\begin{observation}
\label{observation:justify-runtime-policy:singleton:basic-properties}
The following properties hold for every weakly fair valid execution
$\eta = \langle \c^{t}, \alpha^{t} \rangle_{t \geq 0}$
and for every
$t \geq 0$:
\begin{itemize}

\item
$\c^{t}(A), \c^{t}(B), \c^{t}(B') \in \{ 0, 2 \}$

\item
$\c^{t + 1}(A) \leq \c^{t}(A)$;

\item
if
$\c^{t}(A) = 0$,
then
$\c^{t}(\{ B, B' \}) = 2$;

\item
$\c^{t + 1}(\{ X, X' \}) \leq \c^{t}(\{ X, X' \})$;
and

\item
$\c^{t + 1}(Y) \geq \c^{t}(Y)$.

\end{itemize}
\end{observation}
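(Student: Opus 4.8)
The plan is to prove all five properties simultaneously by a single induction on $t$, supported by one auxiliary invariant that carries most of the weight: that $\c^{t}(A) + \c^{t}(B) + \c^{t}(B') = 2$ and that each of the three counts $\c^{t}(A)$, $\c^{t}(B)$, $\c^{t}(B')$ is even, for every $t \geq 0$. The base case is immediate from the definition of a valid initial configuration of $\Pi$, where $\c^{0}(A) = 2$ and $\c^{0}(\{B, B'\}) = 0$ (so the sum is $2$ and all three counts are even).

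For the inductive step I would do a case analysis over the reaction $\alpha^{t}$ scheduled in step $t$. Among the seven non-void reactions, the only ones touching any of $\c^{t}(A)$, $\c^{t}(B)$, $\c^{t}(B')$ are $\beta$, $\chi$, and $\chi'$, and each of these simply moves a block of $2$ molecules from one of these species to another (from $A$ to $B$ under $\beta$, from $B$ to $B'$ under $\chi$, from $B'$ to $B$ under $\chi'$). Hence both the sum $\c^{t}(A) + \c^{t}(B) + \c^{t}(B') = 2$ and the parity of each individual count are preserved. Combined with nonnegativity, this forces $\c^{t}(A), \c^{t}(B), \c^{t}(B') \in \{0, 2\}$ (indeed, exactly one of the three equals $2$), which is the first bullet, and in particular $\c^{t}(A) = 0$ forces $\c^{t}(\{B, B'\}) = 2$, which is the third bullet.

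The remaining three monotonicity bullets follow from the same reaction inspection, without even invoking the induction hypothesis: no reaction produces $A$ and only $\beta$ consumes it (by $2$), giving $\c^{t+1}(A) \leq \c^{t}(A)$; the reactions $\gamma$ and $\gamma'$ leave $\c^{t}(\{X, X'\})$ fixed while $\delta$ and $\delta'$ each decrease it by $1$ and no reaction increases it, giving $\c^{t+1}(\{X, X'\}) \leq \c^{t}(\{X, X'\})$; and symmetrically $Y$ is never consumed and is produced by $\delta$ and $\delta'$, giving $\c^{t+1}(Y) \geq \c^{t}(Y)$. I do not expect a genuine obstacle here: weak fairness is never used, since all five claims are ``for every $t$'' statements that hold along arbitrary executions. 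The only point requiring care is to state the parity-plus-sum argument explicitly, so that the passage from ``each count even and the three summing to $2$'' to ``each count in $\{0, 2\}$'' is fully spelled out rather than taken for granted.
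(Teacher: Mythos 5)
Your proof is correct and matches the paper's intent: the paper states this as an unproved observation, to be verified by direct inspection of the seven non-void reactions, which is exactly what you do. Your auxiliary invariant ($\c^{t}(A)+\c^{t}(B)+\c^{t}(B')=2$ with all three counts even) is a clean way to package the first and third bullets, and you are right that weak fairness plays no role.
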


\begin{observation}
\label{observation:justify-runtime-policy:singleton:stabilization}
Every weakly fair execution emerging from a valid initial configuration
$\c^{0} \in \Naturals^{\Species}$
of molecular count
$\| \c^{0} \| = n$
stabilizes into the configurations $\c$ satisfying
$\c(A) = \c(\{ X, X' \}) = 0$
and
$\c(Y) = n - 2$.
\end{observation}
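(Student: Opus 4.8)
The plan is to split an arbitrary weakly fair valid execution $\eta = \langle \c^{t}, \alpha^{t} \rangle_{t \geq 0}$ into three phases, relying throughout on \Obs{}~\ref{observation:justify-runtime-policy:singleton:basic-properties}, on the fact that $\Pi$ is density preserving (so $\| \c^{t} \| = n$ for all $t$), and on the consequence $\c^{t}(Y) \geq \c^{0}(Y) = 1$ of the last bullet of that observation. I would first record that the target set $Z = \{ \c \mid \c(A) = \c(\{ X, X' \}) = 0 \land \c(Y) = n - 2 \}$ satisfies $Z = \Stab(Z)$: from a configuration in $Z$ the only applicable non-void reactions are $\chi$ (when $\c(B) = 2$) and $\chi'$ (when $\c(B') = 2$), both of which merely toggle between $\c(B) = 2$ and $\c(B') = 2$ and hence stay inside $Z$ by \Obs{}~\ref{observation:justify-runtime-policy:singleton:basic-properties}. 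So it suffices to show that $\eta$ reaches some configuration of $Z$.

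First I would dispose of the $A$ molecules. By \Obs{}~\ref{observation:justify-runtime-policy:singleton:basic-properties} the count $\c^{t}(A)$ is non-increasing and lies in $\{ 0, 2 \}$; were it equal to $2$ for all $t$, reaction $\beta$ would be continuously applicable and never scheduled, contradicting weak fairness. Hence there is a step $t_{A}$ with $\c^{t}(A) = 0$, and therefore $\c^{t}(\{ B, B' \}) = 2$, for every $t \geq t_{A}$. Second, I would dispose of the $X$ and $X'$ molecules. The key preliminary point is that for $t \geq t_{A}$ the conversion reactions $\gamma, \gamma'$ are inapplicable (they consume an $A$), so the only reactions affecting $\c^{t}(X)$ or $\c^{t}(X')$ are $\delta$ and $\delta'$; in particular each of $\c^{t}(X)$ and $\c^{t}(X')$ is individually non-increasing from $t_{A}$ onward, and $\c^{t}(\{ X, X' \})$ strictly decreases whenever $\delta$ or $\delta'$ is scheduled. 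If $\c^{t}(\{ X, X' \})$ stayed positive it would eventually be constant at some $k \geq 1$, forcing both $\c^{t}(X)$ and $\c^{t}(X')$ to stabilize; whichever of them is positive, combined with $\c^{t}(Y) \geq 1$, makes $\delta$ (or $\delta'$) continuously applicable yet starved --- a contradiction. So there is a step $t_{X} \geq t_{A}$ with $\c^{t}(X) = \c^{t}(X') = 0$ for all $t \geq t_{X}$.

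It then only remains to put the pieces together: for $t \geq t_{X}$ we have $\c^{t}(A) = \c^{t}(\{ X, X' \}) = 0$ and $\c^{t}(\{ B, B' \}) = 2$, so density preservation gives $\c^{t}(Y) = n - 2$ and hence $\c^{t} \in Z = \Stab(Z)$, which is exactly the assertion. I expect the only real (and still mild) obstacle to be the per-species monotonicity step inside the second phase: without first noting that $\c^{t}(X)$ and $\c^{t}(X')$ are \emph{separately} non-increasing once the $A$ molecules are gone, one cannot rule out a persistent positive residue that a weakly fair scheduler would otherwise be forced to eliminate via $\delta$ or $\delta'$. Everything else is straightforward bookkeeping with \Obs{}~\ref{observation:justify-runtime-policy:singleton:basic-properties}.
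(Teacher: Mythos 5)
Your proof is correct, and since the paper states this as an observation without supplying a proof, your argument is exactly the intended one: weak fairness forces $\beta$ to fire (eliminating the $A$'s), then forces $\delta$/$\delta'$ to drain $X$ and $X'$ (using $\c^{t}(Y)\geq 1$ and the fact that $\gamma,\gamma'$ are dead once $A$ is gone), after which density preservation pins $\c(Y)=n-2$ and the $\chi,\chi'$ toggling keeps the execution inside the target set. Your added care about the \emph{separate} monotonicity of $\c^{t}(X)$ and $\c^{t}(X')$ after the $A$'s vanish is a correct and worthwhile detail that the paper glosses over.
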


Consider the runtime policy $\RunPol$ defined as
\[\textstyle
\RunPol(\c)
\, = \,
\begin{cases}
\{ \beta, \delta, \delta' \}
\, , &
\c(A) = 2
\\
\{ \delta, \delta' \}
\, , &
\c(A) = 0
\end{cases}
\, .
\]
Fix a weakly fair valid execution
$\eta = \langle \c^{t}, \alpha^{t} \rangle_{t \geq 0}$
of initial molecular count
$\| \c^{0} \| = n$
and let $t^{*}$ be the stabilization step of $\eta$.
Fix a skipping policy $\SkipPol$ and let $\EffStep(i)$ and
$\EffConf^{i} = \c^{\EffStep(i)}$
be the effective step and effective configuration, respectively, of round
$i \geq 0$
under $\RunPol$ and $\SkipPol$.
Let
$i^{*}
=
\min \{ i \geq 0 \mid \EffStep(i) \geq t^{*} \}$.

\begin{lemma}
\label{lemma:justify-runtime-policy:singleton:temp-cost-mult}
The total contribution of rounds
$0 \leq i < i^{*}$
to
$\RunTime_{\Stab}^{\RunPol, \SkipPol}(\eta)$
is up-bounded by
$O (\log n)$.
\end{lemma}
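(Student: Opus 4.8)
The plan is to follow the template of the per-phase temporal-cost estimates elsewhere in the paper (e.g.\ \Lem{}~\ref{lemma:crd:semilinear:modulo:temp-cost-beta} and \ref{lemma:crd:semilinear:threshold:temp-cost-sign}): isolate a monotone potential that measures how far the $Y$-epidemic driven by $\delta$ and $\delta'$ has progressed, bound the temporal cost of each round by a function of that potential via \Lem{}~\ref{lemma:runtime:tools:temporal-cost}, and then sum the resulting (essentially harmonic) series. First I would record a conservation law: writing $x_{i} = \EffConf^{i}(\{ X, X' \})$ and $y_{i} = \EffConf^{i}(Y)$, the reactions $\delta, \delta'$ preserve $\c(\{ X, X' \}) + \c(Y)$, the reactions $\gamma, \gamma'$ preserve $\c(\{ X, X' \})$, and $\beta, \chi, \chi'$ do not touch $X, X', Y$; since $\c^{0}(\{ X, X' \}) + \c^{0}(Y) = n - 2$ for a valid initial configuration, this gives the invariant $x_{i} + y_{i} = n - 2$ for every round $i$. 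Together with \Obs{}~\ref{observation:justify-runtime-policy:singleton:basic-properties} this also yields that $x_{i}$ is non-increasing in $i$ and that $\c(Y) \geq 1$ throughout the execution.

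Next I would bound $\TemporalCost^{\RunPol}(\EffConf^{i})$ for a fixed round $i < i^{*}$. Since $\RunPol(\c) \supseteq \{ \delta, \delta' \}$ for every $\c$, and since the only non-target reactions that can fire along a $\RunPol$-avoiding path out of $\EffConf^{i}$ are $\gamma, \gamma'$ (when $\c(A) = 2$, where $\c(\{ B, B' \}) = 0$ rules out $\chi, \chi'$) or $\chi, \chi'$ (when $\c(A) = 0$, where the absence of $A$ rules out $\beta, \gamma, \gamma'$) --- none of which alters $\c(\{ X, X' \})$ or $\c(Y)$ --- every configuration $\c'$ with $\EffConf^{i} \Reaching_{\langle \RunPol \rangle} \c'$ satisfies $\Prp_{\c'}(\{ \delta, \delta' \}) = \Prp_{\EffConf^{i}}(\{ \delta, \delta' \}) = \Theta(x_{i} y_{i} / n)$. \Lem{}~\ref{lemma:runtime:tools:temporal-cost} then gives, for every round $i < i^{*}$ with $x_{i} \geq 1$ (recall $y_{i} \geq 1$ always),
\[\textstyle
\TemporalCost^{\RunPol}(\EffConf^{i})
\, \leq \,
O\!\left( \frac{n}{x_{i} \, y_{i}} \right)
\, = \,
O\!\left( \frac{n}{x_{i} \, (n - 2 - x_{i})} \right)
\, .
\]

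Then I would show that each value of the potential is attained $O(1)$ times and sum. A round $i < i^{*}$ with $x_{i} \geq 1$ is never target-deprived: along its suffix only $\gamma, \gamma'$ (resp.\ $\chi, \chi'$) are scheduled, so $\c(\{ X, X' \}) = x_{i} \geq 1$ and $\c(Y) \geq 1$ remain in force, hence $\delta$ or $\delta'$ stays applicable and condition~(II) of the $\Stop$ operator never triggers. So round $i$ ends with one of its target reactions: with $\delta$ or $\delta'$, which forces $x_{i + 1} \leq x_{i} - 1$; or with $\beta$, which can happen in at most one round overall, since $\beta$ turns $\c(A)$ from $2$ into $0$ and $\c(A)$ never increases. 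Consequently, over the rounds $i < i^{*}$ with $x_{i} \geq 1$, each value in $\{ 1, \dots, n - 3 \}$ is taken at most twice, and
\[\textstyle
\sum_{i < i^{*} \,:\, x_{i} \geq 1} \TemporalCost^{\RunPol}(\EffConf^{i})
\, \leq \,
O(n) \cdot \sum_{x = 1}^{n - 3} \frac{1}{x \, (n - 2 - x)}
\, = \,
O(n) \cdot \frac{1}{n - 2} \sum_{x = 1}^{n - 3} \!\left( \frac{1}{x} + \frac{1}{n - 2 - x} \right)
\, = \,
O(\log n)
\, ,
\]
the middle equality being the partial-fraction decomposition and the last one following since each inner sum is a harmonic number.

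The main obstacle, which I expect to occupy most of the work, is ruling out any contribution from rounds $i < i^{*}$ with $x_{i} = 0$ --- effective configurations in which every $X$ and $X'$ molecule has already been consumed. For such a configuration the $\delta, \delta'$ part of the target set is inapplicable, the estimate above degenerates, and a naive bound (waiting for the bimolecular reaction $\beta$ out of $\c(A) = 2$ molecules) would cost $\Theta(n)$ and ruin the $O(\log n)$ bound. The plan is to prove that no such round lies in the counted range $0 \leq i < i^{*}$: once $\c(\{ X, X' \}) = 0$ the invariant fixes $\c(Y) = n - 2$, and from such a configuration the execution can only fire $\beta$ (if $\c(A) = 2$) and thereafter oscillate via $\chi, \chi'$; comparing this with the stabilization step $t^{*}$ supplied by \Obs{}~\ref{observation:justify-runtime-policy:singleton:stabilization} should show that the effective step of any round whose effective configuration has $x_{i} = 0$ already satisfies $\EffStep(i) \geq t^{*}$, so it falls outside $\sum_{i = 0}^{i^{*} - 1}$. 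Making this exclusion precise --- in particular checking that the skipping policy cannot park such a configuration as the effective configuration of a counted round --- is the crux of the argument.
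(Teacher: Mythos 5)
Your main line of argument is, in substance, the paper's own proof: the paper tracks the potential $\ell_{i} = \EffConf^{i}(Y)$ (your $y_{i} = n - 2 - x_{i}$), uses the invariance of the $\{\delta, \delta'\}$ propensity along $\RunPol$-avoiding paths together with \Lem{}~\ref{lemma:runtime:tools:temporal-cost} to obtain $\TemporalCost^{\RunPol}(\EffConf^{i}) \leq O\bigl(n / (\ell_{i} (n - \ell_{i}))\bigr)$, observes that every counted round is target-accomplished and that the potential strictly progresses except in the single round that ends with $\beta$, and then sums $\sum_{\ell} n / (\ell (n - \ell)) = O(\log n)$. Your conservation law, the ``each potential value is attained at most twice'' bookkeeping, and the partial-fraction summation are the same computation written in the variable $x_{i}$.

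The obstacle you isolate at the end is genuine, but your proposed resolution does not close it --- and, notably, the paper's proof does not address it at all: its chain $\Prp_{\EffConf^{i}}(\RunPol(\EffConf^{i})) \geq \ell_{i} (n - \ell_{i} - 2) / \Vol \geq \Omega(\ell_{i} (n - \ell_{i}) / n)$ degenerates when $\ell_{i} = n - 2$, where the middle term is $0$. The configuration $2 A + (n - 2) Y$ is reachable strictly before the stabilization step: the scheduler may fire $\delta / \delta'$ until every $X$ and $X'$ molecule is consumed while deferring $\beta$, which is weakly fair provided $\beta$ is eventually scheduled. Since the target set of \Obs{}~\ref{observation:justify-runtime-policy:singleton:stabilization} requires $\c(A) = 0$, this configuration is \emph{not} in $\Stab(Z)$, so a skipping policy can install it as the effective configuration of a counted round; there the target set $\{\beta, \delta, \delta'\}$ collapses to $\beta$ alone, whose propensity is $\Theta(1/n)$, and the round's temporal cost is $\Theta(n)$. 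Hence the exclusion ``$\EffStep(i) \geq t^{*}$ whenever $x_{i} = 0$'' that you plan to prove is false under the stabilization target as stated. It would hold (and your plan would then go through, since $i^{*}$ is defined in this section via effective steps) if the target were weakened to $\{\c \mid \c(Y) = n - 2\}$, for which $2 A + (n - 2) Y$ already lies in $\Stab(Z)$; but with the target the paper actually writes down, your argument and the paper's share the same $\Theta(n)$ hole.
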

\begin{proof}
Let
$\ell_{i} = \EffConf^{i}(Y)$
for each round
$0 \leq i < i^{*}$.
The key observation here is that every round
$0 \leq i < i^{*}$
is target-accomplished with
$\ell_{i + 1} \leq \ell_{i}$;
moreover, there exists at most one round
$0 \leq i < i^{*}$
such that
$\ell_{i + 1} = \ell_{i}$.
Since
$\Prp_{\c}(\RunPol(\EffConf^{i}))
=
\Prp_{\EffConf^{i}}(\RunPol(\EffConf^{i}))$
for every configuration
$\c \in \Naturals^{\Species}$
reachable form $\EffConf^{i}$ via a $\RunPol$-avoiding path and since
$\Prp_{\EffConf^{i}}(\RunPol(\EffConf^{i}))
\geq
\frac{\ell_{i} \cdot (n - \ell_{i} - 2)}{\Vol}
\geq
\Omega \left( \frac{\ell_{i} \cdot (n - \ell_{i})}{n} \right)$,
it follows by \Lem{}~\ref{lemma:runtime:tools:temporal-cost} that
$\TemporalCost^{\RunPol}(\EffConf^{i})
\leq
O \left( \frac{n}{\ell_{i} \cdot (n - \ell_{i})} \right)$.
As
$\ell_{0} \geq 1$
and
$\ell_{i^{*} - 1} \leq n - 3$,
we can bound the runtime of $\eta$ under $\RunPol$ and $\SkipPol$ as
\[\textstyle
\RunTime^{\RunPol, \SkipPol}(\eta)
\, = \,
\sum_{i = 0}^{i^{*} - 1} \TemporalCost^{\RunPol}(\EffConf^{i})
\, \leq \,
\sum_{\ell = 1}^{n - 3} O \left( \frac{n}{\ell \cdot (n - \ell)} \right)
\, \leq \,
O (n) \cdot \sum_{\ell = 1}^{n - 1} \frac{1}{\ell \cdot (n - \ell)}
\, = \,
O (\log n)
\, ,
\]
thus establishing the assertion.
\end{proof}

Next, let us examine the efficiency of the runtime policies all of whose
target reaction sets are of size (at most) $1$.
To this end, consider such a runtime policy $\RunPol$ and the configuration
set
\[\textstyle
S_{0}
\, = \,
\left\{
\c \in \Naturals^{\Species}
\mid
\c(A) = 2
\land
\c(Y) = 1
\land
\c(\{ B, B' \}) = 0
\land
\| \c \| = n
\right\}
\, .
\]
By definition, every configuration in $S_{0}$ is a valid initial configuration
of $\Pi$.
Moreover, the set $S_{0}$ forms a component of the configuration digraph
$D^{\Pi}$.
As $\beta$ is the only configuration that escapes $S_{0}$, we deduce that
there must exist a configuration
$\c \in S_{0}$
such that
$\RunPol(\c) = \{ \beta \}$;
indeed, if
$\beta \notin \RunPol(\c)$,
then the adversarial scheduler can generate an arbitrarily long sequence of
rounds all of whose effective configurations are $\c$, thus pumping up the
stabilization runtime of $\Pi$.
Let
$\hat{\c} \in S_{0}$
be such a configuration.

To low-bound the stabilization runtime of $\Pi$, construct a weakly fair valid
execution $\eta$ and a skipping policy $\SkipPol$ such that the effective
configuration of round $0$ is
$\EffConf^{0} = \hat{\c}$.
Notice that
$\Prp_{\hat{\c}}(\beta) = 2 / \Vol$
and that
$\Prp_{\c}(\beta) = \Prp_{\hat{\c}}(\beta)$
for every configuration $\c$ reachable from $\hat{\c}$ via a
$\RunPol$-avoiding path.
Therefore, we can employ \Lem{}~\ref{lemma:runtime:tools:temporal-cost} to
conclude that
$\TemporalCost^{\RunPol}(\hat{\c})
=
\Vol / 2
=
\Omega (n)$.
\Prop{}~\ref{proposition:justify-runtime-policy:singleton} follows as the
temporal cost of round $0$ is clearly (weakly) dominated by the stabilization
runtime of the entire execution.

In this example, restricting the runtime policy to a singleton target set
limits the meaning of ``progress'' in an artificial way.
Specifically, it that does not give the protocol designer credit for ensuring
that progress is indeed possible from the effective configuration identified
by the adversary.

\subsection{Target Reactions $\subseteq$ Escaping Reactions}
\label{section:justify-runtime-policy:subset-escaping}
Our goal in this section is to prove
\Prop{}~\ref{proposition:justify-runtime-policy:subset-escaping}.
To this end, consider the CRN protocol
$\Pi = (\Species, \Reactions)$
introduced in \Sect{}~\ref{section:justify-runtime-policy:singleton} and
recall that
\[\textstyle
\RunTime_{\Stab}^{\Pi}(n)
\, = \,
O (\log n)
\, .
\]
Consider the configuration set $S_{0}$ introduced in
\Sect{}~\ref{section:justify-runtime-policy:singleton} and recall that
$S_{0}$ forms a component of the configuration digraph $D^{\Pi}$ and that
$\beta \in \Reactions$
is the only reaction that escapes from $S_{0}$.
Moreover, in \Sect{}~\ref{section:justify-runtime-policy:singleton}, we
prove that if a runtime policy $\RunPol$ satisfies
$\RunPol(\c) = \{ \beta \}$
for some configuration
$\c \in S_{0}$,
then there exist a weakly fair valid execution $\eta$ of initial molecular
count $n$ and a skipping policy $\SkipPol$ such that
\[\textstyle
\RunTime_{\Stab}^{\RunPol, \SkipPol}(\eta) = \Omega (n)
\, ,
\]
thus establishing
\Prop{}~\ref{proposition:justify-runtime-policy:subset-escaping}.

\subsection{Target Reactions $\supseteq$ Escaping Reactions}
\label{section:justify-runtime-policy:superset-escaping}
This section is dedicated to proving
\Prop{}~\ref{proposition:justify-runtime-policy:superset-escaping}.
To this end, consider the CRN protocol
$\Pi = (\Species, \Reactions)$
defined over the species set
$\Species = \{  L, X, W \}$
and the following non-void reactions:
\\
$\alpha$: $L + X \rightarrow L + W$;
and
\\
$\beta$: $L + L \rightarrow 2W$.
\\
A configuration
$\c^{0} \in \Naturals^{\Species}$
is valid as an initial configuration of $\Pi$ if
$\c^{0}(L) = 2$
and
$\c^{0}(W) = 0$.

\begin{observation}
\label{observation:justify-runtime-policy:superset-escaping:halting}
For every weakly fair valid execution
$\eta = \langle \c^{t}, \zeta^{t} \rangle_{t \geq 0}$,
there exists a step
$t^{*} > 0$
such that
\\
(1)
$\beta \in \Applicable(\c^{t})$
for every
$0 \leq t < t^{*}$;
\\
(2)
$\zeta^{t^{*} - 1} = \beta$;
and
\\
(3)
$\c^{t^{*}}$ is a halting configuration.
\end{observation}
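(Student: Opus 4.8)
The plan is to exploit the fact that $L$ appears purely as a catalyst in $\alpha$ and is consumed only by $\beta$, and is produced by no reaction at all; hence the $L$-count can only ever drop, once, from $2$ to $0$, and this single drop is exactly what drives $\eta$ into a halting configuration. First I would record the invariant that $\c^{t}(L) = 2$ for every step $t$ strictly before $\beta$ is first scheduled, and $\c^{t}(L) = 0$ from that step on. This is an immediate induction on $t$: it holds at $t = 0$ since $\c^{0}(L) = 2$ for any valid initial configuration, reaction $\alpha$ leaves $\c(L)$ unchanged, reaction $\beta$ decreases it by $2$, and no reaction in $\Reactions$ (void or non-void) increases $\c(L)$.

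Next I would argue that $\beta$ is in fact scheduled at some step. By the invariant, for every step $t$ at which $\beta$ has not yet been scheduled we have $\c^{t}(L) = 2$, so $\beta \in \Applicable(\c^{t})$; that is, $\beta$ is \emph{continuously} applicable from step $0$ onward (until possibly scheduled). The weak fairness condition then forbids $\eta$ from starving $\beta$: there must be a step $t' \geq 0$ with $\zeta^{t'} = \beta$, since the alternative option of the fairness condition, $\beta \notin \Applicable(\c^{t'})$, is ruled out by continuous applicability. Let $t^{*} > 0$ be the least integer with $\zeta^{t^{*} - 1} = \beta$. This choice immediately yields item~(2), and by the invariant $\beta \in \Applicable(\c^{t})$ for every $0 \leq t < t^{*}$, which is item~(1).

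Finally, for item~(3): after the application of $\beta$ in step $t^{*} - 1$, the invariant gives $\c^{t^{*}}(L) = 0$. Reaction $\alpha$ requires one $L$-molecule and $\beta$ requires two, so both are inapplicable to $\c^{t^{*}}$; as $\alpha$ and $\beta$ are the only non-void reactions of $\Pi$, every reaction applicable to $\c^{t^{*}}$ is void, i.e.\ $\c^{t^{*}} \in \Halt(\Naturals^{\Species})$ and $\c^{t^{*}}$ is a halting configuration.

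I do not expect a substantive obstacle; the only point that needs a little care is the invocation of weak fairness, where one must check that $\beta$ stays applicable along \emph{every} prefix (not merely infinitely often) — this is exactly what the catalytic role of $L$ in $\alpha$ buys us — and the degenerate case $\c^{0}(X) = 0$, which is harmless because the argument never uses the applicability of $\alpha$.
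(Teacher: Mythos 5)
Your proof is correct and is exactly the intended argument: the paper states this as an unproved observation, and your three steps (the $L$-count invariant, the weak-fairness argument that $\beta$ cannot be starved while continuously applicable, and the inapplicability of both non-void reactions once $\c^{t^*}(L)=0$) fill in precisely the reasoning the paper takes for granted. No gaps.
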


Notice that each configuration
$\c \in \Naturals^{\Species}$
forms a singleton component in the configuration digraph $D^{\Pi}$ and that
every applicable non-void reaction is escaping;
that is, if
$\c(L) = 2$,
then
$\mathcal{E}(\c) = \{ \alpha, \beta \}$.
Therefore, the only runtime policy that satisfies the restriction presented in
\Prop{}~\ref{proposition:justify-runtime-policy:superset-escaping} is the
runtime policy $\RunPol_{\supseteq}$ defined so that
$\RunPol_{\supseteq}(\c) = \{ \alpha, \beta \}$
for every configuration
$\c \in \Naturals^{\Species}$
with
$\c(L) = 2$.

\begin{lemma}
\label{lemma:justify-runtime-policy:superset-escaping:superset-policy}
For every sufficiently large $n$, there exist a weakly fair valid execution
$\eta$ of initial molecular count $n$ and a skipping policy $\SkipPol$ such
that
$\RunTime_{\Halt}^{\RunPol_{\supseteq}, \SkipPol}(\eta) = \Omega (n \log n)$.
\end{lemma}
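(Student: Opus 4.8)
The plan is to produce, for each sufficiently large $n$, one concrete weakly fair valid execution $\eta$ of initial molecular count $n$ for which, already with the identity skipping policy $\SkipPol_{\Identity}$, we have $\RunTime_{\Halt}^{\RunPol_{\supseteq}, \SkipPol_{\Identity}}(\eta) = \Omega(n \log n)$; since $\RunPol_{\supseteq}$ is the \emph{unique} runtime policy compatible with the restriction of \Prop{}~\ref{proposition:justify-runtime-policy:superset-escaping}, this suffices. Take $\c^{0}$ to be the valid initial configuration with $\c^{0}(L) = 2$, $\c^{0}(X) = n - 2$, and $\c^{0}(W) = 0$, and let $\eta$ schedule reaction $\alpha$ at each of the steps $0, 1, \dots, n - 3$ (each application converting one $X$ molecule into a $W$ molecule while leaving $\c(L) = 2$ untouched), schedule reaction $\beta$ at step $n - 2$ (consuming both $L$ molecules), and from step $n - 1$ on cycle indefinitely through the (necessarily void) reactions applicable to the halting configuration $\c^{n-1}$ with $\c^{n-1}(W) = n$. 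First I would check that $\eta$ is weakly fair: before step $n-1$ the only applicable non-void reactions are $\alpha$ and $\beta$, each of which is scheduled while continuously applicable; every void reaction whose reactants involve $L$ or $X$ becomes inapplicable by step $n - 1$ (once $L$ and $X$ are exhausted), so it may be left unscheduled; and the two void reactions applicable to $\c^{n-1}$ are scheduled infinitely often in the tail.

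Next I would unwind the round partition induced by $\RunPol_{\supseteq}$ and $\SkipPol_{\Identity}$. The restriction forces $\RunPol_{\supseteq}(\c) = \{\alpha, \beta\}$ for every configuration $\c$ with $\c(L) = 2$, because $\NonVoid(\Reactions) = \{\alpha, \beta\}$ and both $\alpha$ and $\beta$ escape from every such singleton component. Since every reaction scheduled by $\eta$ at a step $t \in \{0, \dots, n - 2\}$ lies in $\{\alpha, \beta\}$, the operator $\Stop$ advances exactly one step at a time: with the identity skipping policy, round $i$ occupies the step interval $[i, i+1)$ and has effective configuration $\EffConf^{i} = \c^{i} = 2\,L + (n - 2 - i)\,X + i\,W$ for $0 \le i \le n - 2$. (Here I would also note that condition~(II) of $\Stop$ never truncates a round earlier, because $\beta$ stays applicable, and $\alpha$ stays applicable while $X$ is present, until the target reaction fires.) By \Obs{}~\ref{observation:justify-runtime-policy:superset-escaping:halting} the halting configuration is first reached at step $t^{*} = n - 1$, right after $\beta$ fires, so $i^{*} = n - 1$ and $\RunTime_{\Halt}^{\RunPol_{\supseteq}, \SkipPol_{\Identity}}(\eta) = \sum_{i = 0}^{n - 2} \TemporalCost^{\RunPol_{\supseteq}}(\c^{i})$.

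It then remains to evaluate each temporal cost. The only configuration reachable from $\c^{i}$ via a $\RunPol_{\supseteq}$-avoiding path is $\c^{i}$ itself, since every reaction in $\Reactions - \{\alpha, \beta\}$ is void; hence in the stochastic execution defining $\TemporalCost^{\RunPol_{\supseteq}}(\c^{i})$ the configuration remains $\c^{i}$ until $\alpha$ or $\beta$ is scheduled, the $\Stop$-time is geometric, and $\TemporalCost^{\RunPol_{\supseteq}}(\c^{i}) = 1 / \Prp_{\c^{i}}(\{\alpha, \beta\})$. Plugging in $\Prp_{\c^{i}}(\alpha) = 2(n - 2 - i)/\Vol$ and $\Prp_{\c^{i}}(\beta) = 1/\Vol$ together with $\Vol = \Theta(n)$ gives $\TemporalCost^{\RunPol_{\supseteq}}(\c^{i}) = \Vol / \bigl( 2(n - 2 - i) + 1 \bigr) = \Theta\bigl( n / (n - 1 - i) \bigr)$. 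Summing over $i = 0, \dots, n - 2$ and re-indexing by $j = n - 1 - i$ yields $\RunTime_{\Halt}^{\RunPol_{\supseteq}, \SkipPol_{\Identity}}(\eta) = \Theta(n) \cdot \sum_{j = 1}^{n - 1} 1/j = \Theta(n \log n)$, which is $\Omega(n \log n)$, as required.

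There is no real obstacle --- the construction is direct --- so the only steps needing care are the weak-fairness check for the tail of $\eta$, the verification that condition~(II) of $\Stop$ never shortens a round, and obtaining a matching \emph{lower} bound on the temporal cost (\Lem{}~\ref{lemma:runtime:tools:temporal-cost} supplies only the upper bound, but here the stochastic $\Stop$-process is pinned at $\c^{i}$, so the exact value is immediate). Conceptually, forcing the escaping reaction $\alpha$ into every target set lets the adversary split each individual $\alpha$-step into its own round of stochastic cost $\Theta(n/k)$ when $k$ copies of $X$ remain, accumulating the harmonic sum; the unrestricted optimum avoids this by targeting $\beta$ alone, which collapses the entire $\alpha$-prefix into a single round of cost $\Theta(n)$.
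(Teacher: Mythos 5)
Your proposal is correct and follows essentially the same route as the paper's proof: the same execution (schedule $\alpha$ until $X$ is exhausted, then $\beta$), the same identity skipping policy, the same observation that each step becomes its own round with temporal cost $\Vol/(2(n-2-t)+1)$, and the same harmonic-sum conclusion. Your extra checks (weak fairness of the tail, that condition~(II) of $\Stop$ never truncates a round, and that the temporal cost is exact rather than merely up-bounded) are sound and make explicit some details the paper leaves implicit.
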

\begin{proof}
Construct the execution
$\eta = \langle \c^{t}, \zeta^{t} \rangle_{t \geq 0}$
by scheduling
$\zeta^{t} = \alpha$
for
$t = 0, 1, \dots, n - 3$,
i.e., as long as
$\c^{t}(X) > 0$,
and then scheduling
$\zeta^{n - 2} = \beta$.
This means that
$\c^{t} = 2 L + (n - 2 - t) X + t W$
for every
$0 \leq t \leq n - 2$
and that
$\c^{t} = n W$
for every
$t \geq n - 1$.

Let $\SkipPol$ be the identity skipping policy mapping each step
$t \geq 0$
to
$\SkipPol(t) = t$.
Under $\RunPol_{\supseteq}$ and $\SkipPol$, each step constitutes a full
round, so each configuration $\c^{t}$ is the effective configuration of its
own round.
Since
$\Prp_{\c^{t}}(\{ \alpha, \beta \}) = \frac{1 + 2 (n - 2 - t)}{\Vol}$,
it follows that
$\TemporalCost^{\RunPol_{\supseteq}}(\c^{t})
=
\frac{\Vol}{1 + 2 (n - 2 - t)}$
for each
$0 \leq t \leq n - 2$.
We conclude that
\[\textstyle
\RunTime_{\Halt}^{\RunPol_{\supseteq}, \SkipPol}(\eta)
\, \geq \,
\sum_{t = 0}^{n - 2} \frac{\Vol}{1 + 2 (n - 2 - t)}
\, = \,
\Omega (n) \cdot \sum_{\ell = 1}^{n} \frac{1}{\ell}
\, = \,
\Omega (n \log n)
\, ,
\]
thus establishing the assertion.
\end{proof}

Next, consider the runtime policy $\RunPol$ defined so that
$\RunPol(\c) = \{ \beta \}$
for every configuration
$\c \in \Naturals^{\Species}$
with
$\c(L) = 2$.
Fix a weakly fair valid execution
$\eta = \langle \c^{t}, \zeta^{t} \rangle_{t \geq 0}$
of initial molecular count $n$ and a skipping policy $\SkipPol$ and let
$\EffStep(i)$ and
$\EffConf^{i} = \c^{\EffStep(i)}$
be the effective step and effective configuration, respectively, of round
$i \geq 0$
under $\RunPol$ and $\SkipPol$.

The key observation now is that when round $0$ ends, the execution must have
halted, i.e.,
$\EffStep(1) \geq t^{*}$,
where $t^{*}$ is the step promised in
\Obs{}~\ref{observation:justify-runtime-policy:superset-escaping:halting}.
As the temporal cost charged to a single round is always
$O (n)$,
we conclude that
\[\textstyle
\RunTime_{\Halt}^{\RunPol, \SkipPol}(\eta)
\, \leq \,
O (n)
\, ,
\]
thus establishing
\Prop{}~\ref{proposition:justify-runtime-policy:superset-escaping}.

The intuition behind this example is that by repeatedly scheduling $\alpha$,
the adversary can postpone halting, but not indefinitely.
Should the protocol designer be charged the temporal cost of \emph{every}
adversarially-scheduled postponing step?
It is reasonable to argue that the answer is no:
The adversary drives the execution to a pitfall, and the protocol designer
should pay for that (to the tune of
$O (n)$),
but should not have to pay for every step of the execution that leads to the
pitfall.
\LongVersionEnd 

\LongVersion 
\section{Large Adversarial Runtime vs.\ Small Expected
Stochastic Runtime}
\label{section:large-adversarial-small-stochastic}
This section focuses on the ability of the weakly fair adversarial scheduler to
slow down the execution of CRN protocols.
In particular, we show that the stabilization/halting runtime of a CRN
protocol $\Pi$ operating under the weakly fair adversarial scheduler may be
significantly larger than the expected stochastic stabilization/halting
runtime of the same protocol $\Pi$ when it operates under the stochastic
scheduler.
This phenomenon is demonstrated by two CRN protocols in which the
aforementioned gap is obtained using different strategies:
In \Sect{}~\ref{section:large-adversarial-small-stochastic:multiple-pitfalls},
we present a protocol designed so that the (weakly fair) adversarial scheduler
can lead the execution through a sequence of (asymptotically) many pitfall
configurations before it stabilizes;
a stochastic execution, on the other hand, avoids all those pitfall
configurations with high probability and thus, stabilizes much faster.
In contrast, the protocol presented in
\Sect{}~\ref{section:large-adversarial-small-stochastic:round-inflation} does
not admit any pitfall configurations;
rather, this protocol is designed so that the adversarial scheduler can
``extend'' the execution far beyond what one would expect from a
stochastically generated execution, thus charging the protocol's runtime for
an inflated number of rounds.

\subsection{Reaching Multiple Pitfall Configurations}
\label{section:large-adversarial-small-stochastic:multiple-pitfalls}
This section presents a CRN protocol for which there exists a weakly fair
execution that reaches $\Theta (n)$ pitfall configurations before stabilization.
However, under a stochastic scheduler, the execution reaches stabilization with a high probability of avoiding any pitfall configurations.\footnote{%
We say that event occurs with high probability if its probability is at least
$1 - n^{-c}$
for an arbitrarily large constant $c$.}
Consider the CRN protocol
$\Pi = (\Species, \Reactions)$
presented in \Fig{}~\ref{figure:crn-multiple-pitfalls:protocol} and the valid initial configuration
$\c^{0} = L_{0} + C + (n - 2) X$.
In \Fig{}~\ref{figure:crn-multiple-pitfalls:runtime-policy},
we devise a runtime policy $\RunPol$ demonstrating that the stabilization runtime of $\Pi$ is
$\RunTime_{\Stab}^{\Pi}(n) = \Theta (n^{2})$.
Next, we show that on stochastically generated execution
$\eta_{r} = \langle \c^{t}, \zeta^{t} \rangle_{0 \leq t}$
of initial molecular count $n$, the expected (stochastic) stabilization runtime of $\Pi$ is
$O (n)$.

Notice that after every consumption of an $X$ molecule (and production of a $Y$ molecule), the resulting configuration $\c$ satisfies
$\c(L_{0}) = 1$.
Recalling that $\c^{0}(L_{0}) = 1$ and that $\c^{0}(X) = n - 2$, we conclude that until stabilizing, the stochastic execution reaches exactly $\frac{1}{2} n$ configurations
$\c_{L_{0}}^{0}, \c_{L_{0}}^{1}, \dots \c_{L_{0}}^{\frac{1}{2} n - 1}$
such that $\c_{L_{0}}^{j}(L_{0}) = 1$
for every
$0 \leq j \leq \frac{1}{2} n - 1$.
\begin{observation}
\label{observation:large-adversarial-small-stochastic:multiple-pitfalls:X}
For every
$0 \leq j \leq \frac{1}{2} n - 1$
it holds that
$\c_{L_{0}}^{j}(X) = n - 2 - j$.
\end{observation}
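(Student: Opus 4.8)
The plan is to prove the claim by induction on $j$, keeping track of the number of $X$ molecules consumed by $\eta_{r}$ so far. I would first record three structural properties of $\Pi$, all of which follow by inspecting the reaction set $\Reactions$ depicted in \Fig{}~\ref{figure:crn-multiple-pitfalls:protocol}: (i) no reaction of $\Pi$ has $X$ as a product, so $\c^{t}(X)$ is non-increasing in $t$ and decreases by exactly one each time an $X$-consuming reaction fires; (ii) the molecular count $\c^{t}(L_{0})$ lies in $\{0,1\}$ at all times, and it can increase (from $0$ to $1$) only via a reaction that consumes an $X$ molecule; and (iii) from any configuration with $\c(L_{0}) = 1$, every applicable reaction decreases $\c(L_{0})$ to $0$. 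Combined with the property recorded in the text — that immediately after any $X$-consumption the resulting configuration satisfies $\c(L_{0}) = 1$ — properties (ii) and (iii) pin down a one-to-one correspondence between the fresh visits of $\eta_{r}$ to the set $\{\c : \c(L_{0}) = 1\}$ and the $X$-consumption events.

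For the base case $j = 0$, note that the initial configuration $\c^{0} = L_{0} + C + (n-2)X$ already satisfies $\c^{0}(L_{0}) = 1$, so by the indexing convention $\c_{L_{0}}^{0} = \c^{0}$ and hence $\c_{L_{0}}^{0}(X) = n-2 = n-2-0$. For the inductive step, suppose $\c_{L_{0}}^{j}(X) = n-2-j$ with $j \le \frac{1}{2}n - 2$, let $t_{j} < t_{j+1}$ be the steps at which $\c_{L_{0}}^{j}$ and $\c_{L_{0}}^{j+1}$ are reached, and I would show that exactly one $X$-consuming reaction is scheduled in the step interval $[t_{j}, t_{j+1})$. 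For \emph{at most one}: if $\zeta^{s}$ consumes an $X$ for some $s \in [t_{j}, t_{j+1})$, then $\c^{s+1}(L_{0}) = 1$, so $s+1$ is a visit to $\{\c : \c(L_{0}) = 1\}$ lying in $(t_{j}, t_{j+1}]$; since $t_{j}$ and $t_{j+1}$ are consecutive such visits, this forces $s+1 = t_{j+1}$, so no second $X$-consumption fits in the interval. For \emph{at least one}: by property~(iii), $\c^{t_{j}+1}(L_{0}) = 0$, so by the time $\c^{t_{j+1}}(L_{0}) = 1$ the count of $L_{0}$ must have increased again, which by property~(ii) can only happen through an $X$-consuming reaction inside the interval. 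Using property~(i), it follows that $\c_{L_{0}}^{j+1}(X) = \c_{L_{0}}^{j}(X) - 1 = (n-2-j) - 1 = n-2-(j+1)$, which closes the induction; in particular the $X$-count strictly decreases along $\c_{L_{0}}^{0}, \dots, \c_{L_{0}}^{\frac{1}{2}n - 1}$, so these $\frac{1}{2}n$ configurations are pairwise distinct.

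The part I expect to be the main obstacle is establishing property~(iii), i.e.\ ruling out the degenerate scenario in which $\eta_{r}$ lingers at $\c(L_{0}) = 1$ — for instance by firing a void or an $L_{0}$-free reaction — and thereby revisits a configuration with $\c(L_{0}) = 1$ without an intervening $X$-consumption, which would break the claimed formula. This has to be extracted from the specific reaction set of $\Pi$: one checks that every reactant multiset available in a configuration with $\c(L_{0}) = 1$ triggers only reactions whose firing removes the unique $L_{0}$ molecule, so that the $L_{0}$-count genuinely cycles $1 \to 0 \to \cdots \to 0 \to 1$ between consecutive visits and each cycle is charged exactly one $X$-consumption. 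The remaining ingredients are routine: reactions of $\Pi$ are at most bimolecular and the $X$-consuming reaction has a single $X$ reactant, so every firing of it removes exactly one $X$ molecule, as the induction requires.
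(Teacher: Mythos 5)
Your structural properties (i) and (ii) are correct, but property (iii) is false for this protocol, and since you yourself identify (iii) as the load-bearing step, this is a genuine gap. The reaction $\beta_{0}$: $L_{0} + X \rightarrow L_{0} + Y$ consumes the $L_{0}$ molecule and immediately reproduces it, so it fires from a configuration with $\c(L_{0}) = 1$ and leaves $\c(L_{0}) = 1$; the same is true of every void reaction (which the model guarantees to exist for reactant multisets such as $2X$, $X + Y$, or $C + X$). So the check you propose --- ``every reactant multiset available in a configuration with $\c(L_{0}) = 1$ triggers only reactions whose firing removes the unique $L_{0}$ molecule'' --- does not go through, and the claimed one-to-one correspondence between visits to $\{\c : \c(L_{0}) = 1\}$ and $X$-consumption events breaks in both directions depending on how you read ``visit'': with step-indexed visits, a void reaction yields two consecutive visits with no intervening consumption (so your ``at least one'' direction fails); with fresh-arrival visits, repeated firings of $\beta_{0}$ consume several $X$ molecules without ever leaving the set (so your ``at most one'' direction fails, since the consumption at step $s = t_{j}$ does not create a fresh arrival at $s+1$).

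The observation itself is nevertheless true, and the paper states it without proof precisely because it is immediate from the sentence preceding it: every $X$-consumption (i.e., every $\beta_{i}$ reaction) produces an $L_{0}$ molecule, and $\c^{0} = L_{0} + C + (n-2) X$. The clean way to repair your argument is to drop the visit/consumption bijection and instead index the $\c_{L_{0}}^{j}$ by \emph{distinct} configurations: since the leader count and the $C$ count are both invariantly $1$ and $\Pi$ is density preserving, a reachable configuration with $\c(L_{0}) = 1$ is completely determined by its $X$-count (as $\c(Y) = n - 2 - \c(X)$). By your properties (i) and (ii) the $X$-count is non-increasing and drops by exactly one at each consumption, each consumption lands at an $L_{0}$-configuration, and the execution starts at one with $X$-count $n-2$; hence the distinct $L_{0}$-configurations encountered are exactly those with $X$-counts $n-2, n-3, \dots$, which is the claimed formula. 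This is also the reading under which the paper's subsequent analysis of the intervals $[t_{j}, t_{j+1}]$ makes sense.
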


Fix some
$0 \leq j \leq \frac{1}{2} n - 2$.
We analyze the contribution to the expected stochastic runtime of the step interval
$[t_{j}, t_{j + 1} - 1]$
where
$\c^{t_{j}} = \c_{L_{0}}^{j}$
and
$\c^{t_{j + 1}} = \c_{L_{0}}^{j + 1}$.
Observe that exactly one of the following holds:
(I)
$1 \leq t_{j + 1} - t_{j} \leq k$; or
(II)
$t_{j + 1} - t_{j} = k + 2$
(refer to the configuration digraph $D^{\Pi}$ for $k = 2$ shown in \Fig~\ref{figure:crn-multiple-pitfalls:digraph}).
Recalling that
$\Prp_{c^{t}}(\{ \gamma_{0}, \dots \gamma_{k} \}) \leq \frac{1}{\Vol}$
for every
$t \geq 0$,
combined with \Obs~\ref{observation:large-adversarial-small-stochastic:multiple-pitfalls:X},
we conclude that the probability of
$t_{j + 1} - t_{j} = q$
for
$1 \leq q \leq k$
(event (I))
is
$\left( \frac{1}{n - 1 - j} \right)^{q - 1} \cdot \frac{n - 2 - j}{n - 1 - j}$,
and probability of event (II) is
$\left( \frac{1}{n - 1 -j} \right)^{k}$.
Therefore we get
\begin{align*}
&
\Ex (t_{j + 1} - t_{j})
\\
= \, &
n \left(
\sum_{q = 1}^{k} \left(
\left( \tfrac{1}{n - 1 - j} \right)^{q - 1}
\cdot
\tfrac{n - 2 - j}{n - 1 - j}
\cdot
\left( (q - 1) n + \tfrac{n}{n - 2 - j} \right)
\right)
+
(k + 1) n \cdot \left( \tfrac{1}{n - 1 - j} \right)^{k}
\right)
\\
= \, &
\Theta \left( \tfrac{n^{2}}{n - j} \right) 
\, .
\end{align*}
Note that we can take $k$ to be any arbitrarily large constant.
Thus, it can be shown that event (I) occurs for every 
$0 \leq j \leq \frac{1}{2} n - 2$
with high probability, 
which means that the execution does not reach a stabilizing pitfall configuration.

Let
\[
\hat{t}
\, = \,
\min \{ t > 0 \mid \c^{t}(X) \leq \c^{t}(Y) \}
\]
be the stabilization step of $\eta_{r}$.
We can bound $\Ex (\hat{t})$
as
\[
\Ex (\hat{t})
\, \leq \, 
\left( \sum_{j = 0}^{\frac{1}{2} n - 2} O \left( \frac{n^{2}}{n - j} \right) \right)
\, = \, 
O \left( n^{2} \right) \, .
\]
The upper bound follows by recalling that the time span of each step is
$\Theta (1 / n)$.
Notice that in order to reach a halting configuration, the final non-void reaction must be $\gamma_{k}$, which shifts the $L_{k}$ molecule into $L_{k + 1}$.
It can be shown that the expected stochastic halting runtime is $O (n \log n)$, which is still asymptotically faster than the adversarial (stabilization and halting) runtime.

\subsection{Round Inflation}
\label{section:large-adversarial-small-stochastic:round-inflation}
In this section, we present a CRN protocol whose halting runtime is
asymptotically larger than the stochastic halting runtime due to a larger
number of rounds.
Consider the CRN protocol
$\Pi = (\Species, \Reactions)$
presented in \Fig{}~\ref{figure:crn-kill-alternating-leaders:protocol}.
Intuitively, an execution of $\Pi$ halts once a $\beta$ reaction is scheduled.
The adversary can delay halting by scheduling $\gamma$ reactions again and
again until all of the $X$ molecules are used up.
In particular, the adversary can continue to schedule $\gamma$ reactions when
few $X$ molecules remain, even though such reactions have low propensity and
are not likely to be scheduled stochastically.
By doing so, the adversary inflates the number of rounds, thus increasing the
execution’s runtime.

Consider the valid initial configuration
$\c^{0} = L_{0} + ((n / 2) - 1) X + (n / 2) Y$
for some sufficiently large even integer $n$ and fix a weakly fair execution
$\eta = \langle \c^{t}, \zeta^{t} \rangle_{t \geq 0}$
emerging from $\c^{0}$.

\begin{observation}
\label{observation:large-adversarial-small-stochastic:round-inflation:counts}
For every
$t \geq 0$,
we have
\\
(1)
$\c^{t + 1}(X) \leq \c^{t}(X)$;
and
\\
(2)
$\c^{t + 1}(Y) \geq \c^{t}(Y)$.
\end{observation}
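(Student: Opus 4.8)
The plan is to prove both parts by a direct inspection of the reaction set $\Reactions$ of the protocol $\Pi$ depicted in \Fig{}~\ref{figure:crn-kill-alternating-leaders:protocol}. Recall that if the reaction scheduled in step $t$ is $\zeta^{t} = (\mathbf{r}, \mathbf{p})$, then $\c^{t + 1} = \c^{t} - \mathbf{r} + \mathbf{p}$; hence it suffices to establish that $\mathbf{p}(X) \leq \mathbf{r}(X)$ and $\mathbf{p}(Y) \geq \mathbf{r}(Y)$ for \emph{every} reaction $(\mathbf{r}, \mathbf{p}) \in \Reactions$. For void reactions this is immediate since $\mathbf{r} = \mathbf{p}$, so the only reactions that need to be checked are the non-void ones.

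Going through $\NonVoid(\Reactions)$, the $\gamma$ reactions each consume one $X$ molecule and produce one $Y$ molecule, so they satisfy $\mathbf{r}(X) = 1 > 0 = \mathbf{p}(X)$ and $\mathbf{p}(Y) = 1 > 0 = \mathbf{r}(Y)$; the $\beta$ reaction involves only the leader and waste species and leaves the $X$ and $Y$ counts untouched, i.e., $\mathbf{r}(X) = \mathbf{p}(X) = \mathbf{r}(Y) = \mathbf{p}(Y) = 0$. Since $X$ never appears as a product and $Y$ never appears as a reactant of any reaction in $\Reactions$, both displayed inequalities hold for the reaction $\zeta^{t}$ scheduled in step $t$, which yields claims (1) and (2).

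I do not expect any real obstacle here: the statement is a purely structural invariant that follows from the syntactic shape of the reactions, and the only point requiring attention is to confirm that \emph{all} reactions of $\Pi$ --- not just the ones singled out as interesting --- keep the $X$ count non-increasing and the $Y$ count non-decreasing. Once established, this observation is meant to be combined with the identities $\c^{0}(X) = (n / 2) - 1$ and $\c^{0}(Y) = n / 2$ to argue that the adversary can keep delaying the halting $\beta$ reaction only until the finite supply of $X$ molecules is exhausted, thereby bounding from above the number of rounds the adversary can inflate while steering $\eta$.
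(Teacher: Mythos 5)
Your overall strategy --- checking reaction by reaction that $\mathbf{p}(X) \leq \mathbf{r}(X)$ and $\mathbf{p}(Y) \geq \mathbf{r}(Y)$, and dismissing void reactions since $\mathbf{r} = \mathbf{p}$ --- is exactly the right (and really the only) way to prove this observation; the paper itself offers no proof, treating it as immediate from the reaction set. Your treatment of the $\gamma$ reactions is also correct: $\gamma_{0}\colon L_{0} + X \rightarrow L_{1} + Y$ and $\gamma_{1}\colon L_{1} + X \rightarrow L_{0} + Y$ each consume one $X$ and produce one $Y$.

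However, your verification of the $\beta$ case describes the wrong reactions. In the protocol of \Fig{}~\ref{figure:crn-kill-alternating-leaders:protocol}, the $\beta$ reactions are $\beta_{0}\colon L_{0} + Y \rightarrow 2Y$ and $\beta_{1}\colon L_{1} + Y \rightarrow 2Y$; there is no ``waste'' species, and it is false both that these reactions ``leave the $X$ and $Y$ counts untouched'' with $\mathbf{r}(Y) = \mathbf{p}(Y) = 0$ and that ``$Y$ never appears as a reactant of any reaction in $\Reactions$'' ($Y$ is a reactant of both $\beta$ reactions). You appear to have conflated this protocol with another one in the paper. The observation survives because the correct stoichiometry still satisfies the required inequalities --- $\mathbf{r}(X) = \mathbf{p}(X) = 0$ and $\mathbf{r}(Y) = 1 \leq 2 = \mathbf{p}(Y)$, so each $\beta$ reaction leaves $X$ unchanged and strictly increases the count of $Y$ --- but the justification as written checks a reaction that is not in $\Reactions$, so this step needs to be redone against the actual reaction set before the argument is sound.
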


\begin{observation}
\label{observation:large-adversarial-small-stochastic:round-inflation:halting}
There exists a step
$t^{*} > 0$
such that
$\c^{t}(\{ L_{0}, L_{1} \}) = 1$
for every
$0 \leq t < t^{*}$
and
$\c^{t}(L_{0}, L_{1}) = 0$
for every
$t \geq t^{*}$.
Moreover, $t^{*}$ is the halting step of $\eta$.
\end{observation}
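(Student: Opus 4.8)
The plan is to track the leader count $\lambda(\c) \triangleq \c(\{L_0, L_1\})$ and show that it starts at $1$, never increases, takes only the values $0$ and $1$, and is driven down to $0$ in finitely many steps; the step at which it drops is the promised $t^*$. First I would inspect the reaction set of $\Pi$: the $\gamma$ reactions (which swap the active leader between $L_0$ and $L_1$ while turning an $X$ into a $Y$) and all void reactions preserve $\lambda$, whereas the leader-killing $\beta$ reactions decrease $\lambda$ by one; since every non-void reaction has a leader among its reactants, $\lambda$ can never increase along $\eta$, and once $\lambda(\c^t) = 0$ no non-void reaction is applicable, so $\lambda(\c^s) = 0$ for all $s \ge t$. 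Together with $\lambda(\c^0) = \c^0(L_0) = 1$ this gives the dichotomy: either $\lambda(\c^t) = 1$ for every $t$, or there is a unique first step $t^*$ with $\lambda(\c^{t^*}) = 0$, in which case $\lambda(\c^t) = 1$ for all $0 \le t < t^*$ --- the first half of the claim (and $t^* > 0$ since $\lambda(\c^0) = 1$).

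The heart of the argument is to rule out that the leader survives forever, and this is where weak fairness enters. Assume $\lambda(\c^t) = 1$ for all $t \ge 0$. By \Obs{}~\ref{observation:large-adversarial-small-stochastic:round-inflation:counts}, $\c^t(X)$ is non-increasing and non-negative, hence constant from some step $T$ onward. Every $\gamma$ reaction strictly decreases the $X$-count, so no $\gamma$ reaction is scheduled at any step $\ge T$; and no $\beta$ reaction is ever scheduled, as it would decrease $\lambda$. Since $\gamma$ and $\beta$ reactions are the only ones that change which of $L_0, L_1$ is present, there is a fixed $v \in \{0,1\}$ with $\c^t(L_v) = 1$ (and $\c^t(L_{1-v}) = 0$) for all $t \ge T$. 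But then the leader-killing reaction $\beta_v$ is applicable to $\c^t$ for every $t \ge T$ --- it needs only the presence of $L_v$, together with, if it is bimolecular, a $Y$ molecule, of which there are at least $\c^0(Y) = n/2 > 0$ by \Obs{}~\ref{observation:large-adversarial-small-stochastic:round-inflation:counts} --- and it never ceases to be applicable. Weak fairness therefore forces $\beta_v$ to be scheduled at some step $\ge T$, contradicting $\lambda(\c^t) = 1$ for all $t$. Hence $t^*$ exists and is finite. I expect this to be the main (though modest) difficulty: one has to see that the leader's identity eventually freezes and then identify precisely the continuously applicable reaction that weak fairness pins down.

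It remains to identify $t^*$ with the halting step of $\eta$. The reaction $\zeta^{t^*-1}$ that dropped $\lambda$ from $1$ to $0$ is necessarily a $\beta$ reaction, as the informal description anticipates. Because $\c^{t^*}(\{L_0, L_1\}) = 0$, every reaction whose reactant set contains a leader is inapplicable to $\c^{t^*}$, and every remaining reaction of $\Pi$ is void by design; hence the only configuration reachable from $\c^{t^*}$ is $\c^{t^*}$ itself, i.e., $\c^{t^*} \in \Halt(\Naturals^{\Species})$. Conversely, for each $t < t^*$ the configuration $\c^t$ contains the unique leader $L_v$, so the non-void reaction $\beta_v$ is applicable to $\c^t$ with $\beta_v(\c^t) \neq \c^t$, witnessing that $\c^t$ is not halting. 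Therefore $t^*$ is the earliest halting step of $\eta$, completing the proof.
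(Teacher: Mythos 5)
Your proof is correct; the paper states this observation without proof, and your argument --- monotonicity of the leader count $\c(\{L_0,L_1\})$, finiteness of the $X$ supply to bound the number of $\gamma$ steps, and weak fairness applied to the continuously applicable $\beta_v$ (whose $Y$ reactant never runs out since $\c^t(Y) \geq n/2$) --- is exactly the intended one suggested by the paper's informal discussion preceding the observation. Nothing to add.
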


We first argue that if $\eta$ is generated by the stochastic scheduler, then
the expected stochastic runtime of the execution prefix
$\langle \c^{t}, \zeta^{t} \rangle_{0 \leq t < t^{*}}$
is
$O (1)$,
where
$t^{*} > 0$
is the halting step promised in
\Obs{}~\ref{observation:large-adversarial-small-stochastic:round-inflation:halting}.
Indeed, since the molecular count of $Y$ remains
$\Omega (n)$
throughout the execution, it follows that at any step
$0 \leq t < t^{*}$,
a $\beta$ reaction is scheduled, which causes $\eta$ to halt, with
probability
$\Omega (1 / n)$.
Thus, in expectation, it takes
$O (n)$
steps until $\eta$ halts, whereas the time span of each step is
$O (1 / n)$.

On the other hand, we argue that a weakly fair adversarial scheduler can
construct the execution $\eta$ so that its (adversarial) halting runtime is
$\Omega (n)$.
To this end, denote
$x = \c^{0}(X)$,
recalling that
$x = \Omega (n)$.
The adversarial scheduler constructs the execution prefix
$\langle \c^{t}, \zeta^{t} \rangle_{0 \leq t < x}$
by setting
\[\textstyle
\zeta^{t}
\, = \,
\begin{cases}
\gamma_{0}
\, , &
t = 0 \bmod 2
\\
\gamma_{1}
\, , &
t = 1 \bmod 2
\end{cases}
\, ,
\]
and chooses the skipping policy to be the identity skipping policy
$\SkipPol_{\Identity}$, which leads to the following observation.

\begin{observation}
\label{observation:large-adversarial-small-stochastic:round-inflation:applicable}
The execution $\eta$ satisfies
\[\textstyle
\Applicable(\c^{t})
\, = \,
\begin{cases}
\{ \beta_{0}, \gamma_{0} \}
\, , &
t = 0 \bmod 2
\\
\{ \beta_{1}, \gamma_{1} \}
\, , &
t = 1 \bmod 2
\end{cases}
\]
for every
$0 \leq t < x$
and
$\Applicable(\c^{x})
=
\{ \beta_{0}, \beta_{1} \} - \Applicable(\c^{x - 1})$.
\end{observation}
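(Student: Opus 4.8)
The plan is to prove the statement by a short induction on $t$ that pins down the exact form of the configurations $\c^{0}, \c^{1}, \dots, \c^{x}$ visited by the adversarially constructed execution $\eta$, after which the applicable reaction sets are read off directly. Recall from \Fig{}~\ref{figure:crn-kill-alternating-leaders:protocol} that the non-void reactions of $\Pi$ are the leader-flipping reactions $\gamma_{v}$, whose reactants are $L_{v}$ and $X$ and which consume one $X$ molecule while turning the leader $L_{v}$ into $L_{1 - v}$, and the leader-killing reactions $\beta_{v}$, whose reactants are $L_{v}$ and $Y$ (with $Y$ catalytic) and which consume the leader and halt the execution, for $v \in \{ 0, 1 \}$; recall also that $\eta$ is built by scheduling $\zeta^{t} = \gamma_{t \bmod 2}$ for every $0 \leq t < x$, and that $x = \c^{0}(X) = (n / 2) - 1 \geq 1$ since $n$ is a large even integer.

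First I would establish, by induction on $t$ over $0 \leq t \leq x$, the invariant that $\c^{t}(L_{t \bmod 2}) = 1$, $\c^{t}(L_{1 - (t \bmod 2)}) = 0$, and $\c^{t}(X) = x - t$. The base case $t = 0$ is immediate from $\c^{0} = L_{0} + ((n / 2) - 1) X + (n / 2) Y$. For the inductive step at a step $t$ with $0 \leq t < x$, writing $v = t \bmod 2$, the hypothesis gives $\c^{t}(L_{v}) = 1$ and $\c^{t}(X) = x - t \geq 1$, so $\zeta^{t} = \gamma_{v}$ is indeed applicable to $\c^{t}$; applying it replaces the leader $L_{v}$ by $L_{1 - v} = L_{(t + 1) \bmod 2}$ and decrements the $X$-count to $x - (t + 1)$, which is exactly the invariant at step $t + 1$. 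Alongside this I would record that, by \Obs{}~\ref{observation:large-adversarial-small-stochastic:round-inflation:counts} (whose second item gives $\c^{t + 1}(Y) \geq \c^{t}(Y)$ for all $t$), we have $\c^{t}(Y) \geq \c^{0}(Y) = n / 2 > 0$ for every $t \geq 0$, so a $Y$ molecule — the second reactant of each $\beta$ reaction — is always available.

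With the invariant in hand, the applicable sets follow by inspecting the reactant requirements. Fix $0 \leq t < x$ and set $v = t \bmod 2$: the reactions $\beta_{1 - v}$ and $\gamma_{1 - v}$ both need the absent species $L_{1 - v}$ and are inapplicable; $\gamma_{v}$ needs $L_{v}$ and an $X$ molecule, both present because $\c^{t}(X) = x - t > 0$; and $\beta_{v}$ needs $L_{v}$ and a $Y$ molecule, both present; hence $\Applicable(\c^{t}) = \{ \beta_{v}, \gamma_{v} \}$. For $t = x$, set $w = x \bmod 2$: the invariant gives leader $L_{w}$ and $\c^{x}(X) = 0$, so $\gamma_{w}$ is inapplicable for lack of an $X$ molecule, $\gamma_{1 - w}$ and $\beta_{1 - w}$ are inapplicable for lack of $L_{1 - w}$, and only $\beta_{w}$ — needing $L_{w}$ and a $Y$ molecule, both present — survives, i.e., $\Applicable(\c^{x}) = \{ \beta_{w} \}$. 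Finally, since $x - 1 < x$ and $(x - 1) \bmod 2 = 1 - w$, the first part already gives $\Applicable(\c^{x - 1}) = \{ \beta_{1 - w}, \gamma_{1 - w} \}$, and therefore $\{ \beta_{0}, \beta_{1} \} - \Applicable(\c^{x - 1}) = \{ \beta_{w} \} = \Applicable(\c^{x})$, which completes the proof.

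The argument is entirely mechanical and I do not expect a genuine obstacle; the only points demanding care are the parity bookkeeping (consistently indexing the leader by $t \bmod 2$ and matching it against $\zeta^{t} = \gamma_{t \bmod 2}$) and the boundary step $t = x$, where the exhaustion of the $X$ molecules removes the two $\gamma$ reactions and leaves a single applicable $\beta$ reaction. This observation is a bookkeeping step feeding the round-inflation lower bound: under the identity skipping policy $\SkipPol_{\Identity}$ each of the $x = \Omega(n)$ prefix steps forms its own target-accomplished round, and the real work of the section is the ensuing temporal-cost estimate showing that every runtime policy $\RunPol$ incurs $\Omega(1)$ cost on each such round, so that $\RunTime_{\Halt}^{\RunPol, \SkipPol_{\Identity}}(\eta) = \Omega(n)$ — in contrast to the $O(1)$ expected stochastic runtime noted earlier.
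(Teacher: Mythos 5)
Your proof is correct and is essentially the argument the paper leaves implicit: the observation is stated there without proof as immediate from the construction, and your induction on the leader's parity and the $X$-count, combined with the monotonicity of the $Y$-count from \Obs{}~\ref{observation:large-adversarial-small-stochastic:round-inflation:counts}, is exactly what is needed, including the careful handling of the boundary step $t = x$. One small terminological slip: in $\beta_{v} \colon L_{v} + Y \rightarrow 2 Y$ the species $Y$ is not a catalyst in the paper's sense (since $\mathbf{r}(Y) = 1 \neq 2 = \mathbf{p}(Y)$), though this has no bearing on your argument, which only uses that $Y$ is a reactant of $\beta_{v}$ and is always present.
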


Fix a runtime policy $\RunPol$ for $\Pi$ and let $\EffStep(i)$ and
$\EffConf^{i} = \c^{\EffStep(i)}$
be the effective step and effective configuration, respectively, of round
$i \geq 0$ under $\RunPol$ and $\SkipPol_{\Identity}$.
\Obs{}~\ref{observation:large-adversarial-small-stochastic:round-inflation:applicable}
implies that
$\EffStep(i) = i$
for every
$0 \leq i < x$,
hence $\eta$ includes (at least)
$x = \Omega (n)$
rounds before it halts.
The key observation now is that the temporal cost of each round
$0 \leq i < x$
is
$\Omega (1)$;
this holds since
$\c(\{ L_{0}, L_{1} \}) \leq 1$
for every configuration $\c$ reachable from $\EffConf^{i}$, whereas $L_{0}$
and $L_{1}$ are reactants of each reaction in $\NonVoid(\Reactions)$ and, in
particular, in the target reaction set $\RunPol(\EffConf^{i})$ of round $i$.
Therefore, the halting runtime of $\eta$ under $\RunPol$ and
$\SkipPol_{\Identity}$ is
$\Omega (n)$,
as promised.

In \Fig{}~\ref{figure:crn-kill-alternating-leaders:runtime-policy}, we devise
a runtime policy $\RunPol$ demonstrating that the
$\Omega (n)$
halting runtime lower bound of $\Pi$ is tight.
The reader might question why the temporal cost of the ``late rounds'' under
$\RunPol$ is
$\Theta (1)$
although the propensity of the $\gamma$ reactions (scheduled by the
adversarial scheduler) in those rounds is
$\Theta (1 / n)$.
Intuitively, this captures the fact that the temporal cost associated with a
round is independent of the reactions scheduled by the adversarial scheduler
in that round;
rather, it is determined by the reactions targeted by the protocol designer
(for that specific round).
Put differently, while the adversary has the power to determine which
reactions are actually scheduled, the adversary has no control over the
expected time for the target reactions to be scheduled.
\LongVersionEnd 

\LongVersion 
\section{Additional Related Work and Discussion}
\label{section:related-work}
The runtime of stochastically scheduled CRNs is the focus of a vast body of
literature, mainly under the umbrella of population protocols.
When it comes to stochastically scheduled executions
$\eta = \langle \c^{t}, \alpha^{t} \rangle$,
an important distinction is made between stabilizing into a desired
configuration set $Z$ in step $t^{*}$, which means that
$\c \in Z$
for all configurations $\c$ reachable from $\c^{t^{*}}$ (as defined in
\Sect{}~\ref{section:model}), and \emph{converging} into $Z$ in step $t^{*}$,
which means that
$\c^{t} \in Z$
for all
$t \geq t^{*}$.
The latter notion is weaker as it applies only to the configurations visited
(after step $t^{*}$) by the specific execution $\eta$ and does not rule out
the existence of some configurations
$\c \notin Z$
that can be reached.
Moreover, in contrast to stabilization (and halting), the notion of
convergence does not make much sense once we switch to the (weakly or strongly
fair) adversarial scheduler.

Angluin et al.~\cite{AngluinAE2008fast} prove that if a context is available
(which is equivalent to having a designated leader in the initial
configuration), then any semilinear predicate can be decided by a protocol
whose expected convergence runtime (under the stochastic scheduler) is
$\operatorname{polylog}(n)$.
However, the expected stabilization runtime of these protocols is
$\Omega (n)$
(see the discussion in \cite{BellevilleDS2017hardness}).
For leaderless protocols, Belleville et al.~\cite{BellevilleDS2017hardness}
establish an
$\Omega (n)$
lower bound on the expected stabilization runtime (under the stochastic
scheduler) of protocols that decide any predicate which is not
\emph{eventually constant} --- a restricted subclass of semilinear predicates
(that includes the detection predicates).
This leaves an open question regarding the expected runtime of leaderless
protocols that decide non-detection eventually constant predicates and another
open question regarding the expected runtime of protocols that decide
non-eventually constant predicates with a non-zero context --- see
Table~\ref{table:predicate-decidability-landscape-stochastic}.
Notice that both these questions are answered once we switch to the
adversarial scheduler and runtime definition of the current paper --- see
Table~\ref{table:predicate-decidability-landscape-adversarial}.

Our goal of measuring the runtime of protocols in scenarios that go beyond the
(perhaps less realistic) assumption of ``purely'' stochastically scheduled
executions is shared with other papers, again, mainly in the context of
population protocols.
In particular, Schwartzman and Sudo~\cite{SS21} study population protocols in
which an adversary chooses which agents interact at each step, but with
probability $p$, the adversary's choice is replaced by a randomly chosen
interaction (the population protocols analog of \emph{smoothed analysis}
\cite{SpielmanT2004smoothed}).
Angluin et al.~\cite{AngluinAE2008simple} consider population protocols in
which a small proportion of agents can assume any identity, and can change
that identity over time, thereby skewing the rates at which reactions occur.
Both of these runtime definitions seem quite specific in their modeling
choices relative to those of the current paper that considers arbitrary
(weakly fair) executions.
Yet other models of faulty interactions are studied by Di Luna et al.\
\cite{DILUNA201935,
diluna-etal-20},
but runtime analysis is still done using the stochastic model, so the emphasis
is more on protocol correctness.

The fundamental work of Chen et al.~\cite{ChenDSR2021rateindependent} on
rate-independent computations in continuous CRNs introduces and relies
critically on notions of adversarial schedulers and fairness.
In their continuous CRN model, configurations are vectors of concentrations of
chemical species (rather than vectors of integral species counts as in the
discrete model described in our paper).
Trajectories describe how configurations change over time as a result of
reaction ``fluxes''.
Roughly, in a fair continuous-time CRN trajectory, an adversary determines how
flux is pushed through the CRN's reactions, but must ensure that applicable
reactions eventually occur.
This notion of a fair adversarial scheduler corresponds quite naturally to the
weakly fair adversarial scheduler in our paper, and is used by Chen et al.\ to
characterize what functions can be stably computed by continuous CRNs.
Chen et al.\ do not provide results on the time complexity of continuous-time
CRNs, and note that this can be challenging
\cite{SeeligS2009timecomplexity,
ChenDSR2021rateindependent,
VasicCLKS2022-programming}.

The current paper leaves several open questions, starting with the ones that
stick out from Table~\ref{table:predicate-decidability-landscape-adversarial}:
Do there exist vote amplified CRDs for (non-detection) semilinear predicates
whose halting runtime (under an adversarial scheduler) is
$O (n)$?
If so, can these CRDs be leaderless?
Next, while this paper focuses on the adversarial runtime of predicate
decidability tasks, much attention has been devoted in the CRN literature also
to function computation tasks
\cite{ChenDS2014deterministic,
DotyH2015leaderless,
BellevilleDS2017hardness}
which calls for a rigorous study of their adversarial runtime.
Finally, perhaps the framework of our paper, namely partitioning executions
into rounds by means of policies chosen by the adversary and protocol
designer, could be useful in formulating a runtime notion of rate-independent
continuous CRNs.
\LongVersionEnd 

\clearpage
\bibliography{references}

\LongVersion 
\clearpage
\appendix

\begin{figure}[!t]
{\centering
\Large{APPENDIX}
\par}
\end{figure}

\section{Bounding the Temporal Cost}
\label{appendix:proof:lemma:runtime:tools:temporal-cost}
\begin{proof}[Proof of \Lem{}~\ref{lemma:runtime:tools:temporal-cost}]
Let
$\eta_{r} = \langle \c_{r}^{t}, \alpha_{r}^{t} \rangle_{t \geq 0}$
be a stochastic execution emerging from
$\c = \c_{r}^{0}$
and for
$t \geq 0$,
denote
$\eta_{r}^{< t}
=
\langle \c_{r}^{t'}, \alpha_{r}^{t'} \rangle_{t' \in [0, t)}$.
Taking
$\Stop = \Stop(\eta_{r}, 0, \RunPol(\c))$,
we develop
\begin{align*}
\TemporalCost^{\RunPol}(\c)
\, = \, &
\Ex_{\eta_{r}} \left(
\sum_{t = 0}^{\infty} \Indicator_{\Stop > t} \cdot \frac{1}{\Prp_{\c_{r}^{t}}}
\right)
\\
\, = \, &
\sum_{t = 0}^{\infty} \Ex_{\eta_{r}} \left(
\Indicator_{\Stop > t} \cdot \frac{1}{\Prp_{\c_{r}^{t}}}
\right)
\\
\, = \, &
\sum_{t = 0}^{\infty} \Ex_{\eta_{r}^{< t}} \left(
\Indicator_{\Stop > t} \cdot \frac{1}{\Prp_{\c_{r}^{t}}}
\right)
\\
\leq \, &
\frac{1}{p} \cdot
\sum_{t = 0}^{\infty} \Ex_{\eta_{r}^{< t}} \left(
\Indicator_{\Stop > t}
\cdot
\frac{\Prp_{\c_{r}^{t}}(\RunPol(\c))}{\Prp_{\c_{r}^{t}}}
\right)
\\
= \, &
\frac{1}{p} \cdot
\sum_{t = 0}^{\infty} \Ex_{\eta_{r}^{< t}} \left(
\Indicator_{\Stop > t}
\cdot
\Pr \left( \alpha_{r}^{t} \in \RunPol(\c) \mid \eta_{r}^{< t} \right)
\right)
\\
= \, &
\frac{1}{p} \cdot
\sum_{t = 0}^{\infty} \Ex_{\eta_{r}^{< t}} \left(
\Indicator_{\Stop > t}
\cdot
\Pr \left(
\alpha_{r}^{t} \in \RunPol(\c) \land \Stop > t \mid \eta_{r}^{< t}
\right)
\right)
\\
\leq \, &
\frac{1}{p} \cdot
\sum_{t = 0}^{\infty}
\Ex_{\eta_{r}^{< t}} \left(
\Indicator_{\Stop > t}
\cdot
\Pr \left( \Stop = t + 1 \mid \eta_{r}^{< t} \right)
\right)
\, = \,
\frac{1}{p} \cdot
\sum_{t = 0}^{\infty} \Pr \left( \Stop = t + 1 \right)
\, ,
\end{align*}
where
the second transition follows from the linearity of expectation for infinite
sums,
the third transition holds as both
$\Indicator_{\Stop > t}$
and $\Prp_{\c_{r}^{t}}$ are fully determined by $\eta_{r}^{< t}$;
the fourth transition holds due to the assumption on the propensity of
$\RunPol(\c)$ as
$\Stop > t$
implies that
$\c \Reaching_{\langle \RunPol \rangle} \c_{r}^{t}$;
the fifth transition follows from the definition of the stochastic scheduler;
and
the seventh transition holds by observing that
$[\alpha_{r}^{t} \in \RunPol(\c) \land \Stop > t]
\Longrightarrow
\Stop = t + 1$.
The assertion follows as
$\sum_{t = 0}^{\infty} \Pr(\Stop = t + 1)
=
\Pr(\Stop < \infty)
\leq
1$.\footnote{%
Relying on the assumption that the protocol respects finite density, it is
easy to verify (e.g., using the Borel-Cantelli lemma) that
$\Pr(\Stop < \infty) = 1$,
however this is not necessary for the proof of
\Lem{}~\ref{lemma:runtime:tools:temporal-cost}.}
\end{proof}

\section{The Curse of the Cartesian Product}
\label{appendix:curse-cartesian-product}
In this section, we present an example for the ``Cartesian product curse''
discussed in \Sect{}~\ref{section:configuration-digraph}.
For
$i \in \{ 1, 2 \}$,
consider the density preserving CRN protocol
$\Pi_{i} = (\Species_{i}, \Reactions_{i})$
defined by setting
$\Species_{i} = \{ B_{i}, D_{i}, L_{i}, K_{i}, K'_{i} \}$
and
$\NonVoid(\Reactions_{i}) = \{ \beta_{i}, \gamma_{i}, \gamma'_{i} \}$,
where
\\
$\beta_{i}$:
$L_{i} + B_{i} \rightarrow L_{i} + D_{i}$;
\\
$\gamma_{i}$:
$B_{i} + K_{i} \rightarrow B_{i} + K'_{i}$;
and
\\
$\gamma'_{i}$:
$B_{i} + K'_{i} \rightarrow B_{i} + K_{i}$.
\\
It is easy to verify that any execution of $\Pi_{i}$ that emerges from an
initial configuration
$\c^{0} \in \Naturals^{\Species_{i}}$
with
$\c^{0}(L_{i}) > 0$
is guaranteed to  halt into a configuration
$\c \in \Naturals^{\Species_{i}}$
that satisfies
\\
(1)
$\c(L_{i}) = \c^{0}(L_{i})$;
\\
(2)
$\c(B_{i}) = 0$;
and
\\
(3)
$\c(D_{i}) = \c^{0}(B_{i}) + \c^{0}(D_{i})$.

Let $\Pi_{\times}$ be the Cartesian product of $\Pi_{1}$ and $\Pi_{2}$,
namely, the CRN protocol
$\Pi_{\times} = (\Species_{\times}, \Reactions_{\times})$
defined by setting
\[\textstyle
\Species_{\times}
\, = \,
\Species_{1} \times \Species_{2}
\]
and
\[\textstyle
\Reactions_{\times}
\, = \,
\left\{
\left(
\mathbf{r}_{1} \mathbf{r}_{2}, \mathbf{p}_{1} \mathbf{p}_{2}
\right)
\mid
\left( \mathbf{r}_{1}, \mathbf{p}_{1} \right) \in \Reactions_{1}
\land
\left( \mathbf{r}_{2}, \mathbf{p}_{2} \right) \in \Reactions_{2}
\right\}
\]
It is well known (see, e.g., \cite{AngluinADFP2006computation}) that assuming
strong fairness, every execution $\eta$ of $\Pi_{\times}$
simulates parallel executions of $\Pi_{1}$ and $\Pi_{2}$.
In particular, if $\eta$ emerges from an initial configuration
$\c^{0} \in \Naturals^{\Species_{\times}}$
with
$\c^{0}((L_{1}, \cdot)) > 0$
and
$\c^{0}((\cdot, L_{2})) > 0$,
then it is guaranteed to halt into a configuration
$\c \in \Naturals^{\Species_{\times}}$
that satisfies
\\
(1)
$\c((L_{1}, \cdot)) = \c^{0}((L_{1}, \cdot))$
and
$\c((\cdot, L_{2})) = \c^{0}((\cdot, L_{2}))$;
\\
(2)
$\c((B_{1}, \cdot)) = \c((\cdot, B_{2})) = 0$;
and
\\
(3)
$\c((D_{1}, \cdot)) = \c^{0}((B_{1}, \cdot)) + \c^{0}((D_{1}, \cdot))$
and
$\c((\cdot, D_{2})) = \c^{0}((\cdot, B_{2})) + \c^{0}((\cdot, D_{2}))$.

The situation changes dramatically once we switch to the fairness notion
considered in the current paper:
the weakly fair adversarial scheduler can prevent $\Pi_{\times}$ from halting
at all!
We demonstrate that this is the case by observing that the
configuration set
$S = \{ \c_{1}, \c_{2}, \c_{3}, \c_{4} \}$,
where
\begin{align*}
&
\c_{1}
\, = \,
(L_{1}, K_{2}) + (D_{1}, B_{2}) + (K_{1}, L_{2}) + (B_{1}, D_{2})
\, ,
\\
&
\c_{2}
\, = \,
(L_{1}, K'_{2}) + (D_{1}, B_{2}) + (K_{1}, L_{2}) + (B_{1}, D_{2})
\, ,
\\
&
\c_{3}
\, = \,
(L_{1}, K'_{2}) + (D_{1}, B_{2}) + (K'_{1}, L_{2}) + (B_{1}, D_{2})
\, ,
\quad\text{and}
\\
&
\c_{4}
\, = \,
(L_{1}, K_{2}) + (D_{1}, B_{2}) + (K'_{1}, L_{2}) + (B_{1}, D_{2})
\, ,
\end{align*}
forms a component in the configuration digraph $D^{\Pi_{\times}}$ and that the
component $S$ does not admit any escaping reaction.\footnote{%
We use a configuration with $4$ molecules for simplicity;
it is straightforward to construct such bad examples over configurations with
an arbitrarily large molecular count.}
Indeed, one can construct a weakly fair infinite path $P$ in $D^{\Pi_{\times}}$,
which is trapped in $S$, by repeatedly traversing the cycle depicted in
\Fig{}~\ref{figure:cartesian-product-cycle} (strictly speaking, the path $P$
has to be augmented with applicable void reactions to become weakly fair).
The key here is that although reaction $\beta_{1}$ (resp., $\beta_{2}$) is
continuously applicable in the execution projected from $P$ on $\Pi_{1}$
(resp., $\Pi_{2}$), none of the
$(\beta_{1}, \cdot)$
reactions (resp.,
$(\cdot, \beta_{2})$
reactions) is continuously applicable in $P$ as none of the
$(L_{1}, \cdot)$
species
(resp.,
$(\cdot, L_{2})$
species) is continuously present.

\section{The Random-Walk Broadcast Protocol}
\label{appendix:random-walk}
In this section, we describe a VA protocol, developed by
Angluin et al.~\cite{AngluinAE2008fast} who call it \emph{random-walk
broadcast}, and show that it does not stabilize under a weakly fair adversarial
scheduler.
Fix some sets $\PermVoters_{0}$ and $\PermVoters_{1}$ of permanent $0$- and
$1$-voters, respectively.
The random-walk broadcast protocol
$\Pi_{\RW} = (\Species, \Reactions)$
is defined over
the fluid voter sets
$\FluidVoters_{0} = \{ F_{0} \}$
and
$\FluidVoters_{1} = \{ F_{1} \}$.\footnote{Angluin et al.~\cite{AngluinAE2008fast} present their protocol under the population protocols model.}

The reaction set $\NonVoid(\Reactions)$ of $\Pi_{\RW}$ includes the following non-void
reactions:
\\
$\beta_{v, P_{v}}^{F, 1 - v}$:
$P_{v} + F_{1 - v} \rightarrow P_{v} + F_{v}$
for every
$v \in \{ 0, 1 \}$
and
$P_{v} \in \PermVoters_{v}$;
\\
$\gamma_{0}$:
$F_{0} + F_{1} \rightarrow F_{0} + F_{0}$ w.p. $1 / 2$;
and
\\
$\gamma_{1}$:
$F_{0} + F_{1} \rightarrow F_{1} + F_{1}$ w.p. $1 / 2$.

Under the stochastic scheduler, any reaction between two fluid voters with opposite votes is equally likely to turn them into two $0$- or $1$-fluid voters.
From an initial $v$-voting configuration $\c^{0}$, the number of $v$-voters in $\c^{t}$, $t > 0$, is driven in a random-walk fashion.
However, the permanent voter does not change its vote and produces a bias in the direction of the amplifications of $\c^{0}$.

When dealing with a weakly fair adversarial scheduler, this technique is doomed to failure though; in fact, protocol $\Pi_{\RW}$ does not stabilize into an amplification of $\c^{0}$ under a weakly fair adversarial scheduler.
Assume for simplicity that $\c^{0}$ is $1$-voting and that $\c^{0}(F_{1}) \geq 2$.
The adversarial scheduler sets $\SkipPol$ to be the identity skipping policy $\SkipPol_{\Identity}$ and constructs the execution prefix
$\langle \c^{t}, \zeta^{t} \rangle_{0 \leq t }$
by setting
\[\textstyle
\zeta^{t}
\, = \,
\begin{cases}
\gamma_{1}
\, , &
\c^{t}(F_{0}) > 0
\\
\beta_{0, P_{0}}^{F, 1}
\, , &
\c^{t}(F_{0}) = 0
\end{cases}
\, .
\]
Observe that this is a weakly fair execution that does not stabilize into a
$1$-voting configuration.

\LongVersionEnd 

\clearpage

\begin{figure}[!t]
{\centering
\Large{FIGURES AND TABLES}
\par}
\end{figure}

\begin{table}
\centering
\begin{tabular}{|l|c|c|c|c|}
\hline
predicates &
leaderless &
amplified vote &
stabilization runtime &
halting runtime
\\
\hline\hline
\multirow{4}{*}{\parbox{22mm}{semilinear (non-detection)}} &
yes &
yes &
$\Theta (n)$ &
$\Omega (n)$,
$O (n \log n)$
\\
&
yes &
no &
$\Theta (n)$ &
$\Theta (n)$ \\
&
no &
yes &
$\Theta (n)$ &
$\Omega (n)$,
$O (n \log n)$
\\
&
no &
no &
$\Theta (n)$ &
$\Theta (n)$
\\
\hline
\multirow{4}{*}{detection} &
yes &
yes &
$\Theta (\log n)$ &
$\Theta (\log n)$
\\
&
yes &
no &
$\Theta (\log n)$ &
$\Theta (\log n)$
\\
&
no &
yes &
$\Theta (\log n)$ &
$\Theta (\log n)$
\\
&
no &
no &
$\Theta (\log n)$ &
$\Theta (\log n)$
\\
\hline
\end{tabular}
\caption{\label{table:predicate-decidability-landscape-adversarial}%
The (adversarial) runtime complexity landscape of predicate decidability CRN
protocols operating under the weakly fair adversarial scheduler.
The upper bounds ($O$-notation) hold with a universal quantifier over the
predicate family and an existential quantifier over the CRD family;
the lower bounds ($\Omega$-notation) hold with a universal quantifier over
both the predicate and CRD families.
(As usual,
$\Theta (f(n))$
should be interpreted as both
$O (f(n))$
and
$\Omega (f(n))$.)%
}
\end{table}

\begin{table}
\centering
\begin{tabular}{|l|c|c|c|c|}
\hline
predicates &
leaderless &
amplified vote &
stabilization runtime &
halting runtime
\\
\hline\hline
\multirow{4}{*}{\parbox{22mm}{semilinear (non-eventually constant)}} &
yes &
yes &
$\Theta (n)$ &
$\Theta (n)$
\\
&
yes &
no &
$\Theta (n)$ &
$\Theta (n)$
\\
&
no &
yes &
$\Omega (\log n)$,
$O (n)$ &
$\Omega (\log n)$,
$O (n)$
\\
&
no &
no &
$\Omega (\log n)$,
$O (n)$ &
$\Omega (\log n)$,
$O (n)$
\\
\hline
\multirow{4}{*}{\parbox{22mm}{eventually constant (non-detection)}} &
yes &
yes &
$\Omega (\log n)$,
$O (n)$ &
$\Omega (\log n)$,
$O (n)$
\\
&
yes &
no &
$\Omega (\log n)$,
$O (n)$ &
$\Omega (\log n)$,
$O (n)$
\\
&
no &
yes &
$\Omega (\log n)$,
$O (n)$ &
$\Omega (\log n)$,
$O (n)$
\\
&
no &
no &
$\Omega (\log n)$,
$O (n)$ &
$\Omega (\log n)$,
$O (n)$
\\
\hline
\multirow{4}{*}{\parbox{22mm}{detection}} &
yes &
yes &
$\Theta (\log n)$ &
$\Theta (\log n)$
\\
&
yes &
no &
$\Theta (\log n)$ &
$\Theta (\log n)$
\\
&
no &
yes &
$\Theta (\log n)$ &
$\Theta (\log n)$
\\
&
no &
no &
$\Theta (\log n)$ &
$\Theta (\log n)$
\\
\hline
\end{tabular}
\caption{\label{table:predicate-decidability-landscape-stochastic}%
The (expected stochastic) runtime complexity landscape of predicate
decidability CRN protocols operating under the stochastic scheduler%
\ShortVersion 
\ (refer to the attached full version for details)%
\ShortVersionEnd 
.
The upper bounds ($O$-notation) hold with a universal quantifier over the
predicate family and an existential quantifier over the CRD family;
the lower bounds ($\Omega$-notation) hold with a universal quantifier over
both the predicate and CRD families.
(As usual,
$\Theta (f(n))$
should be interpreted as both
$O (f(n))$
and
$\Omega (f(n))$.)%
}
\end{table}

\begin{figure}
\begin{framed}
\begin{subfigure}{\textwidth}
\centering
\begin{minipage}{0.3\textwidth}
$\Species = \{ A, B, X, X' \}$
\end{minipage}
\hskip 15mm
\begin{minipage}{0.3\textwidth}
$\beta$:
$A + X \rightarrow 2 A$
\\
$\beta'$:
$A + X' \rightarrow 2 A$
\\
$\gamma$:
$A + B \rightarrow 2 A$
\\
$\delta$:
$B + X \rightarrow B + X'$
\\
$\delta'$:
$B + X' \rightarrow B + X$
\end{minipage}
\caption{\label{figure:crn-kill-b:protocol}
The species and non-void reactions of $\Pi$.
A configuration
$\c^{0} \in \Naturals^{\Species}$
is valid as an initial configuration of $\Pi$ if
$\c^{0}(A) = \c^{0}(B) = 1$.
The protocol is designed so that the molecular count of $A$ is non-decreasing,
whereas the molecular counts of $B$ and of
$\{ X, X' \}$
are non-increasing.
Moreover, all weakly fair valid executions of $\Pi$ halt in a configuration
that includes only $A$ molecules.
}
\end{subfigure}%
\vskip 5mm
\begin{subfigure}{\textwidth}
\centering
\includegraphics[scale=0.7]{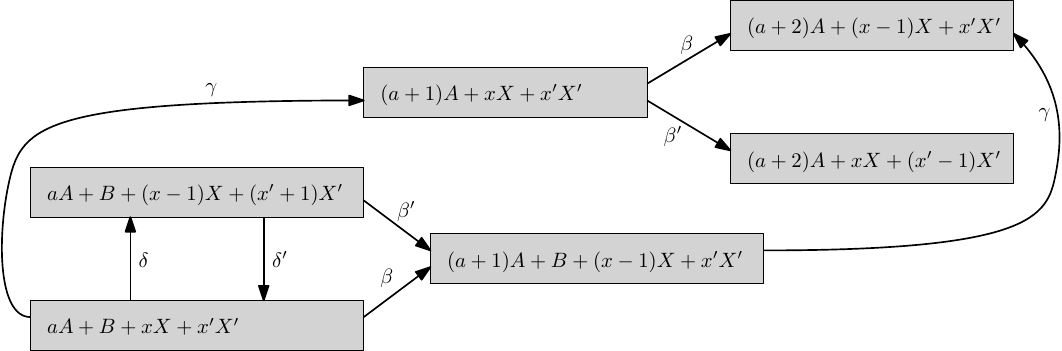}
\caption{\label{figure:crn-kill-b:digraph}%
Part of the configuration digraph $D^{\Pi}$ (excluding void reactions).
The configuration set
$S = \{ a A + B + (x + z) X + (x' - z) X' \mid -x \leq z \leq x' \}$
forms a component of $D^{\Pi}$.
Since reactions $\beta$ and $\beta'$ are inapplicable in configurations
$a A + B + (x + x') X'$
and
$a A + B + (x + x') X$,
respectively, it follows that these two reactions do not escape from $S$.
Reaction $\gamma$ on the other hand does escape from $S$, ensuring that a
weakly fair execution cannot remain in $S$ indefinitely.
After the $B$ molecule is consumed (by a $\gamma$ reaction), each
configuration constitutes a (singleton) component of $D^{\Pi}$ and every
applicable non-void reaction is escaping.
}
\end{subfigure}%
\vskip 5mm
\begin{subfigure}{\textwidth}
\centering
$\displaystyle
\RunPol(\c)
=
\begin{cases}
\{ \beta, \beta', \gamma \}
, &
\c(B) > 0
\\
\{ \beta, \beta' \}
, &
\c(B) = 0
\end{cases}$
\caption{\label{figure:crn-kill-b:runtime-policy}%
A  runtime policy $\RunPol$ for $\Pi$.
Under $\RunPol$, a round with effective configuration
$\EffConf = a A + B + x X + x' X'$
is target-accomplished and ends upon scheduling one of the reactions
$\beta, \beta', \gamma$;
in particular, it is guaranteed that the next effective configuration
$\EffConf'$ satisfies
$\EffConf'(A) > a$
(no matter what the adversarial skipping policy is), i.e., progress is made.
The adversary may opt to schedule reactions $\delta$ and $\delta'$ many times
before the round ends, however the crux of our runtime definition is that this
does not affect the round's temporal cost.
Specifically, since
$\Prp_{\c}(\{ \beta, \beta', \gamma \})
=
\frac{a (x + x' + 1)}{\Vol}$
for every
$\c \in S$
(recall the definition of component $S$ from
\Fig{}~\ref{figure:crn-kill-b:digraph}), it follows, by a simple probabilistic
argument%
\LongVersion 
\ (see \Lem{}~\ref{lemma:runtime:tools:temporal-cost})%
\LongVersionEnd 
, that
$\TemporalCost^{\RunPol}(\EffConf)
=
\frac{\Vol}{a (x + x' + 1)}$.
A similar (simpler in fact) argument leads to the conclusion that if
$\EffConf = a A + x X + x' X'$,
then
$\TemporalCost^{\RunPol}(\EffConf)
=
\frac{\Vol}{a (x + x')}$.
This allows us to show that
$\RunTime_{\Halt}^{\Pi}(n) = O (\log n)$.
}
\end{subfigure}%
\caption{\label{figure:crn-kill-b}%
A CRN protocol
$\Pi = (\Species, \Reactions)$
demonstrating how a carefully chosen runtime policy guarantees significant
progress in each round while up-bounding the round's temporal cost.
}
\end{framed}
\end{figure}

\begin{figure}
\begin{framed}
\begin{subfigure}{\textwidth}
\centering
\begin{minipage}{0.4\textwidth}
$\Species = \{ L_{0}, L_{1}, \dots, L_{k + 1}, C, X, Y \}$
\end{minipage}
\hskip 10mm
\begin{minipage}{0.5\textwidth}
$\beta_{i}$: $L_{i} + X \rightarrow L_{0} + Y$ for $0 \leq i \leq k + 1$, $i \neq k$
\\
$\gamma_{i}$: $L_{i} + C \rightarrow L_{i + 1} + C$ for $0 \leq i \leq k$
\end{minipage}
\caption{\label{figure:crn-multiple-pitfalls:protocol}%
The species and non-void reactions of $\Pi$, where $k$ is an arbitrarily large
constant.
A configuration
$\c^{0} \in \Naturals^{\Species}$
is valid as an initial configuration of $\Pi$ if
$\c^{0}(L_{0}) = 1$,
$\c^{0}(C) = 1$,
and
$\c^{0}(\{ L_{1}, \dots, L_{k + 1} \}) = 0$.
The protocol is designed so that
$\mathbf{c}(\{ L_{0}, L_{1}, \dots,  L_{k + 1} \}) = 1$
for any configuration $\mathbf{c}$ reachable from a valid initial
configuration (i.e.,
$L_{0}, L_{1}, \dots,  L_{k + 1}$
are ``leader species'').
Species $C$ is a catalyst for any reaction it participates in and
$\mathbf{c}(C) = 1$
for any configuration $\mathbf{c}$ reachable from a valid initial
configuration.
The execution progresses by shifting all $X$ molecules into $Y$ molecules.
We are interested in the stabilization of $\Pi$'s executions into the (set of)
configurations
$\c \in \Naturals^{\Species}$
satisfying
$\c(X) < \c(Y)$,
although the executions actually halt once
$\c(X) = 0$
(and
$\c(L_{k + 1}) = 1$).
}
\end{subfigure}%
\vskip 5mm
\begin{subfigure}{\textwidth}
\centering
\includegraphics[scale=0.7]{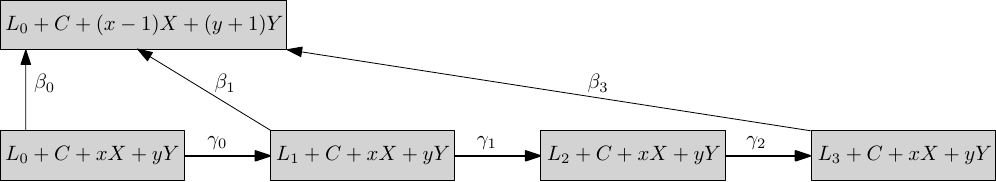}
\caption{\label{figure:crn-multiple-pitfalls:digraph}%
Part of the configuration digraph $D^{\Pi}$ (excluding void reactions) for
$k = 2$.
Notice that each configuration constitutes a (singleton) component of
$D^{\Pi}$ and every applicable non-void reaction is escaping.
}
\end{subfigure}
\vskip 5mm
\begin{subfigure}{\textwidth}
\centering
$\displaystyle
\RunPol(\c)
=
\Applicable(\c) \cap \NonVoid(\Reactions)
$
\caption{\label{figure:crn-multiple-pitfalls:runtime-policy}%
A runtime policy $\RunPol$ for $\Pi$.
Under $\RunPol$, every round ends once any non-void reaction is applied to the
round's effective configuration.
In %
\LongVersion 
\Sect{}~\ref{section:large-adversarial-small-stochastic:multiple-pitfalls}%
\LongVersionEnd 
\ShortVersion 
the attached full version%
\ShortVersionEnd 
, we show that
$\RunTime_{\Stab}^{\RunPol, \SkipPol}(\eta) \leq O (n^{2})$
for any skipping policy $\SkipPol$ and weakly fair valid execution $\eta$ of
initial molecular count $n$.
It turns out that this bound is tight:
The (weakly fair) adversarial scheduler can generate the execution
$\eta = \langle \c^{t}, \alpha^{t} \rangle$
by scheduling
$\alpha^{t} = \gamma_{i}$
if
$\c^{t}(L_{i}) = 1$
for some
$0 \leq i \leq k$;
and
$\alpha^{t} = \beta_{k + 1}$
if
$\c^{t}(L_{k + 1}) = 1$.
Using the identity skipping policy $\SkipPol_{\Identity}$ and assuming that
$\c^{0}(X) = x_{0}$
and
$\c^{0}(Y) = 0$,
it is easy to show
(see
\LongVersion 
\Sect{}~\ref{section:large-adversarial-small-stochastic:multiple-pitfalls}%
\LongVersionEnd 
\ShortVersion 
the attached full version%
\ShortVersionEnd 
) that $\eta$ visits the configuration
$\c_{y} = L_{k} + C + (x_{0} - y) X + y Y$
for every
$0 \leq y \leq x_{0} / 2$
before it stabilizes and that each such visit constitutes the effective
configuration of the corresponding round, regardless of the runtime policy
$\RunPol'$.
Since each such configuration $\c_{y}$ is a $2$-pitfall (recall the definition
from \Sect{}~\ref{section:runtime:speed-faults}), we deduce that
$\TemporalCost^{\RunPol'}(\c_{y}) = \Omega (n)$,
which sums up to
$\RunTime_{\Stab}^{\RunPol', \SkipPol_{\Identity}}(\eta) = \Omega (n^{2})$.
The interesting aspect of protocol $\Pi$ is that with high probability, a
stochastic execution stabilizes without visiting the pitfall configurations
$\c_{y}$ even once, which allows us to conclude that the expected stochastic
stabilization runtime of $\Pi$ is
$O (n)$
---
see
\LongVersion 
\Sect{}~\ref{section:large-adversarial-small-stochastic:multiple-pitfalls}
\LongVersionEnd 
\ShortVersion 
the attached full version
\ShortVersionEnd 
for details.
}
\end{subfigure}
\caption{\label{figure:crn-multiple-pitfalls}%
A CRN protocol
$\Pi = (\Species, \Reactions)$
demonstrating that the adversarial stabilization runtime may be significantly
larger than the expected stochastic runtime due to (asymptotically many)
pitfall configurations.}
\end{framed}
\end{figure}

\begin{figure}
\begin{framed}
\begin{subfigure}{\textwidth}
\centering
\begin{minipage}{0.3\textwidth}
$\Species = \{ L_{0}, L_{1}, X, Y \}$
\end{minipage}
\hskip 15mm
\begin{minipage}{0.3\textwidth}
$\beta_{0}$:
$L_{0} + Y \rightarrow 2 Y$
\\
$\beta_{1}$:
$L_{1} + Y \rightarrow 2 Y$
\\
$\gamma_{0}$:
$L_{0} + X \rightarrow L_{1} + Y$
\\
$\gamma_{1}$:
$L_{1} + X \rightarrow L_{0} + Y$
\end{minipage}
\caption{\label{figure:crn-kill-alternating-leaders:protocol}%
The species and non-void reactions of $\Pi$.
A configuration
$\c^{0} \in \Naturals^{\Species}$
is valid as an initial configuration of $\Pi$ if
$\c^{0}(\{ L_{0}, L_{1} \}) = 1$
and
$\c^{0}(Y) \geq \c^{0}(X)$.
The protocol is designed so that the molecular counts of
$\{ L_{0}, L_{1} \}$
and of $X$ are non-increasing, whereas the molecular count of $Y$ is
non-decreasing.
Moreover, all weakly fair valid executions of $\Pi$ halt in a configuration
that includes zero $L_{0}$ and $L_{1}$ molecules.
}
\end{subfigure}%
\vskip 5mm
\begin{subfigure}{\textwidth}
\centering
\includegraphics[scale=0.7]{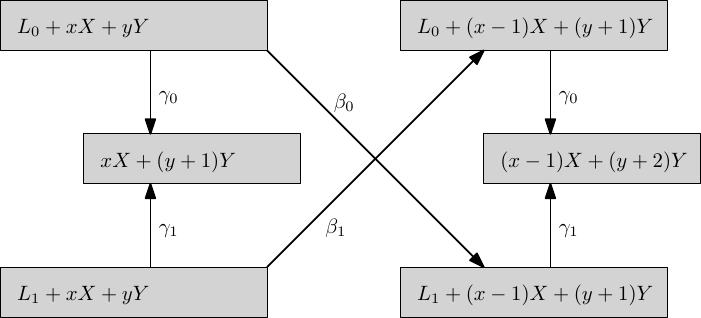}
\caption{\label{figure:crn-kill-alternating-leaders:digraph}%
Part of the configuration digraph $D^{\Pi}$ (excluding void reactions).
Notice that each configuration constitutes a (singleton) component of
$D^{\Pi}$ and every applicable non-void reaction is escaping.
Furthermore, if
$\c \LeadsTo \c'$,
then
$\Applicable(\c) \cap \Applicable(\c') \cap \NonVoid(\Reactions) = \emptyset$.
}
\end{subfigure}
\vskip 5mm
\begin{subfigure}{\textwidth}
\centering
$\displaystyle
\RunPol(\c)
=
\begin{cases}
\beta_{0}
, &
\c(L_{0}) = 1
\\
\beta_{1}
, &
\c(L_{1}) = 1
\end{cases}$
\caption{\label{figure:crn-kill-alternating-leaders:runtime-policy}%
A runtime policy $\RunPol$ for $\Pi$.
Under $\RunPol$, a round whose (non-halting) effective configuration is
$\EffConf$ ends once the execution reaches any configuration
$\c \neq \EffConf$.
This may result from scheduling the target reaction $\RunPol(\EffConf)$, in
which case the round is target-accomplished;
otherwise, the round ends because $\RunPol(\EffConf)$ is inapplicable in
$\c$, in which case the round is target-deprived.
Since
$\Prp_{L_{i} + x X + y Y}(\beta_{i})
=
y / \Vol$
for
$i \in \{ 0, 1 \}$,
it follows, by a simple probabilistic argument%
\LongVersion 
\ (see \Lem{}~\ref{lemma:runtime:tools:temporal-cost})%
\LongVersionEnd 
, that
$\TemporalCost^{\RunPol}(L_{i} + x X + y Y)
=
\Vol / y$.
This allows us to show that
$\RunTime_{\Halt}^{\Pi}(n) = O (n)$.
In
\LongVersion 
\Sect{}~\ref{section:large-adversarial-small-stochastic}%
\LongVersionEnd 
\ShortVersion 
the attached full version%
\ShortVersionEnd 
, it is shown that the
$O (n)$
halting runtime bound is (asymptotically) tight under the (weakly fair)
adversarial scheduler;
in contrast, the expected stochastic runtime of $\Pi$ under the stochastic
scheduler is only
$O (1)$.
This gap holds despite the fact that the executions that realize the
$\Omega (n)$
runtime lower bound do not reach a pitfall configuration.
}
\end{subfigure}
\caption{\label{figure:crn-kill-alternating-leaders}
A CRN protocol
$\Pi = (\Species, \Reactions)$
demonstrating that the adversarial runtime may be significantly larger than
the expected stochastic runtime even though the protocol does not admit a speed
fault.}
\end{framed}
\end{figure}

\begin{figure}
\begin{framed}
\begin{subfigure}{\textwidth}
\centering
\begin{minipage}{0.3\textwidth}
$\Species = \{ A, B, E, X \}$
\end{minipage}
\hskip 10mm
\begin{minipage}{0.5\textwidth}
$\beta$:
$A + A \rightarrow 2 E$
\\
$\gamma$:
$X + X \rightarrow 2 E$
\\
$\delta$:
$B + X \rightarrow 2 B$
\\
$\chi_{Z}$:
$E + Z \rightarrow 2 E$
$\quad$ for every
$Z \in \{ A, B, X \}$
\end{minipage}
\caption{\label{figure:crn-skipping-policy:protocol}%
The species and non-void reactions of $\Pi$.
A configuration
$\c^{0} \in \Naturals^{\Species}$
is valid as an initial configuration of $\Pi$ if
$\c^{0}(A) = 2$,
$\c^{0}(B) = 1$,
and
$\c^{0}(E) = 0$.
The protocol is designed so that once the first (two) $E$ molecules are
produced, species $E$ spreads by means of the $\chi$ reactions until it takes
over the entire test tube.
The first $E$ molecules can be produced either through reaction $\beta$, that
remains applicable as long as $E$ was not produced, or through reaction
$\gamma$, that may become inapplicable if reaction $\delta$ is scheduled many
times.
In any case, all weakly fair valid executions of $\Pi$ halt in a configuration
that includes only $E$ molecules.
}
\end{subfigure}%
\vskip 5mm
\begin{subfigure}{\textwidth}
\centering
\includegraphics[scale=0.7]{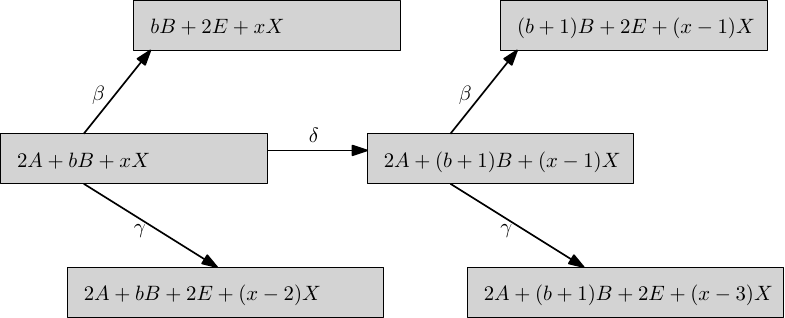}
\caption{\label{figure:crn-skipping-policy:digraph}%
Part of the configuration digraph $D^{\Pi}$ (excluding void reactions).
Notice that each configuration constitutes a (singleton) component of
$D^{\Pi}$ and every applicable non-void reaction is escaping.
}
\end{subfigure}
\vskip 5mm
\begin{subfigure}{\textwidth}
\centering
$\displaystyle
\RunPol(\c)
=
\begin{cases}
\{ \beta, \gamma \}
, &
\c(E) = 0
\\
\{ \chi_{A}, \chi_{B}, \chi_{X} \}
, &
\c(E) > 0
\end{cases}$
\caption{\label{figure:crn-skipping-policy:runtime-policy}%
A runtime policy $\RunPol$ for $\Pi$.
Under $\RunPol$, a round whose (non-halting) effective configuration
$\EffConf$ satisfies
$\EffConf(E) = 0$
ends once (two) $E$ molecules are produced, either through reaction $\beta$ or
through reaction $\gamma$ (either way, the round is target accomplished);
in particular, there can be at most one such round, that is, the first round
of the execution.
To maximize the temporal cost charged for this round, the adversarial
scheduler devises (the execution and) the skipping policy $\SkipPol$ so that
$\EffConf(X) < 2$
which means that reaction $\gamma$ is inapplicable in $\EffConf$
(this requires that $\SkipPol$ generates a ``large skip'').
Such a configuration $\EffConf$ is a halting $2$-pitfall as reaction $\beta$
must be scheduled in order to advance the execution and the propensity of
$\beta$ is
$2 / \Vol$.
We conclude, by a simple probabilistic argument%
\LongVersion 
\ (see \Lem{}~\ref{lemma:runtime:tools:temporal-cost})%
\LongVersionEnd 
, that
$\TemporalCost^{\RunPol}(\EffConf) = \Vol / 2$.
A round whose (non-halting) effective configuration $\EffConf$ satisfies
$\EffConf(E) > 0$
ends once a $\chi$ configuration is scheduled, thus ensuring that the
execution's next effective configuration $\EffConf'$ satisfies
$\EffConf'(E) > \EffConf(E)$.
Using standard arguments, one can prove that the total contribution of (all)
these rounds to the halting runtime of an execution $\eta$ with initial
molecular count $n$ is up-bounded by
$O (\log n)$.
Therefore, together with the contribution of the first ``slow'' round, we get
$\RunTime_{\Halt}^{\RunPol, \SkipPol}(\eta) = \Theta (n)$.
Note that the same runtime policy $\RunPol$ leads to a much better halting
runtime of
$\RunTime_{\Halt}^{\RunPol, \SkipPol_{\Identity}}(\eta) = \Theta (\log n)$
if the adversarial scheduler opts to use the identity skipping policy
$\SkipPol_{\Identity}$ instead of the aforementioned skipping policy
$\SkipPol$.
As a direct consequence of \Lem{}~\ref{lemma:speed-fault-lower-bound}, we
conclude that $\RunPol$ is asymptotically optimal for $\Pi$, hence
$\RunTime_{\Halt}^{\Pi}(n) = \Theta (n)$.
}
\end{subfigure}
\caption{\label{figure:crn-skipping-policy}%
A CRN protocol
$\Pi = (\Species, \Reactions)$
demonstrating that a non-trivial skipping policy results in a significantly
larger runtime, compared to the identity skipping policy.}
\end{framed}
\end{figure}

\LongVersion 
\begin{figure}
{\centering
\includegraphics[scale=0.8]{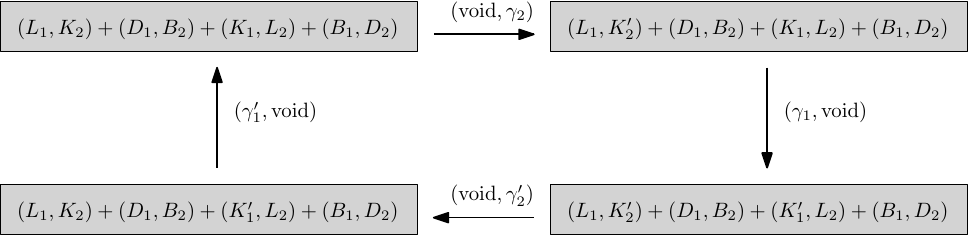}
\par}
\caption{\label{figure:cartesian-product-cycle}%
A cycle through the configurations
$\c_{1}, \c_{2}, \c_{3}, \c_{4}$
in the configuration digraph $D^{\Pi_{\times}}$.}
\end{figure}
\LongVersionEnd 
\end{document}